\pdfoutput=1
\documentclass[11pt]{article}

\usepackage[utf8]{inputenc}
\usepackage[T1]{fontenc}
\usepackage{lmodern}
\usepackage{microtype}

\usepackage[letterpaper,top=1.05in,bottom=1.15in,left=1.15in,right=1.15in]{geometry}
\usepackage{amsmath,amssymb,amsthm,mathtools}
\usepackage{bm}
\usepackage{booktabs}
\usepackage{longtable}
\usepackage{array}
\usepackage{tabularx}
\usepackage{multirow}
\usepackage{rotating}
\usepackage{caption}
\usepackage{graphicx}
\usepackage{xcolor}
\usepackage{tikz}
\usepackage{pgfplots}
\pgfplotsset{compat=1.17}
\usetikzlibrary{arrows.meta,positioning,calc,fit,backgrounds,shapes.geometric,
                decorations.pathreplacing,patterns,matrix,chains}

\usepackage[round,authoryear]{natbib}
\usepackage[colorlinks=true,linkcolor=linknavy,citecolor=linknavy,
            urlcolor=linknavy,breaklinks=true]{hyperref}
\usepackage{cleveref}

\definecolor{linknavy}{HTML}{1F3864}
\definecolor{cInk}{HTML}{1A1A1A}
\definecolor{cGold}{HTML}{B8860B}
\definecolor{cGoldDeep}{HTML}{8C6508}
\definecolor{cGoldPale}{HTML}{F5EBD3}
\definecolor{cMute}{HTML}{8A8A8A}
\definecolor{cRule}{HTML}{D9C28A}
\definecolor{cSlate}{HTML}{3C5A74}
\definecolor{cSlatePale}{HTML}{E2E9EF}

\theoremstyle{plain}
\newtheorem{theorem}{Theorem}[section]
\newtheorem{proposition}[theorem]{Proposition}

\theoremstyle{definition}
\newtheorem{definition}[theorem]{Definition}
\newtheorem{example}[theorem]{Example}

\theoremstyle{remark}
\newtheorem{remark}[theorem]{Remark}

\crefname{theorem}{Theorem}{Theorems}
\crefname{proposition}{Proposition}{Propositions}
\crefname{lemma}{Lemma}{Lemmas}
\crefname{corollary}{Corollary}{Corollaries}
\crefname{definition}{Definition}{Definitions}
\crefname{remark}{Remark}{Remarks}
\crefname{example}{Example}{Examples}
\crefname{notation}{Notation}{Notations}
\crefname{assumption}{Assumption}{Assumptions}
\crefname{table}{Table}{Tables}
\crefname{figure}{Figure}{Figures}
\crefname{section}{Section}{Sections}
\crefname{equation}{Equation}{Equations}

\newcommand{\agents}{\mathcal{N}}                
\newcommand{\nag}{n}                             

\newcommand{\St}{\mathcal{S}}                    
\newcommand{\Ac}{\mathcal{A}}                    
\newcommand{\Ob}{\mathcal{O}}                    
\newcommand{\Hi}{\mathcal{H}}                    
\newcommand{\Th}{\Theta}                         
\newcommand{\Simplex}[1]{\Delta(#1)}             

\newcommand{\Tr}{\mathcal{T}}                    
\newcommand{\Om}{\mathcal{Z}}                    
\newcommand{\Rew}{\mathcal{R}}                   
\newcommand{\BellOp}{\mathbb{T}}                 
\newcommand{\Exp}{\mathbb{E}}
\newcommand{\Prb}{\mathbb{P}}
\newcommand{\Real}{\mathbb{R}}
\newcommand{\Nat}{\mathbb{N}}
\newcommand{\Ind}{\mathbf{1}}

\newcommand{\game}{G}
\newcommand{\util}{u}
\newcommand{\brc}{\ensuremath{\mathrm{BR}}}                   
\newcommand{\val}{\ensuremath{\mathrm{val}}}
\newcommand{\NE}{\ensuremath{\mathrm{NE}}}
\newcommand{\CE}{\ensuremath{\mathrm{CE}}}
\newcommand{\CCE}{\ensuremath{\mathrm{CCE}}}
\newcommand{\QRE}{\ensuremath{\mathrm{QRE}}}

\newcommand{\pot}{\Phi}

\DeclareMathOperator*{\argmax}{arg\,max}

\DeclareMathOperator{\supp}{supp}

\DeclareMathOperator{\KL}{KL}

\DeclareMathOperator{\Regret}{Reg}
\newcommand{\norm}[1]{\left\lVert #1 \right\rVert}
\newcommand{\abs}[1]{\left\lvert #1 \right\rvert}

\newcommand{\given}{\,\vert\,}

\newcommand{\PTIME}{\textsc{P}}
\newcommand{\NP}{\textsc{NP}}
\newcommand{\coNP}{\textsc{coNP}}
\newcommand{\PSPACE}{\textsc{PSPACE}}
\newcommand{\NEXP}{\textsc{NEXP}}
\newcommand{\EXP}{\textsc{EXP}}
\newcommand{\PPAD}{\textsc{PPAD}}

\newcolumntype{L}[1]{>{\raggedright\arraybackslash}p{#1}}
\newcolumntype{C}[1]{>{\centering\arraybackslash}p{#1}}

\newenvironment{nomen}[2]{%
  \par\addvspace{0.6\baselineskip}%
  \begingroup\small
  \setlength{\LTpre}{0pt}\setlength{\LTpost}{0pt}%
  \begin{longtable}{@{}L{3.5cm} L{4.0cm} L{5.3cm}@{}}
  \caption{#1}\label{#2}\\[2pt]
  \toprule
  \textsc{Game theory} & \textsc{Reinforcement learning} & \textsc{How to read the pair} \\
  \midrule
  \endfirsthead
  \multicolumn{3}{@{}l}{\small\textsc{(\cref{#2}, continued)}}\\[2pt]
  \toprule
  \textsc{Game theory} & \textsc{Reinforcement learning} & \textsc{How to read the pair} \\
  \midrule
  \endhead
  \midrule
  \multicolumn{3}{@{}r@{}}{\small\textsc{continued on the next page}}\\
  \endfoot
  \bottomrule
  \endlastfoot
}{%
  \end{longtable}%
  \endgroup
  \par\addvspace{0.8\baselineskip}%
}

\newenvironment{nomenx}[4]{%
  \par\addvspace{0.6\baselineskip}%
  \begingroup\small
  \setlength{\LTpre}{0pt}\setlength{\LTpost}{0pt}%
  \begin{longtable}{@{}L{3.5cm} L{4.0cm} L{5.3cm}@{}}
  \caption{#3}\label{#4}\\[2pt]
  \toprule
  \textsc{#1} & \textsc{#2} & \textsc{How to read the pair} \\
  \midrule
  \endfirsthead
  \multicolumn{3}{@{}l}{\small\textsc{(\cref{#4}, continued)}}\\[2pt]
  \toprule
  \textsc{#1} & \textsc{#2} & \textsc{How to read the pair} \\
  \midrule
  \endhead
  \midrule
  \multicolumn{3}{@{}r@{}}{\small\textsc{continued on the next page}}\\
  \endfoot
  \bottomrule
  \endlastfoot
}{%
  \end{longtable}%
  \endgroup
  \par\addvspace{0.8\baselineskip}%
}

\newcommand{\exend}{\hfill\ensuremath{\lozenge}}

\newenvironment{rlreading}[1][The reinforcement learning reading]{%
  \par\addvspace{0.35\baselineskip}%
  \noindent{\color{cGoldDeep}\rule[-0.2ex]{1.6pt}{1.6ex}}\;%
  \textsc{#1.}\enspace\ignorespaces
}{%
  \par\addvspace{0.45\baselineskip}%
}

\newcommand{\bimatrix}[8]{%
  \ensuremath{%
  \begin{array}{r|c|c|}
  \multicolumn{1}{r}{} & \multicolumn{1}{c}{#3} & \multicolumn{1}{c}{#4} \\ \cline{2-3}
  #1 & #5 & #6 \\ \cline{2-3}
  #2 & #7 & #8 \\ \cline{2-3}
  \end{array}}}

\newcommand{\threetwo}[8]{%
  \ensuremath{%
  \begin{array}{r|c|c|}
  \multicolumn{1}{r}{} & \multicolumn{1}{c}{#4} & \multicolumn{1}{c}{#5}
    \\ \cline{2-3}
  #1 & #6 \\ \cline{2-3}
  #2 & #7 \\ \cline{2-3}
  #3 & #8 \\ \cline{2-3}
  \end{array}}}

\newcommand{\trimatrix}[9]{%
  \ensuremath{%
  \begin{array}{r|c|c|c|}
  \multicolumn{1}{r}{} & \multicolumn{1}{c}{#4} & \multicolumn{1}{c}{#5}
    & \multicolumn{1}{c}{#6} \\ \cline{2-4}
  #1 & #7 \\ \cline{2-4}
  #2 & #8 \\ \cline{2-4}
  #3 & #9 \\ \cline{2-4}
  \end{array}}}

\title{\bfseries When Is a Game a Markov Decision Process?\\
Correspondences Between Game Theory\\
and Reinforcement Learning}

\author{Eduardo C. Garrido-Merch\'an\\
DIGNUM Centre, Universidad Pontificia Comillas\\
Madrid, Spain\\
\texttt{ecgarrido@comillas.edu}}

\date{\today}

\begin{document}

\maketitle

\begin{abstract}
Reinforcement learning and game theory describe the same object from two sides.
Game theory specifies what a rational outcome of a strategic interaction is;
reinforcement learning specifies a procedure by which an agent could arrive at
one through experience. The two literatures nevertheless maintain separate
vocabularies for identical mathematical structures, and the textbook treatments
that bridge them tend to present the correspondence as a sequence of analogies
rather than as a set of theorems. This monograph supplies the dictionary. We
take every standard class of game, from the normal-form game through repeated
games, stochastic games, partially observable stochastic games, extensive-form
games with imperfect information, Bayesian games, graphical and congestion
games, games with communication, Stackelberg games, potential and supermodular
games, coalitional games, mean-field games and evolutionary games, and we
identify for each one the decision-theoretic model that reinforcement learning
already studies under a different name. For every pairing we state precisely
whether the correspondence is a definitional identity, an exact reduction
supported by a theorem, or a structural analogy that fails under identifiable
conditions, and we exhibit the degeneracy that recovers the single-agent model
when the number of agents is one. The same treatment is applied to solution
concepts: best response and policy improvement, Nash equilibrium and the fixed
point of the Bellman optimality operator, subgame perfection and the principle
of optimality, maximin value and robust control, correlated equilibrium and
shared-randomness policies, quantal response equilibrium and the soft Bellman
operator of maximum-entropy reinforcement learning, dominance and action
masking, the Shapley value and multi-agent credit assignment, the core and the
stability of a credit scheme, the Vickrey-Clarke-Groves transfer and the
counterfactual baseline, Blackwell approachability and no-regret learning. We
then turn the dictionary into practice, mapping each model class
onto the training procedures that are sound for it, cataloguing the classical
learning dynamics together with what each of them converges to, and recording
the computational complexity of the associated solution problems so that the reader
can see which reductions buy tractability and which merely relabel the
difficulty. A final part is devoted to the correspondences that do not hold: the
non-stationarity that invalidates the independent-learner reduction, the
impossibility results that forbid uncoupled convergence to Nash equilibrium, the
inadequacy of Nash equilibrium as a learning target in general-sum problems, the
gap between the coarse correlated equilibria that no-regret self-play reaches
and the correlated equilibria it is often credited with, the selection of the
risk-dominant rather than the payoff-dominant equilibrium by noisy independent
learning, and the equilibrium-selection gap that separates centralised planning
from decentralised execution in common-payoff games. The intended use is as a
reference. Every section opens with a two-column glossary that places the
game-theoretic and the reinforcement learning vocabulary side by side, every
definition is followed by a worked example with explicit numbers, and every
number printed in an example is recomputed from the model definition by an
accompanying program. Summary tables collect
the correspondences, the solution concepts, the complexity landscape, the
learning dynamics, the evaluation measures and the algorithmic recommendations
in a form that can be consulted without reading the surrounding text.
\end{abstract}

\noindent\textsc{Keywords:} multi-agent reinforcement learning, stochastic
games, partially observable stochastic games, decentralised POMDPs,
extensive-form games, mechanism design, equilibrium computation, learning in
games.

\tableofcontents
\clearpage

\section{Introduction}
\label{sec:intro}

A reinforcement learning researcher who opens a game theory textbook, and a
game theorist who opens a reinforcement learning textbook, each encounter a
large amount of material that they already know, written in a notation they do
not. The partially observable stochastic game of the game theory literature and
the decentralised partially observable Markov decision process of the planning
literature are, up to the restriction of the reward structure, the same object.
The information set of an extensive-form game with perfect recall and the
action-observation history of a partially observable Markov decision process
pick out the same equivalence classes of play. Best response and greedy policy
improvement are the same operation. Subgame perfection and Bellman's principle
of optimality are the same requirement. The Harsanyi transformation of a
Bayesian game into an extensive-form game with an initial chance move, and the
augmentation of a latent-parameter Markov decision process into a partially
observable one, are the same construction applied twice. None of this is
secret, and parts of it are folklore in both communities, but the folklore is
scattered, the statements are usually informal, and the conditions under which
a correspondence is exact rather than suggestive are rarely written down.

This monograph is an attempt to write them down. A game is a Markov decision
process in three distinct senses, and the whole of what follows depends on
keeping them apart. The organising criterion throughout is whether there is a
map between two models that preserves the
quantity one actually cares about, namely the value of a strategy profile and
therefore the set of solutions. Resemblance of form is never sufficient. We will
distinguish three strengths of correspondence and we will label every entry of
every table with one of them. A correspondence is definitional when the two
models are the same tuple with different names, as with a common-payoff
partially observable stochastic game and a decentralised partially observable
Markov decision process. It is an exact reduction when a theorem supplies a
value-preserving and solution-preserving map between models that are not
syntactically identical, as with the equivalence between a decentralised
partially observable Markov decision process and a continuous-state Markov
decision process over occupancy states \citep{dibangoye2016optimally}, or
between a decentralised control problem with partial history sharing and a
single-agent partially observable Markov decision process played by a fictitious
coordinator \citep{nayyar2013decentralized}. It is a structural analogy when the
two models share a form but the map either fails to preserve solutions or
requires assumptions that the general case violates, as with the claim that an
independent learner in a multi-agent environment faces a Markov decision
process, which holds exactly when the other agents are stationary and fails
precisely when they are learning. These three strengths are the three answers to
the question in the title. A game is a Markov decision process by definition in
the first case, by theorem in the second, and by appearance alone in the third.

The distinction matters because the literature is full of statements of the
third kind presented in the voice of the first. Treating the other agents as
part of the environment is the standard justification for applying single-agent
algorithms to multi-agent problems, and it is correct as a statement about a
frozen opponent profile and incorrect as a statement about a training run in
which every agent is simultaneously changing. Likewise, the observation that a
common-payoff game is just a Markov decision process over joint actions is
correct for a centralised planner and silently false for independent learners,
because the reduction preserves the optimal value but destroys the information
about which of several optimal joint actions the agents will each individually
select. The gap between those two readings is the entire subject of coordination
and equilibrium selection, and a dictionary that elides it is worse than no
dictionary at all.

The treatment is intended to be exhaustive over the standard taxonomy of games
and succinct within each entry. We begin in \cref{sec:single-agent} by fixing
the single-agent baseline, the Markov decision process and the partially
observable Markov decision process together with the belief and history
constructions that will be reused throughout. \Cref{sec:normal-form} introduces
the normal-form game as the stateless limit and identifies the one-agent
degeneracy with the multi-armed bandit. \Cref{sec:solution-concepts} develops
the solution concepts as fixed points of operators and shows, for each, the
single-agent object it collapses to, and \cref{sec:dominance} weakens the
rationality assumption in three directions, obtaining dominance and
rationalisability, the bounded-depth models of iterated reasoning, and the
quantal response equilibrium, which is the multi-agent soft Bellman equation.
\Cref{sec:repeated} treats repeated games
and the history-indexed Markov decision process that makes the folk theorem a
statement about equilibrium multiplicity in a large policy class.
\Cref{sec:stochastic} is devoted to stochastic games, the central object of the
correspondence, and to the three reward structures, zero-sum, common-payoff and
general-sum, that determine which single-agent reduction survives.
\Cref{sec:partial-obs} introduces partial observability and the lattice of
models that it generates, and \cref{sec:reductions} collects the exact
reductions that convert these multi-agent models into single-agent ones over
enlarged state spaces, which is where most of the theorem content of this work
is located. \Cref{sec:graphical} treats the structured classes, graphical,
polymatrix and congestion games, whose reinforcement learning counterpart is the
coordination graph. \Cref{sec:extensive} handles extensive-form games, perfect recall
and Kuhn's theorem, and the correspondence between information sets and
action-observation histories. \Cref{sec:bayesian} treats incomplete information,
the Harsanyi transformation and its reinforcement learning counterpart, the
Bayes-adaptive Markov decision process, together with common knowledge and the
hierarchy of interactive beliefs. \Cref{sec:communication} covers the operation
of letting agents choose what others observe, through cheap talk, costly
signalling and information design, and connects it to emergent communication.
\Cref{sec:stackelberg} covers
commitment, bilevel structure and opponent shaping, and \cref{sec:mechanism}
carries the same asymmetry to its limit, in which the designer chooses the game
rather than a move, which is reward design read from the other side and which
brings with it the price of anarchy and two impossibility theorems about
aggregation. \Cref{sec:potential}
connects potential and supermodular games to reward shaping, value factorisation
and the Individual-Global-Max condition, and \cref{sec:coalitions} connects
coalitional game theory to credit
assignment through the Shapley value, the core, the nucleolus and the bargaining
solutions. \Cref{sec:mean-field} and
\cref{sec:evolutionary} treat the two continuum limits, in the population of
agents and in the learning rate respectively, with
\cref{sec:learning-dynamics} between them collecting the classical learning
rules, fictitious play, no-regret dynamics, Blackwell approachability,
stochastic stability and the variational-inequality view that underlies
extragradient and optimistic methods, together with what each converges to.
\Cref{sec:complexity} assembles
the complexity landscape. \Cref{sec:training} is the practical core, mapping
each model class to the training procedures that are sound for it.
\Cref{sec:negative} collects the negative results, and \cref{sec:open} the open
problems.

\begin{figure}[t]
\centering
\begin{tikzpicture}[
  node distance=0pt,
  box/.style={draw=cInk!70, rounded corners=1.5pt, align=center,
              font=\footnotesize, inner sep=4pt, minimum height=7.2mm,
              text width=23mm},
  obs/.style={box, fill=white},
  pobs/.style={box, fill=cSlatePale},
  acc/.style={box, fill=cGoldPale, draw=cGoldDeep},
  ar/.style={-{Latex[length=2mm]}, draw=cMute, thick},
  lbl/.style={font=\scriptsize\itshape, color=cMute}
]

\node[lbl] at (0,1.15) {one agent};
\node[lbl] at (4.6,1.15) {common payoff};
\node[lbl] at (9.5,1.15) {general sum};

\node[obs]  (mdp)   at (0,0)    {MDP};
\node[acc]  (mmdp)  at (4.6,0)  {MMDP\\[-1pt]{\scriptsize(team Markov game)}};
\node[obs]  (sg)    at (9.5,0)  {Stochastic game\\[-1pt]{\scriptsize(Markov game)}};

\node[pobs] (mdp2)  at (0,-1.7) {MDP};
\node[pobs] (decmdp)at (4.6,-1.7){Dec-MDP};
\node[pobs] (posgj) at (9.5,-1.7){Jointly observable\\[-1pt]POSG};

\node[pobs] (pomdp) at (0,-3.4) {POMDP};
\node[acc]  (decpom)at (4.6,-3.4){Dec-POMDP};
\node[obs]  (posg)  at (9.5,-3.4){POSG};

\draw[ar] (mdp)    -- (mmdp);
\draw[ar] (mmdp)   -- (sg);
\draw[ar] (mdp2)   -- (decmdp);
\draw[ar] (decmdp) -- (posgj);
\draw[ar] (pomdp)  -- (decpom);
\draw[ar] (decpom) -- (posg);

\draw[ar] (mdp)    -- (mdp2);
\draw[ar] (mdp2)   -- (pomdp);
\draw[ar] (mmdp)   -- (decmdp);
\draw[ar] (decmdp) -- (decpom);
\draw[ar] (sg)     -- (posgj);
\draw[ar] (posgj)  -- (posg);

\node[lbl, anchor=east, text width=22mm, align=right] at (-1.9,0)    {state fully\\observed};
\node[lbl, anchor=east, text width=22mm, align=right] at (-1.9,-1.7) {state jointly\\determined};
\node[lbl, anchor=east, text width=22mm, align=right] at (-1.9,-3.4) {state partially\\observed};

\end{tikzpicture}
\caption{The lattice of models. Arrows denote inclusion of model classes: the
tail is a special case of the head. Moving right relaxes the reward structure
from a single agent to a shared payoff to arbitrary payoffs; moving down relaxes
the observability of the state. The single-agent column is the subject of
ordinary reinforcement learning; the whole diagram is the subject of this
monograph. The two shaded entries in gold are the classes for which an exact
reduction to a single-agent model over an enlarged state space is available and
is discussed in \cref{sec:reductions}.}
\label{fig:lattice}
\end{figure}

Four conventions are used throughout and are worth stating at the outset. The
first is that we always report the single-agent degeneracy. Every model class we
introduce has a one-agent special case, and in every table we name it, because
the fastest way to check that a proposed correspondence is coherent is to set
$\nag = 1$ and verify that both sides collapse to the same familiar object. The
second is that we distinguish the planning problem, in which the model is known
and a solution is to be computed, from the learning problem, in which the model
is unknown and a solution is to be approximated from sampled interaction. A
reduction that is valid for planning need not be usable for learning, and this
is the most common source of confusion in the area. The occupancy-state
reduction of a decentralised partially observable Markov decision process is
exact, and it requires knowledge of the transition and observation kernels in
order to propagate the occupancy state, so it is a planning tool that becomes a
learning tool only under additional modelling assumptions.

The third convention is the one that gives this work its shape as a reference,
and it is applied without exception, so it is worth stating as a contract.
Every section opens with two paragraphs naming the two readings of its subject,
the one a game theorist brings and the one a reinforcement learning researcher
brings, followed by a three-column table that places the two vocabularies side
by side and glosses each pair in plain words. The tables are numbered and
cross-referenced, so a reader who meets an unfamiliar term can find it without
reading the surrounding argument, and a reader who already knows one column can
read the other as a translation. Within a section the unit of exposition is a
triple. Each definition is followed immediately by a reinforcement learning
reading, set off by a marked run-in heading, which names the corresponding
reinforcement learning object, pairs the symbols one by one, and says what the
definition changes about an implementation; and that is followed by a worked
example with explicit numbers. The triple appears thirty-eight times, once per
definition, and the reader who wants only the dictionary can read the tables and
the readings and skip everything else.

The fourth convention concerns the numbers. Nothing in this subject is difficult
once the objects are small enough to compute by hand, and almost everything in
it is impenetrable when stated only in general form, so the examples carry as
much of the content as the propositions do. Every numerical value printed in an
example, from the equilibrium of a two-by-two matrix to the value of Kuhn poker
and the stationary distribution of a log-linear learning chain, is produced by a
program that recomputes it from the model definition. The program accompanies
this work and checks each printed figure against its recomputation, so the
examples can be audited rather than trusted.

A word is owed about the symbols themselves, because this work translates
concepts between two fields and a reader arriving from either one will meet
notation that is not the notation of its own textbooks. The choice made here is
that the reinforcement learning side is primary. States are $\St$, actions are
$\Ac$, kernels are $\Tr$ and $\Om$, rewards are $\Rew$, policies are $\pi$,
values are $V$ and $Q$, and the discount is $\gamma$, as in
\citet{sutton2018reinforcement}. Game-theoretic objects are grafted onto that
skeleton rather than the reverse, so payoffs are $\util_i$, best responses are
$\brc_i$ and correlated profiles are $\sigma$, and the agent set is the
calligraphic $\agents$. The choice follows from the purpose: the intended reader
is one who already thinks in the first vocabulary and wants the second, and
forcing that reader to relearn $\St$ as a set of histories would defeat the
object.

The cost is borne by the reader coming the other way, and
\cref{tab:osborne-concordance} is the compensation. It places the notation of
\citet{osborne1994course}, the standard graduate reference, beside the notation
used here, object by object. Three differences are structural rather than
cosmetic and are worth naming. First, Osborne and Rubinstein take the preference
relation $\succsim_i$ as primitive and the payoff function as a representation
of it, whereas we write $\util_i$ throughout, which is the special case in which
the preferences are representable and the representation has been fixed. Second,
they build the extensive form on a set of histories $H$, so that a node is a
sequence of actions, whereas we build it on a set of nodes $\mathcal{V}$ and
recover
histories by reading paths; the two are isomorphic and the first is the cleaner
foundation, while the second is nearer to what an implementation stores. Third,
their discount factor is $\delta$ and ours is $\gamma$, which is the single most
frequent source of confusion when the two literatures are read side by side.

\begin{nomenx}{Osborne and Rubinstein}{This work}{Concordance with
\citet{osborne1994course}, the standard graduate reference, whose notation
differs from the one used here. The right-hand column glosses the
pair.}{tab:osborne-concordance}
Player set $N$ & Agent set $\agents$ &
  Finite; $\nag = \abs{\agents}$ in both \\
Action set $A_i$ & $\Ac_i$ &
  Identical object, different font \\
Action profile $a \in A$ & $a \in \Ac$ &
  Identical; both write $a_{-i}$ \\
Preference relation $\succsim_i$ & Payoff $\util_i : \Ac \to \Real$ &
  We assume a representation and fix it \\
Strategic game $\langle N,(A_i),(\succsim_i)\rangle$ &
  $\game = (\agents, \{\Ac_i\}, \{\util_i\})$ &
  The same tuple \\
Mixed strategy $\alpha_i \in \Delta(A_i)$ & $\pi_i \in \Simplex{\Ac_i}$ &
  $\Delta(\cdot)$ is common to both \\
Mixed extension $\langle N,(\Delta(A_i)),(U_i)\rangle$ & $\util_i(\pi)$ &
  Their $U_i$ is our multilinear $\util_i$ \\
Extensive game $\langle N,H,P,f_c,(\mathcal{I}_i),(\succsim_i)\rangle$ &
  Tree on nodes $\mathcal{V}$ with partitions $\mathcal{I}_i$ &
  They index by histories, we by nodes \\
History $h \in H$ & A node of $\mathcal{V}$; trajectory prefix &
  A path from the root \\
Player function $P$ & Which agent acts at a node &
  Same role \\
Chance moves $f_c$ & Kernels $\Tr$, $\Om$ &
  Nature's randomisation \\
Information partition $\mathcal{I}_i$ & $\mathcal{I}_i$ &
  Identical; $I \in \mathcal{I}_i$ is one set \\
Behavioural strategy $\beta_i$ & Policy $\pi_i$ &
  One distribution per information set \\
Discount factor $\delta$ & $\gamma$ &
  The commonest collision between the two \\
Coalitional game $\langle N, v\rangle$ & $(\agents, v)$ &
  They write $v(S)$, we write $v(C)$ \\
Bargaining $\langle X, D, \succsim_1,\succsim_2\rangle$ & $(\mathcal{U}, d)$ &
  We work in utility space directly \\
\end{nomenx}

We assume familiarity with the basic vocabulary of both fields at the level of
\citet{sutton2018reinforcement} for reinforcement learning and
\citet{osborne1994course} or \citet{fudenberg1991game} for game theory. The
reader who wants a broad and gentle entry point to the intersection will find it
in \citet{albrecht2024multi}, whose taxonomy we follow in outline and sharpen in
detail, and in the surveys of \citet{bucsoniu2010multi},
\citet{hernandez2019survey} and \citet{yang2020game}. The present work is
complementary to those: where a survey enumerates algorithms, we enumerate
identities.

\section{The single-agent baseline}
\label{sec:single-agent}

We fix notation for the two single-agent models that every correspondence in
this work degenerates to. Throughout, $\Simplex{X}$ denotes the set of
probability distributions over a finite set $X$, and for a tuple
$x = (x_1,\dots,x_\nag)$ indexed by agents we write $x_{-i}$ for the subtuple
obtained by deleting the $i$-th component.

Before any definitions, a word about the two readings, because this is the
habit the whole monograph asks the reader to acquire. A game theorist looks at
a sequential decision problem and sees a one-player extensive-form game with
chance moves: a tree whose nodes are positions, whose edges are moves, whose
leaves carry payoffs, and whose solution is obtained by backward induction. A
reinforcement learning researcher looks at the same object and sees a Markov
decision process: a state variable that summarises the position, a kernel that
moves it, a reward that accrues along the way, and a solution obtained by
iterating an operator until it stops moving. The two pictures differ in what
they take as primitive. The game theorist takes the tree as primitive and
derives the state as an equivalence class of histories; the reinforcement
learner takes the state as primitive and derives the tree by unrolling. Every
disagreement about notation in this field traces back to that choice, and
knowing which direction one is reading in resolves most of them.

\begin{nomen}{The single-agent vocabulary read from both sides. This table is
the key to \cref{sec:single-agent} and the one on which all later tables
depend.}{tab:nom-single}
Decision node of a one-player tree & State $s \in \St$, or history $h_t$ &
  Where you are and what you may do next \\
Move at a node & Action $a \in \Ac$ & The choice under your control \\
Chance node & Transition kernel $\Tr$, observation kernel $\Om$ &
  The choice not under your control \\
Payoff at a leaf & Discounted return $\sum_t \gamma^t \Rew_t$ &
  What is being maximised \\
Pure strategy & Deterministic policy $\pi : \St \to \Ac$ &
  A complete contingent plan \\
Behavioural strategy & Stochastic policy $\pi : \St \to \Simplex{\Ac}$ &
  A plan that may randomise locally \\
Backward induction & Value iteration, dynamic programming &
  Solve the end first, then work backwards \\
Principle of optimality & Bellman equation &
  Optimal plans are optimal in every subproblem \\
Value of the game at a node & $V^\star(s)$ &
  What the problem is worth from here on \\
Information set & Action-observation history $h_t$, belief $b_t$ &
  What you know when you must choose \\
Nature's move & Latent state, unobserved parameter &
  Uncertainty with a known distribution \\
Security level against nature & Robust value $\max_\pi \min_{p} V^\pi_p$ &
  What you can guarantee with no distribution \\
\end{nomen}

\begin{definition}[Markov decision process]
\label{def:mdp}
A Markov decision process is a tuple
$M = (\St, \Ac, \Tr, \Rew, \gamma, b_0)$ in which $\St$ is a finite set of
states, $\Ac$ a finite set of actions, $\Tr : \St \times \Ac \to \Simplex{\St}$
a transition kernel, $\Rew : \St \times \Ac \times \St \to \Real$ a bounded
reward function, $\gamma \in [0,1)$ a discount factor and
$b_0 \in \Simplex{\St}$ an initial state distribution.
\end{definition}

\begin{rlreading}[The game-theoretic reading]
\Cref{def:mdp} is a one-player extensive-form game with chance moves, in the
sense of \cref{def:efg}, presented in compressed form. Unrolling it gives a tree
whose non-terminal nodes alternate between the single agent and chance: the
agent's nodes are the states $s$, its edges are the actions $a$, the chance
nodes implement $\Tr(\cdot \given s,a)$, and the payoff at a leaf is the
discounted sum of the rewards collected along the path. The state is therefore
not primitive from the game-theoretic side; it is the label of an equivalence
class of histories that a game theorist would write as an information set, and
\cref{prop:infoset-aoh} makes that identification exact. The discount $\gamma$
plays the role that patience plays in \cref{sec:repeated}, and $b_0$ is the
initial chance move. Everything in this monograph is obtained from this tuple by
adding agents, removing observability, or both.
\end{rlreading}

A stationary policy is a map $\pi : \St \to \Simplex{\Ac}$, and its value
function $V^\pi : \St \to \Real$ is the expected discounted return
$V^\pi(s) = \Exp\big[\sum_{t \ge 0} \gamma^t \Rew(s_t,a_t,s_{t+1}) \given s_0 = s\big]$
under the process generated by $\pi$ and $\Tr$. The action-value function is
$Q^\pi(s,a) = \Exp_{s' \sim \Tr(\cdot \given s,a)}[\Rew(s,a,s') + \gamma V^\pi(s')]$.
The entire theory of the single-agent case rests on the observation that the
optimality operator
\begin{align}
(\BellOp V)(s) \;=\; \max_{a \in \Ac} \; \Exp_{s' \sim \Tr(\cdot \given s,a)}
   \Big[\Rew(s,a,s') + \gamma V(s')\Big]
\label{eq:bellman-opt}
\end{align}
is a $\gamma$-contraction in the supremum norm on $\Real^{\St}$, so that by the
Banach fixed point theorem it has a unique fixed point $V^\star$, and that a
policy greedy with respect to $V^\star$ attains it. Two structural facts that we
will invoke repeatedly follow. The first is that the maximisation in
\cref{eq:bellman-opt} decouples across states, which is why dynamic programming
works at all. The second is that the maximum of a linear function over
$\Simplex{\Ac}$ is attained at a vertex, which is why deterministic policies
suffice in the single-agent case. Both facts fail in the multi-agent setting,
the first because the reward of agent $i$ depends on the actions of others and
therefore the per-state problems are coupled through the other agents' policies,
and the second because equilibria of general games require mixing. Nearly every
difficulty catalogued in this monograph is a consequence of one of these two
failures.

The smallest example that exhibits every ingredient is worth carrying through
the rest of the work, so we fix it now and return to it whenever a multi-agent
construction needs a one-agent shadow.

\begin{example}[The slippery corridor]
\label{ex:corridor}
Let $\St = \{L, M, R\}$ be three cells of a corridor, let
$\Ac = \{\mathrm{left}, \mathrm{right}\}$, and let the agent's step succeed with
probability $0.8$ and leave it where it is with probability $0.2$. The cell $R$
is absorbing. A reward of $1$ is collected on the transition that enters $R$ and
nothing is collected anywhere else, and $\gamma = 0.9$. Writing the Bellman
equation at $M$ under the action $\mathrm{right}$,
\begin{align*}
V^\star(M) \;=\; 0.8\,\big(1 + \gamma V^\star(R)\big)
            \;+\; 0.2\,\big(0 + \gamma V^\star(M)\big)
           \;=\; 0.8 + 0.18\, V^\star(M),
\end{align*}
because $V^\star(R) = 0$, which gives $V^\star(M) = 0.8/0.82 = 0.9756$. The same
computation at $L$ gives
$V^\star(L) = 0.8 \cdot 0.9 \cdot 0.9756 + 0.18\, V^\star(L) = 0.8566$, and the
greedy policy moves right everywhere. Three features of this calculation recur
throughout the monograph in disguised form. The equation at $M$ refers to
$V^\star(M)$ on both sides, which is the self-reference that a contraction
resolves; the equation at $L$ refers only to $V^\star(M)$ and $V^\star(L)$, which
is the locality that makes dynamic programming cheap; and the maximisation over
the two actions at $L$ can be performed without knowing what happens at $M$
beyond its value, which is the decoupling across states that
\cref{eq:stage-game} will destroy. \exend
\end{example}

\begin{definition}[Partially observable Markov decision process]
\label{def:pomdp}
A partially observable Markov decision process is a tuple
$P = (\St, \Ac, \Ob, \Tr, \Om, \Rew, \gamma, b_0)$ in which
$(\St,\Ac,\Tr,\Rew,\gamma,b_0)$ is as in \cref{def:mdp}, $\Ob$ is a finite set
of observations and $\Om : \St \times \Ac \to \Simplex{\Ob}$ is an observation
kernel, with $\Om(o \given s', a)$ the probability of observing $o$ after taking
$a$ and arriving at $s'$.
\end{definition}

\begin{rlreading}[The game-theoretic reading]
Adding $\Om$ to \cref{def:mdp} is what a game theorist calls coarsening the
information partition. In the unrolled tree, two nodes that the agent could
previously tell apart now lie in the same information set whenever they share an
action-observation history, so the policy must assign one distribution to the
whole class. The action-observation history $h_t$ is therefore the information
set, the belief $b_t$ is the distribution over nodes within it that a sequential
equilibrium requires in \cref{sec:extensive}, and the requirement that $b_t$ be
the Bayes posterior under the policy being evaluated is the consistency
condition of that solution concept. The one-agent case is the easy case: because
nobody else is conditioning on anything, the belief is a sufficient statistic
and \cref{thm:belief-mdp} applies. \Cref{prop:posg-lattice} records that with
two agents it is not.
\end{rlreading}

Because the state is hidden, a policy must be defined on what the agent has
seen. The action-observation history at time $t$ is
$h_t = (o_0, a_0, o_1, a_1, \dots, a_{t-1}, o_t) \in \Hi_t$, and a general
policy is a map $\pi : \Hi \to \Simplex{\Ac}$ where $\Hi = \bigcup_t \Hi_t$.
Histories grow without bound, and the standard remedy is to replace the history
by a sufficient statistic. The belief state is the posterior over the hidden
state given the history,
\begin{align}
b_t(s) \;=\; \Prb(s_t = s \given h_t),
\qquad
b_{t+1}(s') \;\propto\; \Om(o_{t+1} \given s', a_t)
   \sum_{s \in \St} \Tr(s' \given s, a_t)\, b_t(s),
\label{eq:belief-update}
\end{align}
the second expression being the Bayes filter that propagates it. The belief
state is a sufficient statistic for the history in the sense that
$\Prb(s_t \given h_t) = \Prb(s_t \given b_t)$ and the optimal value is a
function of $b_t$ alone, which yields the first of the reductions we shall be
cataloguing.

\begin{example}[Two ticks of a Bayes filter]
\label{ex:belief}
A sensor reports which of two rooms contains an object. The prior is uniform,
$b_0 = (0.5, 0.5)$, the object does not move, and the sensor is correct with
probability $0.85$. After one reading of ``left'', \cref{eq:belief-update} gives
\begin{align*}
b_1(\mathrm{left}) \;=\;
\frac{0.85 \times 0.5}{0.85 \times 0.5 + 0.15 \times 0.5} \;=\; 0.85 .
\end{align*}
After a second reading of ``left'' the same formula applied to $b_1$ gives
$b_2(\mathrm{left}) = 0.85^2 / (0.85^2 + 0.15^2) = 0.9698$. The arithmetic is
elementary and the structural point is not: the pair $(0.85, 0.15)$ is all the
agent needs to carry forward, so a recurrent network that learns to maintain
two numbers instead of the whole observation history is not a heuristic
compression but an exact one. The multi-agent reductions of
\cref{sec:reductions} are attempts to find the analogous two numbers when there
is more than one agent, and \cref{prop:posg-lattice} will explain why they are
much larger objects. \exend
\end{example}

\begin{theorem}[Belief-MDP reduction; belief sufficiency and the functional
equation, \citealp{astrom1965optimal}; piecewise linearity and convexity,
\citealp{smallwood1973optimal}]
\label{thm:belief-mdp}
Let $P$ be a partially observable Markov decision process with $\St$, $\Ac$ and
$\Ob$ finite and $\Rew$ bounded. Define the belief
Markov decision process $\tilde M = (\Simplex{\St}, \Ac, \tilde\Tr,
\tilde\Rew, \gamma, \delta_{b_0})$ whose states are beliefs, whose transition
kernel is induced by \cref{eq:belief-update} together with the observation
likelihood $\Prb(o \given b,a) = \sum_{s'} \Om(o \given s',a) \sum_s
\Tr(s' \given s,a) b(s)$, and whose reward is
$\tilde\Rew(b,a) = \sum_{s,s'} b(s) \Tr(s' \given s,a) \Rew(s,a,s')$. Then
$\tilde M$ is a Markov decision process with a continuous state space, its
optimal value function $\tilde V^\star$ satisfies
$\tilde V^\star(b_t) = V^\star(h_t)$ for every history $h_t$ with belief $b_t$,
and an optimal stationary policy of $\tilde M$ induces an optimal
history-dependent policy of $P$. Moreover, at every finite horizon
$\tilde V^\star$ is the maximum of a finite set of linear functions of $b$, hence
piecewise linear and convex.
\end{theorem}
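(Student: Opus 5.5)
The plan has three parts: establish that the belief is a sufficient statistic, verify that $\tilde M$ is a genuine Markov decision process, and then compare values through the Bellman operators, proving piecewise linearity and convexity by induction on the horizon.

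\emph{Step 1: sufficiency of the belief.} Fix an arbitrary history-dependent policy $\pi$ of $P$ and a history $h_t$ of positive probability. By induction on $t$, using \cref{eq:belief-update}, I would show that the conditional law of $s_t$ given $h_t$ is $b_t$, whatever $\pi$ is. The policy contributes only factors $\pi(a_k \given h_k)$, and these are constant given $h_t$ and cancel in the normalisation.

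Consequently, conditional on $h_t$ and $a_t$, the joint law of $(s_t,s_{t+1},o_{t+1})$ depends only on $(b_t,a_t)$. This yields two facts:
\begin{itemize}
\item the expected immediate reward is $\tilde\Rew(b_t,a_t)$;
\item the next observation has law $\Prb(o \given b_t,a_t)$, and the next belief is the deterministic image of $(b_t,a_t,o)$ under the Bayes filter.
\end{itemize}
So $\tilde\Tr(\cdot \given b,a) = \sum_o \Prb(o \given b,a)\,\delta_{\tau(b,a,o)}$, where $\tau$ denotes the update map; observations with zero likelihood are ignored. Because $\Ob$ is finite, $\tilde\Tr$ is a finitely supported and hence measurable kernel on the compact set $\Simplex{\St}$. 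The reward $\tilde\Rew$ is bounded by $\sup\abs{\Rew}$. This settles the claim that $\tilde M$ is a Markov decision process with a continuous state space. By the tower property, $V^\pi(h_t)$ equals the expected discounted sum of $\tilde\Rew(b_k,a_k)$ for $k\ge t$.

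\emph{Step 2: equality of values.} The Bellman operator of \cref{eq:bellman-opt}, transported to $\tilde M$, is a $\gamma$-contraction on bounded functions on $\Simplex{\St}$, since the maximum over finitely many actions causes no measurability trouble. It therefore has a unique fixed point $\tilde V^\star$, and a greedy selection is a stationary optimal policy. Every stationary belief policy $\tilde\pi$ induces a history policy $h_t\mapsto\tilde\pi(b_t)$ of $P$, and by Step 1 both have the same value, so $V^\star(h_t)\ge\tilde V^\star(b_t)$.

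The reverse inequality is where I expect the real work. Starting from $V^\pi(h_t)=\tilde\Rew(b_t,\pi(h_t))+\gamma\sum_o \Prb(o\given b_t,\cdot)\,V^\pi(h_t a o)$, I would work at a finite horizon $N$ and prove by backward induction that $V^\pi_N(h_t)\le \tilde V^\star_N(b_t)$. This uses Step 1 at each stage to replace the continuation value, which depends on the history, by an upper bound depending only on the new belief. I would then let $N\to\infty$, using the uniform tail bound $\gamma^N\sup\abs{\Rew}/(1-\gamma)$. The delicate point is that the history policy may depend on more than the belief; the induction only ever bounds it from above by the maximum, so this causes no difficulty. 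Together with the first inequality this gives $\tilde V^\star(b_t)=V^\star(h_t)$ and the optimality of the induced policy.

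\emph{Step 3: piecewise linearity and convexity.} Let $V_0\equiv 0$ and suppose $V_k(b)=\max_{\alpha\in\Gamma_k}\alpha\cdot b$ for a finite set $\Gamma_k\subset\Real^{\St}$. Then $\Prb(o\given b,a)\,V_k(\tau(b,a,o))=\max_{\alpha}\sum_{s,s'}\alpha(s')\Om(o\given s',a)\Tr(s'\given s,a)b(s)$, because the normaliser cancels. The normaliser is positive homogeneous, and it is zero exactly when the term vanishes. Each such term is a maximum of linear functions of $b$, and so are their sum over $o$ and the maximum over $a$. This gives a finite set $\Gamma_{k+1}$ of size at most $\abs{\Ac}\,\abs{\Gamma_k}^{\abs{\Ob}}$, and convexity follows because a maximum of linear functions is convex.
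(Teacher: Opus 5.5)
Your argument is correct and is the standard one behind the results the paper cites without giving a proof: belief sufficiency and the functional equation as in \citet{astrom1965optimal}, then the $\alpha$-vector induction of \citet{smallwood1973optimal}, with your finite-horizon backward induction handling policies that depend on more than the belief. The one loose step is ``finitely supported and hence measurable'', since measurability of $\tilde\Tr$ does not follow from finite support alone; it follows from the continuity of $b\mapsto\Prb(o\given b,a)$ and of the Bayes update on $\{\Prb(o\given b,a)>0\}$, and it can be sidestepped altogether because the beliefs reachable from any starting belief form a countable set.
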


\Cref{thm:belief-mdp} is the template for most of this monograph. A model that
appears to lack the Markov property with respect to its natural state variable
is shown to possess it with respect to a statistic of the history, at the cost
of enlarging the state space from a finite set to a simplex. The
multi-agent reductions in \cref{sec:reductions} have exactly this shape, with
the belief replaced by an occupancy state or a public belief state and the
action replaced by a decision rule or a prescription. The price is also the
same: the enlarged problem is a Markov decision process, so dynamic programming
applies in principle, and the state space is a continuum, so dynamic programming
is not immediately executable.

Two further single-agent objects are needed. The first is the robust Markov
decision process, in which the transition kernel is only known to lie in an
uncertainty set $\mathcal{P}(s,a) \subseteq \Simplex{\St}$ and the agent
maximises the worst-case return,
\begin{align}
(\BellOp_{\mathrm{rob}} V)(s) \;=\; \max_{a \in \Ac} \; \min_{p \in \mathcal{P}(s,a)}
   \; \Exp_{s' \sim p}\big[\Rew(s,a,s') + \gamma V(s')\big].
\label{eq:robust-bellman}
\end{align}
Under rectangularity of $\mathcal{P}$, meaning that the uncertainty sets at
distinct state-action pairs are independent, the operator in
\cref{eq:robust-bellman} is again a $\gamma$-contraction and the theory goes
through \citep{iyengar2005robust,nilim2005robust}. We will see in
\cref{sec:stochastic} that \cref{eq:robust-bellman} is literally the Bellman
operator of a two-player zero-sum stochastic game in which the adversary chooses
the transition distribution, which is the cleanest example in this work of two
fields deriving the same operator for unrelated reasons.

\begin{example}[The robust and the nominal choice disagree]
\label{ex:robust-vs-nominal}
An agent chooses between two actions at a state whose successors are worth $10$
and $0$. Action $a$ reaches the good successor with a probability that is only
known to lie in $[0.6, 0.9]$; action $b$ reaches it with probability exactly
$0.7$. Under the midpoint estimate $0.75$ the expected continuation of $a$ is
$7.5$ and that of $b$ is $7.0$, so the nominal optimiser picks $a$; discounted at
$\gamma = 0.9$ these are $6.75$ and $6.3$. Under \cref{eq:robust-bellman} the
adversary sets the probability of $a$ to $0.6$, giving $6.0$ before discounting
and $5.4$ after, against $6.3$ for $b$, so the robust optimiser picks $b$. The
two orderings are opposite, and no amount of data about the average outcome of
$a$ resolves the disagreement, because the two criteria are answering different
questions. The reinforcement learning reading is that a conservative agent and a
risk-neutral agent trained on the same transitions will disagree about the
policy whenever the uncertainty is asymmetric, and \cref{prop:robust-zero-sum}
will identify the conservative agent as the maximiser in a two-player zero-sum
game. \exend
\end{example}

The second is the Bayes-adaptive Markov decision process, in which the model
parameters are themselves unknown and are given a prior. Deferring the details
to \cref{sec:bayesian}, we note here only that the construction is an instance
of \cref{thm:belief-mdp} applied to a partially observable Markov decision
process whose hidden state is the pair consisting of the physical state and the
unknown parameter, and that this is the exact counterpart of the Harsanyi
transformation of a game of incomplete information.

\begin{table}[t]
\centering
\caption{The single-agent vocabulary and the multi-agent terms that generalise
it. The third column names the section in which the generalisation is
developed.}
\label{tab:single-agent-vocab}
\small
\begin{tabular}{@{}L{3.9cm} L{6.5cm} L{3.3cm}@{}}
\toprule
\textsc{Single-agent object} & \textsc{Multi-agent generalisation} & \textsc{Developed in} \\
\midrule
Action $a \in \Ac$ & Joint action $a = (a_1,\dots,a_\nag) \in \prod_i \Ac_i$ & \cref{sec:normal-form} \\
Policy $\pi$ & Policy profile $\pi = (\pi_1,\dots,\pi_\nag)$ & \cref{sec:normal-form} \\
Reward $\Rew$ & Reward profile $(\Rew_1,\dots,\Rew_\nag)$ & \cref{sec:stochastic} \\
Greedy improvement & Best-response correspondence $\brc_i$ & \cref{sec:solution-concepts} \\
Fixed point of $\BellOp$ & Nash equilibrium of the induced game & \cref{sec:solution-concepts} \\
Dominated action & Strictly dominated strategy; rationalisability & \cref{sec:dominance} \\
Softmax policy & Quantal response equilibrium & \cref{sec:dominance} \\
Principle of optimality & Subgame perfection, Markov perfection & \cref{sec:solution-concepts} \\
Deterministic optimal policy & Mixed and behavioural strategies & \cref{sec:extensive} \\
Belief state $b_t$ & Occupancy state, public belief state & \cref{sec:reductions} \\
Robust Bellman operator & Minimax operator of a zero-sum game & \cref{sec:stochastic} \\
Bayes-adaptive MDP & Harsanyi transformation, Bayes-Nash equilibrium & \cref{sec:bayesian} \\
Factored MDP & Graphical game, coordination graph & \cref{sec:graphical} \\
Reward shaping & Potential function, difference rewards & \cref{sec:potential} \\
Reward design & Mechanism design; incentive compatibility & \cref{sec:mechanism} \\
Credit over time (TD) & Credit over agents (Shapley value) & \cref{sec:potential} \\
Return decomposition & Core, nucleolus, bargaining solution & \cref{sec:coalitions} \\
Gradient ascent & Learning dynamics; variational inequality & \cref{sec:learning-dynamics} \\
\bottomrule
\end{tabular}
\end{table}

\section{Normal-form games as the stateless limit}
\label{sec:normal-form}

The normal-form game is what remains of a stochastic game when the state space
is a singleton and the horizon is one. Working out this limit carefully is
worthwhile because every solution concept in the subject is defined first here
and then lifted, and because the lifting is not always the obvious one.

The two readings are easy to state here and are the ones most often confused in
practice. A game theorist sees a payoff tensor: a function from joint actions
to a vector of numbers, one per agent, together with a question about which
joint actions are consistent with rationality. A reinforcement learning
researcher sees a one-step environment: an environment with a single state, a
factored action space, and a vector reward, together with a question about what
a learning rule converges to when every component of the action is chosen by a
separate learner. The payoff tensor and the one-step environment are the same
data structure, and the two questions are different. Keeping them apart is the
discipline this section asks for: the game-theoretic question is about fixed
points of a correspondence, and the reinforcement learning question is about
trajectories of a dynamical system, and they have different answers even when
the underlying tensor is identical.

\begin{nomen}{Normal-form vocabulary read from both sides.}{tab:nom-nfg}
Agent, player $i \in \agents$ & Agent, learner, policy head &
  One decision maker, one reward signal \\
Action set $\Ac_i$ & Action space of agent $i$ &
  The columns of this agent's slice of the tensor \\
Joint action $a \in \Ac$ & Joint action; one environment step &
  One cell of the payoff tensor \\
Payoff $\util_i(a)$ & Reward $\Rew_i$ for a one-step episode &
  The number agent $i$ wants to be large \\
Mixed strategy $\pi_i$ & Stochastic policy over $\Ac_i$ &
  A row of logits after a softmax \\
Strategy profile $\pi$ & Joint policy, product policy &
  What every agent does at once \\
Multilinearity of $\util_i$ & Independence of the action samplers &
  Each agent draws its own action privately \\
Payoff tensor & Reward model of the one-step environment &
  The same array, read by the same indices \\
Common payoff & Shared reward; team reward &
  One scalar for everybody \\
Zero-sum & Adversarial reward $\Rew_2 = -\Rew_1$ &
  One agent's gain is the other's loss \\
General sum & Mixed-motive reward &
  Neither aligned nor opposed \\
Stage game induced at $s$ & One-step game with bootstrapped targets &
  Immediate reward plus discounted value \\
\end{nomen}

\begin{definition}[Normal-form game]
\label{def:nfg}
A normal-form game is a tuple $\game = (\agents, \{\Ac_i\}_{i \in \agents},
\{\util_i\}_{i \in \agents})$ in which $\agents = \{1,\dots,\nag\}$ is a finite
set of agents, $\Ac_i$ is the finite action set of agent $i$, and
$\util_i : \Ac \to \Real$ is the payoff function of agent $i$ defined on the
joint action set $\Ac = \prod_{j \in \agents} \Ac_j$. A mixed strategy of agent
$i$ is $\pi_i \in \Simplex{\Ac_i}$, a profile is $\pi = (\pi_1,\dots,\pi_\nag)$,
and payoffs extend multilinearly,
$\util_i(\pi) = \sum_{a \in \Ac} \big(\prod_j \pi_j(a_j)\big)\, \util_i(a)$.
\end{definition}

\begin{rlreading}
\Cref{def:nfg} is an environment with one state, one step and a factored action
space. The agent set $\agents$ is the set of policy heads, $\Ac_i$ is the action
space of head $i$, the joint action $a$ is what the environment receives in one
call to its step function, and $\util_i$ is the $i$-th component of a vector
reward. The mixed strategy $\pi_i$ is the stochastic policy of head $i$, and the
profile $\pi$ is what a practitioner calls the joint policy. The multilinear
extension $\util_i(\pi)$ is the expected reward of that joint policy, and it is
multilinear for one reason only: each head samples its own action from its own
distribution without looking at the others' samples. That independence is a
property of the sampling code, not of the game, and relaxing it by drawing the
whole joint action from a single distribution over $\Ac$ is what
\cref{def:ce-cce} does. The whole of the difference between a Nash equilibrium
and a correlated equilibrium is the difference between $\nag$ calls to a random
number generator and one.
\end{rlreading}

\begin{example}[Evaluating a joint policy, term by term]
\label{ex:multilinearity}
Take the two-agent game with action sets $\{C, D\}$ and payoffs
\begin{align*}
\bimatrix{C}{D}{C}{D}{3,3}{0,5}{5,0}{1,1},
\end{align*}
which is the prisoner's dilemma and which \cref{ex:four-games} meets again
alongside its three companions. Consider the joint policy in
which the row head cooperates with probability $0.5$ and the column head with
probability $0.25$. The four joint actions have probabilities
$\pi_1(C)\pi_2(C) = 0.125$, $\pi_1(C)\pi_2(D) = 0.375$,
$\pi_1(D)\pi_2(C) = 0.125$ and $\pi_1(D)\pi_2(D) = 0.375$, summing to one.
Weighting the row agent's payoffs by these gives
\begin{align*}
\util_1(\pi) \;=\; 0.125 \cdot 3 + 0.375 \cdot 0 + 0.125 \cdot 5
   + 0.375 \cdot 1 \;=\; 1.375,
\end{align*}
and the same weights against the column agent's payoffs give
$\util_2(\pi) = 0.125 \cdot 3 + 0.375 \cdot 5 + 0.125 \cdot 0 +
0.375 \cdot 1 = 2.625$. Two observations are worth making. The first is that
$\util_1(\pi) = 1.375$ is a number neither agent can read off any single cell of
the matrix; it is an average over an outer product, which is exactly what a
critic trained on sampled joint actions estimates. The second is that the
computation is linear in $\pi_1$ for fixed $\pi_2$, so the best response of
\cref{eq:best-response} maximises a linear function over a simplex and is
attained at a vertex, which is why best responses may always be taken pure even
when equilibria may not. \exend
\end{example}

The multilinearity of $\util_i$ in the profile is the structural fact that makes
the theory work, and it is exactly the independence assumption that each agent
randomises privately. Relaxing it, by allowing the joint distribution over
$\Ac$ to be an arbitrary element of $\Simplex{\Ac}$ rather than a product
measure, produces the correlated equilibrium of \cref{sec:solution-concepts},
and the relaxation is what converts an intractable fixed-point problem into a
linear program.

Three reward structures recur and we name them once.

\begin{definition}[Reward structures]
\label{def:reward-structures}
A normal-form game is common-payoff, also called identical-interest or a team
game, when $\util_i = \util_j$ for all $i,j$. It is constant-sum when
$\sum_{i} \util_i(a) = c$ for all $a \in \Ac$ and some constant $c$, and
zero-sum when $c = 0$; with $\nag = 2$ this means $\util_2 = -\util_1$. It is
general-sum when no restriction is imposed.
\end{definition}

\begin{rlreading}
These are the three shapes the reward signal can take, and the choice among them
is the single most consequential line in an environment specification.
Common payoff means one scalar reward broadcast to every agent, which is the
setting of cooperative benchmarks and the only one in which the word
``optimal'' is unambiguous. Zero-sum with two agents means
$\Rew_2 = -\Rew_1$, so the environment returns one number and negates it for the
opponent, which is the setting of self-play board and card games. General sum
means a reward vector with no relation among its components, which is the
setting of every social dilemma, every market and every system whose agents have
separate owners. Nothing else about a multi-agent problem determines as much:
\cref{tab:sg-structures} is organised entirely by this trichotomy, and the first
of the five questions in \cref{sec:training} asks for it.
\end{rlreading}

The three structures are best met at once, in their smallest instances, because
almost every claim in the rest of this work is first checked against one of
them. We give them here and refer back to them by name throughout.

\begin{example}[The four games that this monograph keeps returning to]
\label{ex:four-games}
Each matrix lists the row agent's payoff first. The prisoner's dilemma, in which
defection strictly dominates cooperation for both agents and the unique
equilibrium $(D,D)$ is worse for both than $(C,C)$,
\begin{align*}
\bimatrix{C}{D}{C}{D}{3,3}{0,5}{5,0}{1,1}
\end{align*}
is the canonical general-sum game and the canonical social dilemma. Matching
pennies, in which the agents want to match and to mismatch respectively,
\begin{align*}
\bimatrix{H}{T}{H}{T}{1,-1}{-1,1}{-1,1}{1,-1}
\end{align*}
is the canonical zero-sum game and the smallest game with no pure equilibrium.
The stag hunt, in which both agents prefer to hunt the stag together but hunting
it alone is the worst outcome,
\begin{align*}
\bimatrix{\mathrm{Stag}}{\mathrm{Hare}}{\mathrm{Stag}}{\mathrm{Hare}}{4,4}{1,3}{3,1}{3,3}
\end{align*}
is the canonical coordination game with a conflict between payoff dominance and
risk dominance, and we return to it in \cref{ex:log-linear}. The game of
chicken, in which each agent prefers to be the one who does not yield,
\begin{align*}
\bimatrix{\mathrm{Swerve}}{\mathrm{Straight}}{\mathrm{Swerve}}{\mathrm{Straight}}{6,6}{2,7}{7,2}{0,0}
\end{align*}
is the canonical anti-coordination game and the one on which correlation pays,
as \cref{ex:chicken-ce} computes. Read as reinforcement learning environments
all four are one-state, one-step, two-agent problems with a two-dimensional
factored action space and a two-component reward; read as games they are the
four qualitatively distinct two-by-two strategic situations. Nothing in the
multi-agent difficulty requires more than this. \exend
\end{example}

The single-agent degeneracy of \cref{def:nfg} is immediate and instructive. With
$\nag = 1$ a normal-form game is a function $\util_1 : \Ac_1 \to \Real$ and the
problem of playing it well is the problem of computing $\argmax_{a}\util_1(a)$.
If $\util_1$ is accessible only through sampling, by playing an action and
observing a noisy payoff, the problem is a multi-armed bandit, and if the sampling is
preceded by an observation drawn from a distribution the problem is a
contextual bandit. The three objects, one-shot optimisation, bandit and
contextual bandit, are therefore the one-agent shadows of the normal-form game,
the repeated normal-form game under bandit feedback, and the repeated
normal-form game with a public state. We record this in
\cref{tab:nfg-degeneracy} together with the corresponding statement for each
reward structure, where the point of interest is that the single-agent
degeneracy is the same object in all three rows: with one agent there is no
distinction between cooperation and competition.

\begin{example}[What the prisoner's dilemma becomes when one agent is deleted]
\label{ex:pd-degeneracy}
Delete the column agent from \cref{ex:four-games} by freezing it at $C$. The row
agent then faces the vector $(\util_1(C,C), \util_1(D,C)) = (3, 5)$, which is a
two-armed bandit whose better arm is $D$. Freeze it at $D$ instead and the
vector is $(0, 1)$, whose better arm is again $D$. The two bandits have the same
argmax, which is precisely what strict dominance means, and this is why the
prisoner's dilemma is solvable without any theory of equilibrium: it is a game
in which both one-agent shadows point the same way for each agent. The stag hunt
is not: freezing the column agent at Stag gives the row agent $(4,3)$ with
argmax Stag, and freezing it at Hare gives $(1,3)$ with argmax Hare. There the
shadow depends on what the other agent does, and that dependence is the whole of
the multi-agent problem. \exend
\end{example}

\begin{table}[t]
\centering
\caption{The normal-form game and its single-agent degeneracies. Feedback
columns indicate what the agent observes after a round.}
\label{tab:nfg-degeneracy}
\small
\begin{tabular}{@{}L{3.4cm} L{3.6cm} L{3.6cm} L{3.4cm}@{}}
\toprule
\textsc{Model} & \textsc{Full payoff feedback} & \textsc{Bandit feedback} & \textsc{With public state} \\
\midrule
$\nag = 1$ & Known objective; one-shot maximisation & Multi-armed bandit & Contextual bandit \\
$\nag \ge 2$, common payoff & Team coordination problem & Cooperative bandit; distributed optimisation & Team Markov game with one state \\
$\nag = 2$, zero-sum & Matrix game; value by linear programming & Adversarial bandit; Exp3 and regret matching & Zero-sum game with public state \\
$\nag \ge 2$, general sum & Normal-form game; equilibrium is \PPAD-complete & No-regret learning converging to \CCE & Stateless general-sum game with context \\
\bottomrule
\end{tabular}
\end{table}

The lifting of a normal-form game into a sequential setting admits two distinct
constructions that are frequently conflated, and distinguishing them is the main
point of this section. The first is the repeated game, treated in
\cref{sec:repeated}, in which the same normal-form game is played at every step
and the agents condition on the history of play. The second is the stage game
induced by a stochastic game at a state, treated in \cref{sec:stochastic}, in
which the payoff entries are the immediate rewards plus the discounted
continuation values. The second construction is the one that
makes dynamic programming work in the multi-agent case, and the continuation
values are what distinguishes it from the first.

\begin{definition}[Induced stage game]
\label{def:stage-game}
Let $\{V_i\}_{i \in \agents}$ be a profile of value functions
$V_i : \St \to \Real$. The stage game induced at state $s$ is the normal-form
game $\game_s[V]$ with agent set $\agents$, action sets $\Ac_i$, and payoffs
\begin{align}
\util^{\,s}_i(a) \;=\; \Exp_{s' \sim \Tr(\cdot \given s,a)}
   \Big[\Rew_i(s,a,s') + \gamma V_i(s')\Big],
\qquad a \in \Ac.
\label{eq:stage-game}
\end{align}
\end{definition}

\begin{rlreading}
\Cref{eq:stage-game} is the bootstrapped target of a centralised critic,
restricted to a single state. A practitioner who trains
$Q_i(s, a_1, \dots, a_\nag)$ and freezes $s$ is holding a payoff tensor in
memory, one entry per joint action, one tensor per agent, and
$\util^{\,s}_i(a)$ is exactly that entry. The entries are not environment
rewards: each is an immediate reward plus a discounted continuation, so the
tensor changes at every sweep of value iteration and the game it defines is a
different game at every iteration. This is the object on which a multi-agent
Bellman backup must act, and the question the rest of this monograph keeps
returning to is what operator should replace the $\max$ of
\cref{eq:bellman-opt} when the thing being backed up is a tensor rather than a
vector.
\end{rlreading}

Comparing \cref{eq:stage-game} with \cref{eq:bellman-opt} isolates the whole
difficulty of the multi-agent extension of dynamic programming. In the
single-agent case the right-hand side of \cref{eq:stage-game} is a vector
indexed by $a$ and the Bellman update takes its maximum. In the multi-agent case
it is a payoff tensor and the Bellman update must take some solution of the game
it defines, a value operator that we write $\mathrm{sol}_i(\game_s[V])$. Which
operator is appropriate, whether it is single-valued, whether it is non-expansive
and whether its fixed point is the equilibrium one wants are the questions that
\cref{sec:stochastic,sec:negative} answer, and the answers are affirmative for
zero-sum games, affirmative for common-payoff games under a coordination
assumption, and negative in general.

\begin{example}[A stage game built from bootstrapped values]
\label{ex:stage-game}
Take two agents at a state $s$ with $\Ac_1 = \Ac_2 = \{0,1\}$, deterministic
transitions that send the joint action $(0,0)$ to a state worth
$(V_1, V_2) = (10, 0)$ and every other joint action to a state worth $(0, 10)$,
immediate rewards all zero, and $\gamma = 0.9$. Then \cref{eq:stage-game} gives
the stage game
\begin{align*}
\bimatrix{0}{1}{0}{1}{9,0}{0,9}{0,9}{0,9},
\end{align*}
in which agent $1$ wants the joint action $(0,0)$ and agent $2$ wants anything
else. The entries are not rewards of the environment; they are rewards plus
bootstrapped continuations, and they change at every sweep of value iteration.
A reinforcement learning practitioner meets this object without naming it every
time a critic conditioned on the joint action is trained, because the
cross-section of that critic at a fixed state is exactly a payoff tensor. The
operator that the Bellman update must apply to this tensor is the choice that
\cref{tab:sg-structures} tabulates. \exend
\end{example}

A last remark concerns the status of randomisation. In a single-agent Markov
decision process the optimal policy may be taken deterministic because the
Bellman optimality operator maximises a linear function over a simplex. In a
normal-form game the analogous statement is false: the matching-pennies game has
no pure Nash equilibrium, and \citet{nash1950equilibrium} guarantees existence
only in mixed strategies. This is the first place at which the single-agent
intuition breaks, and it propagates. Stochastic games require stationary mixed
policies, decentralised partially observable Markov decision processes admit
deterministic optimal joint policies for the finite-horizon planning problem but
not for the independent-learning problem, and extensive-form games require
behavioural strategies whose equivalence to mixed strategies holds only under
perfect recall, which is the content of \cref{thm:kuhn} below.

\begin{example}[Why matching pennies has no deterministic solution]
\label{ex:mp-no-pure}
Enumerate the four joint actions of matching pennies. At $(H,H)$ the column
agent gains by switching to $T$; at $(H,T)$ the row agent gains by switching to
$T$; at $(T,T)$ the column agent gains by switching to $H$; at $(T,H)$ the row
agent gains by switching to $H$. The better-reply graph is a four-cycle with no
sink, so no pure profile is a fixed point of the best-response correspondence.
The unique equilibrium is $\pi_1 = \pi_2 = (\tfrac12, \tfrac12)$ with value $0$,
which is what the linear program of \cref{thm:shapley} returns. The practical
consequence for a reinforcement learning engineer is specific and is usually
discovered the hard way: an algorithm whose policy class is deterministic, such
as a greedy $Q$-learner or a deterministic policy gradient method, cannot
represent the solution of this game, and what it will do instead is cycle
through the four-cycle above for as long as it is trained. A flat learning
curve in such a run is evidence of a limit cycle rather than of convergence,
which is the diagnostic point made again in \cref{sec:evolutionary}. \exend
\end{example}

\section{Solution concepts as fixed points, and what they collapse to}
\label{sec:solution-concepts}

Reinforcement learning has one notion of a solution, the optimal policy, and it
is characterised in two equivalent ways, as a maximiser of the value and as a
fixed point of greedy improvement. Game theory has many notions of a solution,
and almost all of them are obtained by taking one of those two characterisations
and weakening it so that it can accommodate several agents with different
objectives. Organising the solution concepts this way makes the correspondence
with reinforcement learning mechanical rather than mnemonic.

The two readings diverge most sharply here, and the divergence is worth naming
before the definitions. A game theorist regards a solution concept as a
prediction: a set of profiles at which a story about rational agents is
internally consistent, with no claim that anyone will arrive there. A
reinforcement learning researcher regards a solution concept as a target: the
thing an algorithm is supposed to converge to, with the implicit claim that
convergence is the measure of success. The gap between a consistency condition
and a convergence target is the source of most of the confusion in the area, and
it is why \cref{sec:negative} is necessary. A Nash equilibrium is a perfectly
good consistency condition and a bad convergence target; a coarse correlated
equilibrium is a weaker consistency condition and a good convergence target.
Neither statement contradicts the other.

\begin{nomen}{Solution-concept vocabulary read from both sides.}{tab:nom-solconc}
Best response $\brc_i(\pi_{-i})$ & Greedy policy improvement; $\argmax_a Q_i$ &
  Do the best you can, others held fixed \\
Nash equilibrium & Fixed point of simultaneous improvement &
  Nobody gains by moving alone \\
$\varepsilon$-equilibrium & $\varepsilon$-greedy improvement; approximate PI &
  Nobody gains more than $\varepsilon$ alone \\
Mixed strategy & Stochastic policy &
  Randomisation as part of the solution \\
Correlated profile $\sigma \in \Simplex{\Ac}$ &
  Policy conditioned on a shared random seed &
  A joint distribution, not a product \\
Mediator, recommendation & Shared signal, broadcast latent variable &
  A common random bit that both read \\
External regret & Regret of a bandit algorithm &
  Loss against the best fixed action \\
Swap regret & Regret against action-dependent rewrites &
  Loss against the best rewiring of your own play \\
Minmax value & Robust return; worst-case value &
  What you can guarantee unilaterally \\
Pareto frontier & Non-dominated set in multi-objective RL &
  Nobody improves without hurting somebody \\
Utilitarian welfare & Team reward $\sum_i \Rew_i$ &
  Add the rewards and maximise \\
Egalitarian welfare & $\min_i$ return; fairness objective &
  Maximise the worst-off agent \\
Price of anarchy & Gap from independent learners to a planner &
  What decentralisation costs \\
\end{nomen}

The elementary object is the best-response correspondence. For a normal-form
game and an agent $i$,
\begin{align}
\brc_i(\pi_{-i}) \;=\; \argmax_{\pi_i \in \Simplex{\Ac_i}} \; \util_i(\pi_i, \pi_{-i})
\;\subseteq\; \Simplex{\Ac_i},
\label{eq:best-response}
\end{align}
and $\brc(\pi) = \prod_i \brc_i(\pi_{-i})$ is the joint correspondence. With
$\nag = 1$ the right-hand side of \cref{eq:best-response} is the set of
distributions supported on $\argmax_a \util_1(a)$, so best response is greedy
action selection, and in the sequential case it is the policy improvement step
of policy iteration. Every solution concept below is a statement about $\brc$ or
about a relaxation of it.

\begin{definition}[Nash equilibrium]
\label{def:nash}
A profile $\pi^\star$ is a Nash equilibrium when
$\pi^\star_i \in \brc_i(\pi^\star_{-i})$ for every $i \in \agents$, equivalently
when $\pi^\star \in \brc(\pi^\star)$. It is an $\varepsilon$-Nash equilibrium
when $\util_i(\pi^\star) \ge \util_i(\pi_i, \pi^\star_{-i}) - \varepsilon$ for
every $i$ and every $\pi_i$.
\end{definition}

\begin{rlreading}
A Nash equilibrium is a joint policy at which running one more step of policy
improvement on any single agent changes nothing. Spell this out, because it is
the identity on which the rest of the dictionary rests. Freeze the policies
$\pi^\star_{-i}$ of every agent but $i$. By
\cref{prop:fixed-others-mdp} agent $i$ then faces an ordinary Markov decision
process, so it has an ordinary action-value function
$Q_i(a_i) = \util_i(a_i, \pi^\star_{-i})$ and an ordinary greedy policy. The
condition $\pi^\star_i \in \brc_i(\pi^\star_{-i})$ says that
$\pi^\star_i$ is already greedy with respect to that $Q_i$. A Nash equilibrium
is therefore a profile at which $\nag$ simultaneous policy-iteration loops have
all stopped at once, and the symbol dictionary is
$\pi^\star_i \leftrightarrow$ the policy of head $i$,
$\brc_i \leftrightarrow \argmax_{a_i} Q_i$, and
$\pi^\star \in \brc(\pi^\star) \leftrightarrow$ no head's greedy step moves.
The $\varepsilon$ version is the stopping criterion of approximate policy
iteration applied per agent: $\varepsilon$ is the largest improvement any single
head could still make, which is the quantity that \cref{tab:diagnostics} calls
NashConv when summed over agents and exploitability when there are two agents
and the game is zero-sum. What does not carry over from the single-agent case is
uniqueness: a policy-iteration loop on one agent has one stopping point, and
$\nag$ coupled loops may have many, with different returns.
\end{rlreading}

\begin{example}[Finding the equilibria by checking the fixed-point condition]
\label{ex:nash-check}
The definition is a condition to be checked, so check it. In the prisoner's
dilemma, fix the column agent at $C$; the row agent's two values are
$\util_1(C,C) = 3$ and $\util_1(D,C) = 5$, so $\brc_1(C) = \{D\}$. Fix the
column agent at $D$; the values are $0$ and $1$, so $\brc_1(D) = \{D\}$ again.
The row agent's best response is $D$ whatever the column agent does, and by
symmetry the same holds for the column agent, so the joint correspondence
$\brc$ is the constant map sending every profile to $(D,D)$. A constant map has
exactly one fixed point, so $(D,D)$ is the unique Nash equilibrium, with payoffs
$(1,1)$. Notice what the check consisted of: two $\argmax$ computations over a
two-element $Q$-vector, per agent. That is all a Nash equilibrium ever is.

How far is the good outcome from being one? At $(C,C)$ each agent earns $3$ and
the best unilateral deviation earns $5$, so the largest improvement available to
any single agent is $2$. By \cref{def:nash}, $(C,C)$ is an $\varepsilon$-Nash
equilibrium exactly for $\varepsilon \ge 2$, and the number $2$ is the
exploitability that a diagnostic would report for a pair of cooperating
policies.

Now matching pennies. At $(H,H)$ the column agent's values are $-1$ for $H$ and
$+1$ for $T$, so it deviates; the same argument at the other three cells shows
that no pure profile is a fixed point, so the game has no pure equilibrium.
Solving the indifference conditions gives the unique equilibrium
$\pi_1 = \pi_2 = (\tfrac12, \tfrac12)$ with value $0$, which a linear program
confirms. The contrast with the prisoner's dilemma is the one to keep: in the
first game the fixed-point condition is satisfied by a deterministic policy that
greedy $Q$-learning can represent, and in the second it is not, so the same
algorithm run on the two games succeeds in one and cycles forever in the other
for reasons that have nothing to do with its hyperparameters. \exend
\end{example}

A Nash equilibrium is therefore a fixed point of the joint best-response
correspondence, and \citet{nash1950equilibrium} establishes existence in every
finite game by applying Kakutani's theorem to $\brc$, which is non-empty,
convex-valued and upper hemicontinuous. The single-agent degeneracy is exact and
is worth stating as a proposition because it is the hinge of the whole
dictionary.

\begin{proposition}[Nash equilibrium degenerates to Bellman optimality]
\label{prop:nash-degeneracy}
Let $\game$ be a normal-form game with $\nag = 1$. Then $\pi^\star$ is a Nash
equilibrium of $\game$ if and only if $\supp(\pi^\star) \subseteq \argmax_{a}
\util_1(a)$. Let $G$ be a stochastic game with $\nag = 1$, which by
\cref{prop:sg-mdp} is a Markov decision process $M$. Then a stationary profile
$\pi^\star$ is a Markov perfect equilibrium of $G$ if and only if $\pi^\star$ is
an optimal policy of $M$, equivalently if and only if $V^{\pi^\star}$ is the
fixed point of the operator $\BellOp$ in \cref{eq:bellman-opt}.
\end{proposition}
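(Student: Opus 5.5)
The proof has two parts, one for each sentence of the statement, and both reduce to the single-agent fact that a linear function on a simplex is maximised exactly by the distributions supported on its maximising vertices.

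\emph{Normal-form part.} With $\nag = 1$ the profile $\pi_{-1}$ is empty, so by \cref{def:nash} the equilibrium condition is $\pi^\star \in \brc_1 = \argmax_{\pi \in \Simplex{\Ac_1}} \util_1(\pi)$, and by \cref{def:nfg} $\util_1(\pi) = \sum_a \pi(a)\util_1(a)$ is linear. Let $m = \max_a \util_1(a)$. Then
\begin{align*}
\util_1(\pi) \;=\; m - \sum_a \pi(a)\big(m - \util_1(a)\big),
\end{align*}
and every term of the sum is non-negative. So $\util_1(\pi) \le m$, with equality precisely when $\pi(a) = 0$ for every $a$ with $\util_1(a) < m$, that is, when $\supp(\pi) \subseteq \argmax_a \util_1(a)$. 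Both directions of the first claim follow at once.

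\emph{Stochastic-game part.} First I would use \cref{prop:sg-mdp} to identify the one-agent stochastic game $G$ with the Markov decision process $M$. Under this identification, stationary profiles are stationary policies and the value of agent $1$ is $V^\pi$. Next I would unpack Markov perfection for $\nag = 1$. Whether it is stated as "no profitable deviation from any initial state" or as "at every state $s$, $\pi^\star(\cdot\given s)$ is a Nash equilibrium of the induced stage game $\game_s[V^{\pi^\star}]$" of \cref{def:stage-game}, it becomes a local condition. By the first part, applied state by state to the payoff vector $\util^{\,s}_1(a) = Q^{\pi^\star}(s,a)$, this condition reads $\supp \pi^\star(\cdot\given s) \subseteq \argmax_a Q^{\pi^\star}(s,a)$ for every $s$. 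In words, $\pi^\star$ is greedy with respect to its own action-value function.

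The main step is to show that self-greediness is equivalent to $V^{\pi^\star} = V^\star$, the fixed point of $\BellOp$.
\begin{itemize}
\item[] \emph{Self-greedy implies optimal.} By the policy evaluation equation, $V^{\pi^\star}(s) = \sum_a \pi^\star(a\given s) Q^{\pi^\star}(s,a)$. Self-greediness turns this average into $\max_a Q^{\pi^\star}(s,a) = (\BellOp V^{\pi^\star})(s)$. So $V^{\pi^\star}$ is a fixed point of $\BellOp$, and by the contraction property and Banach's theorem it equals $V^\star$.
\item[] \emph{Optimal implies self-greedy.} If $V^{\pi^\star} = V^\star$, then $\sum_a \pi^\star(a\given s) Q^{\star}(s,a) = \max_a Q^\star(s,a)$. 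Applying the first part to this average forces the support condition.
\end{itemize}
The last link is the equivalence between "$\pi^\star$ is an optimal policy of $M$" and "$V^{\pi^\star} = V^\star$". This is the standard fact that $V^\star$ dominates $V^\pi$ for every stationary policy, which follows from monotonicity of $\BellOp$ since $V^\pi \le \BellOp V^\pi$ iterates up to $V^\star$.

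I expect the main obstacle to be definitional rather than technical. Markov perfect equilibrium is only formally defined later (in \cref{sec:stochastic}), so I would state explicitly the form I use: a stationary profile that is a Nash equilibrium from every initial state, or equivalently, by the one-shot deviation principle, a stage-game equilibrium at every $s$. I would then check that both forms collapse to the same self-greedy condition. For the "every initial state" version with a single agent, this is the statement that a policy that is optimal from every state is exactly one with $V^{\pi^\star} = V^\star$ pointwise, which the argument above covers.
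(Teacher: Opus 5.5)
Your proposal is correct and follows the paper's proof. The one-agent Nash condition is maximisation of a linear function over the simplex. Markov perfection at $\nag = 1$ is greediness of $\pi^\star(\cdot \given s)$ with respect to $Q^{\pi^\star}(s,\cdot)$ at every state, which is equivalent to $V^{\pi^\star} = \BellOp V^{\pi^\star}$, and uniqueness of the fixed point of the contraction $\BellOp$ finishes the argument. You spell out both directions that the paper compresses into ``simultaneously'' and add the link between ``optimal policy'' and $V^{\pi^\star} = V^\star$. For the implication from optimality to $V^{\pi^\star} = V^\star$, also cite that $V^\star$ is attained by a policy greedy with respect to it, as recorded in \cref{sec:single-agent}.
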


\begin{proof}
For $\nag = 1$ the correspondence $\brc_1$ in \cref{eq:best-response} does not
depend on its argument, since there is no $\pi_{-1}$, so the fixed point
condition $\pi^\star_1 \in \brc_1(\cdot)$ reduces to
$\pi^\star_1 \in \argmax_{\pi_1} \util_1(\pi_1)$, and since $\util_1$ is linear
on the simplex its maximisers are exactly the distributions supported on the
maximisers of $\util_1$ over $\Ac_1$. For the sequential statement, Markov
perfection requires $\pi^\star_1(\cdot \given s)$ to be a best response at every
state $s$ against the continuation value induced by $\pi^\star$ itself, which by
\cref{def:stage-game} is the condition $\supp \pi^\star_1(\cdot \given s)
\subseteq \argmax_a \Exp_{s'}[\Rew(s,a,s') + \gamma V^{\pi^\star}(s')]$ for every
$s$. That is simultaneously the greedy condition and the statement
$V^{\pi^\star} = \BellOp V^{\pi^\star}$, and by uniqueness of the fixed point of
the $\gamma$-contraction $\BellOp$ it holds if and only if
$V^{\pi^\star} = V^\star$.
\end{proof}

\Cref{prop:nash-degeneracy} explains both why Nash equilibrium looks like the
right generalisation of optimality and why it is a poor learning target. It is
the right generalisation in the sense that it is the unique weakening of
``greedy with respect to one's own continuation value'' that is well posed when
the continuation value depends on others. It is a poor learning target because
the properties that make the single-agent fixed point usable, uniqueness and
attainability by a contraction, are both lost: a finite game may have
exponentially many equilibria, they may have different payoffs, and no
best-response-like operator is a contraction in general. We return to this in
\cref{sec:negative}.

\begin{example}[Three equilibria, three payoff vectors, one game]
\label{ex:bos}
The battle of the sexes,
\begin{align*}
\bimatrix{\mathrm{Opera}}{\mathrm{Football}}{\mathrm{Opera}}{\mathrm{Football}}{2,1}{0,0}{0,0}{1,2},
\end{align*}
has two pure equilibria, $(\mathrm{Opera},\mathrm{Opera})$ with payoffs $(2,1)$
and $(\mathrm{Football},\mathrm{Football})$ with payoffs $(1,2)$, and one mixed
equilibrium. To find the mixed one, make each agent indifferent. If the row
agent plays Opera with probability $p$, the column agent's payoffs are $p$ from
Opera and $2(1-p)$ from Football, equal when $p = 2/3$. If the column agent
plays Opera with probability $q$, the row agent's payoffs are $2q$ from Opera
and $1-q$ from Football, equal when $q = 1/3$. At that profile each agent earns
$2/3$, which is strictly worse for both than either pure equilibrium, because
with probability $2/3 \cdot 2/3 + 1/3 \cdot 1/3 = 5/9$ the two miscoordinate and
earn nothing. The reinforcement learning lesson is that ``the algorithm
converged to a Nash equilibrium'' is compatible with a payoff of $0.67$ and with
a payoff of $2.0$ in the same game, so it is not a performance claim. The
lesson is sharpened by \cref{thm:folk}, where the equilibrium set becomes a
continuum. \exend
\end{example}

The relaxations of \cref{def:nash} that reinforcement learning actually attains
are obtained by replacing the product measure $\prod_i \pi_i$ by a joint
distribution. Let $\sigma \in \Simplex{\Ac}$ be a correlated profile, to be read
as the output of a mediator who samples $a \sim \sigma$ and privately recommends
$a_i$ to agent $i$; the construction and the solution concept it supports are
due to \citet{aumann1974subjectivity}.

\begin{definition}[Correlated and coarse correlated equilibrium]
\label{def:ce-cce}
A correlated profile $\sigma \in \Simplex{\Ac}$ is a correlated equilibrium when
for every agent $i$ and every deviation map $\phi_i : \Ac_i \to \Ac_i$,
\begin{align}
\Exp_{a \sim \sigma}\big[\util_i(\phi_i(a_i), a_{-i})\big]
\;\le\; \Exp_{a \sim \sigma}\big[\util_i(a)\big].
\label{eq:ce}
\end{align}
It is a coarse correlated equilibrium when the weaker condition holds in which
$\phi_i$ is restricted to constant maps, that is, when for every $i$ and every
$a_i' \in \Ac_i$,
$\Exp_{a \sim \sigma}[\util_i(a_i', a_{-i})] \le \Exp_{a \sim \sigma}[\util_i(a)]$.
\end{definition}

\begin{rlreading}
Implement $\sigma$ as follows. Before the episode, draw one integer from a
shared random seed that every policy can read; let a lookup table map that
integer to a joint action; give each head only its own component. The head now
conditions on a signal that is correlated with what the others will do, which no
product policy can represent. The correlated-equilibrium condition
\cref{eq:ce} says that each head, after reading its own component and updating
its posterior over the others' components, still prefers to play what it was
told, so obedience is greedy with respect to the conditional $Q$-values. The
coarse condition says only that obeying beats committing in advance to one fixed
action, without reading the signal at all.

The deviation map $\phi_i$ is the object with the least familiar name and the
most familiar content: it is a rewiring of the head's own output, a function
that takes the action the head was about to emit and returns a different one.
The corresponding regret notion is swap regret, and the restriction of $\phi_i$
to constant maps gives external regret, which is the regret that every bandit
and policy-gradient method already measures. So the two solution concepts are
exactly the two regret notions a practitioner can log, and
\cref{ex:cce-strict} shows that logging the wrong one hides a real failure.
\end{rlreading}

The difference between the two conditions is the difference between committing
to a deviation before seeing the recommendation and deciding after seeing it,
and it is exactly the difference between a learner that must choose a fixed
alternative action and one that may condition its alternative on what it was
about to do. We make both the inclusion and its strictness concrete, because in
two-by-two games the two concepts coincide and the textbook examples therefore
hide the distinction.

\begin{example}[Correlation that beats every equilibrium of chicken]
\label{ex:chicken-ce}
In the game of chicken of \cref{ex:four-games} the two pure equilibria give
welfare $7 + 2 = 9$, and the mixed equilibrium, in which each agent swerves with
probability $2/3$, gives each agent $14/3 = 4.667$ and welfare $28/3 = 9.333$.
Now let a mediator draw a joint action from
\begin{align*}
\sigma(\mathrm{Swerve},\mathrm{Swerve}) = \tfrac12, \quad
\sigma(\mathrm{Swerve},\mathrm{Straight}) = \tfrac14, \quad
\sigma(\mathrm{Straight},\mathrm{Swerve}) = \tfrac14,
\end{align*}
and privately tell each agent its own component. Check \cref{eq:ce} for the row
agent. Told to go straight, which happens with probability $1/4$, it knows the
other swerves and earns $7$, while swerving would earn $6$, so it obeys. Told to
swerve, which happens with probability $3/4$, its posterior over the other's
action is $(2/3, 1/3)$ on swerve and straight, so obeying earns
$\tfrac23 \cdot 6 + \tfrac13 \cdot 2 = 14/3$ while going straight earns
$\tfrac23 \cdot 7 + \tfrac13 \cdot 0 = 14/3$, so it is indifferent and obeys.
The same holds by symmetry for the column agent, so $\sigma$ is a correlated
equilibrium, and its welfare is
$\tfrac12 \cdot 12 + \tfrac14 \cdot 9 + \tfrac14 \cdot 9 = 10.5$, strictly
greater than the welfare of every Nash equilibrium of the game. A linear program
over \cref{eq:ce} confirms that $10.5$ is the maximum achievable welfare over
the whole correlated-equilibrium polytope. In reinforcement learning terms the
mediator is a shared random seed broadcast to both policies before they act, and
the example is the precise statement of what that seed buys: a welfare of
$10.5$ instead of $9.333$, obtained with no change to the reward function, no
communication during play, and no departure from unilateral rationality. It is
also the precise statement of what it costs, namely that the policies are no
longer independently executable. \exend
\end{example}

The inclusion $\NE(\game) \subseteq \CE(\game) \subseteq \CCE(\game)$ holds in
every game, and the first inclusion is an equality when restricted to product
measures: a product distribution satisfies \cref{eq:ce} if and only if it is a
Nash equilibrium. Both $\CE$ and $\CCE$ are defined by finitely many linear
inequalities in $\sigma$, hence are non-empty convex polytopes computable by
linear programming, which is the formal reason why the relaxation buys
tractability. The reinforcement learning significance is that these are the
objects that learning dynamics actually converge to. If every agent runs an
algorithm whose external regret
\begin{align}
\Regret_i(T) \;=\; \max_{a_i' \in \Ac_i} \sum_{t=1}^{T}
   \big[\util_i(a_i', a^t_{-i}) - \util_i(a^t)\big]
\label{eq:external-regret}
\end{align}
is sublinear in $T$, then the empirical distribution of joint play
$\bar\sigma_T = \frac{1}{T}\sum_t \delta_{a^t}$ converges to the set of coarse
correlated equilibria, which is immediate from \cref{def:ce-cce} once the
average regret vanishes. If the stronger swap regret is sublinear the limit set
is the set of correlated equilibria; regret matching achieves this directly
\citep{hart2000simple}, and \citet{blum2007external} give a generic reduction
turning any no-external-regret algorithm into a no-swap-regret one.
No such statement holds for Nash equilibrium outside special classes. This is
the single most useful transfer from game theory into multi-agent reinforcement
learning: it tells the practitioner what a self-play run is actually converging
to, and the answer is usually a coarse correlated equilibrium of the empirical
game rather than a Nash equilibrium of anything.

Since the difference between the two relaxations determines what a no-regret run
is entitled to claim, we exhibit it in the smallest game where it can appear.

\begin{example}[A coarse correlated equilibrium that is not correlated]
\label{ex:cce-strict}
In a two-by-two game the two concepts coincide, and the reason is a one-line
computation worth performing. The coarse constraint for the row agent deviating
to its first action is
$\sum_{i,j}\sigma(i,j)[\util_1(1,j) - \util_1(i,j)] \le 0$; the term with $i=1$
vanishes identically, so the constraint reduces to
$\sum_j \sigma(2,j)[\util_1(1,j) - \util_1(2,j)] \le 0$, which is precisely the
conditional constraint \cref{eq:ce} at the row $i = 2$. With only two actions
each conditional constraint is recovered this way, so $\CE = \CCE$ for every
two-by-two game, and a search over four hundred random two-by-two games finds no
separation, as it must not.

Three actions suffice. Take rock-paper-scissors, with the row agent's payoffs
$\util_1(a_1,a_2) \in \{-1,0,1\}$ in the usual pattern, and let $\sigma$ be
uniform on the diagonal, that is, probability $1/3$ on each of
$(R,R)$, $(P,P)$ and $(S,S)$. Both agents earn $0$ under $\sigma$. A constant
deviation to Rock faces the marginal $(\tfrac13,\tfrac13,\tfrac13)$ over the
opponent's actions and earns $\tfrac13(0 - 1 + 1) = 0$, and the same holds for
Paper and Scissors, so no constant deviation gains and $\sigma$ is a coarse
correlated equilibrium. It is not a correlated equilibrium: conditional on the
recommendation Rock the agent knows the opponent plays Rock, so switching to
Paper earns $1$ instead of $0$, a conditional gain of $1$ worth
$\tfrac13 \cdot 1 = \tfrac13$ in expectation. The unique correlated equilibrium
of rock-paper-scissors is the uniform distribution over all nine cells.

The practical reading is exact. Two independent no-external-regret learners
playing rock-paper-scissors may settle into a pattern in which their actions are
perfectly correlated and both earn the equilibrium payoff, and neither will
detect a problem by monitoring its own external regret, because there is none.
A third party who could rewire each agent's action as a function of what it was
about to play would extract a gain of $1/3$ per round. This is why the
evaluation table in \cref{tab:diagnostics} asks for the maximum constant
deviation gain for a coarse correlated target and for the maximum swap gain for
a correlated one: they are different numbers and only the second detects this
failure. \exend
\end{example}

Two further concepts complete the stateless picture. The maximin or security
value of agent $i$ is $\underline{\util}_i = \max_{\pi_i} \min_{\pi_{-i}}
\util_i(\pi_i,\pi_{-i})$, the payoff guaranteed against an arbitrary opponent.
In a two-player zero-sum game the minimax theorem of
\citet{vonneumann1928theorie} gives $\max_{\pi_1}\min_{\pi_2} \util_1 =
\min_{\pi_2}\max_{\pi_1}\util_1$, the common value is the value of the game, and
the set of Nash equilibria coincides with the product of the sets of maximin
strategies, so that equilibrium is interchangeable and payoff-unique. Nothing
in that sentence survives outside the zero-sum case, which is why zero-sum
multi-agent reinforcement learning has clean guarantees and general-sum
multi-agent reinforcement learning does not. The reinforcement learning
counterpart of the maximin value is the robust objective of
\cref{eq:robust-bellman}, and the correspondence is literal rather than
analogical, as \cref{prop:robust-zero-sum} will record.

Pareto optimality and social welfare are the remaining concepts, and they are
the ones that most often express what an engineer actually wants. A profile
$\pi$ is Pareto dominated by $\pi'$ when $\util_i(\pi') \ge \util_i(\pi)$ for
all $i$ with strict inequality for some $i$, and Pareto optimal when it is
dominated by nothing. The utilitarian social welfare is $\sum_i \util_i(\pi)$
and the egalitarian welfare is $\min_i \util_i(\pi)$. With $\nag = 1$ all of
these collapse to maximisation of $\util_1$, which is the degeneracy that makes
single-agent reinforcement learning free of the normative questions that
dominate the multi-agent case. In a common-payoff game Pareto optimality and
welfare maximisation coincide with each other and with the maximisation of the
shared reward, which is why cooperative multi-agent reinforcement learning can
state an unambiguous objective, and the residual difficulty there is not what to
optimise but how to make decentralised agents select the same optimum, which is
equilibrium selection rather than equilibrium computation.

\begin{example}[Four criteria disagree on the prisoner's dilemma]
\label{ex:pd-criteria}
In the prisoner's dilemma of \cref{ex:four-games}, the unique Nash equilibrium
$(D,D)$ has welfare $2$. The welfare maximum is $(C,C)$ with welfare $6$. The
Pareto frontier consists of $(C,C)$, $(C,D)$ and $(D,C)$, with payoff vectors
$(3,3)$, $(0,5)$ and $(5,0)$; the equilibrium is not on it. The maximin value of
each agent is $1$, attained by defecting, because the opponent can always hold
the agent to $1$. The egalitarian optimum is $(C,C)$. So the equilibrium
criterion, the utilitarian criterion, the Pareto criterion and the egalitarian
criterion select different profiles, and only the security criterion agrees with
the equilibrium. A designer who says ``we want the agents to reach equilibrium''
has therefore chosen the welfare $2$ outcome over the welfare $6$ one without
saying so. The same arithmetic recurs in every sequential social dilemma
benchmark, and \cref{sec:repeated} explains the one device, namely history
dependence, that lets the welfare-$6$ outcome become an equilibrium. \exend
\end{example}

\begin{example}[Credit for a shared return, exactly]
\label{ex:shapley-team}
Three agents clear a warehouse. Alone, agents $1$ and $2$ each shift $10$
pallets and agent $3$ shifts $4$. The pairs $\{1,2\}$, $\{1,3\}$ and $\{2,3\}$
shift $26$, $16$ and $16$, and all three together shift $36$. Writing $v$ for
this characteristic function, \cref{eq:shapley} gives
$\phi = (14.67,\, 14.67,\, 6.67)$, which sums to $36$ as efficiency requires.
The computation for agent $3$ averages its marginal contribution over the six
orderings of the three agents: it adds $4$ when it arrives first, $6$ when it
arrives after one of the others, and $10$ when it arrives last, and the
weighted average is $\tfrac13 \cdot 4 + \tfrac13 \cdot 6 + \tfrac13 \cdot 10
= 6.67$. The core of this game is non-empty and contains $\phi$, so the
allocation is also stable in the sense of \cref{def:core}; the nucleolus is
$(14.5, 14.5, 7)$, which differs. In a cooperative multi-agent reinforcement
learning system $v(C)$ is the return obtained when the agents outside $C$ are
replaced by a default policy, $\phi_i$ is the credit assigned to agent $i$, and
the difference between $\phi_3 = 6.67$ and the difference reward
$v(\{1,2,3\}) - v(\{1,2\}) = 10$ is exactly the difference between averaging
over all orderings and evaluating one of them, which \cref{prop:difference-shapley}
makes precise. \exend
\end{example}

\begingroup\small
\begin{longtable}{@{}L{3.0cm} L{4.3cm} L{3.4cm} C{0.7cm}@{}}
\caption{Solution concepts and their reinforcement learning counterparts. The
column $\nag = 1$ gives the single-agent degeneracy, which is in every case the
optimal policy or one of its characterisations. The last column records the
strength of the correspondence: D for definitional identity, E for an exact
reduction supported by a theorem, S for a structural analogy whose failure
conditions are stated in the text.}
\label{tab:solution-concepts}\\[2pt]
\toprule
\textsc{Game-theoretic concept} & \textsc{Reinforcement learning counterpart} & \textsc{Degeneracy at $\nag=1$} & \\
\midrule
\endfirsthead
\multicolumn{4}{@{}l}{\small\textsc{(\cref{tab:solution-concepts}, continued)}}\\[2pt]
\toprule
\textsc{Game-theoretic concept} & \textsc{Reinforcement learning counterpart} & \textsc{Degeneracy at $\nag=1$} & \\
\midrule
\endhead
\midrule
\multicolumn{4}{@{}r@{}}{\small\textsc{continued on the next page}}\\
\endfoot
\bottomrule
\endlastfoot
Best response $\brc_i$ & Greedy policy improvement; argmax over $Q$ & Greedy policy & D \\
Nash equilibrium & Fixed point of simultaneous policy improvement & Optimal policy & D \\
$\varepsilon$-Nash & Approximate policy iteration; $\varepsilon$-greedy improvement & $\varepsilon$-optimal policy & D \\
Markov perfect equilibrium & Stationary optimal policy of each agent given the others & Optimal stationary policy & D \\
Subgame perfect equilibrium & Bellman principle of optimality; backward induction & Optimal policy & D \\
Maximin / security value & Robust objective, worst-case return & $\max_\pi V^\pi$ & E \\
Minimax theorem value & Unique value of the robust Bellman fixed point & $V^\star$ & E \\
Correlated equilibrium & Policy with a shared signal; swap-regret limit & Optimal policy & E \\
Coarse correlated equilibrium & Limit of no-external-regret self-play & Optimal policy & E \\
Strictly dominated strategy & Action provably never greedy; action masking & Suboptimal action & D \\
Rationalisable strategy & Survivor of iterated best-response reasoning & Optimal action & D \\
Quantal response equilibrium & Fixed point of softmax policies; soft Bellman & Softmax policy & E \\
Level-$k$, cognitive hierarchy & Depth-limited opponent modelling & Greedy policy & S \\
Pareto optimality & Non-dominated multi-objective policy & Optimal policy & D \\
Social welfare optimum & Team reward $\sum_i \Rew_i$; centralised objective & Optimal policy & D \\
Nash bargaining solution & Product-of-advantages objective & Optimal policy & E \\
Stackelberg equilibrium & Bilevel optimum; leader with best-responding follower & Optimal policy & E \\
Evolutionarily stable strategy & Asymptotically stable rest point of learning dynamics & Global optimum & S \\
Core of a coalitional game & Stability of a credit-assignment scheme & Trivial & S \\
Nucleolus & Lexicographic min-max credit allocation & Total return & E \\
Shapley value & Per-agent credit; marginal contribution baseline & Total return & E \\
Folk-theorem equilibrium & Non-Markovian policy conditioning on history & Optimal policy & D \\
Trembling-hand perfection & Robustness of the policy to execution noise & Optimal policy & S \\
Sequential equilibrium & On-policy belief filter plus greedy action & Optimal policy & E \\
\end{longtable}
\endgroup

\Cref{tab:solution-concepts} collects these correspondences together with the
ones developed later in the monograph. Three entries deserve comment because
they are the ones most often stated loosely. Subgame perfection and the Bellman
principle are the same requirement, namely that a strategy be optimal in every
subproblem and not merely along the realised path, and the equivalence is
definitional once one observes that a subgame of a single-agent sequential
problem is the problem restarted at a state. Markov perfect equilibrium is the
restriction of subgame perfection to stationary state-dependent strategies and
is therefore the exact multi-agent analogue of restricting attention to
stationary policies in a discounted Markov decision process; the restriction is
without loss in the single-agent case and with loss in the multi-agent case,
because non-Markovian strategies support equilibria that stationary ones do not,
which is the content of the folk theorem in \cref{sec:repeated}. The Shapley
value entry is marked as an exact reduction because the axioms that characterise
it uniquely are exactly the properties one demands of a credit-assignment
scheme, and because the difference reward used in cooperative multi-agent
reinforcement learning is a single term of the Shapley sum, as
\cref{prop:difference-shapley} makes precise.

\section{Dominance, iterated reasoning, and bounded rationality}
\label{sec:dominance}

Nash equilibrium presumes that every agent's belief about the others is exactly
right. Three families of concepts weaken that presumption in different
directions, and all three have reinforcement learning counterparts that are in
daily use and are rarely recognised as such. Dominance and rationalisability
weaken it by asking only what survives common knowledge of rationality, which is
the game-theoretic content of action pruning and action masking. Level-$k$ and
cognitive hierarchy models weaken it by bounding the depth of recursion, which
is the content of depth-limited opponent modelling. Quantal response weakens it
by replacing exact maximisation with a softmax, which is literally the content
of entropy-regularised reinforcement learning. The refinements at the end of the
section go the other way and strengthen Nash equilibrium by requiring robustness
to small mistakes, which is the content of training a policy that must survive
execution noise.

The two readings here are almost the same sentence spoken with different
emphasis. A game theorist asks what a perfectly rational agent could believe,
and treats departures from maximisation as a modelling device for experimental
data. A reinforcement learning researcher asks what a finite computation can
produce, and treats departures from maximisation as either an exploration
mechanism or a regulariser that makes an optimisation problem well behaved. The
mathematics is identical and the motivation is different, which is why the
softmax appears in one literature as a model of human error and in the other as
a device for making a Bellman operator smooth.

\begin{nomen}{Bounded-rationality vocabulary read from both sides.}{tab:nom-dominance}
Strictly dominated strategy & Action that is never greedy for any opponent
  policy & Safe to delete from the action space \\
Weakly dominated strategy & Action never strictly better, sometimes worse &
  Unsafe to delete; order of deletion matters \\
Iterated elimination (IESDS) & Iterative action masking &
  Prune, then prune again given the pruning \\
Rationalisable strategy & Survivor of unbounded best-response recursion &
  What a rational agent could justify playing \\
Common knowledge of rationality & Unbounded recursion depth in modelling &
  I know that you know that I know \\
Level-$k$ agent & Opponent model unrolled $k$ times &
  Best response to a level-$(k{-}1)$ model \\
Cognitive hierarchy & Mixture over modelling depths &
  Best response to a distribution of depths \\
Quantal response, logit choice & Boltzmann or softmax policy &
  Better actions more likely, not certain \\
Precision $\lambda$ & Inverse temperature $1/\tau$ &
  How sharply the agent maximises \\
Quantal response equilibrium & Fixed point of softmax best responses &
  Soft Bellman fixed point with several agents \\
$\varepsilon$-equilibrium & $\varepsilon$-greedy improvement stopping rule &
  Nobody gains more than $\varepsilon$ \\
Trembling-hand perfection & Robustness to action-execution noise &
  Survives a small probability of any action \\
Proper equilibrium & Noise weighted by how costly the mistake is &
  Expensive mistakes are rarer \\
\end{nomen}

\begin{definition}[Dominance]
\label{def:dominance}
In a normal-form game, a strategy $\pi_i \in \Simplex{\Ac_i}$ strictly dominates
$\pi_i'$ when $\util_i(\pi_i, a_{-i}) > \util_i(\pi_i', a_{-i})$ for every
$a_{-i} \in \Ac_{-i}$, and weakly dominates it when the inequality holds weakly
everywhere and strictly somewhere. A pure action is strictly dominated when some
mixed strategy strictly dominates it.
\end{definition}

\begin{rlreading}
A strictly dominated action is one that is not greedy with respect to any
$Q$-function the game can produce, whatever the other agents do, so deleting it
from the action space cannot change the optimal policy of any agent. This is the
formal licence behind action masking, and it is the only licence there is. It is
specific to strict dominance: a weakly dominated action is greedy for some
opponent behaviour, so masking it changes the set of equilibria, and the order
in which weakly dominated actions are deleted changes the answer. One detail
matters in practice and is easy to miss. The dominating object in
\cref{def:dominance} is a mixed strategy, not an action, so an action can be
safely maskable even though no single other action beats it; a mask computed by
comparing $Q$-values pairwise will therefore miss deletions that a linear
program would find, and \cref{ex:mixed-dominance} exhibits the smallest case.
\end{rlreading}

\begin{example}[An action worth deleting that no other action beats]
\label{ex:mixed-dominance}
Give the row agent three actions against the column agent's two, with row
payoffs
\begin{align*}
\threetwo{a_1}{a_2}{a_3}{\text{left}}{\text{right}}%
{4 & 0}{0 & 4}{1 & 1}.
\end{align*}
Compare $a_3$ with $a_1$: it loses in the left column, $1 < 4$, and wins in the
right, $1 > 0$, so $a_1$ does not dominate it. Compare $a_3$ with $a_2$: the
same in reverse. No pure action dominates $a_3$. Now compare it with the mixed
strategy that plays $a_1$ and $a_2$ with probability one half each, whose payoff
vector is $(\tfrac12 \cdot 4 + \tfrac12 \cdot 0,\; \tfrac12 \cdot 0 +
\tfrac12 \cdot 4) = (2, 2)$, which beats $(1,1)$ in both columns. So $a_3$ is
strictly dominated by a mixture and by nothing else, and may be masked.

The practical reading is that a safe action mask is the output of a linear
program, one per action, asking whether some distribution over the remaining
actions beats it in every column; it is not the output of a pairwise comparison
of $Q$-values. The prisoner's dilemma is the easy case where the pairwise test
suffices, since $D$ beats $C$ by $2$ against $C$ and by $1$ against $D$, so
defection dominates in the pure sense. Most games are not the easy case. \exend
\end{example}

\begin{definition}[Iterated elimination and rationalisability;
\citealp{bernheim1984rationalizable,pearce1984rationalizable}]
\label{def:rationalisable}
Let $\Ac_i^0 = \Ac_i$ and let $\Ac_i^{k+1}$ be the set of actions in $\Ac_i^k$
that are not strictly dominated when the opponents are restricted to
$\Ac_{-i}^k$. The set $\Ac_i^\infty = \bigcap_k \Ac_i^k$ is the set of actions
surviving iterated elimination of strictly dominated strategies. An action is
rationalisable when it is a best response to some belief over the opponents'
rationalisable strategies, the belief being allowed to be correlated across
opponents. For two agents the two sets coincide; for more than two they
coincide when correlated beliefs are permitted.
\end{definition}

\begin{rlreading}
Iterated elimination is action masking applied repeatedly, in which each round
masks the actions that are never greedy given that the previous round's masks
are common knowledge. The first round is computable from an agent's own
experience, since an action whose $Q$-value has been lower under every opponent
policy ever observed is a candidate for deletion. No later round is, because
round $k$ needs the fact that the other agents performed round $k-1$, which is
information about their reasoning rather than about their play. The
rationalisable set is the limit of the recursion, so it is what an agent could
justify playing under unbounded mutual reasoning, and the gap between it and the
Nash equilibrium set measures exactly how much equilibrium assumes beyond
rationality. \Cref{ex:iesds} runs the recursion to a unique answer and
\cref{ex:rationalisable} exhibits a game in which it deletes nothing at all.
\end{rlreading}

\begin{example}[Iterated elimination solves a game that dominance alone does not]
\label{ex:iesds}
Consider the three-by-three game in which the row agent's payoff is
$\util_1(i,j) = 4 - i - j$ and the column agent's is $\util_2(i,j) = 2 + j - i$,
with $i, j \in \{0,1,2\}$ indexing the rows and columns,
\begin{align*}
\trimatrix{0}{1}{2}{0}{1}{2}%
{4,2 & 3,3 & 2,4}%
{3,1 & 2,2 & 1,3}%
{2,0 & 1,1 & 0,2}.
\end{align*}
Row $0$ strictly dominates rows $1$ and $2$ for the row agent, since
$\util_1$ decreases in $i$ for every $j$, and column $2$ strictly dominates
columns $0$ and $1$ for the column agent, since $\util_2$ increases in $j$ for
every $i$. Four eliminations leave the single profile $(0,2)$ with payoffs
$(2,4)$, which is therefore the unique Nash equilibrium and is reached without
ever writing down a fixed-point condition. The reinforcement learning reading is
that an agent which observes only its own reward can discover the first round of
elimination from data, because a dominated action has a lower $Q$-value under
every opponent policy it has ever seen, but cannot discover the second round
without reasoning about the other agent's incentives, because the second round
depends on the opponent having performed the first. Independent learners
therefore recover the first level of this hierarchy and no more, which is the
precise sense in which they are level-$1$ reasoners. \exend
\end{example}

\begin{example}[Rationalisability is strictly weaker than equilibrium]
\label{ex:rationalisable}
In matching pennies every action of every agent is rationalisable: Heads is a
best response to the belief that the opponent plays Heads, Tails to the belief
that it plays Tails, and both beliefs are themselves supported on rationalisable
actions. So iterated elimination removes nothing and the rationalisable set is
the whole game, while the Nash equilibrium set is the single profile
$(\tfrac12,\tfrac12) \times (\tfrac12,\tfrac12)$. The gap is informative. An
algorithm that only ever deletes provably bad actions will never solve matching
pennies, because there are no provably bad actions; solving it requires the
additional assumption that the opponent's belief about you is correct, which is
exactly what equilibrium adds to rationality. This is the cleanest statement of
what a self-play assumption buys and of what it presupposes. \exend
\end{example}

Bounding the depth of the recursion in \cref{def:rationalisable} gives the
second family. A level-$0$ agent plays a fixed non-strategic distribution,
usually uniform; a level-$k$ agent best responds to the belief that all others
are level $k-1$ \citep{stahl1995players}; and a cognitive-hierarchy agent best
responds to a distribution over lower levels, usually Poisson
\citep{camerer2004cognitive}. The reinforcement learning counterpart is exact
and is implemented whenever an agent conditions its policy on a learned model of
the opponent that is itself trained against a lower-depth model, which is the
structure of recursive reasoning methods \citep{wen2019probabilistic} and of the
finitely nested interactive partially observable Markov decision process of
\cref{sec:bayesian}.

\begin{example}[Depth of reasoning in the guessing game]
\label{ex:beauty}
Every agent picks a number in $[0,100]$ and whoever is closest to two thirds of
the average wins. A level-$0$ agent picks uniformly, so its expected choice is
$50$. A level-$1$ agent best responds by picking $\tfrac23 \cdot 50 = 33.33$. A
level-$2$ agent picks $\tfrac23 \cdot 33.33 = 22.22$, a level-$3$ agent
$14.81$, a level-$4$ agent $9.88$. The unique Nash equilibrium, obtained by
iterated elimination of dominated strategies carried to the limit, is that
everybody picks $0$. The sequence makes the point that equilibrium is the
$k \to \infty$ limit of a recursion whose intermediate terms are all different
and all individually rational given the corresponding belief. For multi-agent
reinforcement learning the practical reading is that the depth of an opponent
model is a hyperparameter with a visible effect on behaviour, that training
against a frozen opponent produces a level-$1$ policy by construction, that
training against a policy that was itself trained against a frozen opponent
produces a level-$2$ policy, and that self-play with a replay buffer of past
opponents produces something close to a cognitive hierarchy whose depth
distribution is set by the sampling scheme over the buffer. Reporting the depth
is therefore reporting a property of the algorithm, not of the game. \exend
\end{example}

The third family replaces exact maximisation by a smooth approximation, and
here the correspondence with reinforcement learning is an identity rather than
an analogy.

\begin{definition}[Quantal response equilibrium;
\citealp{mckelvey1995quantal}]
\label{def:qre}
Fix a precision $\lambda \ge 0$. A profile $\pi^\lambda$ is a logit quantal
response equilibrium when for every agent $i$ and every $a_i \in \Ac_i$,
\begin{align}
\pi^\lambda_i(a_i) \;=\;
\frac{\exp\!\big(\lambda\, \util_i(a_i, \pi^\lambda_{-i})\big)}
     {\sum_{a_i' \in \Ac_i} \exp\!\big(\lambda\, \util_i(a_i', \pi^\lambda_{-i})\big)} .
\label{eq:qre}
\end{align}
At $\lambda = 0$ the unique solution is uniform play, and as
$\lambda \to \infty$ every limit point of $\pi^\lambda$ is a Nash equilibrium.
\end{definition}

\begin{rlreading}
Replace $\lambda$ by $1/\tau$ and \cref{eq:qre} is the line of code that turns a
$Q$-vector into a Boltzmann policy, $\pi_i(a_i) \propto \exp(Q_i(a_i)/\tau)$,
with $Q_i(a_i) = \util_i(a_i, \pi_{-i})$ the value of action $a_i$ against the
current opponents. The only thing \cref{def:qre} adds to that line is the word
``equilibrium'': it requires the $Q$-vector that produced the softmax to be the
one the softmax itself induces. So a quantal response equilibrium is the
fixed point of a self-play loop in which every agent uses a softmax policy at a
fixed temperature and nobody ever anneals, which is the loop that a great many
multi-agent implementations actually run. The precision $\lambda$ is the inverse
temperature, $\lambda = 0$ is uniform exploration, and $\lambda \to \infty$ is
the greedy limit; \cref{prop:qre-softq} shows that at one agent the whole
construction collapses to the soft Bellman operator of maximum-entropy
reinforcement learning. The practical corollary is blunt: a run at a fixed
temperature is not converging to a Nash equilibrium of the game, it is
converging to a Nash equilibrium of a different game whose payoffs carry an
entropy bonus, and \cref{ex:soft-value} prices that bonus.
\end{rlreading}

\begin{example}[One logit response, computed]
\label{ex:logit-response}
Before any fixed point, the response alone. In matching pennies the row agent's
payoffs are $+1$ for matching and $-1$ otherwise. Suppose the column agent plays
Heads with probability $0.7$. Then
$\util_1(H, \pi_2) = 0.7(1) + 0.3(-1) = 0.4$ and
$\util_1(T, \pi_2) = 0.7(-1) + 0.3(1) = -0.4$, so the $Q$-vector is
$(0.4, -0.4)$ and the logit response puts weight
\begin{align*}
\pi_1(H) \;=\; \frac{e^{0.4\lambda}}{e^{0.4\lambda} + e^{-0.4\lambda}}
            \;=\; \frac{1}{1 + e^{-0.8\lambda}}
\end{align*}
on Heads. At $\lambda = 0$ this is $0.5$, at $\lambda = 1$ it is $0.690$, at
$\lambda = 5$ it is $0.982$, and in the limit it is $1$, the greedy action.
Against a uniform column agent both payoffs are $0$ and the response is uniform
at every precision, which is why the equilibrium of this particular game is
unaffected by the temperature while the path to it, computed in
\cref{ex:qre-path} for an asymmetric variant, is not. The arithmetic is the
arithmetic of a softmax layer over two logits, and that is the whole of the
correspondence. \exend
\end{example}

\begin{proposition}[Quantal response is the multi-agent soft Bellman equation]
\label{prop:qre-softq}
Let $\nag = 1$ and write $\tau = 1/\lambda$. Then \cref{eq:qre} reduces to the
softmax policy $\pi(a) \propto \exp(\util_1(a)/\tau)$, which is the maximiser of
the entropy-regularised objective
$\max_{\pi \in \Simplex{\Ac}} \; \Exp_{a \sim \pi}[\util_1(a)]
 + \tau\, H(\pi)$, with optimal value the log-sum-exp
$\tau \log \sum_a \exp(\util_1(a)/\tau)$. Consequently the sequential version
of \cref{eq:qre}, obtained by applying it to the stage game of
\cref{def:stage-game}, degenerates at $\nag = 1$ to the soft Bellman operator
$(\BellOp_\tau V)(s) = \tau \log \sum_a \exp\big(
\Exp_{s'}[\Rew(s,a,s') + \gamma V(s')]/\tau \big)$,
which is the operator of maximum-entropy reinforcement learning
\citep{haarnoja2017reinforcement} and the regularised Markov decision process of
\citet{geist2019theory}.
\end{proposition}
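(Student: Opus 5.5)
The argument has three steps, in the order of the statement: degenerate \cref{eq:qre} to a single softmax, identify that softmax as the solution of an entropy-regularised problem, and lift it state by state through \cref{def:stage-game}.

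\emph{Degeneracy.} With $\nag = 1$ there is no $\pi_{-1}$, so the right-hand side of \cref{eq:qre} no longer depends on the profile. The fixed-point condition then becomes an explicit formula, $\pi(a) = \exp(\lambda \util_1(a)) / \sum_{a'} \exp(\lambda \util_1(a'))$. Substituting $\lambda = 1/\tau$ gives the stated softmax, and it is the unique solution. This mirrors the first half of the proof of \cref{prop:nash-degeneracy}.

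\emph{Variational characterisation.} I would apply the Gibbs variational principle. Write $H(\pi) = -\sum_a \pi(a)\log\pi(a)$, set $Z = \sum_a \exp(\util_1(a)/\tau)$, and let $p^\tau(a) = \exp(\util_1(a)/\tau)/Z$. Expanding the definitions gives the identity
\begin{align*}
\Exp_{a\sim\pi}[\util_1(a)] + \tau H(\pi) \;=\; \tau \log Z \;-\; \tau\, \KL(\pi \,\|\, p^\tau),
\end{align*}
which holds for every $\pi \in \Simplex{\Ac}$. Since $\KL \ge 0$, with equality only at $\pi = p^\tau$, the softmax is the unique maximiser and the optimal value is the log-sum-exp. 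For $\tau > 0$ the objective is strictly concave, so an alternative route is Lagrangian stationarity on the simplex. I prefer the KL identity because it gives uniqueness and the value in one line.

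\emph{Sequential lift.} The sequential version of \cref{eq:qre} applies the logit map at each state $s$ to the stage game $\game_s[V]$ of \cref{eq:stage-game}. With one agent that stage game is simply the vector $a \mapsto \Exp_{s'}[\Rew(s,a,s') + \gamma V(s')]$, which is the single-agent $Q$-vector. The sequential solution should report the regularised value of each stage game. Under that reading, the previous step gives the backed-up value as the log-sum-exp of the stage payoffs, which is exactly $(\BellOp_\tau V)(s)$. The maximisation decouples across states exactly as for \cref{eq:bellman-opt}, so the identification holds state by state.

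To make the soft Bellman operator well defined I would also note that it is a $\gamma$-contraction in the supremum norm. The map $x \mapsto \tau\log\sum_a e^{x_a/\tau}$ is monotone and commutes with adding a constant vector, so it is $1$-Lipschitz in the supremum norm, and the factor $\gamma$ then gives the contraction.

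\emph{Main obstacle.} The hard part is the bookkeeping of which value the sequential QRE backs up. The statement implicitly takes the regularised value (the log-sum-exp) rather than the expected payoff under the softmax policy; these differ by exactly $\tau H(\pi)$. I would fix this convention explicitly before the lift, and remark that with the other convention one obtains the policy-evaluation operator for the softmax policy rather than $\BellOp_\tau$.
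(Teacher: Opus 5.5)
Your proposal is correct and follows the paper's proof step for step: the explicit fixed point at $\nag = 1$, then the Gibbs--Donsker--Varadhan variational identity, then substitution of the stage-game payoffs. The only differences are two. You prove the identity via $\Exp_\pi[\util_1] + \tau H(\pi) = \tau\log Z - \tau\,\KL(\pi\,\|\,p^\tau)$ rather than by the paper's Lagrangian stationarity, which gives global optimality and uniqueness directly instead of just a stationary point. And your caveat about which value is backed up is well taken: the paper's ``substituting gives the stated operator'' tacitly uses the regularised log-sum-exp value, whereas backing up the expected payoff of the softmax profile would give a Boltzmann-weighted backup rather than $\BellOp_\tau$.
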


\begin{proof}
With one agent the right-hand side of \cref{eq:qre} does not depend on
$\pi_{-i}$, so the fixed-point condition is an explicit formula. The variational
identity
$\max_{\pi \in \Simplex{\Ac}} \{ \sum_a \pi(a) u(a) - \tau \sum_a \pi(a)\log\pi(a) \}
 = \tau \log \sum_a e^{u(a)/\tau}$
with maximiser $\pi(a) \propto e^{u(a)/\tau}$ is the Donsker-Varadhan
variational formula for the log-partition function, obtained by setting the
gradient of the Lagrangian to zero. Substituting $u(a) = \util^{\,s}_1(a)$ from
\cref{eq:stage-game} gives the stated operator.
\end{proof}

\Cref{prop:qre-softq} explains why entropy regularisation stabilises multi-agent
training for a reason that has nothing to do with exploration. The quantal
response operator is a contraction for $\lambda$ small enough and has a unique
fixed point, whereas the Nash operator is neither, so adding entropy is a
convexification of an ill-posed fixed-point problem. This is the mechanism
behind regularised Nash dynamics \citep{perolat2021poincare} and behind the
unified treatment of \citet{sokota2023unified}, and \cref{sec:evolutionary}
exhibits the same term as the mutation component of the replicator-mutator
dynamics.

\begin{example}[Tracking the quantal response path to the equilibrium]
\label{ex:qre-path}
Take the zero-sum game with the row agent's payoff matrix
$\big(\begin{smallmatrix} 2 & -1 \\ -1 & 1 \end{smallmatrix}\big)$. Its unique
Nash equilibrium has both agents playing their first action with probability
$0.4$ and has value $0.2$: the row agent is indifferent when
$3q - 1 = 1 - 2q$, that is $q = 0.4$, and symmetrically for the column agent.
Solving \cref{eq:qre} numerically along a path in $\lambda$ gives the first
action probabilities $(0.500, 0.500)$ at $\lambda = 0$, $(0.489, 0.391)$ at
$\lambda = 1$, $(0.418, 0.387)$ at $\lambda = 5$, $(0.404, 0.396)$ at
$\lambda = 20$ and $(0.401, 0.399)$ at $\lambda = 100$, converging to
$(0.4, 0.4)$. The path is the exact analogue of the regularisation path of an
entropy-regularised policy optimiser as the temperature is annealed, and the
fact that it is a path rather than a point is what makes annealing work: at each
$\lambda$ the solution is unique and locally stable, so a continuation method
tracks it, whereas solving at $\lambda = \infty$ directly means solving the
original fixed-point problem with none of its difficulties removed. \exend
\end{example}

\begin{example}[How much the softmax costs, in value]
\label{ex:soft-value}
A state offers three actions with $Q$-values $(1,\, 0.5,\, 0)$. The hard maximum
is $1$. The soft maximum $\tau \log \sum_a e^{Q(a)/\tau}$ is $1.680$ at
$\tau = 1$, with the best action taken with probability $0.506$, and $1.0007$ at
$\tau = 0.1$, with the best action taken with probability $0.993$. The soft
value always exceeds the hard value, by at most $\tau \log \abs{\Ac}$, which is
$\tau \log 3 = 1.0986\tau$ here. A practitioner who anneals $\tau$ from $1$ to
$0.1$ therefore watches a reported value drop by roughly $0.68$ without any
deterioration in the policy, because the drop is the removal of the entropy
bonus rather than a loss of return. In the multi-agent setting the same bonus
is what separates the quantal response equilibrium from the Nash equilibrium,
so the same accounting applies to exploitability measurements, which should be
computed against the unregularised game. \exend
\end{example}

The refinements go in the opposite direction. A Nash equilibrium may rely on an
agent choosing an action that is optimal only because an event of probability
zero is assumed never to occur, and the refinements rule this out by requiring
optimality to survive small perturbations.

\begin{definition}[Trembling-hand perfection and properness;
\citealp{selten1975reexamination,myerson1978refinements}]
\label{def:trembling}
A profile $\pi^\star$ is trembling-hand perfect when there is a sequence of
completely mixed profiles $\pi^m \to \pi^\star$ such that $\pi_i^\star$ is a
best response to $\pi^m_{-i}$ for every $i$ and every $m$. It is proper when in
addition the perturbations can be taken so that a costlier mistake receives a
probability at most $\varepsilon$ times that of a cheaper one.
\end{definition}

\begin{rlreading}
The sequence $\pi^m$ is an exploration schedule being annealed to zero, and
trembling-hand perfection is the requirement that the profile survive the whole
schedule rather than only its limit. An $\varepsilon$-greedy learner, or a
policy deployed on hardware that occasionally executes the wrong action, is
living at a strictly positive $m$ and will therefore be pushed off any
equilibrium that this refinement rejects. Properness is the same requirement
with the trembles weighted by how costly the mistake is, which is what a
Boltzmann schedule does and what uniform $\varepsilon$-greedy exploration does
not, so the two standard exploration mechanisms target two different refinements
of the same solution concept. \Cref{ex:trembling} exhibits an equilibrium that a
tremble of $0.01$ destroys.
\end{rlreading}

\begin{example}[An equilibrium that a single tremble destroys]
\label{ex:trembling}
Consider the common-payoff game
\begin{align*}
\bimatrix{a}{b}{c}{d}{1,1}{0,0}{1,1}{1,1}.
\end{align*}
The game has three pure Nash equilibria, $(a,c)$, $(b,c)$ and $(b,d)$, all with
payoff $1$, and only one of them survives perturbation. Take $(a,c)$ first. If
the column agent plays $d$ with any probability $\varepsilon > 0$, the row
agent's payoff from $a$ is $1 - \varepsilon$ while its payoff from $b$ is $1$,
so $a$ is no longer a best response and $(a,c)$ is not trembling-hand perfect;
at $\varepsilon = 0.01$ the two values are $0.99$ and $1.00$. Now take $(b,d)$.
If the row agent plays $a$ with probability $\varepsilon$, the column agent's
payoff from $d$ is $1 - \varepsilon$ while its payoff from $c$ is $1$, so $d$ is
no longer a best response either, by the mirror-image argument and with the
mirror-image numbers. The profile $(b,c)$ survives every tremble, because $b$
pays $1$ whatever the column agent does and $c$ pays $1$ whatever the row agent
does. The reinforcement learning reading is direct: $a$ is weakly dominated for the
row agent and $d$ is weakly dominated for the column agent, and an agent trained
with any exploration noise at all, or deployed on hardware with a non-zero
probability of executing the wrong action, will be pushed off both and onto the
undominated pair. Trembling-hand perfection is thus the
equilibrium concept that an $\varepsilon$-greedy learner implicitly targets, and
the refinement is not a mathematical nicety but a description of what noisy
execution does to a policy. Properness refines this further by making the
probability of a tremble decrease in the cost of the mistake, which is what a
Boltzmann exploration schedule does and what uniform $\varepsilon$-greedy
exploration does not. \exend
\end{example}

The last item in this section is the one that a practitioner uses without
naming. An $\varepsilon$-Nash equilibrium, in which no agent can gain more than
$\varepsilon$ by deviating, is the stopping criterion of every approximate
method in the area, and its single-agent degeneracy is the $\varepsilon$-optimal
policy returned by approximate policy iteration. The relation between
$\varepsilon$-equilibria and exact ones is however badly behaved in a way that
has no single-agent analogue: an $\varepsilon$-equilibrium may be far in
strategy space from every exact equilibrium, and its payoff may differ from
every equilibrium payoff by much more than $\varepsilon$. This is why
\cref{tab:diagnostics} asks for exploitability rather than for a distance in
policy space, and why the complexity statements of \cref{sec:complexity}
distinguish an approximation in the deviation gain from an approximation in the
strategy.

\section{Repeated games and the history-indexed Markov decision process}
\label{sec:repeated}

A repeated game is a normal-form game $\game$ played at every step
$t = 0,1,2,\dots$, with each agent observing some function of the realised joint
action before the next round. The monitoring structure is what determines the
difficulty. Under perfect monitoring each agent observes the full joint action
$a^t$; under imperfect public monitoring the agents observe a common signal
$y^t$ drawn from a distribution depending on $a^t$; under private monitoring
each agent observes its own signal. A strategy of agent $i$ is a map
$\pi_i : \Hi_i \to \Simplex{\Ac_i}$ from the agent's observation history to
distributions over its actions, and the payoff is the discounted sum
$\util_i(\pi) = (1-\gamma)\Exp[\sum_t \gamma^t \util_i(a^t)]$, the normalisation
being conventional so that payoffs live in the convex hull of the stage payoffs.

The two readings are instructive because they attach the same object to
different hyperparameters. A game theorist reads the discount factor as patience
and the strategy space as the set of all history-contingent plans, and asks
which payoff vectors can be sustained. A reinforcement learning researcher reads
the discount factor as a parameter of the return and the strategy space as
whatever the architecture can represent, and asks what the algorithm converges
to. The folk theorem is the statement that these two questions have the same
answer and that the answer is almost everything, so that in a repeated game the
architecture and the discount factor determine the outcome at least as much as
the learning rule does. We give the theorem and then insist on this reading,
because it is the one with practical consequences for benchmark design.

\begin{nomen}{Repeated-game vocabulary read from both sides.}{tab:nom-repeated}
Stage game $\game$ & One-step environment; inner reward &
  The game played every period \\
History $h^t$ & Trajectory prefix; recurrent hidden state &
  Everything observed so far \\
Strategy $\pi_i : \Hi_i \to \Simplex{\Ac_i}$ &
  History-dependent policy; recurrent policy &
  A plan contingent on the past \\
Discount factor $\gamma$ & Discount factor $\gamma$ &
  Patience; also the effective horizon \\
Perfect monitoring & Full observability of the joint action &
  You see what everybody did \\
Imperfect public monitoring & Shared noisy observation of the joint action &
  You see a common signal, not the action \\
Private monitoring & Per-agent noisy observation &
  Each agent sees a different signal \\
Minmax value $\underline{\util}_i$ & Worst-case return under adversarial partners &
  The punishment that others can impose \\
Individually rational payoff & Return above the worst-case guarantee &
  Better than being punished forever \\
Grim trigger, tit-for-tat & Finite-state policy on the opponent's last action &
  A small recurrent controller \\
Folk theorem & Expressiveness of the policy class sets the outcome set &
  Memory creates equilibria \\
Sequential social dilemma & Temporally extended mixed-motive environment &
  The dilemma spread over time \\
\end{nomen}

The translation into reinforcement learning vocabulary is immediate and exact.

\begin{proposition}[Repeated game as a stochastic game on histories]
\label{prop:repeated-sg}
A repeated game with perfect monitoring is a stochastic game whose state at time
$t$ is the history $h^t = (a^0,\dots,a^{t-1})$, whose transition is the
deterministic append $h^{t+1} = h^t a^t$, and whose reward is
$\Rew_i(h^t, a^t) = \util_i(a^t)$. Under imperfect public monitoring the state
is the public history $(y^0,\dots,y^{t-1})$ and the game is a partially
observable stochastic game in which the hidden component of the state is empty
but the agents' private action records differ. With $\nag = 1$ the former is a
Markov decision process on the history tree whose optimal policy is the constant
policy $a^t \equiv \argmax_a \util_1(a)$.
\end{proposition}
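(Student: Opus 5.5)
The plan is to verify the three claims in turn by exhibiting the tuple explicitly and checking that strategies and payoffs correspond one-to-one under the identification.

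\textbf{Perfect monitoring.} First I write the stochastic game. Take the state space to be the set of finite histories $\Hi = \bigcup_t \Ac^t$. This set is countable rather than finite, so I will note that the stochastic-game definition of \cref{sec:stochastic} must be read with a countable state space. The remaining components are:
\begin{itemize}
\item the initial state $\delta_{\emptyset}$;
\item the deterministic kernel $\Tr(h a \given h, a) = 1$;
\item the rewards $\Rew_i(h,a) = \util_i(a)$, with the same $\gamma$.
\end{itemize}
Under perfect monitoring each agent's observation history is exactly $h^t$, since every agent sees the full joint action. A repeated-game strategy $\pi_i : \Hi_i \to \Simplex{\Ac_i}$ is therefore literally a stationary Markov policy of the stochastic game on this state space.

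Next I check that each profile induces the same distribution over infinite plays in both models. I do this by induction on $t$: both processes append $a^t \sim \prod_i \pi_i(\cdot\given h^t)$. The discounted reward sums then agree pathwise, up to the conventional factor $(1-\gamma)$, which is a positive rescaling and so preserves best responses and the equilibrium set. Hence values coincide and solution sets coincide.

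\textbf{Imperfect public monitoring.} Here the state is taken to be the public history, with transition $h \mapsto h\,y$ where $y$ is drawn from the signal distribution given $a$. Each agent observes $y$ together with its own action. The reward is the expected stage payoff given the realised action (or given $(a,y)$ if payoffs are defined that way).

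Since each agent additionally remembers its own past actions, which the others do not see, the agents' information sets differ. So the object is a POSG in the sense of \cref{sec:partial-obs}: the physical hidden state is trivial, and the asymmetry lies only in the private action records. The step to be careful about is exactly this bookkeeping. The Markov state must be the public history alone, and the private action records must enter as the agents' private observations rather than as part of the state. Otherwise the game would become fully observable, which it is not.

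\textbf{One agent.} With $\nag = 1$, \cref{prop:sg-mdp} says the first construction is an MDP on the history tree. The reward depends only on the action and not on the history, so I propose $V(h) \equiv c$ and check the Bellman equation \cref{eq:bellman-opt}:
\[
\max_a [\util_1(a) + \gamma c] = \max_a \util_1(a) + \gamma c.
\]
This is solved by $c = \max_a \util_1(a)/(1-\gamma)$. The countable-state Bellman operator is still a $\gamma$-contraction on bounded functions, because $\Rew$ is bounded and $\Ac$ is finite, so this $V$ is the unique fixed point $V^\star$.

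The greedy action at every $h$ maximises $\util_1(a) + \gamma c$, which is $\argmax_a \util_1(a)$. So the constant policy is optimal, and by \cref{prop:nash-degeneracy} it is the Markov perfect solution.

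The main obstacle is not the computation but the countable state space: the contraction and greedy-policy arguments must be carried over from the finite-state setting of \cref{def:mdp}. This goes through verbatim in the supremum norm on bounded functions.
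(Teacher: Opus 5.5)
Your proposal is correct and takes the paper's route: the paper offers no argument beyond calling the statement a tautology of the construction, and you make that bookkeeping explicit, namely the countable state space that \cref{def:sg} strictly excludes, the $(1-\gamma)$ normalisation, and the Bellman check at $\nag=1$ (for which the pathwise bound $\sum_t \gamma^t \util_1(a^t) \le \max_a \util_1(a)/(1-\gamma)$ would already suffice without any contraction argument). One small correction to your public-monitoring justification: in \cref{def:posg} an agent's own past actions are already part of its action-observation history, so they need not be added as observations, and placing them in the state would make the hidden component non-empty rather than make the game fully observable.
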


The statement is a tautology, and its value lies in what it exposes. The state
space of the induced stochastic game grows exponentially in $t$, so the
correspondence does not by itself yield an algorithm; what it yields is the
correct diagnosis of why repeated games are not solved by stationary policies.
In the induced stochastic game, a stationary policy is one that ignores the
history, and the set of Markov perfect equilibria in stationary policies is
exactly the set of profiles that play a Nash equilibrium of the stage game at
every step. Conditioning on history enlarges the equilibrium set enormously,
which is the folk theorem.

\begin{theorem}[Folk theorem, discounted case; \citealp{fudenberg1986folk}]
\label{thm:folk}
Let $\game$ be a finite normal-form game with perfect monitoring and a public
randomisation device, let $\underline{\util}_i$ be the minmax value of agent $i$
in independent mixed strategies of the opponents, and
let $F^\star = \{v \in \Real^{\nag} : v \in \mathrm{conv}\,\util(\Ac),\;
v_i > \underline{\util}_i \ \forall i\}$ be the set of strictly individually
rational feasible payoffs. If $F^\star$ has dimension $\nag$, then for every
$v \in F^\star$ there exists $\bar\gamma < 1$ such that for all
$\gamma \in (\bar\gamma, 1)$ the infinitely repeated game has a subgame perfect
equilibrium with payoff profile $v$. Without public randomisation the conclusion
weakens to a payoff within an arbitrarily small $\varepsilon$ of $v$.
\end{theorem}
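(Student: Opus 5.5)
I would follow the Fudenberg--Maskin construction: prescribe strategies explicitly and verify subgame perfection through the one-shot deviation principle. That principle is the game-theoretic form of the principle of optimality recorded for the history-indexed stochastic game of \cref{prop:repeated-sg}. Because payoffs are discounted and bounded, a profile is subgame perfect if and only if no agent gains from a single deviation at any history followed by conformity. This reduces the verification to finitely many inequalities, one for each phase of the strategy.

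\textbf{Fixing the data.} Fix $v \in F^\star$. Public randomisation lets me treat $v$ as the expected stage payoff of a lottery over pure profiles drawn afresh each period, so the ``cooperative'' phase has per-period payoff exactly $v$. Let $m^i = (m^i_i, m^i_{-i})$ be the minmax profile against $i$, with $m^i_{-i}$ independent mixed strategies holding $i$ to $\underline{\util}_i$. The full-dimensionality assumption supplies, inside $F^\star$, a point $v'$ with $\underline{\util}_i < v'_i < v_i$ for all $i$. It also supplies, for each $i$, a reward point $v'(i)$ with $v'_i(i) = v'_i$ and $v'_j(i) = v'_j + \varepsilon$ for $j \ne i$, where $\varepsilon>0$ is small enough that $v'(i) \in F^\star$.

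\textbf{Strategy phases.} Phase I plays $v$ via the device. If $j$ deviates alone from any phase, the play moves to phase II$_j$: $m^j$ is played for $N$ periods, with a deviation by $k$ restarting phase II$_k$. Phase III$_j$ then plays $v'(j)$ forever. Choose $N$ so that $N(v'_i - \underline{\util}_i) > \max_a \util_i(a) - \min_a\util_i(a)$ for every $i$. Then let $\gamma$ tend to $1$ and check the one-shot inequalities phase by phase.

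\textbf{Verification.} In phases I and III$_j$, a deviation gains at most a bounded stage amount. It costs a switch to phase II followed by $v'_i$, which is strictly below $v_i$ and below $v'_i(j)$. For $\gamma$ near $1$ the loss dominates, and the choice of $N$ handles the finitely many punishment periods. In phase II$_j$ the punished agent $j$ already gets at most $\underline{\util}_j$ per period by deviating, and a deviation only restarts its punishment. The punishers $i \ne j$ may lose by playing $m^j_{-j}$ instead of a short-run best reply.

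I expect this last case to be the main obstacle, and it is where full dimensionality is used. A punisher who deviates triggers its own punishment and ends at $v'_i$ instead of $v'_i + \varepsilon$. For $\gamma$ close to $1$, $\gamma^N$ stays near one, so the permanent loss of $\varepsilon$ outweighs the bounded gain over the $N$ periods. Two technical points need care here. Deviations from the mixed $m^j_{-j}$ are unobservable, so I would follow Fudenberg--Maskin and adjust the phase-III continuation by realised pure actions to make punishers indifferent over their supports. When public randomisation is unavailable, I would replace the lotteries by deterministic sequences whose discounted averages converge to the targets, which yields the $\varepsilon$-approximate payoff in the last sentence of the statement.
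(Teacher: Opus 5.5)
Your outline is the Fudenberg--Maskin construction that the paper cites (the paper gives no proof of its own), and your phase-by-phase checks are sound. One step fails as stated, however: the claim that full dimensionality supplies, inside $F^\star$, a point $v'$ with $\underline{\util}_i < v'_i < v_i$ for every $i$. $F^\star$ imposes only strict individual rationality, so it contains boundary points of $V = \mathrm{conv}\,\util(\Ac)$. If $v$ minimises $p\cdot x$ over $V$ for some $p \ge 0$ with $p \ne 0$, then no feasible point lies coordinatewise strictly below $v$.

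Here is a concrete instance. Take three agents, each of whom either plays $C$ or names one of the other two. An agent named by both others gets $0$ and each namer gets $2$; otherwise everyone gets $1$. Every minmax value is $0$, attained with pure actions, and $V$ is the tetrahedron with vertices $(1,1,1)$, $(0,2,2)$, $(2,0,2)$, $(2,2,0)$. So $F^\star$ is three-dimensional and contains $v=(1,1,1)$. Yet every point of $V$ has coordinate sum at least $3$, so no $v'<v$ exists and your reward points $v'(i)$ cannot be built. The theorem is not threatened here, since $(1,1,1)$ is even a stage-equilibrium payoff, but your construction does not reach it. With two agents the mutual-minmax profile gives a feasible point below the minmax vector, so the failure needs $\nag \ge 3$.

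The repair is to choose the reward points player by player rather than as perturbations of a single $v'$. Your verification actually uses only points $x^1,\dots,x^\nag \in F^\star$ with $\underline{\util}_i < x^i_i < v_i$ and $x^j_i > x^i_i$ for $j \ne i$. These points should also be interior to $V$, to leave room for the Fudenberg--Maskin indifference adjustments.

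To build them:
\begin{itemize}
\item Take $y$ and $\eta>0$ such that each $y+\eta(\mathbf{1}-e_i)$ lies in the interior of $V$.
\item For each $i$, let $m_i \in V$ minimise the $i$-th coordinate over $V$. Since $\util(m^i) \in V$, we have $(m_i)_i \le \underline{\util}_i < v_i$.
\item Set $z^i = (1-s)\,m_i + s\,(y+\eta(\mathbf{1}-e_i))$ and $x^i = (1-t)\,v + t\,z^i$, with $s,t>0$ small.
\end{itemize}

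These points have the required properties:
\begin{itemize}
\item For $j \ne i$, $z^j_i - z^i_i \ge s\eta$, because $(m_j)_i \ge (m_i)_i$.
\item For small $s$, $z^i_i < v_i$.
\item Hence $x^j_i - x^i_i \ge ts\eta > 0$ and $x^i_i < v_i$.
\item $x^i$ is interior to $V$, and for small $t$ it is strictly individually rational.
\end{itemize}

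Now use $x^i$ in place of $v'(i)$, and choose $N$ so that $N(x^i_i-\underline{\util}_i)$ exceeds the payoff range. Phase I then uses $x^i_i<v_i$, phase III$_i$ uses $x^i_i>\underline{\util}_i$ together with $N$, and the punisher and phase-III$_j$ checks use $x^j_i-x^i_i>0$. Your remaining arguments, including the support adjustments and the approximation without public randomisation, go through unchanged. So this is a repairable gap, not a wrong approach.
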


\begin{rlreading}
\Cref{thm:folk} is a statement about the policy class, and it is the reason two
runs of the same algorithm on the same social dilemma can report opposite
conclusions. The three ingredients map onto three lines of a configuration
file. The policy's memory decides whether history-contingent strategies are
representable at all, so a feedforward policy can only sustain the stage
equilibrium while one bit of recurrent state can sustain cooperation. The
discount factor decides whether the punishment is worth fearing, and
\cref{ex:grim} computes the threshold exactly. The public randomisation device
is a shared random seed, the same object that \cref{def:ce-cce} requires. Change
any of the three and the set of self-consistent outcomes changes, with the
learning rule untouched. The methodological consequence is that a reported
cooperation rate in an iterated social dilemma measures the triple
(architecture, discount, seed availability) at least as much as it measures the
algorithm, and a comparison that varies the algorithm while holding the triple
implicit is not a controlled comparison.
\end{rlreading}

\begin{example}[The feasible set of the iterated prisoner's dilemma]
\label{ex:ipd-folk}
With the stage payoffs of \cref{ex:four-games} the minmax value of each agent is
$1$, attained when the opponent defects, which a linear program over the stage
game confirms. The feasible set is the convex hull of $(3,3)$, $(0,5)$, $(5,0)$
and $(1,1)$, and $F^\star$ is the part of it strictly above $1$ in both
coordinates. \Cref{thm:folk} therefore asserts that for a patient enough pair of
agents every payoff vector in that two-dimensional region, including $(3,3)$,
including $(4.3, 1.01)$ and including $(1.01, 4.3)$, is the payoff of some
subgame perfect equilibrium. The asymmetric pairs are near the boundary: a
linear program over the hull gives $4.327$ as the largest payoff the first agent
can receive while the second receives $1.01$, so $(4.9, 1.01)$ would lie outside
the feasible set and is not available to any profile, equilibrium or not. The statement ``the agents reached an equilibrium
of the iterated prisoner's dilemma'' therefore conveys almost nothing: it
excludes only payoffs below the minmax and outside the hull. Any empirical claim
about cooperation in this environment is a claim about which equilibrium the
particular learning rule selected, and is therefore a claim about the algorithm
and the architecture rather than about the game. \exend
\end{example}

\begin{example}[The discount threshold for grim trigger, computed]
\label{ex:grim}
Grim trigger cooperates until the opponent defects and defects forever after.
Suppose both agents play it. On the equilibrium path each earns
$R/(1-\gamma) = 3/(1-\gamma)$. A one-shot deviation earns $T = 5$ now and then
$P = 1$ forever, for a total of $5 + \gamma/(1-\gamma)$. Cooperation is
sustainable when
\begin{align*}
\frac{3}{1-\gamma} \;\ge\; 5 + \frac{\gamma}{1-\gamma}
\quad\Longleftrightarrow\quad
3 \;\ge\; 5(1-\gamma) + \gamma
\quad\Longleftrightarrow\quad
\gamma \;\ge\; \frac{T - R}{T - P} \;=\; \frac{2}{4} \;=\; \frac12 .
\end{align*}
At $\gamma = 0.4$ cooperating is worth $5.0$ and deviating is worth $5.667$, so
defection wins by $0.667$; at $\gamma = 0.5$ the two are both worth $6.0$; at
$\gamma = 0.6$ cooperating is worth $7.5$ and deviating $6.5$, so cooperation
wins by $1.0$. The practical consequence is blunt and is routinely overlooked:
whether cooperation is an equilibrium of a sequential social dilemma is decided
by the discount factor in the configuration file, and an experiment that reports
defection at $\gamma = 0.4$ and cooperation at $\gamma = 0.6$ has measured the
threshold $1/2$ rather than a property of the learning algorithm. \exend
\end{example}

\begin{example}[Tit-for-tat needs more patience than grim trigger]
\label{ex:tft}
Tit-for-tat plays what the opponent played last. Against a tit-for-tat opponent
three deviations must be checked, and only the third binds. Permanent defection
earns $5 + \gamma/(1-\gamma)$, which loses to $3/(1-\gamma)$ exactly when
$\gamma \ge 1/2$, as for grim trigger. A single defection followed by a return
earns less than that. Alternating defection and cooperation, however, earns
$T$ in odd periods and $S = 0$ in even ones, for a total of
$(T + \gamma S)/(1-\gamma^2) = 5/(1-\gamma^2)$, and this beats $3/(1-\gamma)$
exactly when $5/(1+\gamma) > 3$, that is when $\gamma < 2/3$. At $\gamma = 0.6$
the three values are $7.5$ for always cooperating, $6.5$ for always defecting
and $7.8125$ for alternating, so tit-for-tat is not a best response to itself;
at $\gamma = 0.8$ they are $15.0$, $9.0$ and $13.889$, and it is. The threshold
is $(T-R)/(R-S) = 2/3$, strictly above the grim-trigger threshold of $1/2$,
because tit-for-tat punishes for exactly one period and therefore punishes less.
The lesson for a practitioner is that the strength of the punishment encoded in
a policy determines the range of discount factors over which cooperation is
self-enforcing, so the recurrent architecture and the discount factor must be
chosen together. \exend
\end{example}

For a reinforcement learning audience the folk theorem is best read as a
statement about the policy class. The set of attainable equilibrium payoffs is
determined by how much of the history the policy is allowed to condition on, and
making the policy more expressive, by giving the agent a recurrent state or a
longer context window, does not merely improve the attainable return but changes
the set of self-consistent outcomes. A stationary policy in the iterated
prisoner's dilemma can only sustain mutual defection; a policy with one bit of
memory can sustain cooperation through a grim trigger or tit-for-tat. The
practical corollary is that reported cooperation rates in iterated social
dilemmas are properties of the joint specification of algorithm, memory
architecture and discount factor, and that comparing an
algorithm with a recurrent policy against one with a feedforward policy on such
a benchmark compares policy classes rather than learning rules.

The second reason the folk theorem matters here is that it is the cleanest
instance of a phenomenon that recurs throughout the monograph: in multi-agent
problems, equilibrium multiplicity is the rule, and therefore a solution concept
alone never determines an outcome. The single-agent case has a unique optimal
value and so a learning algorithm can be judged by how close it comes to it. In
a general-sum repeated game, \cref{thm:folk} says that essentially every
individually rational payoff is an equilibrium payoff, so the statement that an
algorithm converged to an equilibrium carries almost no information. What
carries information is which equilibrium, and the criteria for preferring one,
welfare, fairness, robustness to exploitation, are imported from outside the
equilibrium concept. This is the formal content behind the objection of
\citet{shoham2007if} that the multi-agent learning literature frequently leaves
its question unstated.

A third connection is worth drawing because it is productive rather than
cautionary. A repeated game in which each agent's strategy is implemented by a
finite automaton with $k$ states is a finite stochastic game whose state is the
tuple of automaton states, and the best response to a fixed finite-automaton
opponent is computable by ordinary dynamic programming on a state space of size
$k^{\nag - 1}$. This is the exact sense in which opponent modelling reduces
multi-agent learning to single-agent learning: if agent $i$ maintains a model of
the others as finite-state machines, then agent $i$ faces a Markov decision
process over the product of the physical state and the inferred machine states,
and if the model is uncertain, a partially observable one, which is the subject
of \cref{sec:bayesian}. The reduction is exact when the model class contains the
truth and the others are not learning, and it degrades gracefully in the first
respect and catastrophically in the second.

\begin{example}[Best responding to tit-for-tat is a two-state Markov decision problem]
\label{ex:automaton-mdp}
Fix the opponent to tit-for-tat and let the state be what the opponent will play
next, which under tit-for-tat is what we played last. The state space is
$\{C, D\}$, the action space is $\{C, D\}$, the transition is deterministic and
sends the state to the action just taken, and the rewards are the stage payoffs
of \cref{ex:four-games}. With $\gamma = 0.6$, value iteration on this two-state
Markov decision process returns $V(C) = 7.8125$ and $V(D) = 4.6875$ with the
greedy action being $D$ at the state $C$ and $C$ at the state $D$, which is
exactly the alternating deviation of \cref{ex:tft}. The computation takes four
numbers and reproduces a conclusion that the repeated-game argument reaches with
an infinite-horizon comparison, which is the point: once the opponent is a fixed
finite automaton, nothing multi-agent remains and the entire single-agent
toolbox applies verbatim. The hypothesis that fails in a real training run is
the fixity, and \cref{sec:negative} is devoted to its consequences. \exend
\end{example}

Imperfect monitoring is where repeated games become genuinely harder and where
the reinforcement learning reading changes. Under perfect monitoring a deviation
is detected with certainty, so punishment can be triggered without error. Under
imperfect public monitoring the agents observe only a noisy signal of the joint
action, so a punishment rule must sometimes punish an innocent partner, and
sustaining cooperation requires balancing the deterrent effect of punishment
against its cost on the equilibrium path. The recursive characterisation of the
equilibrium payoff set by \citet{abreu1990toward} is the dynamic-programming
formulation of exactly that balance: the set of equilibrium payoffs is the
largest self-generating set, which is a fixed point of a set-valued operator and
therefore the set analogue of \cref{eq:bellman-opt}. The reinforcement learning
counterpart is the observation that a value function over sets of payoff
vectors, rather than over scalars, is what a general-sum sequential problem
requires, which is also the shape taken by the occupancy-state reduction of
\cref{sec:reductions} and by the distributional view of multi-objective control.

Social dilemmas repeated in time, rather than in a matrix, are the form in which
this material reaches most practitioners. The empirical study of cooperation in the iterated prisoner's dilemma begins
with \citet{axelrod1981evolution}, whose tournaments established that simple
reciprocating strategies outperform both unconditional cooperation and
unconditional defection against a varied population. A sequential social dilemma
\citep{leibo2017multi} is a Markov game in which the temporally extended
policies play the roles of cooperation and defection, so that the dilemma is a
property of the induced empirical game over policies rather than of any single
state. The methodological consequence is that the analysis belongs to
\cref{sec:evolutionary}: one estimates the empirical payoff matrix over a finite
set of trained policies and examines whether it has the structure of
\cref{ex:four-games}, which is what the evaluation suites in this area do
\citep{leibo2021scalable}. The dilemma is thus not assumed but measured, and the
measurement is a normal-form game extracted from a Markov game.

One further construction belongs here because it shows that the repeated setting
admits strategies with no single-agent analogue at all. In the iterated
prisoner's dilemma a memory-one strategy can enforce a linear relation between
its own long-run payoff and the opponent's, regardless of what the opponent
does, and in particular can enforce a fixed extortionate ratio of surpluses
\citep{press2012iterated}. The construction works by choosing the four
conditional cooperation probabilities so that a certain determinant vanishes,
which makes the stationary distribution of the induced Markov chain satisfy the
desired linear constraint identically. The reinforcement learning reading is
that an opponent which adapts to maximise its own return will, against such a
strategy, be steered to a point chosen by the extorter, so opponent shaping in
the sense of \cref{sec:stackelberg} is available even to a memoryless policy
with four parameters. This is the sharpest available counterexample to the idea
that a small policy class implies a weak agent, and it belongs in the same
discussion as the folk theorem because both are statements that memory, however
little, changes the strategic situation qualitatively.

\begin{table}[t]
\centering
\caption{Repeated games and their reinforcement learning reading. The policy
class column indicates what the agent's policy must condition on; the
equilibrium column indicates what is sustainable.}
\label{tab:repeated}
\small
\begin{tabular}{@{}L{3.2cm} L{3.6cm} L{3.5cm} L{3.1cm}@{}}
\toprule
\textsc{Monitoring} & \textsc{Induced model} & \textsc{Policy class needed} & \textsc{Sustainable outcomes} \\
\midrule
None (one shot) & Normal-form game; bandit if $\nag=1$ & Stateless & Stage Nash equilibria \\
Perfect, finite horizon & Stochastic game on the history tree & Finite memory & Stage Nash by backward induction if unique \\
Perfect, discounted infinite & Stochastic game on the history tree & Unbounded or recurrent memory & All of $F^\star$ (\cref{thm:folk}) \\
Imperfect public & POSG with public state & Memory over public signals & Perfect public equilibria; self-generating sets \citep{abreu1990toward} \\
Imperfect private & POSG with private histories & Belief over others' histories & Sequential equilibria; hard \\
Fixed finite-automaton opponents & MDP over product of automaton states & Markov in the product state & Best response, computable \\
Memory-one, zero determinant & Four-parameter stationary policy & One bit of memory & Enforced linear payoff relation \citep{press2012iterated} \\
\bottomrule
\end{tabular}
\end{table}

\section{Stochastic games: the central correspondence}
\label{sec:stochastic}

The stochastic game of \citet{shapley1953stochastic} is the object from which
almost everything else in this monograph is obtained by specialisation or by
relaxation. It is also the model for which the correspondence with
reinforcement learning is tightest, because the Markov decision process is
literally its one-agent instance.

The two readings coincide so nearly here that the only thing worth flagging is
the vocabulary. What a game theorist calls a stochastic game a reinforcement
learning researcher calls a Markov game, and the two names denote the same
tuple; the first emphasises that the transitions are random, the second that
they are memoryless, and both properties are in the definition. What a game
theorist calls a stationary strategy profile a reinforcement learning researcher
calls a joint policy, and what the first calls the value of the game at a state
the second calls the value function. The one genuine difference is the object
that the Bellman update applies at each state: a maximum in the single-agent
case, and a solution operator on a payoff tensor in the multi-agent case. Every
algorithmic question in this section is a question about that operator.

\begin{nomen}{Stochastic-game vocabulary read from both sides.}{tab:nom-sg}
Stochastic game, Markov game & Multi-agent Markov decision process &
  The same tuple under two names \\
State $s \in \St$ & State; observation when fully observed &
  What the transition kernel depends on \\
Stationary strategy $\pi_i(\cdot \given s)$ & Policy of agent $i$ &
  A row of probabilities per state \\
Value of the game at $s$ & $V_i^\pi(s)$ &
  Discounted return of agent $i$ from $s$ \\
Stage game $\game_s[V]$ & Joint-action critic cross-section at $s$ &
  Reward plus discounted continuation \\
Solution operator $\mathrm{sol}_i$ & What replaces $\max$ in the backup &
  $\max$, $\val$, or a Nash selection \\
Shapley operator $\BellOp_{\mathrm{mm}}$ & Minimax Bellman backup &
  Solve a matrix game at every state \\
Markov perfect equilibrium & Stationary joint policy, each one greedy &
  Nobody gains by changing its policy \\
Team Markov game, MMDP & Common-payoff fully observed MARL &
  One reward, many action components \\
Adversary choosing the kernel & Robust MDP uncertainty set &
  Nature as a second player \\
Interchangeability of equilibria & Safety of mixing trained opponents &
  Holds in zero-sum, fails otherwise \\
Turn-based game & Alternating-move environment &
  One agent acts per state \\
$\omega$-regular objective & Safety or liveness specification; shielding &
  A property of the whole trajectory \\
\end{nomen}

\begin{definition}[Stochastic game, Markov game]
\label{def:sg}
A stochastic game is a tuple
\begin{align*}
G \;=\; \big(\agents,\; \St,\; \{\Ac_i\}_{i \in \agents},\; \Tr,\;
          \{\Rew_i\}_{i \in \agents},\; \gamma,\; b_0\big),
\end{align*}
in which $\agents$, $\St$ and $b_0$ are as before, $\Ac_i$ is the
action set of agent $i$ with $\Ac = \prod_i \Ac_i$ the joint action set,
$\Tr : \St \times \Ac \to \Simplex{\St}$ is the transition kernel, and
$\Rew_i : \St \times \Ac \times \St \to \Real$ is the reward function of agent
$i$. Agents act simultaneously and all observe the state. A stationary policy
of agent $i$ is $\pi_i : \St \to \Simplex{\Ac_i}$ and the value of a profile is
$V_i^\pi(s) = \Exp[\sum_t \gamma^t \Rew_i(s_t,a_t,s_{t+1}) \given s_0 = s]$.
\end{definition}

\begin{rlreading}
Compare \cref{def:sg} with \cref{def:mdp} symbol by symbol. The state space, the
transition kernel, the discount and the initial distribution are unchanged. Two
things are replaced. The single action set $\Ac$ becomes a product
$\prod_i \Ac_i$, so the environment's step function now takes a tuple and the
policy is split across $\nag$ heads, each of which controls one coordinate and
sees none of the others at the moment of choosing. The single reward $\Rew$
becomes a vector $(\Rew_1, \dots, \Rew_\nag)$, so the environment returns one
number per head. That is the entire difference, and everything hard about
multi-agent reinforcement learning follows from those two substitutions. The
value $V_i^\pi$ is the ordinary discounted return of head $i$ under the joint
policy $\pi$, and the superscript matters: there is one value function per agent
and all of them depend on all of the policies, so no single scalar is being
optimised and the machinery of \cref{sec:learning-dynamics} replaces the
machinery of optimisation. A practitioner who writes a joint-action critic
$Q_i(s, a_1, \dots, a_\nag)$ has written down exactly this object.
\end{rlreading}

\begin{example}[Evaluating a fixed joint policy, and why it is not the value of
the game]
\label{ex:sg-evaluation}
Let $\St = \{0,1\}$, let each agent have two actions, let every joint action
move the state to the other one, and let $\gamma = 0.9$. The row agent's stage
payoff is matching pennies at state $0$ and
$\big(\begin{smallmatrix} 0 & 3 \\ 2 & -1 \end{smallmatrix}\big)$ at state $1$,
and the column agent's is the negative of it. Fix the stationary profile in
which the row agent plays its first action with probability $0.8$ at both states
and the column agent plays its first action with probability $0.25$ at both.

Policy evaluation is two steps. First collapse each stage game to a scalar by
averaging over the profile, which is \cref{ex:multilinearity} applied twice: at
state $0$ the expected stage payoff is
$0.8(0.25 - 0.75) + 0.2(-0.25 + 0.75) = -0.3$, and at state $1$ it is
$0.8(0.75 \cdot 3) + 0.2(0.25 \cdot 2 - 0.75) = 1.75$. Second solve the two
linear equations $V(0) = -0.3 + 0.9\,V(1)$ and $V(1) = 1.75 + 0.9\,V(0)$, which
give $V(0) = 6.711$ and $V(1) = 7.789$.

Now compare with \cref{ex:zero-sum-sg}, which computes the value of the same
game to be $4.737$ at state $0$. The fixed profile earns the row agent
$1.974$ more than the value of the game. There is no contradiction: the value is
what the row agent can guarantee against a best-responding opponent, and this
column agent is not best responding. The lesson is the one that
\cref{tab:diagnostics} turns into a measurement rule: $V^\pi$ reports how a pair
of policies did against each other and says nothing about either of them
separately, so a return curve from self-play is a property of the pair. The
quantity that is a property of the row agent alone is the value it obtains
against a best response, and computing it requires solving a Markov decision
process, not reading a log. \exend
\end{example}

\begin{proposition}[Degeneracies of the stochastic game]
\label{prop:sg-mdp}
Let $G$ be a stochastic game. Then the following hold. First, if $\nag = 1$ then
$G$ is a Markov decision process in the sense of \cref{def:mdp}, with the same
state space, action space, kernel, reward and discount. Second, if
$\abs{\St} = 1$ then $G$ is a repeated normal-form game in the sense of
\cref{sec:repeated}, restricted to stationary strategies. Third, if
$\Rew_i = \Rew_j$ for all $i,j$ then the centralised control problem over the
joint action space, namely the Markov decision process
$(\St, \Ac, \Tr, \Rew, \gamma, b_0)$, has optimal value equal to the maximal
value attainable by any joint policy of $G$, and a deterministic optimal
stationary joint policy exists.
\end{proposition}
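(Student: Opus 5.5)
The proof splits into the three claims of the statement, each verified separately; the first two are unpacking of definitions and the third carries the content.

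\emph{The case $\nag = 1$.} Here the product $\prod_i \Ac_i$ is the single set $\Ac_1$ and the reward profile is the single function $\Rew_1$. The tuple of \cref{def:sg} therefore becomes $(\St,\Ac_1,\Tr,\Rew_1,\gamma,b_0)$, which is literally a tuple of \cref{def:mdp}. The value $V_1^\pi$ of \cref{def:sg} is the same expectation as $V^\pi$ in \cref{sec:single-agent}, because the process generated by $\pi$ and $\Tr$ is identical. Nothing further is needed.

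\emph{The case $\abs{\St}=1$.} Write $s$ for the unique state. The kernel $\Tr(\cdot\given s,a)$ must be $\delta_s$, and $\Rew_i(s,a,s)$ defines a function $\util_i(a)$ on $\Ac$. Each round is therefore the normal-form game $(\agents,\{\Ac_i\},\{\util_i\})$ played again, which is the repeated game of \cref{sec:repeated}, and the discounted values agree up to the normalising factor $(1-\gamma)$. A stationary policy $\pi_i:\St\to\Simplex{\Ac_i}$ is a single mixed strategy that ignores history. Stationary profiles of $G$ are therefore exactly the history-independent strategy profiles of the repeated game. This is the restriction named in the statement, and it is consistent with \cref{prop:repeated-sg}.

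\emph{The common-payoff case.} Let $\Rew$ denote the common reward and $M=(\St,\Ac,\Tr,\Rew,\gamma,b_0)$ the centralised MDP on the joint action set. The argument runs in two inequalities.

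\textbf{Step 1: joint policies of $G$ are policies of $M$.} Any (stationary or history-dependent) joint policy of $G$, including product policies $\prod_i\pi_i$, induces a map from histories to $\Simplex{\Ac}$. It is therefore a policy of $M$ and generates the same trajectory distribution. Hence $V^\pi_i = V^\pi_M$ for every $i$, and so $\sup_\pi V^\pi_i \le V^\star_M$.

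\textbf{Step 2: an optimal policy of $M$ is a joint policy of $G$.} By the contraction argument for \cref{eq:bellman-opt}, $M$ has a deterministic stationary optimal policy $\mu^\star:\St\to\Ac$, obtained greedily from $V^\star_M$ since a linear function on $\Simplex{\Ac}$ is maximised at a vertex. Writing $\mu^\star(s)=(a_1(s),\dots,a_\nag(s))$, the components $\pi_i(s)=\delta_{a_i(s)}$ form a deterministic stationary product profile of $G$. Because a product of point masses is the point mass on the joint action, this profile coincides with $\mu^\star$. It therefore attains $V^\star_M$, giving equality and the existence claim.

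\emph{Expected obstacle.} The only delicate point is Step 2. In general a joint distribution in $\Simplex{\Ac}$ need not factor as a product of per-agent distributions, and this is exactly the NE/CE gap discussed after \cref{def:ce-cce}. The obstacle is removed by using determinism: point masses always factor. I would also note in the proof that the claim concerns only the optimal value, not decentralised equilibrium selection.
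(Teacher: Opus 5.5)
Your proof is correct and follows the argument the paper has in mind. The paper writes no proof: it treats the first two clauses as unpacking of \cref{def:sg}, and it rests the third on the reduction of \citet{boutilier1996planning} together with the single-agent facts of \cref{sec:single-agent}, namely the contraction of \cref{eq:bellman-opt} and the attainment of a linear maximum at a vertex, which is exactly your Step~2. Your observation that point masses always factor is what makes the deterministic optimal policy of $M$ a legitimate product profile of $G$, and your closing caveat is the content of \cref{rem:mmdp-gap}.
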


The third clause is the multi-agent Markov decision process of
\citet{boutilier1996planning}, and it is the statement that cooperative
multi-agent planning under full observability is a single-agent problem. It is
also the statement that is most frequently over-read. The reduction preserves
the optimal value and the set of optimal joint policies, and it says nothing
about whether $\nag$ independently acting agents will jointly realise one of
them. If the optimal joint action at a state is not unique, and each agent
independently selects a component of some optimal joint action, the resulting
joint action need not be optimal, and in a game such as the penalty game of
\citet{claus1998dynamics} it can be arbitrarily bad. We state this precisely
because the gap is the entire subject of equilibrium selection in cooperative
multi-agent reinforcement learning.

\begin{remark}[The reduction preserves value but not decentralisability]
\label{rem:mmdp-gap}
Let $G$ be a common-payoff stochastic game and $M$ the induced joint-action
Markov decision process. The map $\pi \mapsto \pi$ from joint policies of $G$ to
policies of $M$ is value-preserving and surjective onto the product policies of
$M$ only if the policy of $M$ factorises. An optimal policy of $M$ chosen
arbitrarily need not factorise as $\pi(a \given s) = \prod_i \pi_i(a_i \given s)$,
and even when an optimal factorised policy exists, there is in general no
decentralised rule by which each agent can select its component without a shared
tie-breaking convention or a shared random seed. The reduction is therefore
exact for the planning problem under centralised execution, and it is a
structural analogy only for decentralised execution.
\end{remark}

\begin{example}[A penalty game in which greedy learners select the worst equilibrium]
\label{ex:penalty}
Two agents share the payoff
\begin{align*}
\trimatrix{a_1 = 0}{a_1 = 1}{a_1 = 2}{a_2 = 0}{a_2 = 1}{a_2 = 2}%
{10 & -20 & 0}{-20 & 6 & 0}{0 & 0 & 4},
\end{align*}
a single number per cell because the game is common-payoff. The joint-action
Markov decision process of \cref{prop:sg-mdp} has optimal value $10$, attained
at $(0,0)$, and a centralised controller that issues joint actions reaches it
immediately. Independent learners do not. The game has three pure equilibria,
$(0,0)$, $(1,1)$ and $(2,2)$, worth $10$, $6$ and $4$. Consider an agent whose
partner is still exploring uniformly. Its expected payoffs are
$\tfrac13(10 - 20 + 0) = -3.33$ for action $0$,
$\tfrac13(-20 + 6 + 0) = -4.67$ for action $1$, and
$\tfrac13(0 + 0 + 4) = 1.33$ for action $2$. The greedy action under an
exploring partner is therefore the one that leads to the worst equilibrium, and
a pair of independent $Q$-learners that bootstrap from such estimates converges
to $(2,2)$ and collects $4$ instead of $10$, a shortfall of $6$ on a problem
whose centralised solution is trivial.

The mechanism is visible in the spreads. Action $0$ has outcomes ranging over
$30$ units and action $2$ over $4$, so averaging over partner uncertainty
penalises the action that attains the optimum far more than the cautious one.
This is relative overgeneralisation, and it is the reason the algorithmic
responses of \citet{lauer2000algorithm}, \citet{matignon2007hysteretic} and
\citet{panait2006lenience} all amount to replacing the average over partner
behaviour by something closer to a maximum: optimistic initialisation,
hysteretic learning rates that raise estimates faster than they lower them, and
leniency that discards a proportion of the worst observed outcomes. Each is an
equilibrium-selection device aimed at this computation, and each is justified by
it rather than by any convergence theorem. \exend
\end{example}

When $\nag = 2$ and $\Rew_2 = -\Rew_1$, the solution operator of
\cref{def:stage-game} is the minimax value of a matrix game, and everything
works.

\begin{theorem}[\citealp{shapley1953stochastic}]
\label{thm:shapley}
Let $G$ be a two-player zero-sum discounted stochastic game and define
$\BellOp_{\mathrm{mm}} : \Real^{\St} \to \Real^{\St}$ by
\begin{align}
(\BellOp_{\mathrm{mm}} V)(s) \;=\;
\max_{\pi_1 \in \Simplex{\Ac_1}} \; \min_{\pi_2 \in \Simplex{\Ac_2}}
\; \Exp_{a \sim \pi_1 \times \pi_2} \Exp_{s' \sim \Tr(\cdot\given s,a)}
\Big[ \Rew_1(s,a,s') + \gamma V(s') \Big].
\label{eq:minimax-bellman}
\end{align}
Then $\BellOp_{\mathrm{mm}}$ is a $\gamma$-contraction in the supremum norm, its
unique fixed point $V^\star$ is the value of $G$, both agents possess optimal
stationary mixed policies obtained by solving the matrix game in
\cref{eq:minimax-bellman} at $V^\star$, and value iteration
$V^{k+1} = \BellOp_{\mathrm{mm}} V^k$ converges geometrically.
\end{theorem}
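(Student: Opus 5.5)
The plan is to argue in four steps: well-posedness of the operator, contraction, identification of the fixed point as the value of $G$, and geometric convergence of value iteration.

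\textbf{Well-posedness.} At each state $s$ and for each $V$, the inner expression in \cref{eq:minimax-bellman} is the bilinear payoff of the finite matrix game $\game_s[V]$ of \cref{def:stage-game}, with entries $\util^{\,s}_1(a) = \Exp_{s'}[\Rew_1(s,a,s') + \gamma V(s')]$. By the minimax theorem of \citet{vonneumann1928theorie} the max-min equals the min-max, so $(\BellOp_{\mathrm{mm}}V)(s) = \val(\game_s[V])$ is well defined. Maximin strategies for both sides are attained because the simplices are compact and the payoff is continuous.

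\textbf{Contraction.} The key lemma is that $\val$ is non-expansive in the supremum norm over payoff matrices: $\abs{\val(A)-\val(B)} \le \max_a \abs{A(a)-B(a)}$. This holds because $A \le B + c$ entrywise gives $\val(A) \le \val(B)+c$, by monotonicity and translation equivariance of max-min. Applying the lemma with $A = \game_s[V]$ and $B = \game_s[W]$, the entrywise difference is $\gamma\,\Exp_{s'}[V(s')-W(s')]$, which is bounded by $\gamma\norm{V-W}_\infty$. Hence $\BellOp_{\mathrm{mm}}$ is a $\gamma$-contraction on the complete space $\Real^{\St}$. Banach's theorem then gives a unique fixed point $V^\star$ and $\norm{V^k - V^\star}_\infty \le \gamma^k \norm{V^0 - V^\star}_\infty$, which settles the geometric convergence of value iteration.

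\textbf{The fixed point is the value.} This is the step I expect to be the main obstacle, since it needs a reduction to single-agent theory rather than pure operator algebra. Let $\pi_1^\star(\cdot\given s)$ be a maximin strategy of $\game_s[V^\star]$ at each $s$. Fix $\pi_1^\star$; then by \cref{prop:fixed-others-mdp} agent $2$ faces an MDP with reward $-\Rew_1$. Every state-wise response of agent $2$ satisfies
\[
\Exp_{\pi_1^\star\times\pi_2}\big[\Rew_1 + \gamma V^\star(s')\big] \ge V^\star(s),
\]
so $V^\star$ is a sub-solution of the Bellman equation of agent $2$'s minimisation problem. Iterating this inequality along any policy of agent $2$ (monotonicity plus contraction of the policy-evaluation operator) yields $V_1^{\pi_1^\star,\pi_2}(s) \ge V^\star(s)$. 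This must hold even for history-dependent $\pi_2$; I would cover that case using the standard fact that a Markov decision process admits a stationary optimal policy. Symmetrically, the minimax strategy $\pi_2^\star$ gives $V_1^{\pi_1,\pi_2^\star} \le V^\star$ for all $\pi_1$. Together the two bounds show that $\max\min = \min\max = V^\star$ over all policies, that $V^\star$ is the value of $G$, and that $(\pi_1^\star,\pi_2^\star)$ are optimal stationary mixed policies obtained by solving the matrix games at $V^\star$.
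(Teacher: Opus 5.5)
Your proposal is correct and follows the paper's route. The paper's only stated argument is that $\val$ is non-expansive in the supremum norm, so the single-agent contraction lifts verbatim; that is exactly your key lemma, with Banach's theorem supplying uniqueness of the fixed point and the geometric rate of value iteration. Your third step fills in what the paper leaves to the citation, namely that $V^\star$ is the value: you use the induced Markov decision process of \cref{prop:fixed-others-mdp} and a sub- and super-solution monotonicity argument to show that the stationary maximin and minimax selections at $V^\star$ guarantee $V^\star$ against every opponent strategy, history-dependent ones included, and that argument is sound.
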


\begin{example}[Minimax value iteration on a two-state game, by hand]
\label{ex:zero-sum-sg}
Let $\St = \{0, 1\}$, let each agent have two actions, let every joint action
move the state to the other one, and let $\gamma = 0.9$. The stage payoff to the
row agent is matching pennies at state $0$ and
$\big(\begin{smallmatrix} 0 & 3 \\ 2 & -1 \end{smallmatrix}\big)$ at state $1$.
Because the transition does not depend on the action, \cref{eq:minimax-bellman}
separates into $V(s) = \val(A_s) + \gamma V(1-s)$, where $\val$ is the value of
the stage matrix. The matching-pennies matrix has value $0$, and the second
matrix has value $1$, obtained by equating $2 - 2p$ and $4p - 1$ at $p = 1/2$.
Substituting, $V(0) = 0 + 0.9 V(1)$ and $V(1) = 1 + 0.9 V(0)$, so
$V(0) = 0.9(1 + 0.9 V(0))$, giving $V(0) = 0.9/0.19 = 4.7368$ and
$V(1) = 5.2632$. Running $\BellOp_{\mathrm{mm}}$ numerically from $V = 0$
reproduces these to machine precision. The equilibrium policies are
$(\tfrac12,\tfrac12)$ for both agents at state $0$, and at state $1$ the row
agent plays $(\tfrac12,\tfrac12)$ while the column agent plays
$(\tfrac23,\tfrac13)$. Two things are worth noticing. First, the policies are
mixed at every state, so a tabular $Q$-learner with greedy action selection
cannot represent the solution, which is \cref{ex:mp-no-pure} lifted to the
sequential case. Second, the backup at each state is a linear program rather
than a maximum, and this is the only change to the single-agent algorithm; the
contraction, the convergence rate and the stopping criterion are unchanged. That
is the whole content of minimax-Q. \exend
\end{example}

The proof rests on the fact that the value functional of a matrix game is
non-expansive, $\abs{\val(A) - \val(B)} \le \norm{A - B}_\infty$, which lifts
the contraction property of the single-agent operator verbatim. The
reinforcement learning algorithm obtained by replacing the expectation in
\cref{eq:minimax-bellman} by a sample is minimax-Q
\citep{littman1994markov}; its convergence under the usual Robbins-Monro
conditions follows from the contraction property through the general
stochastic-approximation analysis of \citet{szepesvari1999unified}, and is not
established in the paper that introduces the algorithm. Comparison with
\cref{eq:robust-bellman} gives the following identity, which we state because it
is the clearest case in this work of two literatures writing down the same
operator.

\begin{proposition}[Robust control is a zero-sum stochastic game;
\citealp{iyengar2005robust,nilim2005robust,wiesemann2013robust}]
\label{prop:robust-zero-sum}
Let $M_{\mathrm{rob}}$ be a robust Markov decision process with $(s,a)$-rectangular
uncertainty set $\mathcal{P}$, each $\mathcal{P}(s,a)$ convex and compact. Then $M_{\mathrm{rob}}$ is the two-player zero-sum
stochastic game in which agent $1$ chooses $a \in \Ac$, agent $2$ observes $a$
and chooses $p \in \mathcal{P}(s,a)$, the next state is drawn from $p$, and
$\Rew_2 = -\Rew_1$; the operators \cref{eq:robust-bellman} and
\cref{eq:minimax-bellman} coincide and the optimal robust policy is the
equilibrium policy of agent $1$. If instead $\mathcal{P}$ is only
$s$-rectangular, so that the adversary must commit to a kernel before observing
$a$, the induced game is simultaneous-move, the value still exists by the
minimax theorem, and the optimal robust policy is in general stochastic
\citep{wiesemann2013robust}.
\end{proposition}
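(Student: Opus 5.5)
The plan is to build the game explicitly and then match the two Bellman operators state by state. For the $(s,a)$-rectangular case I will use a turn-based stochastic game. Its state space is $\St \cup (\St \times \Ac)$. At a state $s$, agent $1$ picks $a$ and the game moves deterministically to the intermediate state $(s,a)$, with no reward and no discount. At $(s,a)$, agent $2$ picks $p \in \mathcal{P}(s,a)$; the next state is $s' \sim p$, agent $1$ receives $\Rew(s,a,s')$, agent $2$ receives its negative, and the step is discounted by $\gamma$. Agent $2$'s action set is the compact convex set $\mathcal{P}(s,a)$ rather than a finite set. I will therefore either read \cref{thm:shapley} for compact action sets with continuous payoffs, or argue directly as below.

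The first step is to write the game's minimax backup at the two kinds of state. At an intermediate state, only agent $2$ moves, so the backup is
\[
W(s,a)=\min_{p \in \mathcal{P}(s,a)} \Exp_{s'\sim p}\big[\Rew(s,a,s')+\gamma V(s')\big].
\]
This minimum is attained because $p$ enters linearly and $\mathcal{P}(s,a)$ is compact. A mixed strategy of agent $2$ over $\mathcal{P}(s,a)$ adds nothing, since its average is again a point of $\mathcal{P}(s,a)$ by convexity and the payoff is linear in $p$. At $s$, only agent $1$ moves, so the backup is $V(s)=\max_a W(s,a)$, and mixing adds nothing by the vertex argument of \cref{sec:single-agent}.

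The second step is to compose the two backups. The result is exactly the robust operator \cref{eq:robust-bellman}, and it is also \cref{eq:minimax-bellman} specialised to a game in which the two agents move in turn. Since \cref{eq:robust-bellman} is a $\gamma$-contraction, the two fixed points coincide. A robust-optimal policy, that is, one greedy with respect to this fixed point, is then exactly agent $1$'s equilibrium stationary policy. For the converse, the robust value of a fixed policy $\pi$ is the value of the one-player minimisation MDP obtained by freezing $\pi$. Its optimal policies can be taken stationary and Markov, so worst-case kernels and stationary adversary strategies give the same number. I expect this identification to be the main point requiring care, but it is standard dynamic programming.

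For the $s$-rectangular case, I will instead let both agents move simultaneously at $s$. Agent $1$ chooses $\pi_1 \in \Simplex{\Ac}$, and agent $2$ chooses a kernel $(p_a)_a \in \mathcal{P}(s)$ without seeing $a$. The stage payoff $\sum_a \pi_1(a)\,\Exp_{p_a}[\Rew+\gamma V]$ is bilinear on the product of two compact convex sets. Sion's minimax theorem therefore gives the value and an exchange of max and min. Contraction carries over from the non-expansiveness of $\val$, as noted after \cref{thm:shapley}. To show that the robust policy is in general stochastic, I will cite \citet{wiesemann2013robust}, or exhibit a matching-pennies-like instance whose unique equilibrium is mixed, in the spirit of \cref{ex:mp-no-pure}. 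The main obstacle here is the exchange of max and min over infinite strategy sets, and Sion's theorem is exactly what handles it.
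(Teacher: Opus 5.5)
Your proposal is correct and follows essentially the paper's route. The paper gives no separate proof beyond matching the game's minimax backup to \cref{eq:robust-bellman}, citing the robust-MDP literature, and attributing the need for randomisation in the $s$-rectangular case to the matching-pennies phenomenon; you supply the same steps in more detail (attainment and barycentre arguments, contraction, identifying worst-case kernels with stationary adversaries by freezing $\pi$, and Sion's theorem).

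One simplification is available. Your intermediate states carry an undiscounted half-step, which leaves the single-$\gamma$ format of \cref{def:sg} and makes the two operators agree only after composing two backups. Instead, let agent $2$'s action at $s$ be a selection $(p_a)_{a \in \Ac} \in \prod_{a} \mathcal{P}(s,a)$. Rectangularity then splits the inner minimum as $\sum_a \pi_1(a) \min_{p \in \mathcal{P}(s,a)} \Exp_p[\Rew + \gamma V]$, the outer maximum is attained at a vertex, and \cref{eq:minimax-bellman} equals \cref{eq:robust-bellman} literally on $\St$. Your $s$-rectangular game is then the same construction with the product replaced by $\mathcal{P}(s)$.
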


The second clause has a concrete consequence: robust reinforcement learning
under $s$-rectangular uncertainty requires randomised policies while the
$(s,a)$-rectangular case does not, and the reason is the non-existence of pure
equilibria in matching pennies, imported into control theory. Adversarial reinforcement learning methods that train a
protagonist against a learned disturbance agent
\citep{pinto2017robust,morimoto2005robust} are instantiating this game, and
whether their fixed point is meaningful depends on whether the adversary's
action set corresponds to a rectangular uncertainty set.

\begin{example}[Who moves first in the uncertainty set]
\label{ex:rectangularity}
Suppose a state offers two actions and the adversary may set the probability of
a good outcome within $[0.6, 0.9]$ for action $a$ and within $[0.7, 0.7]$ for
action $b$, the good outcome being worth $10$. Under $(s,a)$-rectangularity the
adversary sees the action and picks $0.6$ for $a$, so the protagonist compares
$6.0$ with $7.0$ and plays $b$ deterministically; this is
\cref{ex:robust-vs-nominal} again, read as a game. Under $s$-rectangularity the
adversary must commit to one kernel before seeing the action, and the
protagonist may randomise to make the commitment less effective. The modelling
question that decides which applies is whether the disturbance can react to the
control, which in a learned adversarial setup is decided by whether the
adversary network receives the protagonist's action as an input. That single
wiring choice determines whether the trained protagonist should be expected to
be deterministic or stochastic, and it is the kind of design detail that the
game-theoretic reading makes visible and that an unstructured reading does
not. \exend
\end{example}

For general-sum discounted stochastic games, \citet{fink1964equilibrium} and
\citet{takahashi1964equilibrium} establish that a stationary Markov perfect
equilibrium exists, by a fixed-point argument on the product of stationary
policy spaces. Existence is all that transfers. The natural algorithmic analogue
of \cref{thm:shapley} is to define an operator that replaces $\val$ by a Nash
value of the stage game $\game_s[V]$, which is the Nash-Q algorithm of
\citet{hu2003nash}, and it fails for three distinct reasons that are worth
separating. First, the stage game may have several equilibria with different
payoffs, so the operator is set-valued and requires an exogenous selection rule
on which all agents must agree. Second, even with a selection rule the resulting
operator is not non-expansive, so the contraction argument is unavailable; the
convergence theorem for Nash-Q assumes that either every stage game encountered
during learning has a global optimum, or every stage game encountered has a
saddle point, with all agents consistently selecting equilibria of that one
type; the disjunction is over the whole run, it cannot be verified in advance,
and it effectively restricts the result to the cooperative and zero-sum cases. Third, and most fundamentally,
\citet{zinkevich2005cyclic} prove that no algorithm which extracts a stationary
policy from $Q$-values alone can compute stationary Nash equilibria of
general-sum stochastic games, by exhibiting pairs of games with identical
$Q$-functions but different unique stationary equilibria, so that no
policy-extraction rule can be correct in both. The friend-or-foe algorithm of
\citet{littman2001friend} responds by requiring the designer to declare each
opponent cooperative or adversarial, thereby forcing the stage game into one of
the two solvable classes, and the correlated-Q family of
\citet{greenwald2003correlated} responds by replacing the Nash operator with a
correlated-equilibrium operator, which restores computability of the stage
solution by linear programming while leaving the selection and non-expansiveness
problems in place.

\begin{example}[The stage operator is the only thing that changes]
\label{ex:stage-operator}
Take the stage game of \cref{ex:stage-game}, whose payoff tensor at the state in
question is $\big(\begin{smallmatrix} (9,0) & (0,9) \\ (0,9) & (0,9)
\end{smallmatrix}\big)$. A single-agent backup would report $\max = 9$ for agent
$1$. A minimax backup is undefined because the game is not zero-sum. A Nash
backup must choose among the equilibria, and enumeration gives exactly two,
$(0,1)$ and $(1,1)$, each worth $0$ to agent $1$; the profile $(0,0)$, which
gives agent $1$ its maximum of $9$, is not an equilibrium because agent $2$
gains by deviating, and $(1,0)$ is not one because agent $1$ does. A
correlated-equilibrium backup is unconstrained by welfare, since every cell of
this tensor has welfare $9$, so the selection must come from somewhere else. So
three different backups return different numbers for the same tensor, and the
choice is not determined by the data. This is the content of the sentence that there is no
canonical multi-agent Bellman operator, and it is why \cref{tab:sg-structures}
has a column for the stage operator rather than taking it for granted. \exend
\end{example}

The most widely used reduction in practice is to ignore the game structure and
run a single-agent algorithm per agent. The following proposition states exactly
when that is sound.

\begin{proposition}[Fixed others induce a Markov decision process]
\label{prop:fixed-others-mdp}
Let $G$ be a stochastic game and fix stationary policies $\pi_j$ for all
$j \ne i$. Then agent $i$ faces the Markov decision process
$M_i[\pi_{-i}] = (\St, \Ac_i, \Tr_i, \Rew_i^{\,\pi_{-i}}, \gamma, b_0)$ with
\begin{align}
\Tr_i(s' \given s, a_i) = \sum_{a_{-i}} \textstyle\prod_{j \ne i}
\pi_j(a_j \given s)\, \Tr(s' \given s, (a_i,a_{-i})),
\quad
\Rew_i^{\,\pi_{-i}}(s,a_i) = \Exp\big[\Rew_i(s,(a_i,a_{-i}),s')\big],
\label{eq:induced-mdp}
\end{align}
and the optimal policy of $M_i[\pi_{-i}]$ is exactly a best response
$\pi_i \in \brc_i(\pi_{-i})$. If the $\pi_j$ are history-dependent with finite
internal memory states $m_j$, the same holds with state space
$\St \times \prod_{j\ne i} \mathcal{M}_j$, and if the $m_j$ are not observed by
agent $i$ the induced model is a partially observable Markov decision process
with hidden component $(m_j)_{j \ne i}$.
\end{proposition}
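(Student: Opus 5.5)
The plan is to verify the three claims in turn by direct marginalisation. The only real work lies in making precise what ``best response'' means in the sequential setting.

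First, fix stationary $\pi_{-i}$. I will show that the law of the process $(s_t, a_{i,t}, s_{t+1})$ under any policy $\pi_i$ of agent $i$, which may be history-dependent, coincides with the law generated by $\pi_i$ in $M_i[\pi_{-i}]$. I will argue this by induction on $t$. Conditional on the full history up to $s_t$ and on $a_{i,t}$, the others draw $a_{-i,t}$ from $\prod_{j\ne i}\pi_j(\cdot\given s_t)$. This draw is independent of the history given $s_t$, because the $\pi_j$ are stationary and Markov, and it is independent of $a_{i,t}$ because play is simultaneous. Summing the joint kernel over $a_{-i}$ therefore gives exactly $\Tr_i(s'\given s_t,a_{i,t})$ in \cref{eq:induced-mdp}, which depends on the past only through $(s_t,a_{i,t})$, and this is the Markov property. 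For the rewards, I will read the expectation in $\Rew_i^{\,\pi_{-i}}(s,a_i)$ as taken over $a_{-i}\sim\prod_{j\ne i}\pi_j(\cdot\given s)$ and $s'\sim\Tr(\cdot\given s,(a_i,a_{-i}))$. By the tower property, each term $\gamma^t\Rew_i(s_t,a_t,s_{t+1})$ then has the same expectation as $\gamma^t\Rew_i^{\,\pi_{-i}}(s_t,a_{i,t})$. Boundedness and $\gamma<1$ justify exchanging the sum and the expectation, so $V_i^{(\pi_i,\pi_{-i})}=V^{\pi_i}_{M_i[\pi_{-i}]}$ for every $\pi_i$.

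The best-response identification then follows. Since the values coincide policy by policy, the policies maximising agent $i$'s value against $\pi_{-i}$ are exactly the optimal policies of $M_i[\pi_{-i}]$. Here I read $\brc_i$ as lifted from \cref{eq:best-response} to the sequential setting, by maximising $V_i$ from every state, or from $b_0$ if that is the intended criterion. Optimality in a finite discounted MDP is attained by stationary policies, by the contraction argument after \cref{eq:bellman-opt}. So restricting agent $i$ to stationary policies loses nothing, and the best-response set is nonempty and characterised by the greedy condition with respect to $V^\star_{M_i}$. This step is the main obstacle. It is not mathematically hard, but the statement leaves ``best response'' undefined outside normal-form games, so I will state explicitly that optimality is pointwise in $s$. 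Under that reading the MDP optimal set and the best-response set coincide exactly. Under the $b_0$-only reading they agree only on states reachable under $b_0$, and I will mention this as a remark.

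For the finite-memory extension, I will model each $\pi_j$ as a stochastic automaton. Its action distribution is $\pi_j(a_j\given s,m_j)$, and its memory update $m_j'\sim\mu_j(\cdot\given m_j,s,a_j,s')$ is allowed to depend on observables, since perfect monitoring is assumed in \cref{def:sg}. I will then repeat the marginalisation on the augmented state $(s,(m_j)_{j\ne i})$. The induced kernel multiplies $\Tr$ by the action and memory-update probabilities and sums over $a_{-i}$. Because the memory updates depend only on current quantities, the augmented process is Markov, so the same induction applies. If agent $i$ does not observe $(m_j)$, I will take $s$ together with the realised actions and next state as the observation, which makes this a POMDP in the sense of \cref{def:pomdp}. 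Its hidden component is $(m_j)_{j\ne i}$, and its observation kernel is deterministic on the visible coordinates. Best responses there are optimal history-dependent policies, and \cref{thm:belief-mdp} supplies their structure.
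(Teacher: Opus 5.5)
Your argument is correct. The paper gives no proof of this proposition: it treats \cref{eq:induced-mdp} as the evident result of averaging $\pi_{-i}$ out of $\Tr$ and $\Rew_i$ and moves straight on, so your induction supplies the verification it omits. What your version adds is a definite sequential meaning for $\brc_i$, which \cref{eq:best-response} defines only for normal-form games. It also isolates the identity $\Prb(s_{t+1}\given \text{past},a_{i,t})=\Tr_i(s_{t+1}\given s_t,a_{i,t})$, which is what rules out profitable history-dependent deviations.

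One misreading needs fixing. \Cref{def:sg} does not assume perfect monitoring; it says only that all agents observe the state. Under that literal reading your proof goes through as written. Agent $i$'s histories are then exactly those of $M_i[\pi_{-i}]$, and the POMDP observation is $s'$ alone, with $\Om(o\given(s',m'),a_i)=\Ind[o=s']$.

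If you do add monitoring, as the tit-for-tat computation of \cref{ex:automaton-mdp} implicitly does, three details change.
\begin{itemize}
\item The memory update must be allowed to read the full joint action, $\mu_j(\cdot\given m_j,s,a,s')$. In a one-state game your $\mu_j(\cdot\given m_j,s,a_j,s')$ cannot record $a_i$, which is exactly what tit-for-tat stores. The augmented kernel remains Markov in $((s,m),a_i)$.
\item A policy of agent $i$ that reads past $a_{-i}$ is not a policy of $M_i[\pi_{-i}]$, so ``values coincide policy by policy'' needs one more line. Replace such a policy by $\tilde\pi_i(\cdot\given h_t)=\Prb(a_{i,t}=\cdot\given h_t)$, with $h_t$ the state-and-own-action history. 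By your induction this generates the same law of $(s_t,a_{i,t})_{t\ge0}$, and hence the same value.
\item The realised $a_{-i,t}$ is not a function of the arrival state and $a_{i,t}$, so it does not fit the form $\Om(o\given s',a)$ of \cref{def:pomdp}. To match that form you must carry the last $a_{-i}$ in the hidden state, which leaves $(m_j)_{j\ne i}$ as the only unobserved component.
\end{itemize}
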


\Cref{prop:fixed-others-mdp} is exact, and its hypothesis is violated in every
simultaneous learning run. When the $\pi_j$ change between episodes, the kernel
and reward of \cref{eq:induced-mdp} are time-varying, agent $i$'s problem is a
non-stationary Markov decision process, the Bellman operator it is implicitly
iterating has a moving fixed point, and the convergence guarantee of Q-learning
does not apply. This is the non-stationarity problem, and it is a failure of the modelling
assumption rather than a difficulty of implementation. Three
families of responses exist and are catalogued in \cref{sec:training}:
centralise the critic so that it conditions on the joint action and is therefore
stationary with respect to a fixed joint policy, which is the
centralised-training-decentralised-execution idea; model the others explicitly
and treat their parameters as part of the state, which converts the problem into
the partially observable one of \cref{sec:bayesian}; or abandon the fixed-point
target and use a dynamics whose guarantee is a regret bound rather than a
convergence statement, which is the no-regret route to coarse correlated
equilibria.

A final specialisation deserves mention because it connects this material to
formal methods and to safe reinforcement learning. In a turn-based stochastic
game only one agent acts at each state, which removes the need for mixing and
makes backward induction applicable; the objective may then be an $\omega$-regular
property of the trajectory, such as always avoiding a set of states or visiting
a set infinitely often, rather than a discounted sum. Games of this form are the
standard model of reactive synthesis, and the winning strategy of the
protagonist in a safety game is exactly the shield that filters a learned
policy's actions in safe reinforcement learning \citep{alshiekh2018safe}. The
correspondence is that the safety specification is the adversary's payoff, the
environment is the adversary, and the shield is the protagonist's winning
region computed by backward induction on the game graph. The complexity of these
problems is collected with the others in \cref{sec:complexity}, where the
turn-based and the simultaneous-move cases are seen to separate sharply.

\begin{table}[t]
\centering
\caption{Stochastic games by reward structure: the sound single-agent
reduction, the solution operator entering the Bellman update, and whether that
operator is a contraction. The last column names the canonical tabular
algorithm.}
\label{tab:sg-structures}
\small
\begin{tabular}{@{}L{2.5cm} L{3.1cm} L{2.6cm} C{2.1cm} L{2.6cm}@{}}
\toprule
\textsc{Structure} & \textsc{Sound reduction} & \textsc{Stage operator} & \textsc{Contraction} & \textsc{Algorithm} \\
\midrule
$\nag = 1$ & Identity; it is an MDP & $\max$ & Yes & Q-learning \\
Common payoff, centralised & Joint-action MDP & $\max$ over $\Ac$ & Yes & Joint-action Q-learning \\
Common payoff, decentralised & None exact; see \cref{rem:mmdp-gap} & $\max$ plus tie-breaking & Given a selection rule & Distributed Q, FMQ \\
Two-player zero-sum & Minimax-Bellman operator; a robust MDP when the adversary is rectangular (\cref{prop:robust-zero-sum}) & $\val$ (minimax) & Yes & Minimax-Q \\
Two-player constant-sum & Affine reparametrisation of zero-sum & $\val$ & Yes & Minimax-Q \\
General sum & None; existence only & Nash selection & No & Nash-Q, CE-Q \\
General sum, potential & Potential MDP; see \cref{sec:potential} & $\max$ of $\pot$ & Yes & Independent PG \\
Fixed opponents & Induced MDP (\cref{prop:fixed-others-mdp}) & $\max$ & Yes & Any single-agent method \\
Turn-based, $\omega$-regular & Game graph; backward induction & Attractor computation & Exact, finite & Synthesis; shielding \\
\bottomrule
\end{tabular}
\end{table}

\section{Partial observability and the lattice of decentralised models}
\label{sec:partial-obs}

Removing the assumption that every agent observes the state generates the
richest and the most terminologically confused region of the map. The confusion
is avoidable, because the models differ along exactly two axes, the reward
structure and the observability structure, and \cref{fig:lattice} is the
resulting two-dimensional grid. We give the definitions along that grid and
then state the inclusions as a proposition.

The two readings part company here more sharply than anywhere else, and the
split is terminological rather than mathematical. The planning literature, which
grew out of operations research, names its models by what the agents cannot see
and by the fact that they share a reward, producing the decentralised partially
observable Markov decision process and its relatives; its question is to compute
an optimal joint policy, and its solution concept is optimality. The game theory
literature names the same tuple by the fact that the agents may want different
things, producing the partially observable stochastic game; its question is to
characterise equilibria, and its solution concept is equilibrium. A
common-payoff partially observable stochastic game and a decentralised partially
observable Markov decision process are the same tuple. The two literatures ask
different questions about it, and \cref{prop:opt-implies-ne} relates the
answers: every optimum is an equilibrium and most equilibria are not optima.

\begin{nomen}{Decentralised-model vocabulary read from both sides.}{tab:nom-partialobs}
Partially observable stochastic game & General-sum partially observable MARL &
  Many agents, private observations, own rewards \\
Common-payoff POSG & Dec-POMDP &
  The same tuple; one shared reward \\
Jointly observable & Dec-MDP &
  Together the agents see the state \\
Private information $p^i_t$ & Agent $i$'s action-observation history &
  What only agent $i$ knows \\
Common information $c_t$ & What is public; broadcast channel contents &
  What everybody knows everybody knows \\
Interactive belief hierarchy & Nested opponent models (I-POMDP) &
  Beliefs about beliefs about beliefs \\
Optimal joint policy & Team-optimal decentralised controller &
  The planner's answer \\
Nash equilibrium of the POSG & Self-play fixed point &
  The learner's answer \\
Equilibrium selection & Convention, tie-breaking, shared seed &
  Which of several good answers \\
Risk dominance & Outcome favoured under partner uncertainty &
  Safe when the partner may err \\
Payoff dominance & Pareto-best equilibrium &
  Best if coordination succeeds \\
Decentralised execution & No communication at deployment &
  Each policy sees only its own input \\
\end{nomen}

\begin{definition}[Partially observable stochastic game]
\label{def:posg}
A partially observable stochastic game is a tuple
$G = (\agents, \St, \{\Ac_i\}, \{\Ob_i\}, \Tr, \Om, \{\Rew_i\}, \gamma, b_0)$ in
which $(\agents, \St, \{\Ac_i\}, \Tr, \{\Rew_i\}, \gamma, b_0)$ is a stochastic
game, $\Ob_i$ is the observation set of agent $i$ with
$\Ob = \prod_i \Ob_i$, and $\Om : \St \times \Ac \to \Simplex{\Ob}$ is a joint
observation kernel, with $\Om(o \given s', a)$ the probability that the agents
jointly receive $o = (o_1,\dots,o_\nag)$ after joint action $a$ leads to state
$s'$. Agent $i$ observes only $o_i$ and its own actions, so its policy is a map
$\pi_i : \Hi_i \to \Simplex{\Ac_i}$ on its own action-observation histories
$h_i^t = (o_i^0, a_i^0, \dots, a_i^{t-1}, o_i^t)$.
\end{definition}

\begin{rlreading}
Take \cref{def:sg} and give each head its own observation stream. The kernel
$\Om$ is the environment's observation function, $o_i$ is the tensor that
arrives at head $i$, and $h_i^t$ is whatever that head has accumulated, which in
an implementation is a recurrent hidden state or a stacked context window. The
policy signature changes from $\pi_i : \St \to \Simplex{\Ac_i}$ to
$\pi_i : \Hi_i \to \Simplex{\Ac_i}$, and that single change is the difference
between a feedforward and a recurrent architecture. One consequence is worth
stating before any theorem. In \cref{def:pomdp} a single agent could replace its
history by a belief over the state, because its own history carried all the
information in the system about that state. Here it cannot, because what agent
$i$ needs in order to predict the future includes what agent $j$ will do, which
depends on $h_j^t$, which agent $i$ has never seen. The hidden variable is no
longer the world state; it is the world state together with every other agent's
history, and \cref{prop:posg-lattice} records what that costs.
\end{rlreading}

The joint observation kernel is deliberately not assumed to factorise, because
correlated observation noise is common and because the non-factorised form is
what makes the model closed under the operations we perform on it.

\begin{definition}[Decentralised POMDP and decentralised MDP]
\label{def:decpomdp}
A decentralised partially observable Markov decision process is a partially
observable stochastic game with $\Rew_i = \Rew$ for all $i$. It is a
decentralised Markov decision process when it is in addition jointly fully
observable, meaning that there exists a map $\varphi : \Ob \to \St$ with
$\Prb(s_{t} = \varphi(o^t)) = 1$ for every reachable $(s_t, o^t)$, so that the
joint observation determines the state even though no individual observation
need do so.
\end{definition}

\begin{rlreading}
Setting $\Rew_i = \Rew$ is one line of the environment: return the same scalar
to every head. That line changes the question being asked, not the tuple. With
distinct rewards there is no best joint policy, only equilibria; with one shared
reward there is a best joint policy, and the whole apparatus of optimality
returns. Joint full observability is a second, independent line: it holds when
concatenating everyone's observation would reconstruct the state, which is the
situation of a sensor network whose nodes together see everything and
individually see a slice. The two conditions are what separate the four
cooperative models of \cref{fig:lattice}, and a practitioner can read off which
one applies by answering two yes-or-no questions about the simulator.
\Cref{ex:posg-tuple} answers them for a concrete problem.
\end{rlreading}

\begin{example}[One problem, written out as the tuple]
\label{ex:posg-tuple}
Instantiate \cref{def:posg} on the two-rescuer problem that
\cref{ex:dec-rescue} will then solve. The agent set is $\agents = \{1,2\}$. The
state set is $\St = \{\mathrm{left}, \mathrm{right}\}$, recording which door
hides the victim, with $b_0$ uniform. Each agent's action set is
$\Ac_i = \{\mathrm{enter\ left}, \mathrm{enter\ right}\}$ and each observation
set is $\Ob_i = \{\mathrm{signal\ left}, \mathrm{signal\ right}\}$. The joint
observation kernel factorises here, $\Om(o \given s) = \prod_i \Om_i(o_i \given
s)$ with each signal correct with probability $0.8$, so it is an array of
$2 \times 4 = 8$ numbers. The rewards are equal, $\Rew_1 = \Rew_2$, with value
$10$ when both agents enter the right door, $-10$ when both enter the wrong one
and a penalty when they split.

Now answer the two questions. Are the rewards equal? Yes, so by
\cref{def:decpomdp} this is a decentralised partially observable Markov decision
process and the optimal joint policy is a well-posed object. Does the joint
observation determine the state? No: both signals are wrong simultaneously with
probability $0.2 \times 0.2 = 0.04$, and after a pair of left signals the
posterior on the left door is $0.8^2/(0.8^2 + 0.2^2) = 0.941$, not $1$. So it is
not a decentralised Markov decision process, and the model sits in the
Dec-POMDP row of \cref{tab:posg-lattice} rather than the Dec-MDP row. Raise the
sensor accuracy to $1$ and the second answer flips, the posterior becomes
degenerate, and the problem drops a complexity row. That is the entire content
of the distinction, and it is decided by one number in the observation
kernel. \exend
\end{example}

\begin{proposition}[Inclusions and degeneracies]
\label{prop:posg-lattice}
The following hold and are exhibited in \cref{fig:lattice}. A partially
observable stochastic game with $\nag = 1$ is a partially observable Markov
decision process. A decentralised partially observable Markov decision process
with $\nag = 1$ is a partially observable Markov decision process. A partially
observable stochastic game in which $\Ob_i = \St$ and
$\Om(o \given s',a) = \Ind[o_i = s' \ \forall i]$ is a stochastic game. A
decentralised Markov decision process with $\nag = 1$ is a Markov decision
process. The class inclusions
$\mathrm{MDP} \subset \mathrm{MMDP} \subset \mathrm{SG} \subset
\mathrm{POSG}$ and
$\mathrm{MDP} \subset \mathrm{Dec\text{-}MDP} \subset \mathrm{Dec\text{-}POMDP}
\subset \mathrm{POSG}$ hold, and $\mathrm{Dec\text{-}POMDP}$ and
$\mathrm{SG}$ are incomparable, their intersection being $\mathrm{MMDP}$.
\end{proposition}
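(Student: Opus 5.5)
The plan is to verify each clause by exhibiting an explicit embedding of tuples, in the style of \cref{prop:sg-mdp}, treating each model class as the set of tuples satisfying its definition up to the obvious identifications. The degeneracies come first because they are definitional. Setting $\nag = 1$ in \cref{def:posg} leaves $(\St, \Ac_1, \Ob_1, \Tr, \Om, \Rew_1, \gamma, b_0)$ with $\Om : \St \times \Ac_1 \to \Simplex{\Ob_1}$, which is literally the tuple of \cref{def:pomdp}, and the policy class $\pi_1 : \Hi_1 \to \Simplex{\Ac_1}$ is the general history-dependent policy class there. The Dec-POMDP case follows because a Dec-POMDP is a POSG. For the stochastic-game clause, I would take $\Ob_i = \St$ with the degenerate kernel $\Om(o \given s',a) = \Ind[o_i = s' \ \forall i]$. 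Each agent then observes the current state exactly, so $h_i^t$ contains $s_t$. Under the Markov property, a best response to stationary state-based policies of the others may be taken state-based (\cref{prop:fixed-others-mdp}), so the solution sets agree with those of \cref{def:sg}. A Dec-MDP with $\nag = 1$ has $\varphi : \Ob_1 \to \St$ determining the state. Composing the POMDP with $\varphi$ gives an agent that observes $s_t$, which is an MDP.

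For the chains, each inclusion needs a map. MDP $\subset$ MMDP is the one-agent instance of a common-payoff stochastic game. MMDP $\subset$ SG holds by forgetting that the rewards are equal. SG $\subset$ POSG uses the identity-observation embedding just described. On the other side, MDP $\subset$ Dec-MDP uses $\nag = 1$ with $\Ob_1 = \St$ and $\varphi = \mathrm{id}$. Dec-MDP $\subset$ Dec-POMDP holds by forgetting joint observability, and Dec-POMDP $\subset$ POSG holds by \cref{def:decpomdp}. Strictness of each inclusion is shown by a witness, using the four games of \cref{ex:four-games} and \cref{ex:posg-tuple}:
\begin{itemize}
\item two agents for MDP $\subsetneq$ MMDP and MDP $\subsetneq$ Dec-MDP;
\item the prisoner's dilemma as a one-state game for MMDP $\subsetneq$ SG;
\item the noisy rescuers of \cref{ex:posg-tuple} for SG $\subsetneq$ POSG and for Dec-MDP $\subsetneq$ Dec-POMDP;
\item a common-payoff game made general-sum for Dec-POMDP $\subsetneq$ POSG.
\end{itemize}

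For the incomparability, the prisoner's dilemma lies in SG but not in Dec-POMDP because its rewards differ. The rescuer problem lies in Dec-POMDP but not in SG because its agents do not observe the state. For the intersection, a POSG in both classes has common reward and fully observed state, which is a common-payoff stochastic game, that is, an MMDP. Conversely, an MMDP embedded via identity observations is a Dec-POMDP.

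I expect the main obstacle to be making ``is'' and ``$\subset$'' precise enough that the intersection claim is a theorem rather than a convention. A POSG whose observation is merely \emph{equivalent} to the state, for example a relabelling or an observation from which $s'$ can be recovered, should count as a stochastic game. Membership in SG therefore has to mean \emph{individual} full observability: for each $i$ there is a map $\varphi_i : \Ob_i \to \St$ with $\varphi_i(o_i) = s_t$ almost surely. I would first try to fix this definition and check that it is invariant under relabelling of observations and preserves policy classes up to composition with $\varphi_i$. I would then argue the intersection with it. I would also note the timing subtlety that $\Om$ is conditioned on $s'$, so the initial observation must be handled via $b_0$, possibly by adding an initial observation draw.
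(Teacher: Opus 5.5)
Your tuple embeddings and strictness witnesses are correct. They are also all the paper has in mind, since it offers nothing beyond the definitions and \cref{fig:lattice}. Two of the steps you lean on, however, fail as stated.

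The first is the claim, in the stochastic-game clause, that \cref{prop:fixed-others-mdp} makes the solution sets agree with those of \cref{def:sg}. That proposition shows only that a \emph{stationary} profile is an equilibrium against history-dependent deviations exactly when it is one against stationary deviations. It therefore identifies the stationary equilibria of the two readings and nothing more. The POSG policy class is strictly larger, because $h_i^t = (s_0, a_i^0, \dots, a_i^{t-1}, s_t)$ reveals $t$. In the one-state battle of the sexes of \cref{ex:bos}, alternating between the two pure equilibria by the parity of $t$ is an equilibrium of the POSG. For every $\gamma \in (0,1)$ its normalised payoff lies in the open segment between $(2,1)$ and $(1,2)$, and no stationary profile attains such a point. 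The failure also runs the other way. Suppose the stochastic game is read with past joint actions observed, which is the reading under which the grim-trigger equilibrium of \cref{ex:grim} exists. Then the kernel $\Ind[o_i = s' \ \forall i]$ hides the other agent's actions. In the one-state prisoner's dilemma the opponent can no longer respond to a deviation, $D$ is strictly better in every period, and that equilibrium disappears; a solution-preserving embedding would need $o_i = (s', a)$. The clause itself is a tuple identity and needs none of this, so drop the sentence or restrict it to stationary equilibria.

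The second is the membership criterion you propose for the intersection. A map $\varphi_i : \Ob_i \to \St$ with $\varphi_i(o_i) = s_t$ almost surely preserves policy classes up to composition only when it is injective. A richer observation can carry extra content, and because $\Om$ is a joint kernel that need not factorise, that content can be correlated across agents. In one-state chicken, let $o_i = (s', z_i)$ with $(z_1, z_2)$ drawn afresh each period from the distribution $\sigma$ of \cref{ex:chicken-ce}. Obeying $z_i$ is then an equilibrium with welfare $10.5$ per period. Under the identity kernel, by contrast, the agents' period-$t$ actions are independent, their marginals must form a stage equilibrium, and welfare is capped at $28/3$. Your criterion nonetheless files both games under the same stochastic game. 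The clean repair is to define the SG and MMDP classes inside POSG as images of the identity-kernel embedding, up to \emph{bijective} relabelling of each $\Ob_i$. The intersection with $\{\Rew_i = \Rew\}$ is then literally the embedded MMDPs. If you keep the permissive criterion, the intersection claim holds only at the level of the team optimum. There the extra content is payoff-irrelevant given the state, so the centralised value of \cref{prop:sg-mdp} bounds every joint policy. A deterministic state-feedback optimal joint policy, which each agent executes from $\varphi_i(o_i)$, attains that value.
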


Two consequences of \cref{def:posg} govern everything that follows. The first is
that there is no belief state. In a single-agent partially observable Markov
decision process the agent's belief $\Prb(s_t \given h_t)$ is a sufficient
statistic because the agent's own history is all the information in the system
that bears on the hidden state. In a partially observable stochastic game, what
agent $i$ needs in order to predict the future is a belief over the state and
over the other agents' histories jointly, because the others' future actions
depend on their histories; and the others' beliefs about agent $i$'s history
matter in turn, generating the infinite hierarchy of interactive beliefs
formalised by \citet{gmytrasiewicz2005framework} in the interactive partially
observable Markov decision process. There is no finite sufficient statistic in
general, and this is the structural reason why decentralised models are harder
than their centralised counterparts by an entire complexity class rather than by
a constant factor. The lower bounds in the last column of
\cref{tab:posg-lattice} for the two general-sum rows are inherited from the
common-payoff restriction and therefore concern the team-optimal joint policy;
the equilibrium problems of a partially observable stochastic game are not
optimisation problems and are classified separately in \cref{sec:complexity}.
The finitely nested interactive partially observable Markov decision process is
\PSPACE-hard at a finite horizon and undecidable at an infinite one, exactly as
in the single-agent case \citep{gmytrasiewicz2005framework}.

The smallest problem on which all of this is visible has one time step, two
agents and two states, and we use it repeatedly.

\begin{example}[The two-rescuer problem]
\label{ex:dec-rescue}
A victim is behind the left or the right door with equal probability. Two
rescuers each receive a private signal that is correct with probability $0.8$.
Each must enter a door. If both enter the door with the victim they score $10$,
if both enter the empty door they score $-10$, and if they split up they score a
penalty $p$. The reward is shared, so this is a one-step decentralised partially
observable Markov decision process in the sense of \cref{def:decpomdp}, and each
agent's type is its own signal, which is \cref{prop:bayesian-stage} in its
smallest instance.

First compute what a centralised planner that sees both signals would do. The
posterior that the victim is on the left given two left signals is
$0.8^2 / (0.8^2 + 0.2^2) = 0.9412$; given one signal of each kind it is exactly
$0.5$. Following the posterior majority, and breaking the tie arbitrarily, gives
an expected score of $6.0$.

Now take $p = -20$. There are four decision rules per agent, from the constant
rules to following or inverting the signal. Enumerating the sixteen pairs, the
best achievable value is $0$, attained by both agents ignoring their signals and
always entering the same fixed door; following the signal gives $-0.4$, which is
worse than ignoring it. The entire informational advantage of the private
signals is destroyed by the risk of splitting up, and the gap to the
centralised planner is the full $6.0$.

Now take $p = -2$. The best pair of decision rules is for both agents to follow
their signals, which yields $5.36$, against $0$ for ignoring them, and the gap
to the planner falls to $0.64$. The qualitative conclusion is that whether
private information is worth using at all is decided by the cost of
miscoordination, not by the quality of the information: the signal accuracy was
$0.8$ in both cases. A practitioner reading a decentralised benchmark should
therefore ask what the miscoordination penalty is before asking whether the
agents learned to use their observations, because in the first regime the
optimal decentralised policy ignores them and a learner that discovers this is
behaving correctly. \exend
\end{example}

The second consequence is that the optimality criterion must be re-examined. In
a decentralised partially observable Markov decision process the objective is
unambiguous, since all agents share a reward, and the planning problem is to
find a joint policy $\pi = (\pi_1,\dots,\pi_\nag)$ maximising the common value.
This is a well-posed optimisation problem rather than an equilibrium problem,
which is why the decentralised partially observable Markov decision process literature
speaks of optimal joint policies while the partially observable stochastic game
literature speaks of equilibria. The relation between the two is that every
optimal joint policy of a decentralised partially observable Markov decision
process is a Nash equilibrium of the corresponding partially observable
stochastic game, and the converse fails badly: the set of Nash equilibria of a
common-payoff game includes every profile from which no agent can unilaterally
improve, and in a coordination problem that set contains arbitrarily poor
profiles. This single inclusion, optimum implies equilibrium but not conversely,
is the precise statement of why independent learners in cooperative tasks
converge to suboptimal conventions.

\begin{proposition}[Optimality and equilibrium in common-payoff games]
\label{prop:opt-implies-ne}
Let $G$ be a decentralised partially observable Markov decision process and
$\pi^\star$ an optimal joint policy. Then $\pi^\star$ is a Nash equilibrium of
$G$ viewed as a partially observable stochastic game. The converse is false:
there exist common-payoff games with Nash equilibria whose value is strictly
smaller than the optimum, and the gap can be made arbitrarily large.
\end{proposition}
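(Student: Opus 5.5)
The first claim follows directly from the definitions, by contradiction. Let $V(\pi)$ denote the common expected discounted return of a joint policy $\pi = (\pi_1,\dots,\pi_\nag)$ of $G$. Every agent has the same reward $\Rew$, so $V_i^\pi = V(\pi)$ for every $i$. Suppose $\pi^\star$ is optimal but not a Nash equilibrium. Then some agent $i$ has a policy $\pi_i' : \Hi_i \to \Simplex{\Ac_i}$ with $V_i(\pi_i', \pi^\star_{-i}) > V_i(\pi^\star)$. The profile $(\pi_i', \pi^\star_{-i})$ is itself an admissible joint policy of $G$: each component is still a map on its own agent's action-observation histories, so decentralisation is respected. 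Its common value therefore exceeds $V(\pi^\star)$, which contradicts optimality. The only point to check is that unilateral deviations stay inside the feasible set of decentralised joint policies. This holds because the feasible set is the product $\prod_i \Pi_i$ of per-agent history-dependent policy sets, and the Nash deviation set for agent $i$ is exactly $\Pi_i$. No appeal to \cref{prop:fixed-others-mdp} is needed, although it gives the same conclusion: with $\pi^\star_{-i}$ fixed, agent $i$ faces a POMDP whose value is $V$, so $\pi^\star_i$ must be optimal in it.

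For the converse I will use a one-shot, one-state common-payoff game. A normal-form game is a degenerate Dec-POMDP: one state, trivial observations, horizon one, or equivalently $\gamma$ applied to an absorbing zero-reward state. It is therefore enough to exhibit normal-form examples. Take the two-agent coordination game with actions $\{0,1\}$ and shared payoff $M$ at $(0,0)$, $1$ at $(1,1)$, and $0$ off the diagonal, where $M > 1$. The profile $(1,1)$ is a pure Nash equilibrium: a unilateral deviation yields $0 < 1$. Its value is $1$, while the optimum is $M$, so the gap $M-1$ grows without bound as $M \to \infty$. The ratio can also be made arbitrarily bad: replacing the payoff $1$ by $\varepsilon$ makes $(1,1)$ an equilibrium worth $\varepsilon$ against an optimum of $M$. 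The penalty game of \cref{ex:penalty} supplies a concrete instance with gap $6$, and the two-rescuer problem of \cref{ex:dec-rescue} with $p=-20$ supplies another. The parametrised family is what delivers the word \emph{arbitrarily}.

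I expect the only delicate point to be the formal embedding of the normal-form example into the Dec-POMDP class. The discount convention in \cref{def:mdp} requires an infinite horizon, so I will make the game terminate in an absorbing state with zero reward. The equilibrium and optimality comparisons are then unchanged up to the factor of the first-period reward, and the gap scales linearly in $M$.
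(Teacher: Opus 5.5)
Your proof is correct and is essentially the paper's. The first part is the same contradiction argument: a profitable unilateral deviation would itself be a decentralised joint policy with strictly higher common value. For the converse you use, as the paper does, a two-agent, two-action common-payoff game with a suboptimal pure equilibrium whose gap grows with a payoff parameter; the paper's version puts $0$ at $(0,0)$, $K$ at $(1,1)$ and $-P$ off the diagonal.

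One aside on a remark you make in passing: the two-rescuer instance with $p=-20$ does qualify, but the reason is the strict equilibrium in which both agents follow their signals, worth $-0.4$ against the decentralised optimum $0$. The $6.0$ gap to the centralised planner does not count here, because the centralised planner is not a decentralised joint policy.
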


\begin{proof}
If some agent $i$ had a profitable unilateral deviation $\pi_i'$ from
$\pi^\star$, then because all rewards coincide the joint policy
$(\pi_i', \pi^\star_{-i})$ would have strictly larger common value, contradicting
optimality. For the converse, take the two-agent single-state game with
$\Ac_1 = \Ac_2 = \{0,1\}$, $\util(0,0) = 0$, $\util(1,1) = K$ and
$\util(a) = -P$ otherwise. The profile $(0,0)$ is a Nash equilibrium for
$P > 0$, since a unilateral move to $1$ yields $-P < 0$, and its value is $0$
while the optimum is $K$.
\end{proof}

The game in the proof is a two-action instance of the penalty structure studied
by \citet{claus1998dynamics}, and it isolates the phenomenon in its minimal
form: the suboptimal equilibrium is
sustained by the risk an agent incurs when it unilaterally departs from a
convention, with no conflict of interest anywhere in the game. Risk dominance and payoff dominance
are the game-theoretic names for the two criteria that conflict here
\citep{harsanyi1988general}, and independent learners with exploration converge
to the risk-dominant equilibrium under a wide range of conditions, which is
exactly the behaviour reported empirically under the name of relative
overgeneralisation \citep{wei2016lenient}. The algorithmic responses, optimistic
initialisation, lenient learning, hysteretic learning and distributed Q-learning
\citep{lauer2000algorithm,matignon2007hysteretic,panait2006lenience}, are all
attempts to bias the dynamics toward the payoff-dominant equilibrium, and they
are best understood as equilibrium selection devices rather than as learning
improvements.

\begin{example}[Why an arbitrary convention counts as an equilibrium]
\label{ex:convention}
Suppose two agents must each send one of two symbols, and they score $1$ if the
first agent's symbol equals the second agent's and $0$ otherwise. Both
``always send $0$'' and ``always send $1$'' are optimal joint policies, and so
is any pair that agrees. Now suppose two independent training runs produce the
first convention and the second respectively. Each run reports a return of $1$
and each has found an optimal joint policy. Pairing one agent from the first run
with one from the second returns $0$. No agent is at fault, no run failed, and
the mean return across runs is uninformative about the result. This is the
reason \cref{tab:diagnostics} asks for cross-play, the return obtained when
agents from independently trained runs are paired, as the diagnostic for
cooperative decentralised problems, and it is the failure that
\citet{hu2020other} is designed to prevent by breaking the symmetry that makes
the two conventions interchangeable. In the vocabulary of this section, the
pathology is that the game has a symmetry group acting on its optimal policies,
self-play selects one orbit element arbitrarily, and nothing in the objective
records which. \exend
\end{example}

\begin{table}[t]
\centering
\caption{The partially observable lattice. The columns record whether the
individual observation determines the state, whether the joint observation does,
and what the model degenerates to when $\nag = 1$. The final column gives the
finite-horizon planning complexity with the model given explicitly; the
infinite-horizon and stationary-equilibrium classifications, which differ, are
collected in \cref{tab:complexity}.}
\label{tab:posg-lattice}
\small
\begin{tabular}{@{}L{2.3cm} L{2.0cm} C{1.3cm} C{1.3cm} L{1.6cm} C{3.6cm}@{}}
\toprule
\textsc{Model} & \textsc{Rewards} & \textsc{$o_i \to s$} & \textsc{$o \to s$} & \textsc{$\nag=1$} & \textsc{Finite-horizon complexity} \\
\midrule
MDP & single agent & yes & yes & MDP & \PTIME-complete \\
MMDP & common & yes & yes & MDP & \PTIME\ in $\abs{\St}\prod_i\abs{\Ac_i}$ \\
SG (Markov game) & general & yes & yes & MDP & \PTIME\ if zero-sum; \PPAD-complete if general-sum \\
POMDP & single agent & no & no & POMDP & \PSPACE-complete \\
Dec-MDP & common & no & yes & MDP & \NEXP-complete (horizon $<\abs{\St}$) \\
Dec-POMDP & common & no & no & POMDP & \NEXP-complete (horizon $<\abs{\St}$) \\
Jointly obs.\ POSG & general & no & yes & MDP & \NEXP-hard for the team-optimal problem \\
POSG & general & no & no & POMDP & \NEXP-hard for the team-optimal problem \\
I-POMDP & general & no & no & POMDP & \PSPACE-hard, finitely nested \\
\bottomrule
\end{tabular}
\end{table}

\section{Exact reductions of decentralised models to single-agent ones}
\label{sec:reductions}

The results in this section are the technical heart of the correspondence. Each
one takes a model in which several agents act on private information and
produces a single-agent sequential decision problem whose solution is in
bijection with the solution of the original. They are the multi-agent analogues
of \cref{thm:belief-mdp}, they share its template, and they share its price: the
reduced problem is a Markov decision process over a space of distributions, so
it is amenable to dynamic programming in principle and not in practice without
further approximation. Collecting them in one place makes visible that the
modern deep algorithms for cooperative and for zero-sum imperfect-information
play are implementations of the same two theorems.

The two readings are, for once, the same reading arrived at twice. The control
theorist constructs a fictitious coordinator because decentralised control
problems are not amenable to dynamic programming in their own variables; the
game theorist constructs a public belief state because imperfect-information
games have no proper subgames in their own variables. Both are performing the
same operation, namely conditioning on common knowledge and lifting the action
space from actions to functions of private information. Recognising this is the
single most useful item in this monograph, and the table below is its
dictionary.

\begin{nomen}{Reduction vocabulary read from both sides.}{tab:nom-reductions}
Common knowledge $c_t$ & Public features; broadcast observations &
  What conditioning is legitimate on \\
Prescription $\gamma_t^i$ & Decision rule mapping private history to action &
  A function, not an action \\
Coordinator & Centralised controller issuing prescriptions &
  A fictitious agent who sees only the public part \\
Common-information belief $\Pi_t$ & Posterior over state and private histories &
  The coordinator's belief state \\
Occupancy state $\xi_t$ & Distribution over state and joint history &
  The sufficient statistic of a Dec-POMDP \\
Public belief state $\beta_t$ & Public state plus per-agent range beliefs &
  The zero-sum counterpart of $\xi_t$ \\
Behavioural prescription & Policy head conditioned on the public state &
  What a network outputs in BAD or ReBeL \\
Piecewise linear convex value & $\alpha$-vector representation &
  Why point-based solvers transfer \\
Deterministic transition on $\xi$ & Noise absorbed into the state &
  Search over sequences of decision rules \\
Type in a stage Bayesian game & Action-observation history at time $t$ &
  The private information you condition on \\
Subgame rooted at $\beta_t$ & Depth-limited search subtree &
  Where search can legitimately restart \\
\end{nomen}

The unifying idea is the common-information decomposition. Partition what the
system knows at time $t$ into the common information $c_t$, defined as the
intersection of the agents' information, and the private information $p_t^i$
retained by each agent. A fictitious coordinator who observes only $c_t$ cannot
choose actions, because actions must depend on private information, but it can
choose prescriptions, that is, functions $\gamma_t^i : \mathcal{P}_t^i \to \Ac_i$
mapping each agent's private information to an action. The coordinator's choice
of $\gamma_t = (\gamma_t^1,\dots,\gamma_t^\nag)$ together with the agents'
mechanical application of their prescriptions reproduces any decentralised
strategy profile.

\begin{theorem}[Common-information reduction; \citealp{nayyar2013decentralized}]
\label{thm:coordinator}
Consider a decentralised stochastic control problem with partial history
sharing, in which the common information $c_t$ is non-decreasing and each agent
$i$ knows $c_t$ and its private information $p_t^i$. Define the coordinator's
problem as the single-agent decision problem with observation $c_t$, action
$\gamma_t$ ranging over the finite set of prescription profiles, and reward the
expected team reward. Then the coordinator's problem is a partially observable
Markov decision process whose hidden state is $(s_t, p_t^1,\dots,p_t^\nag)$, its
belief-MDP state is the common-information belief
$\Pi_t(s,p) = \Prb(s_t = s,\, p_t = p \given c_t)$, and there are
cost-preserving maps in both directions between coordination strategies and
decentralised profiles, so the two problems have the same achievable expected
total reward. Consequently an optimal coordinator strategy yields an optimal
decentralised profile, and dynamic programming over $\Pi_t$ is valid.
\end{theorem}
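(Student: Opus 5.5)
The plan is to build the two maps between strategy spaces explicitly, check that they preserve the joint law of the state--action process (hence the expected reward), and then invoke \cref{thm:belief-mdp} for the coordinator's problem. Throughout, write the partial-history-sharing structure as $c_{t+1} = (c_t, z_{t+1})$ and $p^i_{t+1} = \zeta^i(p^i_t, a^i_t, o^i_{t+1})$. Here $z_{t+1}$ is the newly shared increment, a fixed function of $(p_t, a_t, o_{t+1})$, so that private information evolves by a fixed update. The finiteness of the prescription set follows from finiteness of $\mathcal{P}^i_t$ at each $t$, which holds because histories are finite at finite $t$; it is what makes the coordinator's action space finite.

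The first step is the map from coordination strategies to decentralised profiles. Given a coordinator strategy $g_t : c_t \mapsto \gamma_t$, define $\pi^i_t(c_t, p^i_t) = g^i_t(c_t)(p^i_t)$. This is implementable by agent $i$, because $c_t$ is known to every agent. The second step is the map in the other direction. Given a decentralised profile $\pi^i_t(c_t, p^i_t)$, define $g_t(c_t) = (\pi^1_t(c_t,\cdot),\dots,\pi^\nag_t(c_t,\cdot))$. The two maps are mutually inverse on deterministic strategies, and under both the realised action is $a^i_t = \gamma^i_t(p^i_t)$. An induction on $t$ then shows that the joint distribution of $(s_t, p_t, c_t, a_t)$ coincides under the two descriptions. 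The kernels $\Tr$ and $\Om$ and the update maps are identical in both, and only the bookkeeping of who computes $a_t$ differs. Taking expectations of the team reward gives equality of achievable values, and so optimality transfers in both directions. Randomised profiles need no separate treatment: in a common-payoff problem deterministic ones suffice, since a mixed profile's value is an average of pure ones.

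The third step shows that the coordinator faces a POMDP. Take the hidden state to be $x_t = (s_t, p_t)$, the action to be $\gamma_t$, and the observation to be $z_{t+1}$. I must verify that $\Prb(x_{t+1}, z_{t+1} \given x_{0:t}, \gamma_{0:t}, c_t)$ depends only on $(x_t, \gamma_t)$. This holds because $a_t = \gamma_t(p_t)$ is a deterministic function of $(x_t,\gamma_t)$, the pair $(s_{t+1}, o_{t+1})$ is drawn from $\Tr(\cdot \given s_t,a_t)$ and $\Om(\cdot\given s_{t+1},a_t)$, and $(p_{t+1}, z_{t+1})$ are fixed functions of these. The expected reward $\tilde\Rew(x_t,\gamma_t) = \Exp[\Rew(s_t,\gamma_t(p_t),s_{t+1})]$ likewise depends only on $(x_t,\gamma_t)$. \Cref{thm:belief-mdp} then applies verbatim with belief $\Pi_t(s,p) = \Prb(s_t=s, p_t=p \given c_t)$, updated by the Bayes filter \cref{eq:belief-update} with $\gamma_t$ in place of $a_t$. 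That theorem yields validity of dynamic programming over $\Pi_t$ and gives an optimal stationary-in-$\Pi$ coordinator strategy, which the first map turns into an optimal decentralised profile.

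The main obstacle is a subtlety in the third step. The belief $\Pi_t$ must be computable from $c_t$ and past prescriptions alone, without reference to the decentralised policy that generated them. Put differently, the filter must be strategy-independent given $\gamma_{0:t-1}$. If the belief instead depended on the policy, the coordinator's "POMDP" would have a policy-dependent kernel and \cref{thm:belief-mdp} would not apply. I would first check this by writing the Bayes update explicitly and observing that the policy enters only through the realised prescriptions $\gamma_t$. These are part of the coordinator's own action history, exactly as $a_t$ is in \cref{eq:belief-update}. The second delicate point is the hypothesis that $c_t$ is non-decreasing, which guarantees that $c_{t+1}$ refines $c_t$ so that $z_{t+1}$ is a genuine observation increment. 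This is where partial history sharing is used, and I would flag that without it the reduction fails.
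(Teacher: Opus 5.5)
Your route is the argument of \citet{nayyar2013decentralized}, which the paper cites without reproducing. It has three steps: the bijection $g_t(c_t)\leftrightarrow(\pi^i_t(c_t,\cdot))_i$ with identical realised actions; the check that $x_t=(s_t,p_t)$, action $\gamma_t$ and observation $z_{t+1}$ form a controlled Markov system whose reward is a function of $(x_t,\gamma_t)$; and dynamic programming over $\Pi_t$. The Markov check in your third step is the correct one.

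The step that fails as written is the final invocation. \Cref{thm:belief-mdp} and the filter \cref{eq:belief-update} are stated for the model of \cref{def:pomdp}, where the observation is drawn from $\Om(\cdot\given s',a)$ given the new state. The coordinator's observation, however, generally depends on the old hidden state. Under one-step delayed sharing, for example, $p^i_t=o^i_t$, and $z_{t+1}=(p_t,\gamma_t(p_t))$ is determined by $(x_t,\gamma_t)$. Its conditional law given $(x_{t+1},\gamma_t)$ therefore depends on the distribution of $x_t$. So there is no fixed kernel to put in place of $\Om$, and substituting $\gamma_t$ for $a_t$ in \cref{eq:belief-update} produces the wrong posterior.

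The repair uses exactly the joint condition you verified. The correct update is
\[
\Pi_{t+1}(x') \;\propto\; \sum_{x}\Prb\big(x_{t+1}=x',\,z_{t+1}\given x_t=x,\,\gamma_t\big)\,\Pi_t(x).
\]
This is strategy-independent for the reason you give, and the belief-MDP argument goes through unchanged for such a joint kernel. Equivalently, take the hidden state to be $(x_{t+1},z_{t+1})$: the observation is then a deterministic function of the new state, and \cref{def:pomdp} applies literally.

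Two smaller adjustments finish the invocation.
\begin{itemize}
\item Since $\mathcal{P}^i_t$, and with it the hidden-state and prescription spaces, changes with $t$, the coordinator's problem is time-inhomogeneous. What you obtain is a time-indexed dynamic program $V_t(\Pi_t)$, not a stationary policy.
\item Before applying your first map, note that under $\gamma_k=\psi_k(\Pi_k)$ the filter recursion makes $\Pi_t$, and hence $\gamma_t$, a function of $c_t$ alone. This is also what reconciles the theorem's $\Prb(\cdot\given c_t)$ with the POMDP belief conditioned on $(c_t,\gamma_{0:t-1})$.
\end{itemize}
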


\Cref{thm:coordinator} converts a decentralised problem into a centralised one
by enlarging the action space from actions to functions of private information.
The exponential blow-up, $\abs{\Ac_i}^{\abs{\mathcal{P}_t^i}}$ prescriptions per
agent, is where the complexity goes, and it is not an artefact: by
\cref{sec:complexity} the original problem is \NEXP-hard, so no polynomial-time
reduction producing a polynomially sized partially observable Markov decision
process can exist unless \PSPACE, \EXP\ and \NEXP\ all coincide.

\begin{example}[Counting prescriptions]
\label{ex:prescriptions}
In the two-rescuer problem of \cref{ex:dec-rescue} each agent has two possible
private observations and two actions, so there are $2^2 = 4$ prescriptions per
agent and $4^2 = 16$ prescription profiles, which is why the enumeration in that
example was feasible by hand. Scale the problem a little. With two agents, four
private observations and three actions there are $3^4 = 81$ prescriptions each
and $6561$ profiles. With three agents, eight private observations and four
actions there are $4^8 = 65536$ prescriptions each and
$65536^3 \approx 2.8 \times 10^{14}$ profiles, which is a single step of the
coordinator's problem. For comparison, the number of deterministic joint
policies of a two-agent decentralised problem with two observations per step
over a horizon of three is $(2^{7})^2 = 16384$, and it grows doubly
exponentially in the horizon. These numbers are the content of the
\NEXP-hardness in \cref{tab:complexity}; the reduction is exact and its cost is
exactly this counting. The practical reading is that occupancy-state and
coordinator methods are usable when the private information is small, which in
practice means when the problem has been engineered so that most of what agents
observe is public. \exend
\end{example}

The decentralised partially observable Markov decision process with no
communication is the special case in which the common information is empty, and
the common-information belief then becomes a distribution over the state and
over the joint action-observation history. This object has its own name and its
own theorem.

\begin{definition}[Occupancy state]
\label{def:occupancy}
Let $G$ be a decentralised partially observable Markov decision process and let
$\Hi_t = \prod_i \Hi_t^i$ be the set of joint action-observation histories of
length $t$. Given the initial belief $b_0$ and the sequence of joint decision
rules $\delta_{0:t-1}$ with $\delta_k = (\delta_k^i : \Hi_k^i \to \Ac_i)_i$, the
occupancy state is
$\xi_t \in \Simplex{\St \times \Hi_t}$ defined by
$\xi_t(s,h) = \Prb(s_t = s,\, h_t = h \given b_0, \delta_{0:t-1})$.
\end{definition}

\begin{rlreading}
The occupancy state is the replacement for the belief state of
\cref{thm:belief-mdp} when there is more than one agent, and it is a bigger
object for a precise reason: it is a distribution not over the world state but
over the world state paired with everybody's recurrent hidden state. A
practitioner who wanted to maintain it would have to track, jointly, where the
world is and what every teammate has seen, because a teammate's future action
depends on its own history and on nothing the agent can observe. The decision
rule $\delta_t$ is not an action but a function from a history to an action,
that is, an entire policy network evaluated at one time step, so the
coordinator's action space is a space of networks. \Cref{ex:occupancy} writes
one out in full for an eight-entry instance, and \cref{ex:prescriptions} counts
how fast it grows.
\end{rlreading}

\begin{example}[An occupancy state written out]
\label{ex:occupancy}
In the two-rescuer problem, after the signals have been drawn and before any
action is taken, the occupancy state is a distribution over the eight pairs
consisting of a door and a pair of signals. With signal accuracy $0.8$ its
entries are
$\xi(\mathrm{left}; L, L) = 0.5 \times 0.8 \times 0.8 = 0.32$,
$\xi(\mathrm{left}; L, R) = \xi(\mathrm{left}; R, L) = 0.5 \times 0.8 \times 0.2
= 0.08$,
$\xi(\mathrm{left}; R, R) = 0.5 \times 0.2 \times 0.2 = 0.02$,
and the four mirror entries for the right door, summing to one. Three features
of this object are worth naming because they recur at every scale. It is a
single vector, not a pair of beliefs, because no factorisation into per-agent
beliefs is valid; it lives in a simplex of dimension seven here and of dimension
$\abs{\St}\prod_i\abs{\Hi^i_t} - 1$ in general, which is where the
blow-up lives; and the map from it and a joint decision rule to the next
occupancy state is deterministic, because all the randomness of the problem has
already been absorbed into the distribution. That determinism is what
\cref{thm:occupancy} exploits. \exend
\end{example}

\begin{theorem}[Occupancy-MDP reduction; \citealp{dibangoye2016optimally}]
\label{thm:occupancy}
Let $G$ be a finite-horizon decentralised partially observable Markov decision
process. The occupancy state $\xi_t$ is a sufficient statistic for the history
of joint decision rules, the map $(\xi_t, \delta_t) \mapsto \xi_{t+1}$ is
deterministic, and the occupancy Markov decision process
$(\Simplex{\St\times\Hi}, \{\delta\}, \text{deterministic transition},
\tilde\Rew)$ with
$\tilde\Rew(\xi_t,\delta_t) = \sum_{s,h} \xi_t(s,h)\, \Rew(s,\delta_t(h))$
is a continuous-state, deterministic-transition Markov decision process whose
optimal value equals the optimal value of $G$ and whose optimal policy induces
an optimal joint policy of $G$. Moreover the optimal value function
$V_t^\star : \Simplex{\St\times\Hi_t} \to \Real$ is piecewise linear and convex.
\end{theorem}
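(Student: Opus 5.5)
We plan to prove \cref{thm:occupancy} in four steps, following the template of \cref{thm:belief-mdp} but with the crucial simplification that the coordinator observes nothing, so no Bayes filter over observations is needed.

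\emph{Step one: the deterministic update.} We write the update explicitly. Given $\xi_t$ and a joint decision rule $\delta_t$, a history $h' = (h, a, o)$ of length $t+1$ can only arise with $a = \delta_t(h)$, which gives
\begin{align*}
\xi_{t+1}(s', (h,a,o)) \;=\; \Ind[a = \delta_t(h)]\, \Om(o \given s', a) \sum_{s} \Tr(s' \given s, a)\, \xi_t(s,h).
\end{align*}
This formula follows from the chain rule applied to the joint process $(s_t, h_t)$ generated by $b_0$, $\Tr$, $\Om$ and the deterministic rules. It shows at once that $\xi_{t+1}$ is a function of $(\xi_t,\delta_t)$ alone. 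Induction from $\xi_0(s, o) = b_0(s)\,\Prb(o \given s)$ then shows that $\xi_t$ is determined by $\delta_{0:t-1}$, and that the update involves no randomness.

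\emph{Step two: reward sufficiency and value equality.} The expected reward at time $t$ under the joint policy $\delta_{0:T-1}$ is $\Exp[\Rew(s_t, \delta_t(h_t))] = \sum_{s,h}\xi_t(s,h)\Rew(s,\delta_t(h))$, which equals $\tilde\Rew(\xi_t,\delta_t)$. If $\Rew$ depends on $s'$, we first replace it by its conditional expectation given $(s,a)$. Summing over $t$, every deterministic joint policy of $G$ corresponds one-to-one to an open-loop sequence of decision rules in the occupancy MDP, and the two returns coincide. Because the occupancy MDP is deterministic, closed-loop policies in it generate the same trajectories as open-loop sequences, so its optimal value is the maximum over deterministic joint policies of $G$.

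It remains to invoke that deterministic joint policies suffice in a finite-horizon Dec-POMDP. The common value is multilinear in the agents' behavioural strategies, hence linear in each agent's strategy with the others fixed. Vertex optimality then lets us replace each agent's strategy by a pure one without decreasing the value; this is the same argument as \cref{prop:opt-implies-ne} combined with the vertex fact of \cref{sec:single-agent}. We conclude that the optimal values are equal and that an optimal occupancy policy induces an optimal joint policy. Sufficiency of $\xi_t$ follows from the same computation: both future rewards and future occupancy states depend on the past only through $\xi_t$.

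\emph{Step three: piecewise linearity and convexity.} We prove this by backward induction, mirroring \cref{thm:belief-mdp}. At the horizon, $V_T^\star = 0$. Assume $V_{t+1}^\star(\xi) = \max_{\alpha \in \Gamma_{t+1}} \langle \alpha, \xi\rangle$ for a finite set $\Gamma_{t+1}$. For fixed $\delta_t$, both $\tilde\Rew(\cdot,\delta_t)$ and the update map $\xi_t \mapsto \xi_{t+1}$ are linear in $\xi_t$; this is visible in the formula of step one. Composing a linear map with a maximum of linear functionals gives a maximum of linear functionals, and maximising over the finitely many $\delta_t$ preserves the form. Hence $V_t^\star$ is a finite maximum of linear functions, and so is piecewise linear and convex.

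The main obstacle is in step two: justifying that restricting to open-loop sequences of deterministic decision rules loses nothing against arbitrary behavioural joint policies. We would handle it with the multilinearity and vertex argument, applied agent by agent. Everything else is bookkeeping with the explicit update formula.
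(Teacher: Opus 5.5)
The paper gives no proof of \cref{thm:occupancy}. It defers to \citet{dibangoye2016optimally} and only remarks that the transition is deterministic because the randomness has been absorbed into $\xi_t$, and that piecewise linearity and convexity mirror \cref{thm:belief-mdp}. Your four-step argument is the standard one behind that citation and is sound in outline. The update in step one is linear in $\xi_t$ with no normalising constant, because the coordinator conditions on nothing. That linearity is exactly what makes both the open-loop equivalence of step two and the $\alpha$-vector induction of step three go through.

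One justification in step two is false as written, although the conclusion it supports is true. The common value is \emph{not} linear in an agent's behavioural strategy with the others fixed. Already for $\nag = 1$ and horizon two, the value contains products $\pi(a_0 \given h_0)\,\pi(a_1 \given h_1)$, so it is bilinear in that agent's strategy.

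What is true is a finer statement. The value is linear in each local distribution $\pi_i(\cdot \given h^i)$ separately, with every other local distribution held fixed, whether it belongs to agent $i$ or to another agent. This holds because a given own history $h^i_t$ occurs at most once along any trajectory. So apply the vertex fact of \cref{sec:single-agent} one pair $(i, h^i)$ at a time, over the finitely many such pairs. Each replacement weakly raises the value, and at the end you have a deterministic joint policy at least as good as the one you started from.

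Equivalently, pass to mixed strategies via \cref{thm:kuhn}, which applies because agents that condition on their full own histories have perfect recall. In mixed strategies the value genuinely is linear in each agent's strategy with the others fixed. Either repair makes the appeal to \cref{prop:opt-implies-ne} unnecessary; that result concerns optimum versus equilibrium, not the sufficiency of deterministic policies.
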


The piecewise linearity and convexity is the same property that
\cref{thm:belief-mdp} gives for the single-agent case, and it is the property
that makes point-based approximation methods applicable, which is how the
heuristic search value iteration family was transplanted from partially
observable Markov decision processes to decentralised ones. The determinism of
the transition is worth emphasising, because it is the formal statement that all
the stochasticity of a decentralised partially observable Markov decision
process has been absorbed into the state; the residual problem is a search over
sequences of decision rules.

The zero-sum counterpart replaces the occupancy state by the public belief
state, and the fact that the construction works there too, with an entirely
different solution concept, shows that the reduction is a property of the
information structure rather than of the reward structure.

\begin{definition}[Public belief state]
\label{def:pbs}
In a two-player zero-sum game with imperfect information and public actions,
let the public state $s^{\mathrm{pub}}_t$ be the common knowledge produced by the
observed public history, and let $\mathcal{S}_i$ be the set of agent $i$'s
information states consistent with it. The public belief state is
$\beta_t = (s^{\mathrm{pub}}_t, \mu_1, \mu_2)$ with
$\mu_i \in \Simplex{\mathcal{S}_i}$ the distribution over agent $i$'s
information states induced by the agents' strategies and Bayes' rule.
\end{definition}

\begin{rlreading}
A value network for an imperfect-information game has to take something as
input, and \cref{def:pbs} says what. Not the world state, because the opponent
cannot condition on it and a position's value depends on what the opponent
believes. Not the public history alone, because two positions with the same
betting sequence are worth different amounts depending on how each agent's
strategy filtered the hands into it. The right input is the public history
together with one probability vector per agent over that agent's possible
private states, which is what $\beta_t$ is. Those vectors are computed by
Bayes' rule from the strategies, so they exist only once the strategies are
fixed, which is why the construction is a self-consistency condition and not a
preprocessing step. Once the input is $\beta_t$ the game has proper subgames
again, depth-limited search becomes legitimate, and the AlphaZero recipe
transfers. \Cref{ex:pbs-poker} computes a $\beta_t$ with three numbers.
\end{rlreading}

\begin{remark}[Public belief state reduction]
\label{thm:pbs}
Assume the rules of the game and the agents' policies are common knowledge. Then
a two-player zero-sum game with imperfect information and public observations
induces a continuous-state perfect-information game whose states are public
belief states and whose actions are behavioural prescriptions
$\mathcal{S}_i \to \Simplex{\Ac_i}$. The two representations are strategically
identical, so the value of the induced game at $\beta_0$ equals the value of the
original and an equilibrium of the induced game induces an equilibrium of the
original. The construction is the specialisation of \cref{thm:coordinator} to
the zero-sum case and is stated informally in this form by
\citet{brown2020rebel}, who credit \citet{nayyar2013decentralized} and
\citet{dibangoye2016optimally}.
\end{remark}

\begin{example}[The public belief state of a poker hand]
\label{ex:pbs-poker}
In Kuhn poker, treated in full in \cref{ex:kuhn}, each agent holds one of three
cards and the betting is public. After the first agent checks, the public state
is the single word ``check'' and the public belief state consists of that word
together with two distributions over the three cards, one per agent, computed by
Bayes' rule from the agents' known strategies. If the first agent's strategy is
to check with probability one on the jack and the queen and with probability
$2/3$ on the king, then after a check the second agent's belief about the first
agent's card is proportional to $(1, 1, 2/3)$, that is
$(0.375, 0.375, 0.25)$. That triple, not the actual card, is the state on which
a depth-limited search may legitimately restart, which is the content of
\cref{thm:pbs}. The reinforcement learning reading is exact and practical: a
value network for an imperfect-information game must take the belief pair as
input, because a network that takes only the public history is evaluating a
position whose value is not well defined, and a network that takes the private
card is evaluating a position the opponent cannot condition on. Every design
error in this area is one of those two. \exend
\end{example}

\Cref{thm:pbs} is \cref{thm:coordinator} with the team objective replaced by a
zero-sum objective and the single coordinator replaced by two coordinators
playing a perfect-information game, and the reason the public belief state is
well defined is the same reason the common-information belief is: both are the
posterior conditional on the common knowledge. The practical consequence is that
the techniques of perfect-information search, specifically the combination of a
learned value function with depth-limited search that underlies AlphaZero
\citep{silver2018general}, become applicable to imperfect-information games once
the state is taken to be a public belief state rather than a world state, which
is the design of ReBeL \citep{brown2020rebel} and of Student of Games
\citep{schmid2023student}. On the cooperative side the identical construction,
with prescriptions computed by a neural network conditioned on the public
belief, is the Bayesian action decoder \citep{foerster2019bayesian}, and the
observation that these two lines of work implement the same theorem under
different reward structures is, in our view, the most useful single item in this
monograph. The same construction, applied to the subsets of agents for whom a
fact is common knowledge, is the multi-agent common-knowledge method of
\citet{dewitt2019multi}, which chooses dynamically how large a group to
coordinate as a function of how much those agents commonly know.

A third reduction of a different flavour converts a decentralised planning
problem into a sequence of one-shot games, which is the bridge between
\cref{sec:partial-obs} and the Bayesian games of \cref{sec:bayesian}.

\begin{proposition}[Stage-wise Bayesian game reduction; \citealp{emery2004approximate,oliehoek2008optimal}]
\label{prop:bayesian-stage}
Let $G$ be a finite-horizon decentralised partially observable Markov decision
process and fix the joint decision rules $\delta_{0:t-1}$, which determine
$\xi_t$. Then the choice of $\delta_t$ that maximises the remaining expected
return is a Bayes-Nash equilibrium, indeed an optimal profile, of the
common-payoff Bayesian game $\mathcal{BG}_t$ whose agent set is $\agents$, whose
type space for agent $i$ is $\Hi_t^i$, whose type distribution is the marginal
of $\xi_t$ on $\Hi_t$, and whose payoff is the immediate reward plus the optimal
continuation value. Solving $G$ exactly amounts to solving the sequence
$\mathcal{BG}_0, \mathcal{BG}_1, \dots$ with exact continuation values, and
replacing the continuation value by a heuristic yields the approximate methods
of \citet{emery2004approximate}.
\end{proposition}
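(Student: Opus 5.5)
The argument runs through the occupancy-MDP reduction of \cref{thm:occupancy}, read one stage at a time. Fix $\delta_{0:t-1}$, so that $\xi_t$ is determined. By \cref{thm:occupancy}, the maximal expected return from stage $t$ onward, over all continuations $\delta_{t:T-1}$, is the optimal value $V_t^\star(\xi_t)$ of the deterministic occupancy MDP. Its Bellman equation reads
\begin{align*}
V_t^\star(\xi_t) \;=\; \max_{\delta_t}\;\Big[\tilde\Rew(\xi_t,\delta_t) + V_{t+1}^\star\big(\tau(\xi_t,\delta_t)\big)\Big],
\end{align*}
where $\tau$ is the deterministic occupancy update. The whole proof consists of rewriting the bracket as the expected payoff of a common-payoff Bayesian game in which type $h^i$ of agent $i$ plays $\delta_t^i(h^i)$.

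\textbf{Step 1: the immediate reward.} By definition,
$\tilde\Rew(\xi_t,\delta_t)=\sum_h \xi_t(h)\sum_s \xi_t(s\given h)\,\Rew(s,\delta_t(h))$.
This is an expectation over joint types $h$, drawn from the $\Hi_t$-marginal of $\xi_t$, of a payoff that depends only on the joint type and the joint action $\delta_t(h)$.

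\textbf{Step 2: the continuation term.} This is the delicate step. The quantity $V_{t+1}^\star(\tau(\xi_t,\delta_t))$ is a function of the whole decision rule $\delta_t$. It is not, on its face, a function of the joint type and the joint action separately. I would handle it in two ways.

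\emph{Exact route.} Fix the optimal continuation $\delta^\star_{t+1:T-1}$. The continuation return then becomes linear in $\xi_{t+1}$, and $\xi_{t+1}$ is linear in the joint-action-indexed contributions of each $(s,h)$. So the total is $\sum_h \xi_t(h)\,Q(h,\delta_t(h))$, where
\begin{align*}
Q(h,a) \;=\; \Exp\big[\Rew(s,a)+\text{return under }\delta^\star_{t+1:T-1}\given h,a\big].
\end{align*}
This $Q$ is the Bayesian game payoff, "immediate reward plus optimal continuation".

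\emph{Convexity route.} Alternatively, the piecewise linearity and convexity of $V_{t+1}^\star$ from \cref{thm:occupancy} supplies a supporting $\alpha$-vector at the optimum. That vector gives the same linearisation.

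The expected obstacle is the circularity here. The continuation value depends on future rules, which may themselves depend on $\delta_t$. I would resolve it by choosing $\delta^\star$ jointly as an optimal policy of the occupancy MDP and linearising at the optimum. Under that choice, the maximiser of the bracket coincides with the maximiser of $\sum_h \xi_t(h)Q(h,\delta_t(h))$. This holds because the latter is a lower bound, by optimality of $V_{t+1}^\star$, that is tight at $\delta^\star_t$.

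\textbf{Step 3: optimality and equilibrium.} With the payoff in this form, a maximising $\delta_t$ is an optimal profile of the common-payoff Bayesian game $\mathcal{BG}_t$. I would then invoke the argument of \cref{prop:opt-implies-ne}, transposed to types. If some type $h^i$ had a profitable deviation, the common expected payoff would strictly increase, contradicting optimality. Hence the maximiser is a Bayes-Nash equilibrium.

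\textbf{Step 4: the sequence of games.} Iterating Steps 1–3 from $t=0$ by backward-consistent dynamic programming over $V_t^\star$ shows that solving $G$ amounts to solving $\mathcal{BG}_0,\mathcal{BG}_1,\dots$ with exact continuation values. Substituting a heuristic for $Q$ is then simply the definition of the approximate scheme of \citet{emery2004approximate}, so that clause requires no separate proof.
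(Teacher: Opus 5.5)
Your proof is correct. The paper itself gives no argument for this proposition. It cites \citet{emery2004approximate} and \citet{oliehoek2008optimal} and adds only a caveat: the past rules must be the optimal ones, and the continuation must be the action-value of a fixed optimal joint policy rather than a quantity defined by a maximisation. Your Step~2 reaches exactly that caveat on its own. The circularity you flag is why a maximisation-defined continuation would make the stage payoff depend on the whole rule $\delta_t$ rather than on the pair $(h,\delta_t(h))$, and freezing the open-loop tail $\delta^\star_{t+1:T-1}$ is the fix.

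The detour through \cref{thm:occupancy} is not needed, though. With the tail frozen, the return of $(\delta_{0:t-1},\delta_t,\delta^\star_{t+1:T-1})$ is a constant plus $\sum_h \xi_t(h)\,Q(h,\delta_t(h))$. Optimality of $(\delta^\star_t,\delta^\star_{t+1:T-1})$ among continuations from $\xi_t$ then gives the result in one line. What your framing does buy is twofold. First, it shows that the paper's restriction to optimal past rules is needed only when the payoff is built from the globally optimal policy; linearising at an optimal continuation from the actual $\xi_t$ works for any fixed past. Second, the convexity route identifies the stage payoff with a supporting $\alpha$-vector of $V^\star_{t+1}$ at $\tau(\xi_t,\delta^\star_t)$.

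Two small corrections, neither of which affects your argument for the main claim. First, the tight lower bound shows that $\delta^\star_t$ maximises the Bayesian-game objective and that every maximiser of that objective also maximises the bracket. It does not show that the two argmax sets coincide. For a counterexample, record agent~$1$'s stage-$t$ action $x$ in the state, give agent~$2$ no informative observation, and let the only reward be $\Ind[y=x]$ for agent~$2$'s action $y$ at stage $t+1$. Both values of $x$ maximise the bracket, but the game built from the frozen tail $y=0$ is maximised only by $x=0$.

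Second, Step~4 gives only one direction of ``amounts to''. For the other, you need that after replacing $\delta^\star_t$ by any other maximiser $\delta'_t$ of $\mathcal{BG}_t$, the frozen tail is still optimal from $\tau(\xi_t,\delta'_t)$. This holds because $(\delta'_t,\delta^\star_{t+1:T-1})$ attains $V^\star_t(\xi_t)$, and an optimal open-loop sequence in a deterministic MDP is optimal from every state on its own trajectory. You should also say plainly that the exact continuation values presuppose the optimal tail, so the sequence of games characterises the optimum rather than providing a procedure for finding it.
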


The statement requires the past joint decision rules to be the optimal ones, and
$Q^\star$ to be the action-value under the optimal joint policy rather than a
quantity defined by a maximisation, which is how
\citet{oliehoek2008optimal} state it.

\Cref{prop:bayesian-stage} is the precise sense in which an agent's
action-observation history plays the role of a Harsanyi type. The type of an
agent is the private information that determines its beliefs about the others,
and in a decentralised partially observable Markov decision process at time $t$
that is exactly $h_t^i$. This identification is used again in
\cref{sec:extensive}, where the information set of an extensive-form game is
shown to be the same object once more, and in \cref{sec:bayesian}, where it
drives the correspondence between the Harsanyi transformation and the
Bayes-adaptive Markov decision process.

\begin{table}[t]
\centering
\caption{Exact reductions of multi-agent models to single-agent ones. Each
reduction preserves optimal value and maps solutions bijectively; each enlarges
the state space to a space of distributions and the action space to a space of
functions. The last column names the deep reinforcement learning methods that
implement the reduction approximately.}
\label{tab:reductions}
\small
\begin{tabular}{@{}L{2.5cm} L{2.6cm} L{2.5cm} L{2.0cm} L{3.0cm}@{}}
\toprule
\textsc{Source model} & \textsc{Reduced state} & \textsc{Reduced action} & \textsc{Target} & \textsc{Implemented by} \\
\midrule
POMDP & Belief $b_t \in \Simplex{\St}$ & Action $a_t$ & MDP & Recurrent and belief-conditioned policies \\
Partial history sharing & Common-information belief $\Pi_t$ & Prescription $\gamma_t$ & POMDP & Coordinator architectures \\
Dec-POMDP & Occupancy state $\xi_t$ & Joint decision rule $\delta_t$ & Deterministic MDP & Point-based oMDP solvers; BAD \\
2p zero-sum imperfect info & Public belief state $\beta_t$ & Policy prescription & Perfect-info game & ReBeL; Student of Games \\
Dec-POMDP, group common knowledge & Common knowledge of a subset & Prescription for that subset & Hierarchical POMDP & MACKRL \citep{dewitt2019multi} \\
Dec-POMDP at stage $t$ & Type distribution from $\xi_t$ & Joint type-action map & Bayesian game & Bayesian-game approximation \\
Latent MDP & Belief over $(s,\theta)$ & Action $a_t$ & BAMDP & Meta-RL with context encoders \\
Fixed-opponent SG & State $s_t$ & Action $a_i$ & MDP & Any single-agent method \\
Finite-automaton opponents & $(s_t, m_{-i})$ & Action $a_i$ & MDP & Opponent modelling \\
\bottomrule
\end{tabular}
\end{table}

\section{Graphical, factored and congestion games: structure in the payoff tensor}
\label{sec:graphical}

The models of the previous sections are specified by a payoff tensor with one
entry per joint action, so their description length is exponential in the number
of agents. Every practical multi-agent system avoids this, because real
interactions are local: a vehicle is affected by its neighbours and not by
traffic on another continent, a unit in a battle is affected by the units in its
range, a machine in a plant is affected by its immediate upstream and downstream
machines. Game theory formalises locality by placing the agents on a graph and
requiring each agent's payoff to depend only on its neighbours, which produces
graphical and polymatrix games; reinforcement learning formalises the same
locality as a factored Markov decision process and as a coordination graph. The
two constructions are the same, and the correspondence is the one that converts
an exponential maximisation into a message-passing computation.

The two readings differ in what the graph is for. A game theorist draws the
graph to make a representation succinct and then asks whether the succinctness
survives into the solution problem, which is the question whose answer is in
\cref{sec:complexity}. A reinforcement learning researcher draws the graph to
make an argmax tractable and then asks whether the restriction loses the optimum,
which is the question that \cref{sec:potential} answers through the
Individual-Global-Max condition. Both are asking what structure buys, and the
answers differ: succinct representation does not imply easy equilibrium, and
tractable argmax does imply restricted expressiveness.

\begin{nomen}{Structured-game vocabulary read from both sides.}{tab:nom-graphical}
Graphical game & Coordination graph; factored MARL &
  Payoffs depend on neighbours only \\
Neighbourhood $N(i)$ & Scope of a local payoff factor &
  Who can affect agent $i$ directly \\
Local payoff $\util_i(a_i, a_{N(i)})$ & Local utility term $f_e(a_e)$ &
  One summand of the joint value \\
Polymatrix game & Pairwise-decomposable joint reward &
  Sum of two-agent games on edges \\
Treewidth of the interaction graph & Treewidth of the coordination graph &
  What makes elimination cheap \\
Variable elimination & Max-plus, message passing, DCOP &
  Compute the argmax without enumerating \\
Congestion game & Resource-sharing MARL environment &
  Cost depends on how many choose it \\
Rosenthal potential & Potential of a potential game &
  A scalar whose local differences match payoffs \\
Wardrop equilibrium & Mean-field equilibrium of a routing game &
  Nonatomic agents equalise costs \\
Price of anarchy & Gap from selfish learners to a planner &
  What locality plus selfishness costs \\
Anonymous game & Permutation-invariant policy; mean-field &
  Only the count of each action matters \\
\end{nomen}

\begin{definition}[Graphical games, \citealp{kearns2001graphical}; polymatrix
games]
\label{def:graphical}
A graphical game is a normal-form game together with an undirected graph on
$\agents$ such that $\util_i(a)$ depends only on $a_i$ and on $a_{N(i)}$, where
$N(i)$ is the set of neighbours of $i$. It is a polymatrix game when in addition
the payoff decomposes as a sum over edges,
$\util_i(a) = \sum_{j \in N(i)} \util_{ij}(a_i, a_j)$, so that the game is
determined by one two-agent matrix per edge. The graphical class is the one
introduced by \citet{kearns2001graphical}; the polymatrix class is older and is
not theirs.
\end{definition}

\begin{rlreading}
A graphical game is a factored reward model and its reinforcement learning name
is a coordination graph. The neighbourhood $N(i)$ is the scope of the factors
that touch agent $i$, so an implementation stores one small tensor per
hyperedge instead of one enormous tensor over $\Ac$. Two savings follow and
they are independent. The first is memory: the model shrinks from
$\abs{\Ac}^{\nag}$ entries per agent to $\abs{\Ac}^{d+1}$, where $d$ is the
maximum degree. The second is compute: the joint $\argmax$ that a greedy policy
needs no longer requires enumerating $\prod_i \Ac_i$ and becomes message
passing, which is \cref{prop:coordination-graph}. The polymatrix case, in which
every factor has exactly two agents in scope, is the one in which a pairwise
value decomposition is exactly right rather than an approximation, and
\cref{ex:coordination-graph} and \cref{ex:polymatrix} quantify both savings.
\end{rlreading}

The description length of a graphical game is $\nag \cdot \abs{\Ac}^{d+1}$ where
$d$ is the maximum degree, against $\nag \cdot \abs{\Ac}^{\nag}$ for the general
case, and for a polymatrix game it is $\abs{E} \cdot \abs{\Ac}^2$. This is the
first of the two savings. The second is computational and is the one a
reinforcement learning practitioner uses every time a factored argmax is
required.

\begin{proposition}[Coordination graphs and variable elimination]
\label{prop:coordination-graph}
Let $Q_{\mathrm{tot}}(s, a) = \sum_{e \in E} f_e(s, a_e)$ be a joint
action-value that decomposes over the hyperedges of a graph with scopes $a_e$.
Then $\max_a Q_{\mathrm{tot}}(s,a)$ and a maximising joint action are computable
by variable elimination in time exponential in the induced width of the
elimination ordering and linear in $\nag$, rather than exponential in $\nag$.
When the graph is a tree the induced width is one and the computation is two
sweeps of message passing. This is the construction of
\citet{guestrin2001multiagent}. The max-plus algorithm of
\citet{kok2006collaborative} is the loopy counterpart: it is exact on trees,
where it coincides with the above, and on a graph with cycles it returns an
approximate maximiser rather than the exact one, which is the trade it makes for
being anytime and fully distributed. Its deep instance is the coordination graph
network of \citet{bohmer2020deep}.
\end{proposition}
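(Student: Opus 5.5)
The plan is the standard bucket-elimination argument for max-sum problems, with maximisers recovered by back-substitution. Fix $s$ and an elimination ordering $\sigma = (i_1,\dots,i_\nag)$ of the agents. Keep a working set $\mathcal{F}$ of factors, initially $\{f_e(s,\cdot)\}_{e\in E}$. At step $k$, remove from $\mathcal{F}$ every factor whose scope contains agent $i_k$, sum them into a factor $g_k$, and insert into $\mathcal{F}$ the new factor
\begin{align*}
\phi_k(a_{B_k}) = \max_{a_{i_k}} g_k(a_{i_k}, a_{B_k}),
\end{align*}
where $B_k$ is the set of other agents appearing in the removed scopes. Alongside $\phi_k$, store the best-response table $\beta_k(a_{B_k}) \in \argmax_{a_{i_k}} g_k$.

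Correctness rests on one invariant. After step $k$, the sum of the factors in $\mathcal{F}$, as a function of the not-yet-eliminated agents, equals $\max_{a_{i_1},\dots,a_{i_k}} Q_{\mathrm{tot}}(s,a)$. The induction step uses distributivity of $\max$ over $+$: factors not involving $i_k$ are constants for the maximisation over $a_{i_k}$, so they can be pulled out of it. This is the only algebraic fact needed, and it is the same decoupling used for \cref{eq:bellman-opt}. After all $\nag$ steps, $\mathcal{F}$ holds constants whose sum is $\max_a Q_{\mathrm{tot}}(s,a)$.

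A maximising joint action is recovered by back-substitution. Run the ordering in reverse and set $a_{i_k} = \beta_k(a_{B_k})$. Every agent in $B_k$ is eliminated after $i_k$, so it has already been assigned, and the assignment is well defined. A reverse induction shows that the assignment attains the invariant's maxima at every step, hence attains the maximum overall.

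For the complexity, the scope of $g_k$ is $\{i_k\}\cup B_k$. By definition of the induced graph of $\sigma$, $\abs{B_k}$ is at most the induced width $w$. Building $\phi_k$ therefore costs $O(\abs{\Ac_{\max}}^{w+1})$ times the number of factors absorbed. Each factor is absorbed exactly once, and each step creates exactly one new factor. The total cost is $O\big((\abs{E}+\nag)\,\abs{\Ac_{\max}}^{w+1}\big)$, which is linear in $\nag$ for bounded degree and exponential only in $w$.

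For a tree, take a leaves-first ordering, repeatedly eliminating a current leaf. Each eliminated leaf then has $\abs{B_k}\le 1$, so $w=1$. The forward elimination sweep and the backward assignment sweep are exactly the two message-passing passes (max-product in log domain). This is the construction of \citet{guestrin2001multiagent}, and on trees it coincides with max-plus.

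For the loopy max-plus claim, I would not attempt a positive theorem. I would exhibit a small counterexample, for instance a three-cycle of pairwise coordination factors with ties, on which the max-plus fixed point or the decoded action is not a global maximiser. This shows that "approximate rather than exact" cannot be strengthened. On cycles the messages double-count factors, so no invariant like the one above holds.

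The main obstacle is bookkeeping in the complexity claim. The bound must be stated in terms of the induced width of the chosen ordering, not the treewidth, and without this the claim "linear in $\nag$" is false. I must also ensure that the intermediate factors $\phi_k$ are counted among the scopes used to define $B_k$. The citations to \citet{kok2006collaborative} and \citet{bohmer2020deep} are attributions rather than claims needing proof.
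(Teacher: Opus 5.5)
\textbf{The variable-elimination part is sound.} Your argument is correct: the distributivity invariant, back-substitution, the per-bucket cost $\abs{\Ac_{\max}}^{w+1}$, and the leaves-first ordering on a tree all work. It is the construction of \citet{guestrin2001multiagent}, which the paper cites without proof; the paper only counts messages on a chain in \cref{ex:coordination-graph}.

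One tidy-up. ``Bounded degree'' is neither needed nor the right hypothesis, since a star has unbounded degree and width one. Merge factors with identical scopes. Every scope is then a clique of the primal graph. It is fixed by its earliest-eliminated member together with a subset of that member's at most $w$ later neighbours in the induced graph. So there are at most $\nag\,2^{w}$ scopes, and your bound becomes $O(\nag\,2^{w}\abs{\Ac_{\max}}^{w+1})$, linear in $\nag$ with no extra assumption.

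\textbf{The loopy counterexample would fail.} Suppose your three-cycle fails because of payoff ties. Then it does not isolate cycles. Two agents sharing the single factor $\Ind[a_1=a_2]$ also have tied max-plus beliefs, and independent tie-breaking can decode $(0,1)$. So your example would refute the ``exact on trees'' half just as well. Under the convention that makes trees exact (unique maximisers, or decoding each agent conditionally on neighbours already fixed), your coordination cycle is either excluded or decoded correctly.

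A tie-free three-cycle cannot work either. On a single cycle, the message returning to an agent after one lap is the max-plus linear map $m \mapsto M^\top \otimes m$, where $M$ is the max-plus product of the factor tables around the loop. At a fixed point the belief is maximised on critical nodes of $M$ (those on cycles attaining the maximal cycle mean $\lambda$), and it is constant along each critical cycle. A belief with a unique maximiser therefore forces a critical self-loop, $M_{xx}=\lambda\ge\max_{x'}M_{x'x'}$. That is, the decoded value begins an optimal joint action. This is the single-loop case of Weiss and Freeman's neighbourhood-optimality theorem for max-product, whose neighbourhood is the whole joint action space when the graph is one cycle. On one cycle, max-plus can fail only through non-convergence, or through ties that the payoffs themselves do not have.

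\textbf{A frustrated triangle produces that failure.} Take binary actions, put anti-coordination factors $\Ind[a_i\ne a_j]$ on all three edges, and add unary terms of magnitude below $\tfrac16$ chosen so that the maximiser is unique. Every diagonal entry of $M$ is then below $\tfrac52$, and both off-diagonal entries exceed $\tfrac52$. So the maximal cycle mean sits on the two-cycle, a pseudo-assignment that winds twice around the loop. Consequently every agent's belief is tied at every fixed point, and generic initial messages oscillate rather than converge. Neither can happen on a tree with a unique maximiser, where converged beliefs are exact max-marginals.

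If you mean the anytime variant, which evaluates each decoded action and keeps the best, you must also check that no action decoded along the run is optimal. In particular, the decode at zero messages is the unary argmax. That decode is already the maximiser whenever the unary terms point to a non-constant joint action.
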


\begin{example}[What the graph saves, in arithmetic]
\label{ex:coordination-graph}
Take $\nag$ agents with five actions each, arranged in a chain so that agent $i$
interacts only with agents $i-1$ and $i+1$. The joint action space has $5^\nag$
elements: $625$ for four agents, $390{,}625$ for eight, and about
$9.5 \times 10^{13}$ for twenty. Variable elimination along the chain passes one
message per edge, each of which is a table of $5 \times 5 = 25$ entries, for a
total of $(\nag - 1) \cdot 25$ operations: $75$, $175$ and $475$ respectively.
The saving is from exponential to linear in $\nag$ and it is exact, not
approximate, provided the value really does decompose over the chain. The
qualifier is the whole of the difficulty: a value function trained end to end
does not decompose unless the architecture forces it to, and forcing it
restricts the hypothesis class, which is the trade-off that
\cref{prop:monotone-igm} quantifies. A chain is the easy case; for a general
graph the cost of elimination is exponential in the induced width, so a densely
connected set of agents recovers the original difficulty, correctly, because a
densely connected set of agents really is a hard coordination problem. \exend
\end{example}

\begin{example}[Zero-sum on every edge, not zero-sum overall]
\label{ex:polymatrix}
Place three agents on a triangle and put rock-paper-scissors on each edge, so
that $\util_i(a) = \sum_{j \ne i} \util_{\mathrm{RPS}}(a_i, a_j)$. Each pairwise
game has value $0$ and the unique pairwise equilibrium is uniform. Every edge of this game is
itself zero-sum, so it is a pairwise constant-sum polymatrix game, which is the
easy subclass; the payoffs also sum to zero at every joint action, but here that
is a consequence of the edges rather than an independent property. The substance
of \citet{cai2011minmax} is that the two coincide: a game whose payoffs sum to a
constant globally, without any pair being in a constant-sum relation in
isolation, is payoff-equivalent to a pairwise constant-sum polymatrix game, so
the easy subclass is the whole class. Its representation needs three matrices of
nine entries, that is $27$ numbers, against the $27 \times 3 = 81$ numbers of
the explicit three-agent tensor. The structural point, which is the content of
\citet{cai2011minmax}, is that zero-sum polymatrix games retain the good
properties of two-player zero-sum games: the minmax theorem generalises to them,
their equilibria form a convex set, those equilibria are computable in
polynomial time, and no-regret learning converges to them. The same source notes
that unrestricted three-player zero-sum games are already \PPAD-complete, so the
pairwise-separable structure is essentially the furthest that tractability can
be pushed in the many-agent constant-sum setting. Recognising that an
application falls in this class is worth more than any algorithmic
refinement. \exend
\end{example}

Congestion games are the most important structured class in applications,
because resource sharing is the commonest form of multi-agent interaction in
engineering.

\begin{definition}[Congestion game; \citealp{rosenthal1973class}]
\label{def:congestion}
Let $\mathcal{E}$ be a finite set of resources, let each agent $i$ choose a
subset $a_i \subseteq \mathcal{E}$ from an allowed family, and let
$c_e : \Nat \to \Real$ be a cost function for resource $e$. The cost to agent
$i$ is $\sum_{e \in a_i} c_e(n_e(a))$, where $n_e(a)$ is the number of agents
using $e$. Every congestion game is an exact potential game in the sense of
\cref{def:potential-game}, with Rosenthal's potential
$\pot(a) = \sum_{e \in \mathcal{E}} \sum_{k=1}^{n_e(a)} c_e(k)$.
\end{definition}

\begin{rlreading}
A congestion game is the formal model of every multi-agent environment in which
agents compete for a shared resource whose quality degrades with use:
bandwidth, road capacity, a charging station, a processing queue, a piece of
map. The defining restriction is that an agent's cost depends on how many others
chose each resource and not on which others did, so the reward function is
anonymous in the sense used at the end of this section. The payoff of the
restriction is large and is stated by \cref{def:congestion} itself: every such
game has an exact potential, so independent learners that only ever take
improving steps are hill-climbing a single scalar and must stop. That is the
strongest convergence guarantee available anywhere in this monograph to agents
that do not coordinate. What it does not guarantee is that the resting point is
good, and \cref{ex:pigou} measures the shortfall.
\end{rlreading}

The potential guarantees that better-reply dynamics terminate at a pure
equilibrium, which is the strongest convergence statement available to
independent learners and is developed in \cref{sec:potential}. What the
potential does not guarantee is that the equilibrium is good, and the gap
between the equilibrium and the social optimum is the price of anarchy, treated
in full in \cref{sec:mechanism}. The two smallest instances are worth having
explicitly because they are the examples every practitioner should be able to
reconstruct.

\begin{example}[Pigou's network and a price of anarchy of four thirds]
\label{ex:pigou}
One unit of traffic travels from a source to a sink along two parallel links.
The upper link has constant cost $1$ regardless of load; the lower link has cost
equal to the fraction $x$ of traffic using it. Selfish traffic all moves to the
lower link, because for any $x < 1$ the lower link is cheaper, so the
equilibrium flow is $x = 1$ and every unit pays $1$, for a total cost of $1$.
A planner minimising the total cost $x \cdot x + (1-x) \cdot 1$ differentiates
to get $2x - 1 = 0$, so $x = 1/2$ and the total cost is
$\tfrac14 + \tfrac12 = 0.75$. The price of anarchy is $1/0.75 = 4/3$. The
reinforcement learning reading is that a population of independently trained
agents sharing a resource will overload it by exactly this factor, that no
amount of better optimisation fixes it because the equilibrium is exactly what
each agent wants, and that the two remedies are to change the reward, which is
the tolling construction of \cref{sec:mechanism}, or to centralise the decision,
which is the planner's solution. \exend
\end{example}

\begin{example}[Braess's paradox \citep{braess2005paradox}: adding capacity
makes everybody worse off]
\label{ex:braess}
A unit of traffic travels from $s$ to $t$. There are two routes, $s \to A \to t$
with costs $x$ then $1$, and $s \to B \to t$ with costs $1$ then $x$, where $x$
is the flow on the edge in question. By symmetry the equilibrium splits the
traffic in half and every unit pays $\tfrac12 + 1 = 1.5$. Now add a free link
from $A$ to $B$. The route $s \to A \to B \to t$ costs $x_{sA} + 0 + x_{Bt}$,
and taking it is a dominant choice for every unit, so all traffic uses it and
every unit pays $1 + 0 + 1 = 2$. Adding capacity raised the equilibrium cost by
a factor $4/3$ while the social optimum is unchanged. The lesson transfers
verbatim to multi-agent reinforcement learning: enlarging the action space of
selfish learners can lower the equilibrium welfare, so an ablation showing that
a richer action space hurts is not necessarily an optimisation failure and may
be the correct equilibrium of a different game. Diagnosing which requires
comparing against a centralised planner on the same model, which is the tenth
row of \cref{tab:diagnostics}. \exend
\end{example}

The last structured class worth naming here is the anonymous game, in which each
agent's payoff depends on its own action and on the number of other agents
playing each action, but not on their identities. Anonymity is the finite-$\nag$
version of the exchangeability assumption that produces the mean-field limit of
\cref{sec:mean-field}, and in reinforcement learning it is the structural
property that licenses a permutation-invariant architecture, in which the
other agents are summarised by a pooled embedding rather than concatenated. The
correspondence is exact: a permutation-invariant critic is well specified
precisely when the underlying game is anonymous, and using one when the game is
not anonymous introduces a representational error of the same kind as using a
monotone mixing network on a game that violates the Individual-Global-Max
condition.

\begin{table}[t]
\centering
\caption{Structured game classes, the reinforcement learning architecture that
exploits the structure, and what the structure buys. The last column records
whether the saving is in representation, in computation of the argmax, or in the
solution concept itself.}
\label{tab:graphical}
\small
\begin{tabular}{@{}L{2.7cm} L{3.3cm} L{3.4cm} L{3.0cm}@{}}
\toprule
\textsc{Class} & \textsc{Structure assumed} & \textsc{RL counterpart} & \textsc{What it buys} \\
\midrule
General normal form & None & Joint-action critic & Nothing; $\abs{\Ac}^\nag$ entries \\
Graphical game & Payoffs depend on $N(i)$ & Scoped critic factors & Representation: $\abs{\Ac}^{d+1}$ per agent \\
Polymatrix game & Payoffs sum over edges & Pairwise value decomposition & Representation; if zero-sum, a minmax theorem and polynomial-time equilibria \citep{cai2011minmax} \\
Coordination graph & Value sums over hyperedges & Max-plus, DCG & Argmax by elimination (\cref{prop:coordination-graph}) \\
Congestion game & Cost depends on load only & Shared-resource environment & Pure equilibrium exists; potential \citep{rosenthal1973class} \\
Anonymous game & Payoff depends on action counts & Permutation-invariant critic & Representation linear in $\nag$ \\
Mean-field limit & Exchangeable, $\nag \to \infty$ & Representative-agent training & Cost independent of $\nag$ (\cref{sec:mean-field}) \\
\bottomrule
\end{tabular}
\end{table}

\section{Extensive-form games, perfect recall, and information sets as histories}
\label{sec:extensive}

The extensive form represents a game as a tree in which nodes are decision
points, edges are actions, and the points that a player cannot distinguish are
grouped into information sets. It is the representation in which the entire
literature on imperfect-information game solving is written, including
counterfactual regret minimisation and the superhuman poker systems, and it is
the representation that reinforcement learning practitioners most often find
alien. The purpose of this section is to show that it is the familiar partially
observable sequential decision problem in different coordinates.

The two readings are the two ways of writing down the same trajectory
distribution. A game theorist writes the tree and groups its nodes into
information sets, so that a strategy is a labelling of information sets by
distributions over actions, and the quantity to be computed is a realisation
probability for each terminal node. A reinforcement learning researcher writes
a state, an observation kernel and a policy, so that a policy is a map from
histories to distributions over actions, and the quantity to be computed is a
visitation measure. Information set and action-observation history are the same
equivalence class of play, behavioural strategy and policy are the same object,
and realisation plan and occupancy measure are the same vector. The one real
difference is that the tree makes the recursive structure of regret explicit
while the state makes bootstrapping explicit, and the two algorithm families
follow the representation rather than the problem.

\begin{nomen}{Extensive-form vocabulary read from both sides.}{tab:nom-efg}
Node $h$ of the tree & Full history; state-trajectory prefix &
  A position including hidden information \\
Information set $I$ & Action-observation history $h^i_t$ &
  What the acting agent can distinguish \\
Chance node & Transition or observation kernel &
  Nature's move with a known distribution \\
Behavioural strategy $\pi_i(I)$ & Policy $\pi_i(\cdot \given h^i_t)$ &
  Local randomisation at each decision \\
Mixed strategy & Distribution over deterministic policies &
  Randomise once, over whole plans \\
Perfect recall & The policy sees the full own history &
  You never forget what you did \\
Realisation plan & Occupancy measure; visitation distribution &
  Probability of reaching each sequence \\
Sequence form & Bellman flow polytope &
  Linear constraints instead of a simplex \\
Counterfactual reach $\pi^{-i}(h)$ & Discounted visitation of others &
  How often the rest of the world sends you here \\
Counterfactual value & Advantage weighted by reach &
  Local contribution to global regret \\
Subgame & Sub-MDP rooted at a state &
  A part solvable on its own \\
Sequential equilibrium & On-policy belief filter plus greedy play &
  Beliefs consistent with the policy \\
Extensive-form correlated equilibrium & Shared seed revealed as play unfolds &
  A mediator who recommends step by step \\
\end{nomen}

\begin{definition}[Extensive-form game]
\label{def:efg}
An extensive-form game consists of a finite rooted tree with node set
$\mathcal{V}$, terminal nodes $Z \subseteq \mathcal{V}$, a player function
assigning each non-terminal
node to an agent in $\agents \cup \{c\}$ where $c$ is chance, a probability
distribution over the edges at each chance node, payoff functions
$\util_i : Z \to \Real$, and for each agent $i$ a partition $\mathcal{I}_i$ of
the nodes assigned to $i$ into information sets, such that all nodes in an
information set $I$ have the same set of available actions $\Ac(I)$. A
behavioural strategy of agent $i$ is a map
$\pi_i : \mathcal{I}_i \to \Simplex{\Ac(\cdot)}$ assigning a distribution over
actions to each of its information sets. A mixed strategy is a distribution over
pure strategies, a pure strategy being a deterministic choice at every
information set of $i$.
\end{definition}

\begin{rlreading}
Unroll a partially observable environment into the tree of all its possible
trajectories and \cref{def:efg} is what you have written down. A node is a full
history including the hidden state; a chance node is one call to the transition
or observation kernel; an information set is the set of nodes the acting agent
cannot tell apart, which is exactly the set of nodes sharing an
action-observation history, so an information set is a recurrent hidden state.
A behavioural strategy is a policy: one distribution over actions per
information set, which is one forward pass per observed history. A mixed
strategy is something a policy gradient method never manipulates and a
population-based method always does, namely a distribution over whole
deterministic policies. \Cref{prop:infoset-aoh} states the correspondence
formally and \cref{thm:kuhn} says when the two strategy notions are
interchangeable.
\end{rlreading}

\begin{example}[The smallest tree, with and without an observation]
\label{ex:efg-tuple}
Chance flips a fair coin; the agent then picks $L$ or $R$; it scores $1$ if its
pick matches the coin and $0$ otherwise. The tree has one chance node at the
root, two decision nodes below it and four terminal nodes.

Write it as \cref{def:efg} twice. In the blind version the agent does not see
the coin, so both decision nodes lie in one information set, the agent has a
single behavioural strategy parameter $p = \pi(L)$, and its expected payoff is
$0.5 p + 0.5 (1-p) = 0.5$ for every $p$. The whole strategy space is worth the
same, so there is nothing to learn. In the sighted version the agent sees the
coin, the two nodes lie in two singleton information sets, the agent has two
parameters, and matching the coin at each gives an expected payoff of $1$.

The difference, $1 - 0.5 = 0.5$, is the value of information, and it is the same
$0.5$ that \cref{ex:cheap-talk} obtains by letting one agent tell another what
the state is. That is not a coincidence: a communication channel is a device for
splitting an information set, and splitting information sets is the only thing
that raises the value of a one-agent tree. Read as reinforcement learning, the
blind version is a feedforward policy on an unobservable state and the sighted
version is the same policy with the state appended to its input, and the
$0.5$ is the return gap that no optimiser can close on the first. \exend
\end{example}

\begin{definition}[Perfect recall]
\label{def:perfect-recall}
Agent $i$ has perfect recall when, for every information set $I \in
\mathcal{I}_i$, all nodes in $I$ induce the same sequence of $i$'s own past
information sets and actions. A game has perfect recall when every agent does.
\end{definition}

\begin{rlreading}
Perfect recall is a property of the architecture, not of the task. A policy that
conditions on its full action-observation history has it by construction, so any
recurrent network with enough capacity, or any transformer whose context window
covers the episode, is a perfect-recall agent. A feedforward policy on the
current observation does not have it, because two nodes that differ only in what
the agent itself did earlier land in the same information set. A truncated
context window does not have it either, for the same reason, and the truncation
point is exactly where recall is lost. \Cref{thm:kuhn} says what is at stake:
under perfect recall, optimising a policy loses nothing relative to optimising a
distribution over policies, and without it the policy class is strictly weaker,
so local randomisation becomes necessary. \Cref{ex:perfect-recall} exhibits the
failure on a four-cell corridor.
\end{rlreading}

\begin{theorem}[\citealp{kuhn1953extensive}]
\label{thm:kuhn}
In an extensive-form game in which agent $i$ has perfect recall, for every mixed
strategy of agent $i$ there is a behavioural strategy that induces the same
distribution over terminal nodes against every profile of the other agents, and
conversely. Without perfect recall the equivalence fails in one direction only: a
behavioural strategy always admits an outcome-equivalent mixed strategy, while a
mixed strategy need not admit an outcome-equivalent behavioural one, and perfect
recall is a sufficient condition for it to do so. The one-directional
statement presumes that no information set is met twice along a play, that is,
that the agent is not absent-minded.
\end{theorem}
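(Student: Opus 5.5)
The plan is to work with realisation probabilities of terminal nodes. Fix agent $i$ and a terminal node $z$. Along the root-to-$z$ path, the probability of reaching $z$ factorises into three parts: the chance probabilities, the other agents' contribution, and agent $i$'s contribution. Only the last part depends on $i$'s strategy. So two strategies of $i$ are outcome-equivalent against every profile of the others exactly when their $i$-contributions agree at every $z$.

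For a behavioural strategy, the $i$-contribution is $\prod_{(I,a)} \pi_i(a \given I)$, taken over $i$'s information-set–action pairs on the path. For a mixed strategy $\mu$, it is $\mu(\{\text{pure }\rho : \rho(I)=a \text{ for all such pairs}\})$.

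\textbf{Behavioural to mixed.} This direction holds in general. Define $\mu(\rho) = \prod_{I \in \mathcal{I}_i} \pi_i(\rho(I) \given I)$, which makes the choices at distinct information sets independent. The $\mu$-probability that $\rho$ agrees with the required actions on a path is then the product of the corresponding $\pi_i$ terms. This needs each information set to occur at most once on the path, which is precisely the non-absent-mindedness proviso. If it fails, the behavioural product contains a repeated factor $\pi_i(a\given I)^2$ that no independent single draw reproduces, so the proviso is exactly what is needed. No recall assumption enters here.

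\textbf{Mixed to behavioural under perfect recall.} Perfect recall (\cref{def:perfect-recall}) gives every $I \in \mathcal{I}_i$ a well-defined own-history sequence $\sigma(I)$ of prior (information set, action) pairs. For a sequence $\sigma$, let $\mu(\sigma)$ be the mass of pure strategies consistent with it. Define
\[
\pi_i(a \given I) = \frac{\mu(\sigma(I)\cdot(I,a))}{\mu(\sigma(I))}
\]
when the denominator is positive, and arbitrary otherwise.

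Along the path to $z$, the consecutive $i$-information sets satisfy $\sigma(I_{k+1}) = \sigma(I_k)\cdot(I_k,a_k)$; this is where perfect recall does the work. The product of the $\pi_i$ ratios therefore telescopes to $\mu(\sigma_z)$, which is the mixed strategy's $i$-contribution. The main obstacle I expect is the zero-denominator case. If some prefix has $\mu$-mass zero, then both sides vanish: the mixed side is monotone in the prefix, and the behavioural product contains the first vanishing ratio, or a factor of zero from it. So the arbitrary choices are harmless.

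\textbf{Failure without perfect recall.} A counterexample shows mixed-to-behavioural can fail. Take the forgetful driver with two successive binary choices $X$ then $Y$ (with $Y$ chosen at either node following $X$), where the second information set merges the two continuations and forgets the first choice. The mixed strategy $\tfrac12(L,L)+\tfrac12(R,R)$ puts mass only on the two outcomes $LL$ and $RR$. Any behavioural strategy making both positive, $pq>0$ and $(1-p)(1-q)>0$, also gives $LR$ probability $p(1-q)>0$. This confirms the one-directional statement.
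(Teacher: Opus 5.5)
The paper gives no proof of \cref{thm:kuhn}; it states the result with a citation to \citet{kuhn1953extensive}, so there is no internal argument to compare yours with. Your proof is the standard realisation-plan argument, and it is correct. Your $\mu(\sigma)$ is the realisation weight of the sequence $\sigma$ in the sequence form of \citet{vonstengel1996efficient}, and your behavioural strategy is the ratio of consecutive realisation weights. Your telescoping identity is the sequence-form constraint. The argument therefore also makes concrete the paper's later remark that, in the one-player case, realisation plans are occupancy measures and the sequence-form constraints are the Bellman flow constraints.

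Three small points would tighten it:
\begin{enumerate}
\item The clause \emph{and conversely} in the perfect-recall statement comes from your behavioural-to-mixed construction, which you prove under the no-absent-mindedness proviso. Say explicitly that perfect recall in the sense of \cref{def:perfect-recall} excludes absent-mindedness: two nodes of one information set on a single play would have own histories of different lengths.
\item In your opening paragraph you only use the \emph{if} half of \emph{exactly when}. The \emph{only if} half can fail when a chance edge has probability zero, because some terminal nodes are then unreachable whatever anyone does.
\item Your claim that the proviso is \emph{exactly what is needed} only shows that the product construction breaks. The genuine obstruction is a play that crosses the same information set twice with different actions. Every pure strategy, hence every mixed strategy, gives such a play probability zero, while a behavioural strategy with interior probabilities does not, as in the absent-minded driver.
\end{enumerate}
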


\Cref{thm:kuhn} is the reason reinforcement learning can be applied to
extensive-form games at all. A behavioural strategy is precisely a policy, a
map from what the agent currently knows to a distribution over actions, and a
mixed strategy is a distribution over deterministic policies, which is the
object a population-based or evolutionary method manipulates. Perfect recall is
the condition under which optimising a policy is without loss relative to
optimising a distribution over policies, and it is assumed, usually silently,
whenever a recurrent policy is trained on a partially observable task. The
failure mode is concrete: a feedforward policy on a partially observable task
is an imperfect-recall agent, the equivalence of \cref{thm:kuhn} does not apply
to it, and the optimal memoryless policy may require randomisation and may be
strictly worse than the optimal history-dependent one
\citep{singh1994learning}.

\begin{example}[Perfect recall is a property of the architecture, not the task]
\label{ex:perfect-recall}
Consider a corridor of four cells in which the agent observes only whether a
wall is immediately to its left and to its right, so that the two middle cells
produce the same observation. A policy with memory distinguishes them by
counting steps and has perfect recall; a feedforward policy on the current
observation does not, because the two middle cells lie in one information set
even though the agent's own past actions differ between them. For the
memoryless agent the optimal policy may require randomisation, since moving
right deterministically in the ambiguous observation can trap it and moving left
deterministically can too, and \citet{singh1994learning} show exactly this. The
game-theoretic statement and the reinforcement learning statement are the same
statement: dropping perfect recall breaks \cref{thm:kuhn}, so the behavioural
policy class is strictly weaker than the mixed one, and the gap is filled by
local randomisation. The practical corollary is that the choice between a
feedforward and a recurrent policy is a choice of game, not of function
approximator. \exend
\end{example}

The correspondence with partial observability is now immediate, and we state it
in the direction that is useful.

\begin{proposition}[Information sets are action-observation histories]
\label{prop:infoset-aoh}
Let $P$ be a finite-horizon partially observable Markov decision process.
Unrolling $P$ into its tree of state-action-observation trajectories, with
chance nodes implementing $b_0$, $\Tr$ and $\Om$, yields a one-player
extensive-form game with perfect recall in which two nodes lie in the same
information set if and only if they share the same action-observation history
$h_t$. Conversely, let $E$ be a one-player extensive-form game with perfect
recall; then $E$ is a finite-horizon partially observable Markov decision
process whose states are the nodes of the tree, whose observations are the
information sets, and whose transition kernel is given by the chance
distributions. Under this correspondence, behavioural strategies are policies
$\pi : \Hi \to \Simplex{\Ac}$, and the expected payoff of a behavioural strategy
equals the expected return of the corresponding policy.
\end{proposition}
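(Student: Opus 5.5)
The plan is to prove the two directions separately by explicit construction. In each direction I then verify payoff equivalence through a single identity: the reach probability of a terminal node equals the probability of the corresponding trajectory. For the forward direction, I would unroll $P$ over its horizon $T$ into a tree. Its levels alternate as follows: a chance node drawing $s_0\sim b_0$; a chance node drawing $o_0$, using an $\Om$ with a dummy initial action as needed; a decision node of the agent choosing $a_0$; a chance node drawing $s_1\sim\Tr(\cdot\given s_0,a_0)$; a chance node drawing $o_1\sim\Om(\cdot\given s_1,a_0)$; and so on. A leaf is a full trajectory $(s_0,o_0,a_0,\dots,s_T,o_T)$ with payoff $\sum_{t<T}\gamma^t\Rew(s_t,a_t,s_{t+1})$.

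I then define the information partition by declaring two decision nodes equivalent exactly when their projections onto $(o_0,a_0,\dots,a_{t-1},o_t)$ coincide. This makes the "if and only if" of the statement true by construction. The two conditions of \cref{def:efg} then need checking. Equal action sets hold trivially, since every node offers $\Ac$. For perfect recall in the sense of \cref{def:perfect-recall}, the sequence of the agent's own earlier information sets and actions along the path to a node is read off the prefixes of $h_t$. It is therefore common to all nodes sharing $h_t$.

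Payoff equivalence follows because, under a behavioural strategy $\pi$, the reach probability of a leaf factorises as a product of chance probabilities $b_0(s_0)\prod\Tr\,\Om$ and action probabilities $\prod_t\pi(a_t\given h_t)$. This is precisely the law of the trajectory generated by the policy $\pi:\Hi\to\Simplex{\Ac}$ in $P$. Summing the payoff against this common measure gives equal expected payoff and expected return.

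For the converse I would build the POMDP from $E$. States are the non-chance nodes of $E$ together with an absorbing terminal state. Chance nodes are eliminated by composition: from a decision node $v$ under action $a$, the successor distribution $\Tr(\cdot\given v,a)$ is obtained by multiplying chance probabilities along chance-only paths to the next decision or terminal node, and $b_0$ is obtained likewise from the root. The observation kernel is deterministic and emits the information set containing the new node, or a terminal symbol. To fit the uniform action set of \cref{def:pomdp}, I take $\Ac=\bigcup_I\Ac(I)$ and map unavailable actions to a fixed available one. This is harmless because a policy can be restricted to $\Ac(I)$ without loss. Rewards are zero except on entering a leaf $z$ at depth $d$, where I deposit $\gamma^{-d}\util(z)$. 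The discounted return then equals $\util(z)$ exactly, which respects $\gamma<1$; alternatively one may note that the finite horizon makes $\gamma=1$ admissible. Payoff equivalence is the same reach-probability factorisation as before.

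The step I expect to carry the real content is identifying policies on $\Hi$ with behavioural strategies on $\mathcal{I}$ in the converse, since this is where perfect recall is used. One direction is immediate: the last observation in any reachable action-observation history is the current information set, so a behavioural strategy induces a policy. For the other direction I need that a history-dependent policy loses nothing relative to one depending only on the current information set. Perfect recall gives this: every node of $I$ has the same sequence of the agent's own past information sets and actions, so each information set $I$ is reached by exactly one action-observation history, and $h\mapsto I$ is a bijection on reachable histories. Hence every policy on $\Hi$ is a function of $I$ on the reachable set, and the correspondence is a bijection up to behaviour at unreachable histories. I would close by remarking that without perfect recall this map fails to be injective, which is the phenomenon of \cref{ex:perfect-recall}.
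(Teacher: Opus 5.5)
The paper states \cref{prop:infoset-aoh} without proof, so there is no competing argument to compare against. Your proposal supplies the natural argument, and its structure is right. In the forward direction, three steps are all that is needed: defining the partition by $h_t$, reading perfect recall off the prefixes of $h_t$, and identifying the reach probability of a leaf with the law of the corresponding trajectory. In the converse, composing chance nodes and emitting the information set as a deterministic observation are correct, and you correctly locate the use of perfect recall in the injectivity of $h\mapsto I$. Two details of the converse need repair.

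First, you pay $\gamma^{-d}\util(z)$ only on a transition \emph{into} a leaf. But your $b_0$, obtained by composing chance probabilities from the root, can put mass on a leaf reached before any decision, for instance when a root chance move sometimes ends the game. The process then starts at that leaf, no transition into it ever occurs, and its payoff is missing from the return, so the payoff identity fails for such games. Instead, pay $\gamma^{-t_z}\util(z)$ on the transition from $z$ to the absorbing state, where $t_z$ is the number of decision nodes on the path to $z$, which is the time at which $z$ is occupied. This treats every leaf uniformly at the cost of one extra step of horizon. It also makes your \emph{depth} $d$ precise: once chance nodes have been composed away it must count decisions, not tree levels. The construction also requires choosing $\gamma>0$.

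Second, sending an unavailable action at $I$ to a fixed available action $\bar a_I$ changes the transition but not the recorded history. A policy that plays an unavailable $a$ at a history $h$ ending in $I$ generates continuations $(h,a,I',\dots)$. These reach exactly the information sets reached by $(h,\bar a_I,I',\dots)$, so on the histories reachable in your POMDP the map $h\mapsto I$ is not injective, even though $E$ has perfect recall. The bijection, and with it the literal identification of policies with behavioural strategies, holds only on the class of policies with $\supp\pi(\cdot\given h)\subseteq\Ac(I)$, where $I$ is the last observation of $h$. State that restriction as part of the correspondence rather than calling it \emph{without loss}. Showing that an arbitrary policy is outcome-equivalent to one in that class requires merging the continuation play after $a$ and after $\bar a_I$ with reach-probability weights. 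That is the argument behind \cref{thm:kuhn}, not a consequence of your bijection.
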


For $\nag \ge 2$ the same unrolling relates extensive-form games to partially
observable stochastic games, with one caveat that is a genuine difference and
not a notational one. A partially observable stochastic game as defined in
\cref{def:posg} has simultaneous moves at every step, whereas an extensive-form
game has a single acting agent at each node. Each representation can simulate
the other. Simultaneous moves are encoded in the extensive form by letting the
agents move in an arbitrary fixed order while placing all nodes that differ only
in the earlier agents' unobserved choices into a single information set; turn
taking is encoded in a partially observable stochastic game by including the
identity of the agent to move in the state and giving the others a no-op action.
The encodings are value-preserving, so the model classes are equivalent in
expressive power, and they differ in which quantities are natural to compute.
The extensive form makes the tree structure and therefore regret decomposition
explicit; the partially observable stochastic game makes the state and therefore
bootstrapping explicit. Algorithms follow the representation: counterfactual
regret minimisation traverses the tree, temporal-difference methods bootstrap on
the state.

The smallest poker game is the right place to make all of this concrete, and we
work it out completely because every quantity in it can be checked by hand or by
a short program.

\begin{example}[Kuhn poker, counted and solved]
\label{ex:kuhn}
Three cards, a jack, a queen and a king, are shuffled and one is dealt to each
of two agents; the third is unseen. Each agent antes one chip. The first agent
checks or bets one chip. If it checks, the second agent checks, ending the hand
with a showdown for the antes, or bets, after which the first agent folds,
conceding the antes, or calls, producing a showdown for two chips each. If the
first agent bets, the second folds or calls.

Count the objects. The first agent has three information sets at the root, one
per card, and three more after a check followed by a bet, so six in all; the
second agent has three after a check and three after a bet, so six; twelve in
total. A pure strategy of either agent picks one of two actions at each of its
six information sets, so each has $2^6 = 64$ pure strategies and the normal form
is a $64 \times 64$ matrix. The sequence form instead lists the realisation
weight of each sequence of own actions: the empty sequence plus four per card
for each agent, so thirteen variables per agent. The reduction from $64$ to
$13$ is the whole point of the sequence form, and the same reduction in a real
poker variant is the difference between an impossible and a routine linear
program.

Solve it. The sequence-form linear program returns a value of $-1/18 \approx
-0.0556$ chips per hand to the first agent, so the game favours the second
agent slightly. Equilibrium play is not unique: there is a one-parameter family
in which the first agent bluffs with the jack at a rate that is tied, by the
indifference conditions, to the rate at which it bets the king.

Read it as reinforcement learning. The state is the pair of cards plus the
betting history, the observation is the agent's own card plus the public betting
history, the information set is exactly the action-observation history, and a
behavioural strategy is a policy conditioned on that history. The public belief
state of \cref{def:pbs} is the betting history together with the two posterior
distributions over cards, which is \cref{ex:pbs-poker}. The value $-1/18$ is the
number against which exploitability is measured: a trained agent is evaluated
not by its return against its training partner but by the gap between $-1/18$
and its return against a best response, which is computable exactly here by
solving a one-player Markov decision process against the frozen policy. \exend
\end{example}

Three further correspondences complete the picture. The first concerns
representation size. Every extensive-form game has an equivalent normal form
whose action sets are the sets of pure strategies, and these are exponentially
large in the number of information sets, so the normal form is not a usable
computational representation. \citet{koller1992complexity} show that the equilibrium of a two-player zero-sum
extensive-form game with perfect recall is computable in time polynomial in the
tree, by a linear program with exponentially many constraints solved through the
ellipsoid method. The sequence form, arrived at independently by several authors and developed
for computation by \citet{vonstengel1996efficient}, replaces
the simplex of mixed strategies by the polytope of realisation plans, whose
dimension is linear in the number of sequences, and thereby reduces the same
problem to a linear program whose size is linear in the tree;
\citet{koller1996efficient} extend the device to the general-sum case, where the
equilibrium conditions become a linear complementarity problem solved by Lemke's
algorithm rather than a linear program. The reinforcement learning reading is
that in the one-player case the realisation plan is literally the occupancy
measure, the state-action visitation distribution whose use converts policy
optimisation into a linear program over the polytope of admissible occupancies,
and the Bellman flow constraints are the sequence-form constraints. With two
players the construction becomes a saddle point over a pair of such polytopes,
so the correspondence is structural rather than an identity.

The second concerns the solution concepts. Nash equilibrium applied to the
normal form of an extensive-form game permits incredible threats, and the
refinement that forbids them is subgame perfection, which under perfect
information is computed by backward induction and is therefore literally dynamic
programming. Under imperfect information proper subgames are rare, since a
subgame must begin at a node whose information set is a singleton and must
contain entire information sets, and the refinements that replace subgame
perfection are sequential equilibrium \citep{kreps1982sequential} and the
perfect Bayesian equilibrium,
both of which require a belief system assigning a distribution over nodes within
each information set together with a consistency requirement linking beliefs to
strategies. The reinforcement learning counterpart of a belief system is the
belief state of \cref{thm:belief-mdp}, and the consistency requirement is the
statement that the belief is the Bayes posterior under the policy being
evaluated, which is exactly what an on-policy belief filter computes and what an
off-policy one does not. This is the formal reason why belief-conditioned
policies trained off-policy can be inconsistent in a way that has no single-agent
analogue, and it is the problem that off-belief learning \citep{hu2021off} is
designed to remove by fixing the belief to one induced by a reference policy.

Correlation in the sequential setting needs its own concept, because a mediator
who must reveal recommendations as play unfolds is weaker than one who reveals
everything at the start. The extensive-form correlated equilibrium is defined by
requiring that no agent gain by deviating at any information set, given the
recommendation received there and the posterior it induces, and the set it
defines sits between the Nash and the normal-form correlated equilibria in a
way that is specific to the tree. For reinforcement learning the relevant
reading is that a shared random seed sampled once before an episode implements
a normal-form correlated equilibrium, while a shared signal broadcast at each
step implements the extensive-form notion, and these are different objects with
different attainability properties; a system whose agents share a seed at reset
and nothing thereafter is in the first category.

The third concerns regret. Counterfactual regret minimisation
\citep{zinkevich2007regret} decomposes the overall external regret of a
behavioural strategy into a sum of local regrets at information sets, weighted
by counterfactual reach probabilities, and minimising each local regret with a
no-regret rule such as regret matching drives the overall regret to zero. In a
two-player zero-sum game the average strategy therefore converges to a Nash
equilibrium at rate $O(1/\sqrt{T})$. The decomposition theorem is the
game-theoretic analogue of the performance difference lemma of single-agent
reinforcement learning \citep{kakade2002approximately}, which likewise expresses
a global value difference as a discounted sum of local advantages weighted by
visitation probabilities, and the counterfactual reach probability plays the
role of the discounted state visitation measure. Recognising this makes the
deep variants legible: Deep CFR \citep{brown2019deep} is approximate
counterfactual regret minimisation with function approximation of the regret
table, neural fictitious self-play \citep{heinrich2016deep} is approximate
fictitious play with a supervised average-policy network, and regularised Nash
dynamics \citep{perolat2021poincare,perolat2022mastering} replaces the regret
machinery with a continuous-time dynamics whose convergence comes from an
entropy regulariser, which is the same device that makes mirror descent converge
in single-agent policy optimisation.

\begin{example}[Regret matching, three rounds by hand]
\label{ex:regret-matching}
Regret matching plays each action with probability proportional to its
accumulated positive regret, and uniformly when all regrets are non-positive.
Take rock-paper-scissors against an opponent who always plays Rock, with
payoffs $+1$ for a win, $-1$ for a loss and $0$ for a draw.

Round one. All regrets are zero, so the agent plays uniformly. The payoffs of
the three actions against Rock are $(0, 1, -1)$ and the realised expected payoff
is $\tfrac13(0 + 1 - 1) = 0$, so the regrets become $(0, 1, -1)$.

Round two. The positive part of the regret vector is $(0, 1, 0)$, so the agent
plays Paper with probability one. Its payoff is $1$, and the regrets update by
$(0 - 1,\, 1 - 1,\, -1 - 1) = (-1, 0, -2)$, giving cumulative regrets
$(-1, 1, -3)$.

Round three. The positive part is again $(0, 1, 0)$, so the agent plays Paper.
After a thousand rounds the average strategy places weight $0.9993$ on Paper,
which is the best response, and the external regret per round has gone to zero.

Now the same rule in self play, both agents running regret matching from a
slightly perturbed start. The iterates cycle, as \cref{sec:evolutionary}
predicts, and the time averages converge: after twenty thousand rounds the
average strategy is $(0.334, 0.329, 0.337)$, close to uniform, and its
exploitability is $0.014$ and falling at the $O(1/\sqrt{T})$ rate. The two
halves of this example are the two halves of the practical advice: against a
fixed opponent the iterate is what you want, and in self play the average is,
and reporting the wrong one is the methodological error recorded in
\cref{sec:negative}. \exend
\end{example}

\begin{table}[t]
\centering
\caption{Extensive-form vocabulary and its reinforcement learning translation.}
\label{tab:efg-dictionary}
\small
\begin{tabular}{@{}L{3.6cm} L{5.4cm} L{3.9cm}@{}}
\toprule
\textsc{Extensive form} & \textsc{Reinforcement learning} & \textsc{Status} \\
\midrule
Node of the tree & State-history pair; trajectory prefix & Definitional \\
Information set $I \in \mathcal{I}_i$ & Action-observation history $h_t^i$ & Exact under perfect recall \\
Chance node & Stochastic transition or observation & Definitional \\
Behavioural strategy & Policy $\pi_i : \Hi_i \to \Simplex{\Ac_i}$ & Definitional \\
Mixed strategy & Distribution over deterministic policies & Equivalent by \cref{thm:kuhn} \\
Perfect recall & Policy conditions on full own history & Definitional \\
Imperfect recall & Memoryless or truncated-context policy & Definitional \\
Realisation plan; sequence form & Occupancy measure; Bellman flow polytope & Exact for one player \\
Subgame & Sub-MDP rooted at a state & Definitional \\
Subgame perfection & Bellman principle of optimality & Definitional \\
Backward induction & Value iteration on a tree & Definitional \\
Belief system of a sequential equilibrium & On-policy belief filter & Exact \\
Extensive-form correlated equilibrium & Signal broadcast at each step & Definitional \\
Normal-form correlated equilibrium & Shared seed sampled once at reset & Definitional \\
Counterfactual reach probability & Discounted state visitation measure & Exact \\
CFR regret decomposition & Performance difference lemma & Structural \\
Average strategy of CFR & Policy averaging; target network of NFSP & Exact in zero-sum \\
\bottomrule
\end{tabular}
\end{table}

\section{Incomplete information: Harsanyi types and Bayes-adaptive learning}
\label{sec:bayesian}

Partial observability concerns ignorance about the state. Incomplete information
concerns ignorance about the game: who the other agents are, what they want, and
what they know. Game theory handles it by the device of
\citet{harsanyi1967games}, which converts ignorance about the game into
ignorance about a state variable, and reinforcement learning handles it by the
device of Bayesian reinforcement learning, which converts ignorance about the
model into ignorance about a state variable. They are the same device.

The two readings are almost interchangeable once the device is applied, and the
only thing that differs is what the unknown is taken to be. A game theorist
calls the unknown a type and attaches it to an agent, so that uncertainty is
about preferences and information. A reinforcement learning researcher calls the
unknown a task parameter and attaches it to the environment, so that uncertainty
is about dynamics and rewards. Both then put a prior on it, fold it into the
state, and solve the enlarged problem. The one asymmetry is that with several
agents a type is something an agent may wish to conceal, and concealment has no
single-agent counterpart at all, which is where signalling begins.

\begin{nomen}{Incomplete-information vocabulary read from both sides.}{tab:nom-bayesian}
Type $\theta_i$ & Latent task parameter; opponent identity &
  Private information determining preferences \\
Type space $\Th_i$ & Task family; model class &
  The range of what the unknown can be \\
Common prior $p$ & Task distribution; meta-training distribution &
  The shared belief before anything is seen \\
Harsanyi transformation & State augmentation with the latent parameter &
  Make the unknown part of the state \\
Interim payoff & Value conditional on the posterior &
  After you learn your own type \\
Ex-ante payoff & Expected return over the task distribution &
  Before anything is drawn \\
Bayes-Nash equilibrium & Fixed point over type-conditioned policies &
  Best response at every type \\
Bayes-optimal policy & Solution of the BAMDP &
  The $\nag = 1$ case of the above \\
Value of information & Exploration bonus; information gain &
  Why a costly action can be worth taking \\
Signalling & No single-agent counterpart &
  Acting to change what others believe \\
Belief hierarchy & Nested opponent models &
  Beliefs about beliefs, without end \\
Common knowledge & The meet of the agents' partitions &
  What conditioning everyone can agree on \\
Zero-shot coordination & Generalisation to an unseen partner &
  Ex-ante payoff over a partner prior \\
\end{nomen}

\begin{definition}[Bayesian game]
\label{def:bayesian-game}
A Bayesian game is a tuple
$(\agents, \{\Ac_i\}, \{\Th_i\}, p, \{\util_i\})$ in which $\Th_i$ is the finite
type set of agent $i$, $p \in \Simplex{\prod_i \Th_i}$ is a common prior over
type profiles, and $\util_i : \Ac \times \Th \to \Real$ is the payoff of agent
$i$. Each agent observes its own type $\theta_i$ and nothing else, so a strategy
is a map $\pi_i : \Th_i \to \Simplex{\Ac_i}$.
\end{definition}

\begin{rlreading}
A Bayesian game is a task distribution. The type profile $\theta$ is the task
parameter drawn at reset, the common prior $p$ is the sampler that draws it, and
$\util_i(a, \theta)$ is a reward function that depends on that parameter. The
one feature with no single-agent counterpart is that the draw is revealed
asymmetrically: agent $i$ is told $\theta_i$ and nothing else, so the strategy
signature $\pi_i : \Th_i \to \Simplex{\Ac_i}$ is a policy conditioned on a
private context vector. In an implementation $\theta_i$ is whatever is in head
$i$'s observation at reset and not in anybody else's: its own capability, its
own goal, its own system prompt. Reading a multi-agent problem as a Bayesian
game is therefore the act of asking which parts of the reset state are private,
and the answer determines whether an agent can be trained against a fixed
partner or must be trained against a distribution of them.
\end{rlreading}

\begin{definition}[Bayes-Nash equilibrium]
\label{def:bne}
A profile $\pi^\star$ is a Bayes-Nash equilibrium when for every agent $i$ and
every type $\theta_i$ with $p(\theta_i) > 0$,
\begin{align}
\pi^\star_i(\cdot \given \theta_i) \in
\argmax_{\pi_i \in \Simplex{\Ac_i}} \;
\Exp_{\theta_{-i} \sim p(\cdot \given \theta_i)}\;
\Exp_{a_{-i} \sim \pi^\star_{-i}(\cdot \given \theta_{-i})}\;
\Exp_{a_i \sim \pi_i} \big[\util_i(a, \theta)\big].
\label{eq:bne}
\end{align}
\end{definition}

\begin{rlreading}
Compare \cref{eq:bne} with \cref{def:nash}. The fixed-point condition is the
same, with one extra layer of conditioning: the policy of head $i$ must be
greedy not once but once per value of its private context, against a posterior
over the others' contexts obtained by conditioning the prior on its own. So a
Bayes-Nash equilibrium is what a context-conditioned joint policy converges to
when every head is trained to best respond at every context simultaneously, and
the inner expectation $\Exp_{\theta_{-i} \sim p(\cdot \given \theta_i)}$ is
the implicit belief that a policy acquires about its partners when it is trained
across a population of them. The $\nag = 1$ case is the Bayes-optimal policy
of \cref{prop:harsanyi-bamdp}, which is the meta-reinforcement learning target.
\Cref{ex:bayesian-entry} solves a two-type instance by enumeration.
\end{rlreading}

The Harsanyi transformation reads \cref{eq:bne} as a Nash equilibrium of an
enlarged game with an initial chance move that draws $\theta \sim p$ and reveals
$\theta_i$ to agent $i$. Formally, a Bayesian game is the extensive-form game
whose root is a chance node over $\Th$ and whose information sets for agent $i$
are indexed by $\theta_i$; a Bayes-Nash equilibrium of the former is a Nash
equilibrium of the latter and conversely. The construction places incomplete
information inside the framework of \cref{sec:extensive} and therefore inside
the framework of partial observability.

\begin{example}[A two-type entry game solved by interim best response]
\label{ex:bayesian-entry}
An entrant decides whether to enter a market; an incumbent decides whether to
fight or accommodate, without observing whether the entrant is strong or weak.
The entrant is strong with probability $\mu = 0.6$. Payoffs are listed as
(entrant, incumbent). A strong entrant that enters earns $1$ if fought and $3$
if accommodated, and the incumbent earns $-1$ and $1$ respectively. A weak
entrant that enters earns $-2$ if fought and $2$ if accommodated, and the
incumbent earns $2$ and $1$. If the entrant stays out, both earn $0$ and $4$.

Solve it by checking all eight pure profiles. Entry is a best response for the
strong type against both incumbent actions, since $1 > 0$ and $3 > 0$, and for
the weak type only against accommodation, since $-2 < 0 < 2$. Given that both
types enter, the incumbent compares fighting, worth
$0.6 (-1) + 0.4 (2) = 0.2$, with accommodating, worth
$0.6 (1) + 0.4 (1) = 1.0$, and accommodates. The profile in which both types
enter and the incumbent accommodates is therefore the unique pure Bayes-Nash
equilibrium: interim payoffs $3$ for a strong entrant and $2$ for a weak one,
ex-ante payoff $0.6 \times 3 + 0.4 \times 2 = 2.6$, and $1.0$ for the
incumbent.

The threshold matters more than the solution. The incumbent fights when
$\mu(-1) + (1-\mu)(2) > \mu(1) + (1-\mu)(1)$, that is when $\mu < 1/3$. So the
same game with the same payoffs has a different equilibrium if the prior
changes, and a weak entrant benefits from the presence of strong entrants in the
population, which is the pooling effect that makes reputation work. For
multi-agent reinforcement learning the exact counterpart is that an agent
trained against a population of partners behaves according to the population
mixture rather than to any individual partner, so changing the training
distribution changes the learned policy even when every individual partner is
unchanged. That is a statement about the prior and not about the algorithm, and
it is the reason \cref{sec:training} asks what the partner distribution at
deployment will be. \exend
\end{example}

\begin{definition}[Family of Markov decision processes with an unobserved index]
\label{def:latent-mdp}
A latent Markov decision process is a family
$\{M_\theta\}_{\theta \in \Th}$ of Markov decision processes sharing state and
action spaces, together with a prior $p \in \Simplex{\Th}$. At the start of an
episode $\theta \sim p$ is drawn and held fixed; the agent does not observe
$\theta$.
\end{definition}

\begin{rlreading}
This is the standard meta-reinforcement learning setup written as a tuple: a
family of environments, a sampler over them, one draw per episode, and no
indicator telling the agent which draw it got. Everything a practitioner does
with it follows from \cref{thm:belief-mdp} applied to the hidden pair
$(s, \theta)$. A recurrent policy trained across the family is approximating the
posterior over $\theta$ in its hidden state; a context encoder is approximating
it explicitly; and the exploration an agent performs early in an episode is not
a heuristic addition but the optimal action of the enlarged problem, because the
belief over $\theta$ is part of the state and acting to sharpen it has value.
\Cref{ex:bamdp} prices that value at $1/6$ on a three-pull bandit.
\end{rlreading}

\begin{remark}[Harsanyi transformation and Bayesian reinforcement learning
coincide]
\label{prop:harsanyi-bamdp}
Let $\{M_\theta\}, p$ be a latent Markov decision process. Define the partially
observable Markov decision process $P$ with hidden state $(s,\theta)$,
transition $\Tr_P((s',\theta') \given (s,\theta), a) = \Tr_\theta(s' \given s,a)
\,\Ind[\theta' = \theta]$, observation $o = s$, reward $\Rew_\theta(s,a,s')$ and
initial belief $b_0 \otimes p$. Then $P$ is the Harsanyi transformation of
$\{M_\theta\}$ with $\nag = 1$, the belief Markov decision process of $P$
obtained from \cref{thm:belief-mdp} is the Bayes-adaptive Markov decision
process of \citet{duff2002optimal}, its state $(s_t, b_t)$ with
$b_t \in \Simplex{\Th}$ is the posterior over the unknown model, and its optimal
policy is the Bayes-optimal policy, which is the $\nag = 1$ degeneracy of
\cref{def:bne}. The identification is definitional rather than a theorem, and
the scope differs in one respect worth recording: Duff takes the information
state to be a posterior over the transition parameters themselves, of which the
finite latent index used here is a special case.
\end{remark}

This degeneracy is worth dwelling on because it is the one that an engineer uses
most often without naming it. The Bayes-optimal policy is the solution of a
planning problem in which the exploration-exploitation trade-off is not a
heuristic addition but a consequence of the belief dynamics: an action is
valuable partly for its reward and partly for the information its outcome
carries about $\theta$, and the latter appears automatically in the Bellman
equation of the Bayes-adaptive Markov decision process because the belief is
part of the state. The game-theoretic reading is that exploration is what
remains of incomplete information when there is only one agent. With several
agents, the same structure produces the value of concealing one's own type,
which has no single-agent counterpart at all: in a Bayesian game an agent may
rationally forgo immediate payoff to avoid revealing $\theta_i$, and signalling
and screening are the names for the resulting phenomena. There is no analogue of
signalling in single-agent reinforcement learning because there is nobody to
signal to, and conversely the information-gathering value of an action survives
in both settings. We therefore place exploration and signalling as the two
descendants of a single structure, which is the clean way to see why
multi-agent exploration is not merely harder single-agent exploration.

\begin{example}[The value of information, computed exactly]
\label{ex:bamdp}
Two arms, three pulls. The first arm is known to pay $1$ with probability
$0.5$. The second pays $1$ with an unknown probability given a uniform prior,
that is $\mathrm{Beta}(1,1)$. The Bayes-adaptive state is the pair of Beta
counts, and dynamic programming over it is exact.

With one pull left the agent takes whichever arm has the higher posterior mean,
so $V_1(a,b) = \max(0.5,\, a/(a+b))$; with the prior counts this is $0.5$. With
two pulls left, pulling the known arm first is worth $0.5 + V_1(1,1) = 1.0$,
while pulling the unknown arm first is worth
$0.5 (1 + V_1(2,1)) + 0.5\, V_1(1,2) = 0.5(1 + \tfrac23) + 0.5(0.5) = 1.0833$,
so $V_2(1,1) = 1.0833$. With three pulls left the same comparison gives
$0.5 + V_2(1,1) = 1.5833$ for the known arm against
$0.5(1 + V_2(2,1)) + 0.5\, V_2(1,2) = 0.5(1 + 1.3333) + 0.5(1.0) = 1.6667$ for
the unknown one, so the Bayes-optimal first action is to explore and
$V_3(1,1) = 1.6667$.

An agent that always pulls the known arm collects $1.5$. The difference,
$1/6 \approx 0.1667$, is the value of information, and it appears in the Bellman
equation of the enlarged problem with no exploration bonus added by hand. Two
consequences are worth recording. First, the exploration is not a heuristic but
the optimal action of a well-posed planning problem, which is what makes the
Bayes-adaptive formulation the correct reference point for any exploration
method. Second, the same quantity in a multi-agent setting splits into two:
the information an action reveals to oneself, which is this bonus, and the
information it reveals to others, which is a cost rather than a benefit
whenever the others can use it, and which is the content of
\cref{sec:communication}. \exend
\end{example}

The practical instantiations follow. Meta-reinforcement learning trains a policy
on a distribution over tasks and evaluates it on a held-out task, which is
exactly the problem of approximating the Bayes-optimal policy of the
corresponding Bayes-adaptive Markov decision process; recurrent meta-learners
such as $\mathrm{RL}^2$ \citep{duan2016rl} learn to carry an implicit posterior
in their hidden state, and variational methods such as VariBAD
\citep{zintgraf2020varibad} learn an explicit approximate posterior over
$\theta$ and condition the policy on it, which is the belief-MDP reduction made
executable. Agent modelling is the multi-agent instance: an agent uncertain
about the policies of its partners holds a posterior over a type space of
candidate policies, and acting optimally with respect to that posterior is the
same computation \citep{albrecht2018autonomous}. The interactive partially
observable Markov decision process of \citet{gmytrasiewicz2005framework} is the
version in which the type space consists of other agents' models, including
their own models of oneself. Because the beliefs may be nested arbitrarily
deeply, \citet{gmytrasiewicz2005framework} describe the optimal solutions as
only asymptotically computable, with finitely nested approximations computable
at a finite horizon. The universal type space of \citet{mertens1985formulation}
is the construction that makes the infinite hierarchy a well-defined object, and
its reinforcement learning counterpart is the observation that any finite
recursive opponent model truncates a hierarchy whose completion exists but is
not representable.

Common knowledge is the concept that organises this hierarchy, and it deserves
an explicit treatment because it is what the reductions of \cref{sec:reductions}
condition on.

\begin{example}[Posteriors that differ without disagreement being common knowledge]
\label{ex:aumann}
Four states, equally likely, and an event $E = \{1,2\}$ whose prior probability
is $0.5$. Ann's information partition is $\{\{1,2\},\{3,4\}\}$ and Bob's is
$\{\{1\},\{2,3\},\{4\}\}$. At state $1$ Ann's posterior for $E$ is $1$, since
her cell is contained in $E$, and Bob's is also $1$, since his cell is $\{1\}$.
At state $2$ Ann's posterior is still $1$ while Bob's is $1/2$, because his cell
$\{2,3\}$ meets $E$ in one of its two states. The two disagree.

Aumann's theorem \citep{aumann1976agreeing} says that if the posteriors were
common knowledge they would have to be equal, and the example does not
contradict it, because nothing non-trivial is common knowledge here: the meet of
the two partitions, which is the partition into events that both agents know
that both agents know, is the single cell $\{1,2,3,4\}$, since $\{1,2\}$ and
$\{2,3\}$ overlap and chain the four states together. The reinforcement learning
reading is exactly the obstruction of \cref{sec:partial-obs}: two agents may
hold persistently different beliefs about the world, and a protocol that merely
exchanges point estimates will not reconcile them, because reconciliation
requires the exchange to be common knowledge. The coordinator constructions of
\cref{sec:reductions} buy exactly this, by conditioning only on the meet, and
the method of \citet{dewitt2019multi} chooses dynamically how large a group to
treat as sharing common knowledge. \exend
\end{example}

A final correspondence belongs here because it is becoming the dominant
formulation of cooperative evaluation. Zero-shot coordination asks for a policy
that performs well with an unseen partner drawn from a distribution over
reasonable partners rather than with a partner it was trained alongside. In the
vocabulary of this section the partner's identity is a type, the training
distribution is the prior, and the objective is the ex-ante Bayesian payoff,
which makes zero-shot coordination the Bayesian-game formulation of cooperative
play. Other-play \citep{hu2020other} achieves it by optimising against partners whose
policies have been relabelled by a random symmetry of the game, which removes
the reliance on the arbitrary labelling that self-play would otherwise exploit,
and it is best read as an equilibrium-selection refinement in the sense of
\cref{sec:partial-obs} rather than as a learning algorithm.

\begin{table}[t]
\centering
\caption{Incomplete information and its reinforcement learning counterparts.}
\label{tab:bayesian}
\small
\begin{tabular}{@{}L{3.3cm} L{5.0cm} L{4.4cm}@{}}
\toprule
\textsc{Game theory} & \textsc{Reinforcement learning} & \textsc{Status and caveat} \\
\midrule
Type $\theta_i$ & Latent task parameter; opponent model index & Definitional \\
Common prior $p$ & Task distribution; meta-training distribution & Definitional \\
Harsanyi transformation & Augmenting the state with the latent parameter & Exact \\
Bayes-Nash equilibrium & Bayes-optimal policy when $\nag=1$ & Exact at $\nag=1$ \\
Interim payoff & Value conditional on the posterior & Definitional \\
Ex-ante payoff & Expected return over the task distribution & Definitional \\
Value of information & Exploration bonus in a BAMDP & Exact \\
Signalling and screening & No single-agent counterpart & Multi-agent only \\
Belief hierarchy & I-POMDP nested models & Exact; only asymptotically computable \\
Universal type space \citep{mertens1985formulation} & Completion of nested opponent models & Exact; not representable \\
Common knowledge; the meet & Public features legitimate to condition on & Exact \\
Agreement theorem \citep{aumann1976agreeing} & Belief exchange must itself be public & Exact \\
Purification of mixed strategies & Deterministic policy with latent noise & Exact \\
Equilibrium selection under uncertainty & Zero-shot coordination; other-play & Structural \\
\bottomrule
\end{tabular}
\end{table}

\section{Communication: cheap talk, signalling, persuasion and emergent protocols}
\label{sec:communication}

Every model so far has fixed what each agent observes. Communication is the
operation of letting agents choose what others observe, and it is the single
largest body of multi-agent reinforcement learning work with no corresponding
chapter in a reinforcement learning textbook, because with one agent there is
nobody to speak to. Game theory has studied the operation for four decades under
three headings that differ in who bears the cost of the message and in who
controls the channel. Cheap talk is costless and uncommitted; signalling is
costly and therefore credible; information design, or Bayesian persuasion, is
the case in which one party commits in advance to an information policy. All
three appear in multi-agent reinforcement learning, usually unnamed, and all
three come with results that settle questions practitioners otherwise answer
empirically.

The two readings differ in what is taken to be given. A game theorist takes the
message space as given and asks which mappings from information to messages are
equilibria, so that the central question is whether informative communication is
self-enforcing. A reinforcement learning researcher takes the channel as a
differentiable component and asks whether gradient descent will learn to use it,
so that the central question is whether a useful protocol emerges from training.
The results below answer the first question and thereby constrain the second:
an informative protocol that is not an equilibrium of the induced game will not
survive training, whatever the architecture.

\begin{nomen}{Communication vocabulary read from both sides.}{tab:nom-communication}
Message $m \in \mathcal{M}$ & Communication action; emitted token &
  An action whose only effect is on others' inputs \\
Cheap talk & Costless differentiable channel &
  Messages do not enter the payoff directly \\
Babbling equilibrium & A channel the agents ignore &
  Messages uncorrelated with anything \\
Informative equilibrium & A learned protocol that is used &
  Messages correlated with private state \\
Signalling game & Costly communication action &
  The message has a price that differs by type \\
Separating equilibrium & Protocol that reveals the type &
  Different types send different messages \\
Pooling equilibrium & Protocol that hides the type &
  All types send the same message \\
Bayesian persuasion & Information design; teacher curriculum &
  Commit to a signal before seeing the state \\
Communication equilibrium & Mediated play with private channels &
  A mediator who listens and advises \\
Sender bias $b$ & Misalignment of the two reward functions &
  How far apart the objectives are \\
Partition equilibrium & Coarse, quantised protocol &
  Only intervals can be communicated \\
Positive signalling & Mutual information between state and message &
  The speaker is saying something \\
Positive listening & Causal effect of messages on actions &
  The listener is acting on it \\
\end{nomen}

\begin{definition}[Cheap talk extension]
\label{def:cheap-talk}
Let $\game$ be a Bayesian game and let $\mathcal{M}_i$ be a finite message set
for each agent. The cheap talk extension of $\game$ is the game in which each
agent first observes its type, then simultaneously sends a message
$m_i \in \mathcal{M}_i$ which is observed by all, and then plays an action.
Payoffs are those of $\game$ and do not depend on the messages.
\end{definition}

\begin{rlreading}
Add to each head a second output that no reward term reads, and route it into
every other head's observation at the next step. That is a cheap talk extension,
and it is what a differentiable communication channel is. The message set
$\mathcal{M}_i$ is the vocabulary or the dimension of the emitted vector, and
the fact that payoffs do not depend on messages is what makes the channel cheap:
no gradient reaches the speaker except through the listener's behaviour. The
immediate consequence, and the one that explains most empirical results in this
area, is that every equilibrium of the game without the channel survives with
it, realised by messages that carry no information and listeners that ignore
them. A training run that ends with an unused channel has therefore converged,
correctly, to one of the equilibria of the game it was given, and the remedy is
an equilibrium selection device rather than a larger channel.
\end{rlreading}

The device and its equilibrium consequences are surveyed by
\citet{farrell1996cheap}. Because the payoffs do not depend on messages, every
equilibrium of $\game$ extends to an equilibrium of its cheap talk extension in which messages are
uncorrelated with types and ignored, which is called babbling. The question is
whether anything else is an equilibrium, and the answer depends entirely on how
aligned the agents' objectives are.

\begin{example}[The whole of emergent communication in four numbers]
\label{ex:cheap-talk}
A sender privately observes whether the world is in state $A$ or $B$, each with
probability one half, and a receiver must take an action matching the state.
Both earn $1$ for a match and $0$ otherwise. Without communication the receiver
guesses and both earn $0.5$. With a two-symbol channel, the profile in which the
sender maps $A$ to the first symbol and $B$ to the second, and the receiver
maps symbols to the matching action, earns $1$. It is an equilibrium: neither
agent gains by deviating, since the sender's payoff is the receiver's and the
receiver is already best responding. The gain from communication is exactly
$0.5$.

Three facts about this trivial example generalise. First, the relabelled
protocol in which $A$ maps to the second symbol is an equally good equilibrium,
so the protocol is not identified and two independent training runs will
generally produce incompatible languages, which is the cross-play failure of
\cref{ex:convention} in its purest form. Second, babbling remains an
equilibrium, so a training run that converges to a useless channel has not
failed to converge; it has converged to the wrong equilibrium, and the remedy is
an equilibrium selection device rather than a better optimiser. Third, the
entire gain is available only because the two payoffs coincide. The next example
shows what happens when they do not. \exend
\end{example}

\begin{theorem}[Cheap talk under misaligned objectives;
\citealp{crawford1982strategic}]
\label{thm:crawford-sobel}
Let the state be uniform on $[0,1]$, let the receiver's loss be $(y - \theta)^2$
and the sender's be $(y - \theta - b)^2$ for a bias $b > 0$, where $y$ is the
receiver's action. Then every equilibrium of the cheap talk game is an interval
partition equilibrium, in which the sender reports only which element of a
finite partition of $[0,1]$ contains $\theta$, and the number of intervals is
bounded: the largest attainable number $N(b)$ is the largest integer satisfying
$2 b N (N-1) < 1$. As $b \to 0$ the attainable partitions become arbitrarily
fine, and for $b$ large enough only the babbling equilibrium survives.
\end{theorem}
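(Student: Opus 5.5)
The plan is to follow the original argument of \citet{crawford1982strategic} in the uniform-quadratic specialisation. The proof has four steps.

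\textbf{Step 1: the receiver's action is a posterior mean.} Since the receiver's loss is $(y-\theta)^2$, its unique best response to any message $m$ is the posterior mean $y(m) = \Exp[\theta \given m]$. In particular the receiver never randomises. Any equilibrium therefore induces a set $Y$ of actions, and each action in $Y$ is the conditional mean of the set of types that induce it.

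\textbf{Step 2: structure of the induced action set.} I will show that $Y$ is finite and that the types inducing each action form an interval.
\begin{itemize}
\item Take two induced actions $y < y'$. The sender's preference difference $(y-\theta-b)^2 - (y'-\theta-b)^2$ is affine in $\theta$ with positive slope $2(y'-y)$. This single-crossing property means there is at most one indifferent type $\hat\theta$. Types below $\hat\theta$ strictly prefer $y$ and types above strictly prefer $y'$.
\item Consequently the set of types inducing each action is an interval, up to measure zero, even if the sender's strategy mixes over messages.
\item At the boundary, the indifference condition $\hat\theta + b = (y+y')/2$ holds.
\item Because $y$ is the mean of an interval lying to the left of $\hat\theta$, we have $y \le \hat\theta$. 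Combining with the boundary condition gives $y' - y \ge 2b$.
\item Actions are therefore $2b$-separated inside $[0,1]$, so at most $1 + 1/(2b)$ actions are induced.
\end{itemize}
This gives finiteness and the interval-partition form $0 = a_0 < a_1 < \dots < a_N = 1$ of the equilibrium.

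I expect the main obstacle to be making Step 2 rigorous for arbitrary, possibly mixed, sender strategies: measure-zero sets of types and ties at boundaries have to be handled. I would first reduce to the induced map $\theta \mapsto$ distribution over $Y$. Single-crossing then forces this map to be degenerate except on the finitely many boundary types, and those have measure zero.

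\textbf{Step 3: solve the boundary equations.} With $y_k = (a_{k-1}+a_k)/2$ on each interval, the indifference condition at $a_k$ becomes the second-order difference equation
\begin{align*}
a_{k+1} - a_k = a_k - a_{k-1} + 4b .
\end{align*}
With the boundary conditions $a_0 = 0$ and $a_N = 1$, its unique solution is $a_k = k/N + 2bk(k-N)$. Monotonicity of the $a_k$ is equivalent to $a_1 > 0$, which reads $1/N - 2b(N-1) > 0$, that is, $2bN(N-1) < 1$.

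\textbf{Step 4: existence and the limiting cases.} Conversely, for every $N$ satisfying $2bN(N-1) < 1$, the partition above, together with the strategy ``report your interval'' and the conditional-mean response, is an equilibrium. The receiver best responds by construction, and the sender best responds by single-crossing together with boundary indifference.

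So the set of attainable $N$ is exactly $\{1, \dots, N(b)\}$. Moreover $N(b) \to \infty$ as $b \to 0$, and $N(b) = 1$, which is babbling, once $b \ge 1/4$.
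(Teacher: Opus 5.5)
Your proposal is correct and follows the standard Crawford--Sobel argument: single crossing gives their Lemma~1 with the bound $y\le\hat\theta$ and the resulting $2b$-separation, and the uniform-quadratic case then reduces to the difference equation $a_{k+1}-a_k=a_k-a_{k-1}+4b$. That argument is exactly what the paper relies on, since it states the theorem by citation to \citet{crawford1982strategic} and gives no proof of its own. One point to make explicit is that ``arbitrarily fine'' needs the mesh to go to zero, not only $N(b)\to\infty$; your explicit solution gives this at once, because the longest interval is $a_N-a_{N-1}=1/N+2b(N-1)<2/N$ whenever $2bN(N-1)<1$.
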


\begin{example}[How much can be said, as a function of how much is shared]
\label{ex:crawford-sobel}
Applying the bound of \cref{thm:crawford-sobel}, a bias of $b = 0.05$ permits at
most three intervals, since $0.1 \cdot 3 \cdot 2 = 0.6 < 1$ while
$0.1 \cdot 4 \cdot 3 = 1.2 > 1$; a bias of $b = 0.1$ permits two; and a bias of
$b = 0.25$ or more permits only one, which is babbling. The content for
multi-agent reinforcement learning is a hard limit that no architecture can
exceed. Giving two agents with partly conflicting rewards a wide communication
channel does not produce fine-grained communication, because fine-grained
communication is not an equilibrium of the game they are playing; what emerges
is a coarse quantisation whose resolution is set by the misalignment. Reported
channel capacities in such settings are therefore measuring the reward
misalignment, and the way to obtain a richer protocol is to align the rewards,
to make the messages costly, or to give one party commitment power, which are
precisely the three devices of this section. \exend
\end{example}

Costly messages are the second device. In a signalling game
\citep{spence1973job} the sender's message enters its own payoff, with a cost
that depends on its type, and a separating equilibrium exists when the cost
differential is large enough that a low type does not wish to imitate a high
one. The reinforcement learning counterpart is any communication channel whose
use consumes a resource that different agents value differently, such as
bandwidth under a budget, an action slot that could have been used for a
physical action, or a delay. The structural lesson is that costless channels
support both informative and babbling equilibria while appropriately costly ones
can make informativeness the unique outcome, so adding a small cost to
communication can be an equilibrium selection device rather than a handicap.

The third device is commitment, and it is the one with the sharpest results.

\begin{definition}[Bayesian persuasion; \citealp{kamenica2011bayesian}]
\label{def:persuasion}
A sender commits, before observing the state, to a signal, that is a map from
states to distributions over messages. A receiver observes the message, forms a
Bayes posterior, and acts to maximise its own payoff. The sender's problem is to
choose the signal maximising its own expected payoff, subject to the constraint
that the distribution of posteriors it induces must average to the prior.
\end{definition}

\begin{rlreading}
The sender is a leader in the sense of \cref{sec:stackelberg} whose committed
object is an information policy rather than an action, so the construction is
the bilevel problem of \cref{eq:stackelberg} with the inner agent's input, not
its reward, under the outer agent's control. In reinforcement learning this is
the teacher of a teacher-student or curriculum setup: the teacher commits to a
distribution over tasks or over what the student is shown, the student best
responds by learning, and the teacher's objective is the student's eventual
behaviour. The constraint that the induced posteriors average to the prior is
the formal statement that a curriculum can redistribute what the student
believes but cannot shift the average, which is why a curriculum changes what is
learned without changing the underlying task distribution.
\Cref{ex:persuasion} shows that the optimal commitment is deliberately noisy,
and that full disclosure is strictly worse for the sender than a calibrated
lie.
\end{rlreading}

The constraint named at the end is the whole theory: a signal is feasible
exactly when the induced distribution over posteriors has mean equal to the
prior, so the sender's problem is to choose a mean-preserving spread of the
prior and the solution is the concavification of the sender's value as a
function of the posterior.

\begin{example}[The prosecutor who convicts sixty per cent of defendants]
\label{ex:persuasion}
A judge convicts when the posterior probability of guilt is at least $1/2$ and
acquits otherwise. The prior probability of guilt is $0.3$, so with no
information the judge acquits always and the prosecutor, who wants convictions,
obtains nothing. With full information the judge convicts exactly the guilty and
the prosecutor obtains $0.3$.

Now let the prosecutor commit to an investigation that always reports ``convict''
when the defendant is guilty and reports ``convict'' with probability $x$ when
innocent. The posterior after a ``convict'' report is
$0.3 / (0.3 + 0.7x)$, which equals $1/2$ when $x = 3/7 \approx 0.4286$. At that
$x$ the judge is just willing to convict, and the probability of a conviction is
$0.3 + 0.7 \cdot \tfrac37 = 0.6$. The prosecutor therefore obtains $0.6$, double
what full disclosure yields and infinitely more than silence. Committing to a
noisy investigation beats committing to an accurate one.

The reinforcement learning readings are two and both are useful. In
teacher-student and curriculum settings the teacher is a sender who commits to a
distribution over tasks and the student is a receiver who best responds; the
example says the teacher's optimal curriculum is in general not the most
informative one, which is the formal counterpart of the empirical finding that
intermediate-difficulty tasks beat maximally informative ones. In explanation
and alignment settings the same computation says that a system optimised to
produce a particular downstream decision will, if it can commit to a reporting
policy, prefer a calibrated-but-noisy report to an honest one, which is a
statement about incentives rather than about any particular training
objective. \exend
\end{example}

The fourth object belongs to this section because it completes the hierarchy of
\cref{sec:solution-concepts}. A communication equilibrium
\citep{forges1986approach} is the outcome of a mediator who receives private
reports from the agents and sends private recommendations back, subject to the
agents being willing both to report truthfully and to obey. It generalises the
correlated equilibrium of \cref{def:ce-cce}, which is the special case in which
the mediator receives nothing and only sends, and in a game of incomplete
information the set of communication equilibria is strictly larger than the set
of correlated equilibria. For multi-agent reinforcement learning the
correspondence is that a learned communication protocol with a central module
that both receives and broadcasts implements a communication equilibrium, so its
attainable outcomes exceed those of a shared random seed, and the price is that
truthful reporting must itself be incentive compatible, which is a constraint
that a differentiable channel does not enforce automatically.

\begin{example}[What mediation is worth in the game of chicken]
\label{ex:mediation-value}
Collecting the numbers computed in \cref{ex:chicken-ce}, the best pure Nash
equilibrium of chicken has welfare $9$, the mixed equilibrium has welfare
$9.333$, the best correlated equilibrium has welfare $10.5$, and the cooperative
optimum, which no equilibrium attains, has welfare $12$. A shared random seed
therefore recovers $10.5 - 9.333 = 1.167$ of the $2.667$ that separates the
mixed equilibrium from the cooperative optimum, and nothing short of changing
the game recovers the rest. This is the precise accounting of what
communication buys in a mixed-motive problem, and it bounds what any emergent
protocol can achieve in such a problem without a change to the rewards. \exend
\end{example}

The reinforcement learning literature on this topic is large and is best read
through the three headings above. Differentiable inter-agent learning
\citep{foerster2016learning} and the communication architectures of
\citet{sukhbaatar2016learning} are cheap talk extensions with a differentiable
channel, and they work well in common-payoff settings, which
\cref{ex:cheap-talk} explains, and degrade as the rewards diverge, which
\cref{thm:crawford-sobel} explains. Referential games of the kind studied by
\citet{lazaridou2017multi} are common-payoff cheap talk games with a particular
message space, so the emergence of an informative protocol is expected and the
arbitrariness of the protocol is also expected. The diagnostic question of
whether a protocol has emerged at all requires measuring both that messages
depend on private information and that actions depend on messages, which are the
positive signalling and positive listening criteria of \citet{lowe2019pitfalls};
reporting a return improvement alone does not distinguish an informative
equilibrium from a babbling one that happens to coordinate by other means.
Finally, influence-based intrinsic rewards \citep{jaques2019social} change the
game by paying agents for the causal effect of their messages, and report
improved coordination empirically; the claim is not that babbling ceases to be
an equilibrium, and indeed against a listener that ignores messages every
message has zero causal influence and the speaker remains indifferent. The
device biases the dynamics away from the babbling equilibrium rather than
removing it, so it is an equilibrium selection device and should be reported as
one, since the game solved is not the game posed.

\begin{table}[t]
\centering
\caption{Communication structures, their equilibrium consequences and the
multi-agent reinforcement learning constructions that instantiate them. The
last column names what determines whether informative communication survives.}
\label{tab:communication}
\small
\begin{tabular}{@{}L{2.8cm} L{3.4cm} L{3.4cm} L{2.9cm}@{}}
\toprule
\textsc{Structure} & \textsc{RL counterpart} & \textsc{Equilibrium consequence} & \textsc{What decides} \\
\midrule
No channel & Independent policies & Stage equilibria only & Nothing \\
Cheap talk, common payoff & Differentiable channel, shared reward & Fully informative equilibria exist; so does babbling & Equilibrium selection \\
Cheap talk, biased & Channel with misaligned rewards & Only coarse partitions (\cref{thm:crawford-sobel}) & The bias $b$ \\
Costly signalling & Bandwidth budget; action slot & Separating equilibria can be unique & Cost differential by type \\
Bayesian persuasion & Teacher, curriculum, information design & Sender prefers noisy disclosure & Commitment power \\
Public correlation device & Shared random seed at reset & Correlated equilibrium (\cref{def:ce-cce}) & Availability of a seed \\
Mediator with reports & Central module that listens and advises & Communication equilibrium \citep{forges1986approach} & Truthful reporting must be incentivised \\
Common knowledge groups & MACKRL \citep{dewitt2019multi} & Coordination within a commonly informed group & Size of the meet \\
\bottomrule
\end{tabular}
\end{table}

\section{Commitment, bilevel structure, and opponent shaping}
\label{sec:stackelberg}

Every solution concept so far has treated the agents symmetrically in time. The
Stackelberg model breaks that symmetry by letting one agent commit to a strategy
that the others observe before choosing their own, and the resulting asymmetry
is the game-theoretic content of a surprisingly large part of modern machine
learning, including adversarial training, reward design, curriculum generation
and the fine-tuning of language models from human feedback.

The two readings are a definition and an algorithm for the same object. A game
theorist writes a Stackelberg equilibrium as a maximisation over the leader's
strategies of the leader's payoff evaluated at the follower's best response, and
asks what commitment is worth. A reinforcement learning researcher writes the
same expression as a bilevel optimisation problem and asks how to differentiate
through the inner solution. The gradient that results has two terms, and the
second term, the one that accounts for how the leader's move changes what the
follower will do, is exactly what distinguishes the leader from a simultaneous
mover. Every opponent-shaping method in the literature is an approximation to
that second term.

\begin{nomen}{Commitment vocabulary read from both sides.}{tab:nom-stackelberg}
Leader, first mover & Outer loop; slow timescale parameters &
  Commits before the other chooses \\
Follower & Inner loop; fast timescale parameters &
  Best responds to what it observes \\
Best-response correspondence $\brc_2$ & Inner argmax; converged inner solve &
  The follower's reaction \\
Strong Stackelberg equilibrium & Bilevel optimum with favourable tie-break &
  Leader resolves follower indifference \\
Value of commitment & Gain from timescale separation &
  What acting first is worth \\
Implicit function theorem & Differentiating through the inner optimum &
  How the leader's gradient is computed \\
Mechanism design & Reward design; environment design &
  The leader chooses the game, not a move \\
Revelation principle & Restrict to truthful direct protocols &
  Without loss, agents report honestly \\
Incentive compatibility & Reward hacking does not pay &
  Honest or intended behaviour is optimal \\
Principal-agent problem & RLHF; delegation to a learned policy &
  One party's objective, another's actions \\
Credible commitment & Frozen weights; immutable reward model &
  A promise the follower believes \\
\end{nomen}

\begin{definition}[Stackelberg equilibrium]
\label{def:stackelberg}
In a two-agent game, let agent $1$ be the leader and agent $2$ the follower. The
follower's best-response correspondence is $\brc_2(\pi_1)$. A strong Stackelberg
equilibrium is a pair $(\pi_1^\star, \pi_2^\star)$ with
\begin{align}
\pi_1^\star \in \argmax_{\pi_1} \; \max_{\pi_2 \in \brc_2(\pi_1)} \util_1(\pi_1,\pi_2),
\qquad
\pi_2^\star \in \brc_2(\pi_1^\star),
\label{eq:stackelberg}
\end{align}
with the inner maximisation replaced by a minimisation in the weak version. The
leader's Stackelberg value is the value of \cref{eq:stackelberg}.
\end{definition}

\begin{rlreading}
\Cref{eq:stackelberg} is a nested training loop, and the nesting is the whole of
the concept. The outer loop updates the leader's parameters $\psi$; the inner
loop trains the follower to convergence against the current $\psi$; the leader's
objective is evaluated at the converged follower. Every setup with a slow outer
optimiser and a fast inner one is an instance: adversarial training with the
adversary trained to convergence between protagonist updates, curriculum
generation with the student trained on each proposed task, and reward-model
tuning with the policy optimised against each candidate reward. The contrast is
with running both loops at the same rate, which is the simultaneous game and has
a different solution. Timescale separation is therefore a modelling decision
rather than an implementation detail, and \cref{eq:stackelberg-gradient} is what
the leader's gradient becomes once it is made. The strong tie-break, which
resolves the follower's indifference in the leader's favour, is the convention
that makes the maximum attained; \cref{ex:commitment} shows that the optimum
sits exactly on that tie.
\end{rlreading}

The structure of \cref{eq:stackelberg} is bilevel optimisation, and this is the
first correspondence: a Stackelberg game is a bilevel program, and conversely.
The second is the value of commitment. In a two-player game in which the leader
may commit to a mixed strategy, the leader's Stackelberg value is at least its
payoff in any Nash equilibrium of the simultaneous-move game
\citep{vonstengel2010leadership}, so commitment is weakly beneficial, which is
why so many learning setups are deliberately arranged to give one party the
leader role. In zero-sum games the Stackelberg value coincides with the maximin
value and the asymmetry is immaterial, which is another instance of the general
pattern that the zero-sum case collapses distinctions that matter elsewhere.

\begin{example}[Commitment to a mixture, worth one and a half]
\label{ex:commitment}
Take the game in which the leader's payoffs are
$\big(\begin{smallmatrix} 2 & 4 \\ 1 & 3 \end{smallmatrix}\big)$ and the
follower's are $\big(\begin{smallmatrix} 1 & 0 \\ 0 & 1 \end{smallmatrix}\big)$.
The leader's first action strictly dominates its second, so in the
simultaneous-move game it plays the first, the follower best responds with its
first action, and the leader earns $2$. That is the unique Nash equilibrium.

Now let the leader commit to playing its first action with probability $p$. The
follower compares $p$, the payoff of its first action, with $1-p$, the payoff of
its second, and prefers the second whenever $p < 1/2$. The leader's payoff when
the follower takes its second action is $4p + 3(1-p) = 3 + p$, which the leader
wants as large as possible, so it sets $p$ as close to $1/2$ as it can. At
$p = 1/2$ exactly the follower is indifferent and the strong Stackelberg
convention resolves the tie in the leader's favour, giving $3.5$. Commitment is
therefore worth $1.5$, or seventy-five per cent of the equilibrium payoff, and
the committed strategy is a strictly mixed one even though the leader has a
dominant pure action.

Two reinforcement learning consequences follow. The first is that the gain comes
from committing to a randomisation, so a leader restricted to deterministic
policies captures none of it; this is why security-game solvers and
unsupervised environment designers randomise. The second is that the entire
construction relies on the follower observing the commitment, so a leader whose
policy the follower cannot observe is not a leader, which is the credibility
question that \cref{sec:open} raises for learned systems. \exend
\end{example}

The reinforcement learning counterpart is the total derivative. Suppose the
follower is itself a learner whose parameters $\phi$ converge to a stationary
point of its own objective given the leader's parameters $\psi$, so that
$\nabla_\phi \util_2(\psi, \phi^\star(\psi)) = 0$ defines $\phi^\star$
implicitly. The leader's gradient is then
\begin{align}
\frac{d}{d\psi}\, \util_1\big(\psi, \phi^\star(\psi)\big)
\;=\;
\nabla_\psi \util_1
\;+\;
\Big(\frac{d \phi^\star}{d\psi}\Big)^{\!\top} \nabla_\phi \util_1,
\qquad
\frac{d\phi^\star}{d\psi}
= -\big(\nabla^2_{\phi\phi}\util_2\big)^{-1} \nabla^2_{\phi\psi}\util_2,
\label{eq:stackelberg-gradient}
\end{align}
the second expression following from implicit differentiation of the follower's
stationarity condition, assuming $\nabla^2_{\phi\phi}\util_2$ invertible. The
first term of \cref{eq:stackelberg-gradient} is what naive simultaneous gradient
ascent uses, and the second term is what distinguishes a leader from a
simultaneous mover. Learning with opponent-learning awareness
\citep{foerster2018learning} replaces $\phi^\star(\psi)$ by one step of the
follower's gradient ascent and differentiates through it, which is the
first-order truncation of \cref{eq:stackelberg-gradient}; Stackelberg actor-critic
methods \citep{fiez2020implicit} compute the implicit term directly. The reason
opponent shaping produces cooperation in the iterated prisoner's dilemma, where
naive learners converge to mutual defection, is visible in
\cref{eq:stackelberg-gradient}: the second term credits the leader for changing
what the follower will learn, which is precisely the mechanism that sustains
reciprocity.

\begin{example}[The shaping term, on the smallest saddle]
\label{ex:lola}
Take $\util_1(x,y) = xy$ and $\util_2(x,y) = -xy$ with scalar parameters, start
at $x = y = 0.5$ and use a step size of $0.1$. Naive simultaneous ascent applies
$x \leftarrow x + \eta y$ and $y \leftarrow y - \eta x$, which is a rotation
scaled by $\sqrt{1 + \eta^2}$, so the iterate spirals outward: the distance from
the origin grows from $0.707$ to $0.907$ in fifty steps, and the spectral radius
of the update matrix is $\sqrt{1 + \eta^2} = 1.005$.

A leader that anticipates one gradient step of the follower optimises
$x\,(y - \eta x)$ instead, whose derivative is $y - 2\eta x$. The extra term
$-2\eta x$ is the shaping correction, and with it the distance from the origin
falls from $0.707$ to $0.502$ over the same fifty steps. The update matrix is
$\big(\begin{smallmatrix} 1-2\eta^2 & \eta \\ -\eta & 1\end{smallmatrix}\big)$,
whose determinant is $1 - \eta^2$ and whose eigenvalues are complex, so their
modulus is $\sqrt{1-\eta^2} = 0.99499$. Two scalars,
fifty steps and one extra term turn divergence into convergence.

The example is the smallest possible instance of what \cref{sec:learning-dynamics}
treats in general, and it makes the mechanism legible: the naive game dynamics
has a purely rotational Jacobian, the shaping term adds a negative-definite
component, and the sum contracts. Opponent shaping, extragradient and optimistic
updates are three ways of adding that component, and the first is distinguished
from the other two by being asymmetric, since only the leader applies it. \exend
\end{example}

Three applications deserve naming because in each case the Stackelberg reading
explains a design choice that otherwise looks arbitrary. Adversarial training of
a protagonist against a disturbance agent is a Stackelberg game whenever the
inner adversary is optimised to convergence before each outer update, and a
simultaneous-move zero-sum game whenever the two are updated at the same rate;
the two have different solutions, and the difference between
$\max_\psi \min_\phi$ and the simultaneous saddle point is the reason that
timescale separation is not an implementation detail. Environment and curriculum
design, in which a teacher selects tasks and a student learns to solve them, is
a Stackelberg game with the teacher as leader, and the unsupervised environment
design framework of \citet{dennis2020emergent} makes the regret-based objective
of \cref{eq:stackelberg} explicit. Reinforcement learning from human feedback is
a Stackelberg game in which the reward model designer is the leader and the
policy optimiser the follower, and the known pathologies of reward hacking are
the known pathologies of a follower exploiting a misspecified leader
commitment, which is the subject matter of mechanism design.

\begin{example}[Learning from human feedback as a two-level game]
\label{ex:rlhf}
Label the stages. A set of annotators supplies comparisons; a reward model
$r_\psi$ is fitted to them; a policy $\pi_\phi$ is optimised against $r_\psi$
with a penalty for departing from a reference policy
\citep{christiano2017deep,ouyang2022training}. In the vocabulary of this section
the designer who selects the reward model and the penalty coefficient is the
leader, the policy optimiser is the follower, and the follower's best response
is $\argmax_\phi \Exp[r_\psi] - \beta\, \KL(\pi_\phi \Vert \pi_{\mathrm{ref}})$.
Reward hacking is then not an anomaly but the follower doing its job on a
misspecified commitment, exactly as a contractor maximising a badly written
contract. The penalty $\beta$ is a commitment device that limits how far the
follower may travel from the reference, which is a trust region in optimisation
language and a limited-liability clause in contracting language. The aggregation
of several annotators into one reward model is a social choice problem, which
\cref{sec:mechanism} shows cannot be solved without giving something up, and the
question of which equilibrium of the induced game is reached is the
equilibrium-selection question of \cref{sec:partial-obs}. Nothing here requires
a new theory; it requires recognising which of the existing ones applies to
which stage. \exend
\end{example}

Mechanism design itself is the limiting case in which the leader chooses not a
strategy but the game. Given a desired social choice function, the designer
selects payoff rules so that the agents' equilibrium behaviour implements it,
and the revelation principle states that attention may be restricted without
loss to direct mechanisms in which agents report their types truthfully. The
reinforcement learning counterpart is reward design, and the correspondence is
exact in the following sense: a reward function assigned to a learning agent is
a commitment by the designer, the agent's optimal policy is the follower's best
response, and the designer's problem of choosing rewards so that the induced
optimal policy is the desired one is the mechanism design problem with a
single agent. Potential-based reward shaping of \citet{ng1999policy} is then the
characterisation of the reward modifications that leave the follower's best
response unchanged, which is the invariance condition for this mechanism, and
\cref{sec:mechanism} develops the multi-agent version in full while
\cref{sec:potential} develops the shaping side.

\section{Mechanism design, auctions, and the price of anarchy}
\label{sec:mechanism}

The sections so far have taken the game as given and asked what the agents will
do. Mechanism design inverts the question: the game is the unknown and the
behaviour is the specification. A designer who controls the payoff rules chooses
them so that the agents, each pursuing its own objective, produce the outcome
the designer wants. This is reward design seen from the other side, and the
inversion is exact: in reinforcement learning the designer writes a reward
function and the agent's optimal policy is the induced behaviour, which is a
one-agent mechanism design problem; with several agents the induced behaviour is
an equilibrium rather than an optimum, and the extra difficulty is precisely the
subject of this section.

The two readings differ in what is held fixed. A mechanism designer holds the
desired outcome fixed and solves for the rules, taking equilibrium behaviour for
granted. A reward engineer holds the rules fixed and solves for the behaviour,
taking the reward for granted. The dictionary between them identifies the
reward function with the payment rule, the induced optimal policy with the
equilibrium, the specification with the social choice function, and the failure
mode known as reward hacking with the failure of incentive compatibility.
Everything the mechanism design literature knows about the limits of what rules
can achieve is therefore a statement about the limits of reward engineering, and
two of those limits are impossibility theorems.

\begin{nomen}{Mechanism-design vocabulary read from both sides.}{tab:nom-mechanism}
Mechanism $(f, t)$ & Reward function plus environment rules &
  The game the designer writes \\
Social choice function & Specification; desired behaviour &
  What the designer wants to happen \\
Type report $\hat\theta_i$ & Self-reported capability or preference &
  What an agent tells the designer \\
Incentive compatibility & Reward hacking does not pay &
  Honest behaviour is optimal \\
Dominant-strategy IC & Optimal regardless of the others &
  Robust to any opponent behaviour \\
Individual rationality & Participation is better than opting out &
  Nobody is worse off for taking part \\
Transfer $t_i$ & Shaped reward term; payment &
  A scalar the designer may add \\
Externality & Effect of one agent on others' return &
  What VCG charges you for \\
Revelation principle & Restrict to direct truthful protocols &
  Without loss of generality \\
Price of anarchy & Equilibrium welfare over planner welfare &
  What selfish learning costs \\
Price of stability & Best equilibrium over planner welfare &
  What the best convention costs \\
Smooth game & Robust welfare bound &
  PoA bounds that extend to \CCE \\
Toll, Pigouvian tax & Reward shaping that changes equilibria &
  A transfer that moves the fixed point \\
\end{nomen}

\begin{definition}[Mechanism, incentive compatibility, individual rationality]
\label{def:mechanism}
A direct mechanism for a Bayesian game with type profiles $\theta \in \Th$
consists of an allocation rule $f : \Th \to \mathcal{O}$ mapping reported types
to outcomes and a transfer rule $t : \Th \to \Real^\nag$. Agent $i$ with type
$\theta_i$ reporting $\hat\theta_i$ obtains
$\util_i(f(\hat\theta_i, \hat\theta_{-i}), \theta_i) - t_i(\hat\theta)$. The
mechanism is dominant-strategy incentive compatible when truthful reporting is
optimal for every agent, every type and every profile of other reports, and
individually rational when truthful participation yields at least the payoff of
abstaining.
\end{definition}

\begin{rlreading}
Read $f$ as the environment's outcome logic and $t$ as the reward function, and
\cref{def:mechanism} is a specification of an environment written backwards from
the behaviour it is meant to induce. The report $\hat\theta_i$ is whatever an
agent communicates about itself: a declared capability, a bid, a self-assessed
confidence, a claimed preference. Incentive compatibility is the property that
the agent gains nothing by misreporting, which is reward hacking failing to pay.
Individual rationality is the property that participating beats opting out,
which in a deployed system is the condition under which an agent with an outside
option stays in the pool. A reward engineer who writes a reward and then
discovers that agents game it has written a mechanism that is not incentive
compatible, and the rest of this section is the catalogue of rules for which
that cannot happen.
\end{rlreading}

The revelation principle states that any outcome implementable by any mechanism
in which agents play an equilibrium is also implementable by a direct mechanism
in which they report truthfully, obtained by folding the equilibrium strategies
into the mechanism itself. The reinforcement learning reading is that there is
no loss of generality in assuming the agents behave as specified and putting the
whole burden on the rules, which is what a reward engineer does implicitly.

\begin{example}[Why the second-price auction is honest, by cases]
\label{ex:second-price}
A single item is sold to the highest bidder, who pays the second-highest bid.
Suppose an agent values the item at $10$ and the highest competing bid is $7$.
Bidding truthfully wins the item and pays $7$, for a surplus of $3$. Bidding
$12$ also wins and also pays $7$, for the same surplus of $3$, so overbidding
never helps. Bidding $5$ loses the item, for a surplus of $0$, so underbidding
can only hurt. Now suppose the agent values the item at $5$ against the same
competing bid of $7$. Bidding truthfully loses, for a surplus of $0$. Bidding
$8$ wins and pays $7$, for a surplus of $-2$, so overbidding can hurt. The four
cases exhaust the possibilities and establish dominant-strategy incentive
compatibility without any assumption about the other bidders' behaviour or
beliefs.

The structure of the argument is the one worth carrying away. The agent's bid
determines only whether it wins, never what it pays, so the bid can be set to
make the win-or-lose decision correct, and the correct decision is to win
exactly when the price is below the value, which is what a truthful bid
achieves. This decoupling of what you influence from what you pay is the
mechanism designer's central device, and its reinforcement learning counterpart
is a reward term that an agent cannot game because the agent's action does not
enter it, which is also the defining property of the baselines of
\cref{sec:potential}. \exend
\end{example}

\begin{definition}[Vickrey-Clarke-Groves mechanism;
\citealp{vickrey1961counterspeculation,clarke1971multipart,groves1973incentives}]
\label{def:vcg}
Let the allocation rule maximise reported welfare,
$f(\hat\theta) \in \argmax_o \sum_i \util_i(o, \hat\theta_i)$, and let the
transfer charge each agent the externality it imposes,
\begin{align}
t_i(\hat\theta) \;=\;
\max_{o} \sum_{j \ne i} \util_j(o, \hat\theta_j)
\;-\;
\sum_{j \ne i} \util_j\big(f(\hat\theta), \hat\theta_j\big).
\label{eq:vcg}
\end{align}
Then truthful reporting is a dominant strategy and the allocation is welfare
maximising.
\end{definition}

\begin{rlreading}
\Cref{eq:vcg} is a counterfactual baseline. The first term is the team's value
with agent $i$ removed and the rest free to reorganise; the second is the team's
value with agent $i$ present but its own contribution excluded. The difference
is the harm agent $i$ does to everybody else, and charging it makes the agent's
own objective coincide with the team's. That is precisely the construction of
\cref{prop:difference-potential} with absence as the default action, so the VCG
transfer and the difference reward of cooperative multi-agent reinforcement
learning are the same formula serving two purposes: here to make truth-telling
optimal, there to reduce the variance of a policy gradient. The property that
delivers both is that the subtracted term does not depend on what agent $i$
does. \Cref{ex:vcg} computes it on two items and three bidders.
\end{rlreading}

\begin{example}[VCG on two items and three bidders]
\label{ex:vcg}
Two items, $A$ and $B$, and three bidders. The first values $A$ at $6$ and has
no use for $B$; the second values $B$ at $5$ and has no use for $A$; the third
values the pair at $9$ and either item alone at nothing. The welfare-maximising
allocation gives $A$ to the first and $B$ to the second, for a welfare of $11$,
beating the $9$ of giving the pair to the third.

Apply \cref{eq:vcg}. Removing the first bidder, the best the others could do is
$9$, by giving the pair to the third; in the chosen allocation the others obtain
$5$. The first bidder therefore pays $9 - 5 = 4$ and keeps a surplus of
$6 - 4 = 2$. Removing the second bidder, the others could obtain $9$ and in the
chosen allocation obtain $6$, so the second pays $3$ and keeps $2$. Removing the
third bidder changes nothing, since the chosen allocation does not involve it
and the others still obtain $11$, so the third pays $0$. Total revenue is $7$.

The first bidder pays $4$ for an item it values at $6$, and the $4$ is not a
price in any market sense: it is the harm the first bidder's presence does to
the others, namely that the third bidder loses the chance to take both items.
Read as reward design, the VCG transfer is the difference reward of
\cref{prop:difference-potential} with the default action being absence, and the
agent's shaped reward is its marginal contribution to the team. The property
that makes VCG truthful is the same property that makes the difference reward an
unbiased baseline: the subtracted term does not depend on the agent's own
report. The connection runs both ways, and \cref{sec:coalitions} completes it by
showing that averaging the same marginal contribution over orders rather than
evaluating it once gives the Shapley value. \exend
\end{example}

\begin{example}[Two auction formats, one revenue]
\label{ex:revenue-equivalence}
Two bidders have values drawn independently and uniformly on $[0,1]$. In the
second-price auction each bids its value, by \cref{ex:second-price}, and the
seller receives the second-highest value, whose expectation for two uniform
draws is $1/3$. In the first-price auction bidding one's value yields zero
surplus, so bidders shade; the symmetric Bayes-Nash equilibrium bid is
$b(v) = \tfrac{n-1}{n} v$, which for $n = 2$ is $v/2$. The seller then receives
half the highest value, whose expectation is $\tfrac12 \cdot \tfrac23 = 1/3$.
The two formats raise the same revenue. A Monte Carlo check over five hundred
thousand draws returns $0.3338$ and $0.3335$ respectively, confirming the
closed forms.

Revenue equivalence is the general statement behind this coincidence, and the
revenue-maximising mechanism is not either of these two but one that sometimes
refuses to sell: \citet{myerson1981optimal} shows that for uniform values the
optimal reserve price is $1/2$ regardless of the number of bidders, so the
seller should walk away from a sale whose highest bid lies below half the
support, trading efficiency for revenue. Read as reward design, the reserve is a
threshold below which the designer prefers no activity to activity, which is the
same device as an action mask or a minimum-quality gate in a deployed system,
and the lesson is that revenue maximisation and welfare maximisation select
different rules. \exend
\end{example}

Two impossibility results bound what the designer can do, and both have direct
consequences for systems that aggregate preferences.

\begin{theorem}[Impossibility in social choice;
\citealp{arrow1950difficulty,gibbard1973manipulation,satterthwaite1975strategy}]
\label{thm:arrow-gs}
With at least three alternatives and at least two agents, no rule aggregating
individual rankings into a social ranking satisfies unrestricted domain,
positive association of social and individual values, independence of
irrelevant alternatives, citizens' sovereignty and non-dictatorship
simultaneously; these are the conditions of \citet{arrow1950difficulty}, and
the familiar restatement in which the second and fourth are replaced by
unanimity belongs to the second edition of the accompanying monograph rather
than to the paper cited here. With at least three alternatives in the range, no deterministic
voting rule without transfers is simultaneously onto, non-dictatorial and
strategy-proof.
\end{theorem}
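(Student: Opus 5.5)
The plan is to prove the two halves separately and to reduce both to one combinatorial core: a pivotal-voter argument on preference profiles. For the Arrow half I will first derive the weak Pareto (unanimity) property from the stated conditions. Citizens' sovereignty supplies, for any pair $x,y$, some profile at which society ranks $x$ above $y$. Positive association then lets me raise $x$ in every individual ranking without reversing that social ranking. Independence of irrelevant alternatives transfers the conclusion to any profile on which everyone ranks $x$ above $y$. From there I will use the standard route: unrestricted domain and unanimity plus IIA imply that the family of decisive coalitions is an ultrafilter on the finite agent set.

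The Arrow half proceeds in three steps. First, a field-expansion lemma: if a coalition $C$ is decisive for one ordered pair $(x,y)$, it is decisive for every ordered pair. The proof builds a profile on a third alternative $z$, which needs at least three alternatives. Second, a group-contraction lemma: if $C$ is decisive and $C = C_1 \cup C_2$ is a partition, then one of the parts is decisive. I will show this with the Condorcet-style profile in which $C_1$ ranks $x \succ y \succ z$, $C_2$ ranks $z \succ x \succ y$, and everyone else ranks $y \succ z \succ x$, reading off transitivity of the social ranking. Third, iterate the contraction from the grand coalition, which is decisive by unanimity. Since the agent set is finite, this ends at a singleton decisive coalition, that is, a dictator. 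I expect the field-expansion lemma to be the main obstacle, specifically checking carefully that IIA applies to each constructed profile.

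For the Gibbard–Satterthwaite half, I will reduce to the Arrow half rather than re-running the argument. Given an onto, strategy-proof voting rule $g$, I will first show monotonicity: if $g(P)=x$ and $x$ does not fall relative to any alternative in any voter's ranking when passing to $P'$, then $g(P')=x$. This follows by changing one voter at a time and invoking strategy-proofness in both directions. Onto together with monotonicity gives unanimity on the range. Next I will define a social ranking: $x$ is ranked above $y$ at $P$ when $g$ picks $x$ at the profile obtained by moving $\{x,y\}$ to the top of every ranking while keeping their relative order. Monotonicity makes this a well-defined, complete and transitive ranking of the range; transitivity is checked on triples by lifting three alternatives. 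The ranking also satisfies unanimity and IIA, since only the relative order of $x$ and $y$ matters.

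The range has at least three elements, so the Arrow half under the unanimity formulation, which I obtain along the way, yields a dictator for the induced ranking. Finally, I will show that this dictator's top choice in the range is exactly $g(P)$, which makes $g$ dictatorial and contradicts non-dictatorship. In this half the delicate step is the transitivity of the induced ranking, and I would verify it first on a three-element range.
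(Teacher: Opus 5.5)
Your plan is the standard argument and is sound in outline. The paper gives no proof of its own and only cites Arrow, Gibbard and Satterthwaite. Your two halves follow the cited routes: a decisive-set descent for Arrow, which is the classical minimal-decisive-set argument run as an iteration, and a reduction of the manipulation theorem to Arrow's, which is Gibbard's strategy.

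\textbf{Almost-decisiveness.} One link in the Arrow half does not close as you have stated it. The Condorcet-style profile, combined with independence of irrelevant alternatives, shows only one thing: $C_1$ wins $x$ over $z$ at profiles where $C_1$ prefers $x$ to $z$ \emph{and every other agent prefers $z$ to $x$}. That makes $C_1$ (or $C_2$, for $z$ over $y$) almost decisive for one pair, not decisive in your sense. Your field-expansion lemma, as stated, requires full decisiveness as input, so the chain breaks there.

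The fix is to state and prove field expansion with almost-decisiveness as the hypothesis. The same construction works:
\begin{itemize}
\item put every member of $C$ on $x \succ y \succ z$;
\item put every other agent on $y$ above both $x$ and $z$, with their $x$--$z$ comparison copied from the target profile;
\item almost-decisiveness gives $x \succ y$, unanimity gives $y \succ z$, and transitivity plus independence give $x \succ z$ at the target profile.
\end{itemize}
Then run the descent on almost-decisive coalitions. The contraction lemma needs only that $C$ is almost decisive for $(x,y)$, since at the Condorcet profile everyone outside $C$ prefers $y$ to $x$.

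\textbf{Reusing the Arrow half.} When you recycle the Arrow half for Gibbard--Satterthwaite, note that your induced ranking is defined only on profiles of strict rankings. Unrestricted domain over weak orders is therefore not available there. The argument survives because every profile you construct can be taken strict: the unanimity transfer, field expansion and the Condorcet profile. Say so explicitly, by proving the unanimity-plus-independence version on the linear-order domain.

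\textbf{Completeness of the induced ranking.} This needs the outcome at the profile with $\{x,y\}$ lifted to the top to lie in $\{x,y\}$. That comes from the Pareto property (if every voter prefers $a$ to $b$, then $g \neq b$), which you obtain from ontoness together with monotonicity, not from monotonicity alone.

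With these adjustments the proposal is a complete proof.
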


\begin{example}[The aggregation step of preference-based training]
\label{ex:rlhf-aggregation}
A reward model fitted to comparisons from several annotators is a preference
aggregation rule. \Cref{thm:arrow-gs} says that no such rule can satisfy the
four classical desiderata at once, so every choice of aggregation gives one of
them up, and which one is given up is a design decision that is usually left
implicit. Fitting a single scalar reward by maximum likelihood over a
Bradley-Terry model, which is the standard practice, abandons independence of
irrelevant alternatives, because the fitted ranking of two responses depends on
the other comparisons in the dataset. Taking a majority vote per comparison
abandons strategy-proofness, because an annotator who knows the aggregation can
misreport to shift the outcome. Neither failure is a bug to be fixed; they are
the two sides of an impossibility. The practical consequence is that the right
question about an aggregation scheme is which axiom it drops, and the right
report is to state it. \exend
\end{example}

The second body of results in this section quantifies what selfish behaviour
costs when the designer does not intervene, which is the question a
reinforcement learning practitioner faces whenever independent agents are
deployed without a shared objective.

\begin{definition}[Price of anarchy, \citealp{koutsoupias1999worst}; price of
stability, \citealp{anshelevich2008price}]
\label{def:poa}
Let $\mathrm{opt}$ be the optimal social welfare and let $\mathcal{E}$ be a set
of equilibria. The price of anarchy is
$\mathrm{opt} / \min_{\pi \in \mathcal{E}} \mathrm{welfare}(\pi)$ for a welfare
to be maximised, or the reciprocal for a cost to be minimised, and the price of
stability replaces the minimum by the maximum.
\end{definition}

\begin{rlreading}
The price of anarchy is the number a practitioner should compute before blaming
an optimiser. It is the ratio between what a centralised planner achieves on the
model and what independently trained agents achieve at equilibrium, so it
measures the cost of decentralisation itself rather than any shortfall in
training. A gap equal to the price of anarchy is the system working as
specified; a gap larger than it is a learning failure. The price of stability is
the same ratio at the best equilibrium rather than the worst, so the difference
between the two is exactly what an equilibrium selection device could recover,
and a designer who cannot change the rewards is choosing among the equilibria in
that interval. \Cref{ex:pigou} computes both for the smallest congestion
instance, where the answer is $4/3$, and \cref{thm:smoothness} says when the
bound survives the move from equilibria to whatever no-regret learning actually
reaches.
\end{rlreading}

\begin{theorem}[Smoothness and robust price of anarchy;
\citealp{roughgarden2015intrinsic}]
\label{thm:smoothness}
A cost-minimisation game is $(\lambda,\mu)$-smooth when for every pair of
profiles $a$ and $a^\star$,
$\sum_i \mathrm{cost}_i(a^\star_i, a_{-i}) \le \lambda\, \mathrm{Cost}(a^\star)
 + \mu\, \mathrm{Cost}(a)$ with $\lambda > 0$ and $\mu < 1$. Any such bound
implies a price of anarchy of at most $\lambda/(1-\mu)$ not only for pure Nash
equilibria but, with no quantitative degradation, for mixed Nash equilibria, for
correlated equilibria, and for the average objective value of every outcome
sequence generated by no-regret learners, the last clause being the coarse
correlated case.
\end{theorem}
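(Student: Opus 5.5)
The argument is the standard smoothness proof. It applies one inequality, the smoothness condition evaluated at the realised profile against a fixed optimal profile $a^\star$, to each of the four equilibrium classes in turn. Each class is strictly larger than the previous one, since $\NE \subseteq \CE \subseteq \CCE$ as recorded in \cref{sec:solution-concepts}, and mixed Nash equilibria, correlated equilibria and coarse correlated equilibria are all distributions $\sigma \in \Simplex{\Ac}$. It therefore suffices to prove the bound for an arbitrary coarse correlated equilibrium $\sigma$, and then to handle the no-regret sequence as an approximate one.

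\textbf{Pure Nash equilibria.} I would first do the pure case as a template. Let $a$ be a pure Nash equilibrium and fix an optimal profile $a^\star$. The equilibrium condition gives $\mathrm{cost}_i(a) \le \mathrm{cost}_i(a^\star_i, a_{-i})$ for every $i$. Summing over $i$ and applying smoothness yields
\begin{align*}
\mathrm{Cost}(a) = \sum_i \mathrm{cost}_i(a) \le \sum_i \mathrm{cost}_i(a^\star_i, a_{-i}) \le \lambda\,\mathrm{Cost}(a^\star) + \mu\,\mathrm{Cost}(a).
\end{align*}
Rearranging with $\mu < 1$ gives $\mathrm{Cost}(a) \le \tfrac{\lambda}{1-\mu}\mathrm{Cost}(a^\star)$. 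Note that only the constant deviation $a^\star_i$ was used, never a deviation that depends on $a$.

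\textbf{Coarse correlated equilibria.} Because only the constant deviation was used, the same proof lifts to coarse correlated equilibria. For $\sigma$ a coarse correlated equilibrium, \cref{def:ce-cce} with $\phi_i \equiv a^\star_i$ gives $\Exp_{a\sim\sigma}[\mathrm{cost}_i(a)] \le \Exp_{a\sim\sigma}[\mathrm{cost}_i(a^\star_i,a_{-i})]$, with the inequality reversed for costs. Sum over $i$ and use linearity of expectation. Smoothness holds pointwise for every $a$, so I can take its expectation under $\sigma$. This gives $\Exp_\sigma[\mathrm{Cost}(a)] \le \lambda\,\mathrm{Cost}(a^\star) + \mu\,\Exp_\sigma[\mathrm{Cost}(a)]$, and the same rearrangement applies with the same constant. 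This covers mixed Nash equilibria and correlated equilibria as special cases, which is why there is "no quantitative degradation".

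\textbf{No-regret sequences.} Let $a^1,\dots,a^T$ be generated by agents whose cost version of \cref{eq:external-regret} satisfies $\Regret_i(T) \le \rho_i(T) = o(T)$. Comparing with the fixed action $a^\star_i$ gives $\sum_t \mathrm{cost}_i(a^t) \le \sum_t \mathrm{cost}_i(a^\star_i,a^t_{-i}) + \rho_i(T)$. Summing over $i$ and $t$ and applying smoothness at each $t$ yields
\begin{align*}
\frac1T\sum_t \mathrm{Cost}(a^t) \le \frac{\lambda}{1-\mu}\,\mathrm{Cost}(a^\star) + \frac{1}{(1-\mu)T}\sum_i \rho_i(T).
\end{align*}
The last term vanishes as $T\to\infty$. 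Equivalently, the empirical distribution $\bar\sigma_T$ is an $\varepsilon_T$-coarse correlated equilibrium with $\varepsilon_T \to 0$.

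The main obstacle is not algebraic. It is pinning down what "no quantitative degradation" means in the learning clause. The bound holds exactly only in the limit, or up to an additive $o(1)$ regret term at finite $T$. I would state it as $\limsup_T \frac1T\sum_t \mathrm{Cost}(a^t) \le \frac{\lambda}{1-\mu}\mathrm{Cost}(a^\star)$, and make explicit that the optimal profile $a^\star$ is chosen once and for all before the deviations are formed. That choice is what keeps every deviation constant, and hence what makes the external-regret hypothesis enough rather than swap regret. One also needs $\mathrm{Cost}(a^\star) > 0$ for the ratio to be defined.
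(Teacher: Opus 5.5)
Your proposal is correct and is the standard smoothness argument of \citet{roughgarden2015intrinsic}, which the paper cites without reproducing. Fixing one optimal profile $a^\star$ makes every deviation constant, so the coarse correlated condition (and with it the mixed Nash and correlated cases) and the external-regret bound each feed directly into the smoothness inequality, in expectation and summed over time respectively, with the learning clause holding up to the vanishing term $\sum_i \rho_i(T)/((1-\mu)T)$ as you say; two harmless wording points are that the inclusions $\NE \subseteq \CE \subseteq \CCE$ need not be strict, and that the argument only uses $\mathrm{Cost}(a) \le \sum_i \mathrm{cost}_i(a)$, so it covers any objective bounded above by the sum of the individual costs.
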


\Cref{thm:smoothness} is the result in this monograph with the most immediate
practical value, and the reason is the last clause. \Cref{sec:solution-concepts}
established that no-regret self-play converges to the coarse correlated
equilibria of the empirical game and not to anything stronger, which looks like
a weak guarantee. Smoothness converts it into a welfare guarantee: whenever the
underlying game is smooth, the welfare bound that holds at pure equilibria holds
automatically at the limit of independent no-regret learning, with no assumption
that the learners coordinate or converge. For a practitioner the recipe is to
check smoothness of the deployment game once, offline, and then to deploy
independent learners with the welfare bound in hand.

\begin{example}[The selfish-routing bound, and what a toll buys]
\label{ex:tolls}
In Pigou's network of \cref{ex:pigou} the equilibrium cost is $1$ and the
optimum is $0.75$, so the price of anarchy is $4/3$, and the same bound holds
for all of nonatomic selfish routing with affine latencies
\citep{roughgarden2002bad}. Because the bound is proved by a smoothness
argument, \cref{thm:smoothness} extends it to the empirical distribution of play
of independent no-regret learners, so a fleet of selfish routing agents trained
by any no-regret method overloads the network by at most this factor.

Now add a toll. Charge the lower link a price equal to the marginal congestion
it causes, that is $x$, so that a unit using it perceives $x + x = 2x$. Selfish
traffic now equalises $2x$ against $1$ and sends $x = 1/2$ along the lower link,
which is the planner's split, and the resulting physical cost is $0.75$. The
toll is a Pigouvian tax; in the vocabulary of \cref{sec:potential} it is a
reward modification that changes the equilibrium rather than preserving it, and
in the vocabulary of \cref{sec:stackelberg} it is the leader's commitment in a
Stackelberg game whose followers are the routing agents. The three vocabularies
describe one intervention and the arithmetic is the same in all three. \exend
\end{example}

The remainder of the correspondence is a list of problems that practitioners
meet under other names. Automated mechanism design computes the rules by
optimisation rather than by closed form: by search and mixed-integer
programming in \citet{conitzer2004complexity}, and by supervised deep learning
over a parameterised class of mechanisms in \citet{dutting2024optimal}. Neither
uses reinforcement learning, and in a multi-agent system whose designer may set
rewards this is nevertheless the formal version of reward search. The principal-agent problem, in which one party chooses an
incentive scheme and another chooses an effort level that the first observes
only through a noisy outcome, is the model of delegation to a learned policy,
and the standard result that the optimal contract trades off insurance against
incentives is the formal version of the trade-off between a dense shaped reward
and a sparse outcome reward. Two-sided matching \citep{gale1962college} is the
model of task assignment in multi-robot systems, and the deferred-acceptance
algorithm produces a stable assignment in which no pair prefers each other to
their assigned partners, a stability condition that is the matching analogue of
the core of \cref{sec:coalitions}. The economy-level version, in which a learned
planner sets taxes and learned agents respond, is a Stackelberg problem solved
by nested reinforcement learning \citep{zheng2022ai}, and its difficulties are
the difficulties of \cref{eq:stackelberg-gradient} at scale.

\begin{table}[t]
\centering
\caption{Mechanism design read as reward design. The third column names the
reinforcement learning construction and the fourth records what the mechanism
guarantees and at what cost.}
\label{tab:mechanism}
\small
\begin{tabular}{@{}L{2.8cm} L{3.2cm} L{3.2cm} L{3.2cm}@{}}
\toprule
\textsc{Mechanism} & \textsc{Guarantee} & \textsc{RL counterpart} & \textsc{Price paid} \\
\midrule
Second price & Dominant-strategy truthful & Reward independent of own report & Revenue below first price in some settings \\
VCG \citep{vickrey1961counterspeculation} & Truthful and welfare maximising & Difference reward; counterfactual baseline & Not budget balanced \\
Myerson optimal & Revenue maximising in expectation & Reward shaped by a reserve threshold & Efficiency is sacrificed \\
Majority voting & Simple and anonymous & Ensemble reward by vote & Not strategy-proof (\cref{thm:arrow-gs}) \\
Scalarised aggregation & Single scalar reward & Bradley-Terry reward model & Independence of irrelevant alternatives \\
Pigouvian toll & Shifts equilibrium to the optimum & Reward shaping that is not potential-based & Changes the equilibrium set \\
Deferred acceptance \citep{gale1962college} & Stable matching & Task assignment & Not strategy-proof for both sides \\
Learned mechanism \citep{dutting2024optimal,zheng2022ai} & Approximate, data-driven & Bilevel RL & No exact guarantee \\
\bottomrule
\end{tabular}
\end{table}

\section{Potential and supermodular structure, value factorisation, and credit}
\label{sec:potential}

Two distinct notions of potential appear in this field and they are regularly
confused, so we separate them before using either. The first is the potential of
a potential game, a function on joint actions whose differences reproduce every
agent's unilateral payoff differences. The second is the potential of
potential-based reward shaping, a function on states whose discounted
differences are added to the reward without changing the optimal policy. They
are different objects serving different purposes, and both have a role here.

The two readings of this section's material are a structural condition and an
architectural constraint. A game theorist asks which games admit a potential,
because in those games unilateral improvement is a hill-climb on a single scalar
and therefore terminates, which is the strongest convergence property available
to uncoupled learners. A reinforcement learning researcher asks which
architectures force a joint value function to have the property that
decentralised greedy action selection is jointly greedy, which is the
Individual-Global-Max condition. The two questions are the same question asked
about the game and about the hypothesis class respectively, and the
correspondence is that a value factorisation is a commitment to solving only
games with a particular structure.

\begin{nomen}{Potential, factorisation and credit vocabulary read from both sides.}{tab:nom-potential}
Exact potential $\pot$ & A single scalar all agents hill-climb &
  Unilateral payoff differences match $\pot$ \\
Finite improvement property & Better-reply dynamics terminate &
  No cycles; a pure equilibrium exists \\
Markov potential game & Class where independent PG converges &
  The sequential version of the above \\
Supermodular game & Monotone best responses &
  More of one agent's action invites more of the other's \\
Strategic complements & Positive feedback between agents &
  Topkis' lattice structure \\
Difference reward $D_i$ & Counterfactual baseline; COMA advantage &
  Team return minus your absence \\
Shapley value $\phi_i$ & Order-averaged credit &
  The same difference, over all orders \\
Efficiency axiom & Credits sum to the realised return &
  A decomposition that decomposes \\
Individual-Global-Max & Decentralised greedy is jointly greedy &
  Argmax factorises across agents \\
Monotone mixing & QMIX-style architecture &
  Sufficient for IGM, not necessary \\
Potential-based shaping & $\gamma \Psi(s') - \Psi(s)$ added to the reward &
  Invisible to the solution concept \\
Equilibrium-preserving shaping & Per-agent potential terms &
  Preserves the set, not the selection \\
\end{nomen}

\begin{definition}[Potential game; \citealp{monderer1996potential}]
\label{def:potential-game}
A normal-form game is an exact potential game when there exists
$\pot : \Ac \to \Real$ such that for every agent $i$, every
$a_{-i} \in \Ac_{-i}$ and every $a_i, a_i' \in \Ac_i$,
\begin{align}
\util_i(a_i, a_{-i}) - \util_i(a_i', a_{-i})
\;=\;
\pot(a_i, a_{-i}) - \pot(a_i', a_{-i}).
\label{eq:exact-potential}
\end{align}
It is an ordinal potential game when the two sides of \cref{eq:exact-potential}
merely have the same sign, and a weighted potential game when they are
proportional with an agent-dependent positive weight.
\end{definition}

\begin{rlreading}
A potential is a single scalar that every agent is implicitly hill-climbing even
though no agent is optimising it and no agent can see it. That is the property
single-agent reinforcement learning has by construction, since the return is one
number, and that multi-agent reinforcement learning loses. Recovering it is
worth a great deal: $\pot$ is a Lyapunov function for better-reply dynamics, so
independent learners that only take improving steps cannot cycle and must stop
at a local maximum of $\pot$, which is a pure Nash equilibrium. This is the
strongest convergence statement available in this monograph to agents that do
not coordinate, and it is why the structural check in \cref{sec:training} asks
for potential structure before it asks anything else. The sequential version,
the Markov potential game at the end of this section, is the class in which
independent policy gradient provably converges.
\end{rlreading}

The structural consequence just invoked deserves its name. Any sequence of
unilateral strictly improving deviations increases $\pot$ and therefore
terminates, which is the finite improvement property, so every finite ordinal
potential game has a pure Nash equilibrium and better-reply dynamics converge to
one. Two special cases cover most of what arises in practice. A common-payoff game is an exact potential game with
$\pot = \util$, so cooperative tasks are potential games by construction. A game
in which each agent's payoff is its own contribution to a shared objective minus
a term independent of its own action is also an exact potential game, which is
the design principle behind difference rewards.

\begin{example}[A potential, exhibited and used]
\label{ex:potential}
Take the two-agent game with row payoffs
$\big(\begin{smallmatrix} 2 & 4 \\ 5 & 1 \end{smallmatrix}\big)$ and column
payoffs $\big(\begin{smallmatrix} 2 & 4 \\ 4 & 0 \end{smallmatrix}\big)$, which
are not equal, so the game is not common-payoff. It is nevertheless an exact
potential game with
$\pot = \big(\begin{smallmatrix} 0 & 2 \\ 3 & -1 \end{smallmatrix}\big)$, as one
verifies in four lines: for the row agent,
$2 - 5 = -3 = 0 - 3$ in the first column and $4 - 1 = 3 = 2 - (-1)$ in the
second; for the column agent, $2 - 4 = -2 = 0 - 2$ in the first row and
$4 - 0 = 4 = 3 - (-1)$ in the second.

The potential does three things at once. It proves a pure equilibrium exists,
since a finite function has a maximum and the maximiser is unlocked by no
unilateral deviation. It identifies the equilibria as the local maxima of
$\pot$: here $(0,1)$ with $\pot = 2$ and $(1,0)$ with $\pot = 3$ are both
equilibria while $(0,0)$ and $(1,1)$ are not. And it orders them, with the
global maximiser $(1,0)$ distinguished, which is the natural selection rule and
is the one that stochastic stability picks out in \cref{ex:log-linear}. Two
independent learners that each improve greedily will stop at one of the two
local maxima and which one depends on the initial condition, which is the
general shape of equilibrium selection in this class: convergence is
guaranteed, the destination is not. \exend
\end{example}

\begin{proposition}[Difference rewards preserve the potential structure]
\label{prop:difference-potential}
Let $\util_i(a) = \util(a)$ be a common payoff and define the difference reward
$D_i(a) = \util(a) - \util(c_i, a_{-i})$ for an arbitrary fixed default action
$c_i \in \Ac_i$. Then the game with payoffs $\{D_i\}$ is an exact potential
game with potential $\util$, and it has the same pure Nash equilibria and the
same better-reply graph as the original common-payoff game.
\end{proposition}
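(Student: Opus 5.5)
The argument checks the defining identity \cref{eq:exact-potential} directly. The key observation is that the subtracted term $\util(c_i, a_{-i})$ does not depend on agent $i$'s own action. Fix an agent $i$, an opponent profile $a_{-i}$ and two own actions $a_i, a_i'$. Then
\begin{align*}
D_i(a_i, a_{-i}) - D_i(a_i', a_{-i})
\;=\; \util(a_i, a_{-i}) - \util(a_i', a_{-i}),
\end{align*}
because the terms $\util(c_i, a_{-i})$ cancel. The right-hand side is exactly the unilateral difference of $\pot = \util$. So \cref{def:potential-game} holds with potential $\util$ for every agent. The default $c_i$ is arbitrary and may differ across agents, since each agent's subtracted term is checked separately.

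The remaining two claims follow from the fact that, for every agent, $D_i$ and the original payoff $\util$ have identical unilateral differences. I will derive both from this identity rather than from the potential abstractly.

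\begin{itemize}
\item \textbf{Better-reply graph.} An edge from $(a_i', a_{-i})$ to $(a_i, a_{-i})$ exists in the $\{D_i\}$ game iff $D_i(a_i, a_{-i}) > D_i(a_i', a_{-i})$. By the displayed identity this holds iff $\util(a_i, a_{-i}) > \util(a_i', a_{-i})$, which is exactly the edge condition in the common-payoff game. So the two graphs coincide edge for edge.
\item \textbf{Pure Nash equilibria.} A pure profile is an equilibrium iff it has no outgoing better-reply edge, that is, iff it is a sink of the better-reply graph. Since the graphs coincide, the sinks coincide, and so do the pure equilibria.
\end{itemize}

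It is worth remarking, though not needed for the proof, that best-response sets coincide as well. Indeed, agent $i$'s payoff against any fixed $\pi_{-i}$ differs from $\util$ only by a term constant in $\pi_i$. Hence the mixed equilibria also agree.

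There is no real obstacle here. The only point requiring care is notational: the statement writes $\util_i = \util$ for the original game, and that game is itself a potential game with $\pot = \util$. The proof should therefore make explicit that the two games share the same potential, not just equivalent ones.
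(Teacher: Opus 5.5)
Your proposal is correct and matches the paper's proof: cancelling the $a_i$-independent term $\util(c_i,a_{-i})$ gives \cref{eq:exact-potential} with $\pot=\util$, and the same display shows that the better-reply graphs coincide, so their sinks, the pure equilibria, coincide too. Your extra remark is also correct and goes slightly beyond what the paper states: best responses to any mixed $\pi_{-i}$ agree, and therefore so do the mixed equilibria.
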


\begin{proof}
For fixed $a_{-i}$ the subtracted term $\util(c_i, a_{-i})$ does not depend on
$a_i$, so $D_i(a_i,a_{-i}) - D_i(a_i',a_{-i}) = \util(a_i,a_{-i}) -
\util(a_i',a_{-i})$, which is \cref{eq:exact-potential} with $\pot = \util$.
Identity of the better-reply graphs is immediate from the same display.
\end{proof}

\Cref{prop:difference-potential} is the formal content of the difference-reward
construction of \citet{wolpert2001optimal}: subtracting an agent-independent
baseline changes the variance of the learning signal and changes nothing about
the equilibria. The counterfactual baseline of COMA
\citep{foerster2018counterfactual} is the same operation applied to the
advantage inside a policy gradient, where it is additionally unbiased because
the baseline does not depend on the agent's own action, so the correspondence is
that an exact-potential-preserving reward transformation in game theory is an
unbiased baseline in policy-gradient estimation. The Shapley value supplies the
principled version of the same idea.

\begin{definition}[Shapley value; \citealp{shapley1953value}]
\label{def:shapley}
Let $v : 2^{\agents} \to \Real$ be a characteristic function with
$v(\emptyset) = 0$. The Shapley value of agent $i$ is
\begin{align}
\phi_i(v) \;=\; \sum_{C \subseteq \agents \setminus \{i\}}
\frac{\abs{C}!\,(\nag - \abs{C} - 1)!}{\nag!}
\Big[ v(C \cup \{i\}) - v(C) \Big].
\label{eq:shapley}
\end{align}
It is the unique map satisfying efficiency $\sum_i \phi_i(v) = v(\agents)$,
symmetry, the null-player axiom and linearity.
\end{definition}

\begin{rlreading}
Read $v(C)$ as the return the team obtains when the agents outside $C$ are
replaced by a default policy, which is a quantity a simulator can measure by
ablation. Then $\phi_i$ is a credit signal for agent $i$, and the four axioms
are four properties a practitioner would want from one. Efficiency says the
credits sum to the realised team return, so nothing is invented or lost. The
null-player axiom says an agent whose presence never changes the return receives
nothing. Symmetry says credit tracks contribution rather than agent index.
Linearity says credit for a sum of tasks is the sum of the credits. The simplest
instance fits in one line: with two agents, $v(\{1\}) = v(\{2\}) = 0$ and
$v(\{1,2\}) = 10$, \cref{eq:shapley} gives $\phi_1 = \phi_2 = 5$, since each
agent adds $0$ when it arrives first and $10$ when it arrives second.
\Cref{ex:difference-vs-shapley} works a three-agent case in which the cheap
alternative over-attributes by exactly the synergy.
\end{rlreading}

\begin{proposition}[Difference rewards are a single Shapley term]
\label{prop:difference-shapley}
Fix a joint action $a$ and define the characteristic function
$v_a(C) = \util(a_C, c_{-C}) - \util(c)$, the gain over the all-default profile
when the agents in $C$ play their components of $a$ and the rest play their
default actions, normalised so that $v_a(\emptyset) = 0$ as
\cref{def:shapley} requires. Then the difference
reward $D_i(a)$ of \cref{prop:difference-potential} equals the marginal
contribution $v_a(C \cup \{i\}) - v_a(C)$ evaluated at the single coalition
$C = \agents \setminus \{i\}$, which is the term of weight $1/\nag$ in
\cref{eq:shapley}. The Shapley value averages this marginal contribution over
all $2^{\nag - 1}$ coalitions with the stated weights, and coincides with the
difference reward for every $a$ when $\util$ is additively separable.
\end{proposition}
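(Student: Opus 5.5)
The plan is a direct computation from \cref{def:shapley} and the definition of $D_i$ in \cref{prop:difference-potential}, in three steps. The only structural input is the normalisation $v_a(\emptyset) = \util(c) - \util(c) = 0$. It makes $v_a$ a legitimate characteristic function, so \cref{eq:shapley} applies.

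First, evaluate the marginal contribution at the grand complement $C = \agents \setminus \{i\}$. By definition $v_a(\agents) = \util(a) - \util(c)$, since every agent plays its component of $a$. Also $v_a(\agents\setminus\{i\}) = \util(c_i, a_{-i}) - \util(c)$, since only agent $i$ reverts to its default. The constant $\util(c)$ cancels in the difference, leaving
\[
v_a(\agents) - v_a(\agents\setminus\{i\}) \;=\; \util(a) - \util(c_i, a_{-i}) \;=\; D_i(a).
\]
Second, read off the weight this coalition carries in \cref{eq:shapley}. With $\abs{C} = \nag - 1$ the coefficient is $(\nag-1)!\,0!/\nag! = 1/\nag$, which is the stated weight. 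I will also record that the coalitions $C \subseteq \agents\setminus\{i\}$ number $2^{\nag-1}$. Grouping them by size $k$, their weights sum to
\[
\sum_{k=0}^{\nag-1} \binom{\nag-1}{k}\frac{k!\,(\nag-k-1)!}{\nag!} \;=\; \sum_{k=0}^{\nag-1}\frac{1}{\nag} \;=\; 1,
\]
so $\phi_i(v_a)$ is a genuine weighted average of marginal contributions. This justifies the word ``averages'' in the statement.

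Third, the separable case. Suppose $\util(a) = \sum_j g_j(a_j)$. Then $v_a(C) = \sum_{j \in C}\big(g_j(a_j) - g_j(c_j)\big)$, so the marginal contribution of $i$ to any $C$ equals $g_i(a_i) - g_i(c_i)$, independent of $C$. Because the weights sum to one, $\phi_i(v_a) = g_i(a_i) - g_i(c_i)$. The same substitution into $D_i(a) = \util(a) - \util(c_i,a_{-i})$ gives the same quantity, since all terms with $j \ne i$ cancel. Hence $\phi_i(v_a) = D_i(a)$ for every $a$.

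No step is a real obstacle. The only place needing care is the weight identity in the second step. It should be stated explicitly, because the separable-case equality depends on the weights forming a probability distribution and not merely on the single term $1/\nag$. I will close by noting that the converse is not claimed. Outside the separable case the marginal contributions vary with $C$, and the two quantities differ, as the gap between $\phi_3 = 6.67$ and the difference reward $10$ in \cref{ex:shapley-team} already shows.
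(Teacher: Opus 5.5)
Your argument is correct and is the direct computation that the statement itself describes; the paper states this proposition without a written proof. Your explicit check that the Shapley weights over the $2^{\nag-1}$ coalitions sum to one is exactly the ingredient the separable case needs, and the only thing to soften is the closing aside: non-separability guarantees only that the marginal contributions vary with $C$ for some joint action $a$, so the two credits differ in general but can still coincide at particular $a$.
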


\begin{example}[Where the two credit schemes disagree, in numbers]
\label{ex:difference-vs-shapley}
Three agents share a task. Working alone the first two each produce $4$ and the
third produces $1$. Any pair produces a bonus of $3$ on top of their individual
contributions and all three produce a further $2$, so the characteristic
function is $v(\{1\}) = v(\{2\}) = 4$, $v(\{3\}) = 1$, $v(\{1,2\}) = 11$,
$v(\{1,3\}) = v(\{2,3\}) = 8$ and $v(\{1,2,3\}) = 14$.

The Shapley value of \cref{eq:shapley} is $(5.667,\, 5.667,\, 2.667)$, which
sums to $14$ exactly, as efficiency demands. The difference reward with absence
as the default is $D_i = v(\agents) - v(\agents \setminus \{i\})$, giving
$(6,\, 6,\, 3)$, which sums to $15$.

The shortfall of $1$ is the synergy that no single agent can claim alone, and
the fact that the difference rewards over-attribute by exactly that amount is
the general phenomenon: marginal contributions evaluated at the full coalition
double count complementarities. In a cooperative multi-agent reinforcement
learning system this is the difference between a credit signal that sums to the
team return and one that does not, and it matters whenever the credits are used
as learning targets rather than merely as baselines, because the surplus is
distributed arbitrarily across agents. The cost of the principled version is
$2^{\nag-1}$ evaluations instead of one, which is why Shapley-based multi-agent
methods \citep{wang2020shapley,wang2022shaq} estimate \cref{eq:shapley} by
sampling permutations. \exend
\end{example}

The reinforcement learning reading is that difference rewards and Shapley-value
credit assignment differ in how many counterfactuals they average over, that the
former is cheap and order-dependent while the latter is expensive and
order-free, and that the sampling-based estimators used by Shapley-value
multi-agent methods \citep{wang2020shapley,wang2022shaq} are Monte Carlo
approximations of \cref{eq:shapley}. The axioms are what justify the extra cost:
efficiency says the credits sum to the realised team return, which is the
property that makes a decomposition a decomposition, and it is exactly the
property that value-factorisation architectures impose by construction.

\begin{definition}[Individual-Global-Max]
\label{def:igm}
A joint action-value $Q_{\mathrm{tot}}(s,a)$ and per-agent utilities
$\{Q_i(s,a_i)\}$ satisfy the Individual-Global-Max condition at $s$ when
\begin{align}
\argmax_{a \in \Ac} Q_{\mathrm{tot}}(s,a)
\;=\;
\Big( \argmax_{a_1} Q_1(s,a_1), \dots, \argmax_{a_\nag} Q_\nag(s,a_\nag) \Big).
\label{eq:igm}
\end{align}
\end{definition}

\begin{rlreading}
\Cref{eq:igm} is the condition under which decentralised execution is free. The
left-hand side is what a centralised controller would do, namely maximise the
joint value over $\prod_i \Ac_i$; the right-hand side is what $\nag$ independent
heads do, namely each maximise its own small utility over its own actions. When
the two agree, the agents need no communication, no shared seed and no
tie-breaking convention at run time, because greedy local action selection lands
on the global optimum by construction, and the coordination problem of
\cref{rem:mmdp-gap} does not arise. The cost is that not every joint value
function admits such a decomposition, so imposing \cref{eq:igm} through an
architecture restricts the hypothesis class, and \cref{ex:igm-counterexample}
exhibits a two-by-two matrix outside it.
\end{rlreading}

\begin{proposition}[Monotone mixing implies IGM]
\label{prop:monotone-igm}
If $Q_{\mathrm{tot}}(s,a) = f_s(Q_1(s,a_1),\dots,Q_\nag(s,a_\nag))$ with
$\partial f_s / \partial Q_i \ge 0$ for every $i$, then
\cref{eq:igm} holds. In particular the additive factorisation
$Q_{\mathrm{tot}} = \sum_i Q_i$ of \citet{sunehag2018value} and the monotone
mixing network of \citet{rashid2018qmix} satisfy IGM. The converse fails: there
exist common-payoff stage games whose action-value admits no monotone
factorisation, the canonical example being the two-agent three-action matrix
game of \citet{son2019qtran}, so the representable class is strictly smaller than the
class satisfying IGM.
\end{proposition}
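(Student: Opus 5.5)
The forward implication is a monotonicity argument at a single state $s$, which I fix and suppress. Pick $a_i^\star \in \argmax_{a_i} Q_i(a_i)$ for each $i$ and write $a^\star = (a_1^\star,\dots,a_\nag^\star)$. For an arbitrary joint action $a$ we have $Q_i(a_i^\star) \ge Q_i(a_i)$ coordinatewise. Since $\partial f/\partial Q_i \ge 0$ for every $i$, $f$ is nondecreasing in each argument. Moving from the point $(Q_1(a_1),\dots,Q_\nag(a_\nag))$ to $(Q_1(a_1^\star),\dots,Q_\nag(a_\nag^\star))$ one coordinate at a time therefore never decreases $f$, so $Q_{\mathrm{tot}}(a^\star) \ge Q_{\mathrm{tot}}(a)$. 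Hence every profile of local greedy actions is a global greedy action, i.e. the right-hand side of \cref{eq:igm} is contained in the left-hand side. The additive case $f = \sum_i Q_i$ is immediate, since all partials equal $1$. For QMIX, the mixing weights are constrained to be nonnegative, so the partials are nonnegative and the same argument applies.

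The one delicate point, and the step I expect to need the most care, is turning this inclusion into the set equality written in \cref{eq:igm}. Under weak monotonicity, $f$ may be flat in a coordinate, and $\argmax Q_{\mathrm{tot}}$ may then contain joint actions whose components are not local maximisers. I would handle this in one of two ways.
\begin{enumerate}
\item Read \cref{eq:igm}, as the value-factorisation literature does, as the requirement that decentralised greedy selection returns a joint maximiser. That is exactly the inclusion just proved, and it is the property the reading after \cref{def:igm} needs.
\item Alternatively, add the hypothesis that $f$ is strictly increasing in each argument. If $a \notin \prod_i \argmax Q_i$, then some coordinate satisfies $Q_i(a_i) < Q_i(a_i^\star)$, and the coordinatewise chain contains one strict step, so $Q_{\mathrm{tot}}(a) < Q_{\mathrm{tot}}(a^\star)$. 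This gives equality.
\end{enumerate}
I would state both readings explicitly rather than hide the tie issue.

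For the failure of the converse I would exhibit the matrix of \citet{son2019qtran},
\begin{align*}
Q_{\mathrm{tot}} = \begin{pmatrix} 8 & -12 & -12 \\ -12 & 0 & 0 \\ -12 & 0 & 0 \end{pmatrix}.
\end{align*}
First I would check that IGM holds: the unique joint maximiser is $(0,0)$, and taking $Q_i(0)=1$, $Q_i(1)=Q_i(2)=0$ gives local argmaxes $(0,0)$.

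Then I would show that no monotone $f$ reproduces $Q_{\mathrm{tot}}$ exactly. The argument is by contrapositive of monotonicity.
\begin{enumerate}
\item Since $Q_{\mathrm{tot}}(0,1) = -12 < 0 = Q_{\mathrm{tot}}(1,1)$, and $f$ is nondecreasing with the second argument fixed at $Q_2(1)$, we must have $Q_1(0) < Q_1(1)$.
\item Symmetrically, comparing $(1,0)$ with $(1,1)$ forces $Q_2(0) < Q_2(1)$.
\item Monotonicity then gives $8 = f(Q_1(0),Q_2(0)) \le f(Q_1(1),Q_2(1)) = 0$, which is a contradiction.
\end{enumerate}
So this matrix satisfies IGM yet lies outside the class that monotone mixing can represent, which establishes that the inclusion is strict.
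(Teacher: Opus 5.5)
Your proposal is correct, and it supplies more than the paper does. The paper only asserts the forward implication. For the converse it relies on \cref{ex:igm-counterexample}, which uses the $2\times 2$ matrix $\bigl(\begin{smallmatrix} 8 & -12\\ -12 & 0\end{smallmatrix}\bigr)$; that is exactly the top-left block of the \citet{son2019qtran} matrix, and the only part of it your argument touches. The paper gets its contradiction from monotonicity in the second argument alone, noting that the two rows force opposite signs on $Q_2(0)-Q_2(1)$. You instead extract $Q_1(0)<Q_1(1)$ and $Q_2(0)<Q_2(1)$ from the off-diagonal cells and contradict at the diagonal. That is a step longer, but it keeps weak and strict inequalities explicitly apart and applies verbatim to the three-action game the proposition cites.

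Your handling of ties is a real correction. With only $\partial f_s/\partial Q_i \ge 0$, a constant $f_s$ is admissible and makes $\argmax_a Q_{\mathrm{tot}}$ all of $\Ac$, so the set equality of \cref{eq:igm} as printed can fail. Carry the repair into the proposition's last sentence as well, because \emph{strictly smaller} presupposes containment, and containment also fails under the literal weak-monotone reading. Take $f_s(x,y)=g(x+y)$ with $g$ smooth, nondecreasing, constant on $[1,\infty)$ and $g(0)<g(1)$, together with $Q_1=Q_2=(1,0)$. The resulting $Q_{\mathrm{tot}}$ has argmax set $\{(0,0),(0,1),(1,0)\}$, which is not a product, so no choice of utilities satisfies \cref{eq:igm} for it. Your strict-monotonicity reading is the one under which the class comparison is meaningful. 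Under the inclusion reading, every $Q_{\mathrm{tot}}$ satisfies IGM for suitable utilities, so strictness only says that some matrix is not monotone-representable.

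Finally, for QMIX you should also say that the activations are monotone, since nonnegative weights alone do not make the partial derivatives nonnegative.
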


\begin{example}[A two-by-two matrix that no monotone mixing network represents]
\label{ex:igm-counterexample}
Let the common payoff at a state be
$Q_{\mathrm{tot}} = \big(\begin{smallmatrix} 8 & -12 \\ -12 & 0
\end{smallmatrix}\big)$. Suppose a monotone $f$ and per-agent utilities
reproduced it. Monotonicity in the second argument means that the sign of
$Q_{\mathrm{tot}}(i,0) - Q_{\mathrm{tot}}(i,1)$ equals the sign of
$Q_2(0) - Q_2(1)$ for every row $i$. The first row gives
$8 - (-12) = 20 > 0$ and the second gives $-12 - 0 = -12 < 0$, so the required
sign is both positive and negative. No monotone $f$ exists.

The Individual-Global-Max condition nevertheless holds for this matrix, since
its argmax is the corner $(0,0)$, which is the product of the per-agent argmaxes
of the utilities $Q_1 = (1,0)$ and $Q_2 = (1,0)$. So the matrix is in the IGM
class and outside the monotone class, which is \cref{prop:monotone-igm}'s
converse failing in the smallest possible instance. The best additive surrogate
$Q_1(i) + Q_2(j)$ misfits by $8$ in the supremum norm, because additive
functions satisfy $M_{00} + M_{11} = M_{01} + M_{10}$ while this matrix has
$8 + 0 = 8$ against $-12 - 12 = -24$, a discrepancy of $32$ that is best split
four ways. A value-decomposition network trained on this stage game converges to
a function that is wrong by $8$ at every cell and no quantity of data corrects
it. Detecting the failure requires comparing against a centralised planner on
the same model, which is the fifth row of \cref{tab:diagnostics}. \exend
\end{example}

The game-theoretic content of \cref{eq:igm} is that the stage game induced at
$s$ by $Q_{\mathrm{tot}}$ is one in which greedy independent action selection
lands on a global optimum, so the coordination problem of \cref{rem:mmdp-gap}
disappears by construction. Factorisation therefore operates as an equilibrium-selection device: it
restricts the hypothesis class to games in which decentralised greedy play is
optimal, and it pays for that restriction with representational capacity, which is what the
counterexamples motivating QTRAN \citep{son2019qtran} and QPLEX
\citep{wang2021qplex} exhibit. Reading the factorisation literature through
\cref{def:igm} makes the trade-off legible: every proposal in that family is a
choice of how large a class of stage games to admit while retaining
\cref{eq:igm}.

A second structural class deserves a place beside potential games because it
yields convergence of best-response dynamics for an entirely different reason.

\begin{definition}[Supermodular game and strategic complements;
\citealp{topkis1979equilibrium,milgrom1990rationalizability}]
\label{def:supermodular}
Order each $\Ac_i$ as a lattice. A game is supermodular when each $\util_i$ is
supermodular in $a_i$ and has increasing differences in $(a_i, a_{-i})$, meaning
that the gain from raising $a_i$ is non-decreasing in $a_{-i}$. In such a game
the best-response correspondence is non-decreasing and the set of Nash
equilibria has a least and a greatest element \citep{topkis1979equilibrium};
iterated elimination of dominated strategies converges to the interval between
them, so that the rationalisable set and the equilibrium set have the same
extreme points \citep{milgrom1990rationalizability}. That the equilibrium set
is in addition a complete lattice is a later result and is not what
\citet{topkis1979equilibrium} proves.
\end{definition}

\begin{rlreading}
Supermodularity says that the more one agent does of something, the more the
others want to do of it: effort is contagious. Coverage, investment, adoption
and most tasks in which agents reinforce each other's progress have this shape.
The payoff for a practitioner is a convergence guarantee obtained without any
step-size condition. Start every head at its smallest action and iterate best
responses: the sequence is monotone increasing and converges to the least
equilibrium. Start every head at its largest and it converges to the greatest.
The two limits bracket every equilibrium of the game, so two cheap runs return
an interval that is guaranteed to contain the answer, which is a diagnostic no
other class in this monograph offers.
\end{rlreading}

The reinforcement learning reading is practical. In a supermodular game the
best-response map is monotone, so starting all agents at the smallest action
profile and repeatedly best responding produces an increasing sequence that
converges to the least equilibrium, and starting at the largest produces a
decreasing sequence converging to the greatest. Independent learners therefore
converge from either extreme without any step-size condition, and the two limits
bracket every equilibrium. Problems in which more effort by one agent raises the
return to effort by another, which covers most investment, adoption and
coverage tasks, are supermodular, and recognising this gives the same kind of
guarantee that potential structure gives, by a different route.

The second notion of potential closes the section. Potential-based reward
shaping adds $F(s,a,s') = \gamma \Psi(s') - \Psi(s)$ to the reward for an
arbitrary $\Psi : \St \to \Real$, and \citet{ng1999policy} prove that this is
the only form of additive shaping that leaves the set of optimal policies
unchanged for all transition kernels and rewards. The multi-agent extension of
\citet{devlin2011theoretical} shows that adding an agent-specific
potential-based term $F_i(s,s')$ to agent $i$'s reward in a stochastic game
leaves the set of Nash equilibria unchanged, which is the right generalisation:
in the single-agent case shaping preserves optimality, in the multi-agent case
it preserves the equilibrium set. It does not preserve which equilibrium the
learning dynamics selects, a caveat \citet{devlin2011theoretical} state
themselves, so shaping remains an equilibrium-selection device even when it is
solution-preserving. The two results together say that potential-based
shaping is the class of reward modifications that are invisible to the solution
concept, which is precisely what a designer wants from a tool meant to
accelerate learning rather than to change the problem. The connection to
\cref{def:potential-game} is that both results identify a potential function
whose differences cancel in the quantity the solution concept compares, over
time in the shaping case and across unilateral deviations in the potential-game
case.

\begin{example}[Shaping that preserves and shaping that does not]
\label{ex:shaping}
Add $F(s,a,s') = \gamma \Psi(s') - \Psi(s)$ with $\Psi$ equal to the negative
distance to a goal. Along any trajectory the added terms telescope to
$\gamma^T \Psi(s_T) - \Psi(s_0)$, so the return of every policy shifts by a
constant depending only on the endpoints, and the ranking of policies from a
fixed start is unchanged. This is the single-agent statement.

Now give two agents the shaped rewards $\Rew_1 + \gamma\Psi_1(s') - \Psi_1(s)$
and $\Rew_2 + \gamma\Psi_2(s') - \Psi_2(s)$ with each $\Psi_i$ a function of the
state alone. Each agent's value shifts by a constant for every joint policy, so
no unilateral deviation changes sign and the equilibrium set is unchanged. That
is the multi-agent statement, and the hypothesis that each $\Psi_i$ depends on
the state and not on the other agents' actions is essential. Add instead a term
that rewards agent $1$ when agent $2$ does well, as cooperation-inducing shaping
schemes do, and the payoff tensor changes: the prisoner's dilemma with each
agent's reward replaced by the sum of both rewards is a common-payoff game with
$(C,C)$ as the unique equilibrium, which is a different game with a different
answer. That may well be the right engineering decision, and the point is that
it must be reported as a change to the problem rather than as an acceleration of
its solution. \exend
\end{example}

Finally, the sequential generalisation of \cref{def:potential-game} is the
Markov potential game, in which there exists $\pot_\pi$ such that
$V_i^{(\pi_i,\pi_{-i})}(s) - V_i^{(\pi_i',\pi_{-i})}(s) =
\pot_{(\pi_i,\pi_{-i})}(s) - \pot_{(\pi_i',\pi_{-i})}(s)$ for every state, agent
and unilateral deviation \citep{macua2018learning,leonardos2022global}. In this
class independent policy gradient converges to a Nash equilibrium at a rate
polynomial in the relevant quantities, which is the largest general-sum
stochastic game class we are aware of for which a convergence guarantee of that
strength is available.
Identifying whether an application lies in it is therefore the single most
valuable structural check a practitioner can perform before choosing an
algorithm, and the two sufficient conditions that cover most cases are a common
reward and a reward of the form
$\Rew_i(s,a) = \Rew^{\mathrm{shared}}(s,a) + g_i(s,a_i)$ with the second term
depending only on the agent's own action.

\section{Coalitions, stability of credit, and bargaining}
\label{sec:coalitions}

Everything so far has been non-cooperative: agents choose actions and the
analysis predicts what they choose. Cooperative game theory abstracts away from
actions entirely and asks a different question. Given that groups of agents can
achieve certain values together, how should the value of the whole group be
divided, and which divisions are stable against a subgroup walking away? The
multi-agent reinforcement learning counterpart is credit assignment, and the
abstraction is exactly right for it: a credit scheme is a division rule, the
value of a coalition is the return achievable when the rest are replaced by a
default, and stability is the property that no subset of agents would rather
train on its own.

The two readings differ in what the numbers mean. A cooperative game theorist
treats $v(C)$ as a binding agreement about what the coalition $C$ can guarantee,
and the solution concepts are normative: they say how a surplus ought to be
divided. A reinforcement learning researcher treats $v(C)$ as a counterfactual
return measured by freezing the agents outside $C$, and the solution concepts
are operational: they say what number to put in the loss. The normative reading
supplies axioms that the operational one inherits, and the axioms are not
decorative, since efficiency is what makes a decomposition sum to the team
return and stability is what makes a decomposition robust to agents leaving.

\begin{nomen}{Cooperative-game vocabulary read from both sides.}{tab:nom-coalitions}
Characteristic function $v(C)$ & Return with the agents outside $C$ frozen &
  What a subgroup achieves on its own \\
Grand coalition $v(\agents)$ & Team return of the full joint policy &
  The number to be divided \\
Imputation $x \in \Real^\nag$ & Per-agent credit vector &
  What each agent is told it earned \\
Efficiency & Credits sum to the realised return &
  No surplus invented or lost \\
Null-player axiom & Zero credit for a no-op agent &
  Doing nothing earns nothing \\
Symmetry & Interchangeable agents get equal credit &
  Credit depends on contribution, not label \\
Core & No subgroup would rather defect &
  Stability of a credit scheme \\
Nucleolus & Lexicographic min-max of complaints &
  The least objectionable division \\
Banzhaf index \citep{dubey1979mathematical} & Unweighted marginal contribution &
  A cheaper, non-efficient alternative \\
Superadditivity & Merging coalitions never hurts &
  Cooperation is weakly beneficial \\
Convexity & Increasing marginal contributions &
  Late joiners add more; core is large \\
Bargaining set, frontier & Pareto frontier of achievable returns &
  What is jointly attainable \\
Disagreement point $d$ & Return if cooperation fails &
  Each agent's fallback \\
Nash bargaining solution & Maximiser of $\prod_i (u_i - d_i)$ &
  Product of advantages over fallback \\
\end{nomen}

\begin{definition}[Transferable-utility game, core and nucleolus;
\citealp{shapley1971cores,schmeidler1969nucleolus}]
\label{def:core}
A transferable-utility game is a pair $(\agents, v)$ with
$v : 2^{\agents} \to \Real$ and $v(\emptyset) = 0$. It is superadditive when
$v(C \cup D) \ge v(C) + v(D)$ for disjoint $C$ and $D$, and convex when
$v(C \cup \{i\}) - v(C)$ is non-decreasing in $C$. The core is
\begin{align}
\mathrm{core}(v) \;=\; \Big\{ x \in \Real^{\nag} \;:\;
\textstyle\sum_{i} x_i = v(\agents), \;\;
\sum_{i \in C} x_i \ge v(C) \ \ \forall\, C \subseteq \agents \Big\},
\label{eq:core}
\end{align}
the set of efficient divisions that no coalition can improve upon. The excess of
$C$ at $x$ is $e(C,x) = v(C) - \sum_{i \in C} x_i$, and the nucleolus is the
unique efficient $x$ lexicographically minimising the vector of excesses sorted
in decreasing order. In a convex game the core is non-empty and contains the
Shapley value; the nucleolus always lies in the core when the core is non-empty.
\end{definition}

\begin{rlreading}
The core answers a question the Shapley value does not: not what each agent
contributed, but what each agent could insist on. An allocation is in the core
when no subset of agents could do better by leaving and operating on its own, so
membership is the stability condition for a federation of independently owned
agents. The inequalities are one per subset, and checking them is a linear
program. The excess $e(C,x)$ is the complaint of coalition $C$ at the allocation
$x$, and the nucleolus is the allocation that minimises the loudest complaint,
then the second loudest, and so on; it is the fallback when the core is empty,
which \cref{ex:voting-core} shows is a real possibility. For a system whose
agents are owned by one party the core is a diagnostic rather than a constraint,
and the Shapley value remains the better learning signal because it always
exists and is continuous in $v$.
\end{rlreading}

The core is defined by $2^\nag$ linear inequalities, so membership is a linear
program, and emptiness is a real possibility. The two smallest examples exhibit
both outcomes and both are worth having by heart.

\begin{example}[The glove market: scarcity takes everything]
\label{ex:glove}
Three agents. The first two each hold a left glove and the third holds a right
glove, and a matched pair is worth $1$ while unmatched gloves are worth nothing,
so $v(C) = \min(\abs{C \cap \{1,2\}},\, \abs{C \cap \{3\}})$. The grand
coalition is worth $1$.

The core is the single point $(0, 0, 1)$. To see it, note that any core
allocation must satisfy $x_1 + x_3 \ge v(\{1,3\}) = 1$ and
$x_2 + x_3 \ge 1$, and efficiency gives $x_1 + x_2 + x_3 = 1$. Adding the first
two inequalities gives $x_1 + x_2 + 2x_3 \ge 2$, and subtracting efficiency
gives $x_3 \ge 1$; with non-negativity this forces $x_3 = 1$ and
$x_1 = x_2 = 0$. The nucleolus, being in the core, is the same point.

The Shapley value is $(\tfrac16, \tfrac16, \tfrac23)$, which is not in the core.
It gives the left-glove holders something because in two of the six orderings a
left-glove holder arrives after the right-glove holder and adds the whole unit
of value.

Both answers are correct and they answer different questions. The Shapley value
asks what each agent contributes on average and the core asks what each agent
can insist on. For a multi-agent system the practical consequence is sharp: a
credit scheme based on average marginal contribution will pay agents whose
participation is not actually necessary, and a credit scheme based on
bargaining power will pay nothing to agents who are substitutes for one another.
Which is wanted depends on whether the credits are being used as learning
signals, where the Shapley answer is better behaved because it is continuous in
$v$, or as a stability analysis of a federation of independently owned agents,
where the core answer is the relevant one. \exend
\end{example}

\begin{example}[Majority voting: no stable division exists]
\label{ex:voting-core}
Three agents, any two of whom can secure a unit of value and no single one of
whom can secure anything, so $v(C) = 1$ when $\abs{C} \ge 2$ and $0$ otherwise.
Efficiency gives $x_1 + x_2 + x_3 = 1$, and the three pair constraints give
$x_1 + x_2 \ge 1$, $x_1 + x_3 \ge 1$ and $x_2 + x_3 \ge 1$. Adding the three
gives $2(x_1 + x_2 + x_3) \ge 3$, that is $1 \ge 1.5$, a contradiction. The core
is empty.

The Shapley value is $(\tfrac13, \tfrac13, \tfrac13)$ by symmetry and remains
well defined, which is the sense in which it is more robust than the core: it
always exists, is unique and is continuous. The emptiness of the core is not a
pathology of the example but the formal content of an unstable alliance, and it
occurs in any multi-agent system in which coalitions overlap and compete. A
practitioner whose agents are independently owned should expect no division to
be immune from renegotiation in such a system, and the nucleolus, which minimises
the largest complaint even when no complaint can be eliminated, is the natural
fallback. \exend
\end{example}

\begin{example}[The warehouse team, with all three answers]
\label{ex:warehouse-core}
Return to the three agents of \cref{ex:shapley-team}, whose characteristic
function was $v(\{1\}) = v(\{2\}) = 10$, $v(\{3\}) = 4$, $v(\{1,2\}) = 26$,
$v(\{1,3\}) = v(\{2,3\}) = 16$ and $v(\agents) = 36$. The Shapley value is
$(14.67, 14.67, 6.67)$. The core is non-empty, and the Shapley value lies in it,
because every pair constraint is satisfied: $14.67 + 14.67 = 29.33 \ge 26$ and
$14.67 + 6.67 = 21.33 \ge 16$. The nucleolus is $(14.5, 14.5, 7)$, which differs
from the Shapley value by a sixth of a unit for each of the first two agents and
by a third in the other direction for the third, because it equalises the worst
complaint rather than averaging contributions.

Three schemes, three answers, all efficient and all defensible. The difference
between $14.67$ and $14.5$ is small here and is not always small; what matters
for a reinforcement learning system is that the choice of scheme is a modelling
decision with consequences for the gradient each agent receives, and reporting
which scheme was used is as necessary as reporting the learning rate. \exend
\end{example}

The bargaining problem is the non-transferable counterpart, in which the
agents cannot simply move utility between themselves and must agree on a point
of a feasible set.

\begin{definition}[Bargaining problem and its solutions;
\citealp{nash1950bargaining,kalai1975other,rubinstein1982perfect}]
\label{def:bargaining}
A bargaining problem is a pair $(\mathcal{U}, d)$ with $\mathcal{U} \subseteq
\Real^\nag$ compact and convex and $d \in \mathcal{U}$ the disagreement point.
The Nash bargaining solution is the maximiser of $\prod_i (u_i - d_i)$ over
$\{u \in \mathcal{U} : u \ge d\}$, and is the unique solution satisfying Pareto
efficiency, symmetry, invariance to affine rescaling of each agent's utility and
independence of irrelevant alternatives. The Kalai-Smorodinsky solution replaces
the last axiom by monotonicity and selects the point on the Pareto frontier
along the ray from $d$ to the ideal point $b$, where
$b_i = \max\{u_i : u \in \mathcal{U}, u \ge d\}$. The egalitarian solution
maximises $\min_i (u_i - d_i)$.
\end{definition}

\begin{rlreading}
A bargaining problem is a multi-objective optimisation problem with a reference
point, and the three solutions are three scalarisations of it. The feasible set
$\mathcal{U}$ is the Pareto frontier of achievable return vectors, which is the
object a multi-objective reinforcement learning method approximates, and the
disagreement point $d$ is the return each agent obtains if cooperation fails,
which is the per-agent fallback a designer must specify and usually leaves
implicit. The Nash solution maximises the product of advantages over the
fallback, the Kalai-Smorodinsky solution equalises advantages as fractions of
the best each agent could possibly obtain, and the egalitarian solution
maximises the smallest advantage. Choosing among them is choosing an axiom, not
an optimiser, and \cref{ex:bargaining} shows that the choice moves the answer by
a visible margin as soon as the agents differ in risk attitude.
\end{rlreading}

\begin{example}[Risk aversion is punished by one axiom and protected by another]
\label{ex:bargaining}
A unit pie is to be divided. The first agent is risk neutral, so its utility is
its share $x$. The second is risk averse, with utility $\sqrt{y}$ where
$y = 1 - x$. The disagreement point is $(0,0)$ and the ideal point is $(1,1)$,
since either agent could in principle receive the whole pie.

The Nash solution maximises $x \sqrt{1-x}$. Differentiating,
$\sqrt{1-x} - x / (2\sqrt{1-x}) = 0$, that is $2(1-x) = x$, so $x = 2/3$. The
risk-neutral agent receives two thirds of the pie, the utilities are
$(0.667,\, 0.577)$ and the Nash product is $0.385$.

The Kalai-Smorodinsky solution picks the frontier point with equal proportional
gains, which here means $x = \sqrt{1-x}$, that is $x^2 + x - 1 = 0$ and
$x = (\sqrt5 - 1)/2 = 0.618$. Both agents obtain utility $0.618$ and the
risk-neutral agent receives a smaller share than under Nash. The egalitarian
solution coincides with Kalai-Smorodinsky here because the ideal point is
symmetric.

The phenomenon is general and is the reason the two solutions are both in use.
The Nash solution is invariant to each agent's utility scale, which means it
cannot see how risk averse anyone is, and the consequence is that risk aversion
is rewarded with a smaller share because a risk-averse agent gains more utility
per unit of pie near zero. Kalai-Smorodinsky gives up that invariance for
monotonicity and equalises relative gains instead. For multi-objective and
multi-agent reinforcement learning the choice between maximising a product of
advantages and equalising normalised advantages is exactly this choice, and it
is a choice of axiom rather than of optimiser. \exend
\end{example}

\begin{example}[Alternating offers makes patience into bargaining power]
\label{ex:rubinstein}
Two agents alternate proposing a split of a unit pie, and each rejection costs
both a factor $\delta$ of discounting. In the unique subgame perfect equilibrium
the proposer offers the responder exactly its continuation value and the
responder accepts, which yields the proposer a share $1/(1+\delta)$. With
$\delta = 0.5$ the proposer takes $2/3$; with $\delta = 0.9$ it takes $0.526$;
with $\delta = 0.99$ it takes $0.5025$. As the agents become patient the
first-mover advantage disappears and the split approaches the symmetric Nash
bargaining solution, which is the sense in which the non-cooperative model
implements the cooperative one.

The derivation is one line of self-reference. Let $\sigma$ be the proposer's
equilibrium share. The responder, by rejecting, becomes the proposer one period
later and obtains $\delta \sigma$, so the proposer must offer it exactly that,
keeping $\sigma = 1 - \delta\sigma$, hence $\sigma = 1/(1+\delta)$. The
reinforcement learning reading is that in a negotiation environment the discount
factor is not a tuning parameter of the algorithm but a determinant of the
equilibrium outcome, exactly as in \cref{ex:grim}, and that the agent who moves
first in the protocol has a structural advantage that no amount of training
removes. Reporting a negotiation result without reporting the protocol and the
discount factor reports almost nothing. \exend
\end{example}

The correspondences this section buys are collected in
\cref{tab:coalitions}. Two of them carry practical weight beyond credit
assignment. The first is ad hoc teamwork \citep{stone2010adhoc}, in which an
agent must join an unfamiliar team and contribute without pre-coordination; in
the vocabulary used here the agent is computing its marginal contribution to a
coalition whose composition it does not control, and the zero-shot coordination
formulation of \cref{sec:bayesian} is its Bayesian version. The second is
federation: when several independently trained and independently owned agents
must decide whether to pool their efforts, the core is the exact criterion for
whether a sharing rule will hold, and \cref{ex:voting-core} shows that in some
structures no rule will.

\begin{table}[t]
\centering
\caption{Cooperative solution concepts and their multi-agent reinforcement
learning uses. The last column records the property that makes the concept
suitable or unsuitable as a learning signal.}
\label{tab:coalitions}
\small
\begin{tabular}{@{}L{2.6cm} L{3.4cm} L{3.4cm} L{3.2cm}@{}}
\toprule
\textsc{Concept} & \textsc{Definition in one line} & \textsc{RL use} & \textsc{Property for learning} \\
\midrule
Difference reward & Return minus return without you & COMA-style baseline & Cheap; does not sum to the return \\
Shapley value & Order-averaged marginal contribution & Per-agent credit & Efficient, unique, continuous; $2^{\nag-1}$ terms \\
Banzhaf index & Unweighted marginal contribution & Cheaper credit proxy & Not efficient \\
Core & No coalition can improve & Stability of a federation & May be empty (\cref{ex:voting-core}) \\
Nucleolus & Lexicographic min-max excess & Least-objectionable division & Always exists; LP sequence \\
Nash bargaining & Maximise $\prod_i (u_i - d_i)$ & Welfare objective across agents & Scale invariant; ignores risk attitude \\
Kalai-Smorodinsky & Equal proportional gains & Normalised multi-objective target & Monotone; not scale invariant \\
Egalitarian & Maximise $\min_i (u_i - d_i)$ & Fairness objective & Non-smooth \\
Rubinstein protocol & Alternating offers with discounting & Negotiation environment & Outcome set by $\delta$ and move order \\
\bottomrule
\end{tabular}
\end{table}

\section{The population limit: mean-field games and mean-field control}
\label{sec:mean-field}

Everything so far has been exponential in the number of agents: the joint action
space is $\prod_i \Ac_i$, the occupancy state of \cref{def:occupancy} ranges
over joint histories, and the complexity results of \cref{sec:complexity} are
stated for a fixed small $\nag$. The mean-field limit removes that dependence by
letting $\nag \to \infty$ under a symmetry assumption, and the resulting models
are the only ones in this monograph whose cost is independent of the population
size.

The two readings are a limit theorem and a training loop. A game theorist takes
the limit to obtain a fixed-point condition between a representative agent's
optimal policy and the distribution it generates, and then asks when the fixed
point exists, is unique and approximates the finite game. A reinforcement
learning researcher sees an outer loop that alternates a single-agent solver
with a population rollout, and asks when the alternation converges. The
correspondence is exact and the two questions are the same question: the
alternation converges precisely when the composed map is a contraction or
satisfies a monotonicity condition, and the devices that rescue it are the same
devices that rescue fictitious play in \cref{sec:learning-dynamics}.

\begin{nomen}{Mean-field vocabulary read from both sides.}{tab:nom-meanfield}
Representative agent & The single policy being trained &
  One agent stands for all of them \\
Population distribution $\mu_t$ & Empirical state distribution of the fleet &
  What the others look like in aggregate \\
Exchangeability & Permutation-invariant architecture &
  Identities do not matter, counts do \\
Mean-field game equilibrium & Fixed point of solve-then-rollout &
  Policy optimal against the flow it creates \\
Mean-field control & Markov decision process on distributions &
  One planner for the whole population \\
Monotonicity condition & Congestion-like interaction &
  Crowds repel, so the fixed point is unique \\
Damped population update & Averaging the outer loop &
  Fictitious play on the distribution \\
$\varepsilon_\nag$-Nash guarantee & Finite-fleet approximation error &
  How wrong the limit is at size $\nag$ \\
Price of anarchy in the limit & Selfish fleet against a planner &
  What the fixed point costs \\
Mean-action approximation & Neighbour actions replaced by their mean &
  A modelling choice, not a limit \\
\end{nomen}

The assumption is exchangeability: all agents share state and action spaces, and
each agent's reward and transition depend on the others only through the
empirical distribution $\mu_t = \frac{1}{\nag}\sum_j \delta_{s_t^j}$ of their
states, so that $\Rew_i(s_t^i, a_t^i, \mu_t)$ and
$\Tr(s_{t+1}^i \given s_t^i, a_t^i, \mu_t)$. Under this assumption a
representative agent's problem, given a fixed flow $(\mu_t)_{t\ge0}$, is an
ordinary Markov decision process with a time-varying kernel, and the
equilibrium condition is a fixed point.

\begin{definition}[Mean-field game equilibrium]
\label{def:mfg}
A pair $(\pi^\star, \mu^\star)$ is a mean-field game equilibrium when $\pi^\star$
is optimal for the Markov decision process induced by the fixed flow $\mu^\star$,
and $\mu^\star$ is the state distribution flow generated by the whole population
playing $\pi^\star$ from the given initial distribution.
\end{definition}

\begin{rlreading}
\Cref{def:mfg} is a training loop with two alternating steps. Fix a flow
$\mu$, which in an implementation is a histogram of where the fleet is at each
time step, and train one policy against an environment whose dynamics and
rewards depend on that histogram; this is ordinary single-agent reinforcement
learning with a time-varying but known perturbation. Then roll the whole
population out under the trained policy and recompute the histogram; this is
ordinary on-policy state-distribution estimation. A mean-field equilibrium is a
fixed point of the composition, so the training loop is solve, roll out, repeat.
The appeal is that neither step costs anything in the number of agents: one
policy is trained, not $\nag$, and the histogram has the size of the state space
rather than of the joint state space. The catch is that the composition need not
be a contraction, so the alternation need not converge, which is why the outer
step is damped in practice for the same reason that fictitious play averages.
\end{rlreading}

Among the equilibrium concepts in this work, the mean-field equilibrium is the
one reached by an unmodified single-agent algorithm wrapped in an outer loop,
which is why it has become the standard route to very large populations. The
smallest instance is one in which all three readings of the same arithmetic are
available at once.

\begin{example}[Pigou's network as a mean-field game]
\label{ex:mfg-pigou}
A unit mass of agents chooses between two roads once. Road $0$ costs $m$, the
fraction of the mass that uses it, and road $1$ costs $1$ regardless of load.
Given a flow in which a fraction $m$ uses road $0$, the representative agent's
best response is to take road $0$ when $m < 1$, so the only fixed point is
$m^\star = 1$, every agent pays $1$ and the social cost is $1$.

The mean-field control problem replaces the fixed point by an optimisation: a
planner chooses $m$ to minimise $m \cdot m + (1-m) \cdot 1$, differentiates to
get $2m - 1 = 0$, and obtains $m = 1/2$ with social cost $0.75$. The ratio
$1/0.75 = 4/3$ is the price of anarchy, the same number as in \cref{ex:pigou},
which it must be, since this is that example with the finite population replaced
by a continuum.

Three readings are available for the same arithmetic and each is used in
practice. As a mean-field game it is a fixed point reached by alternating a
solver with a rollout. As mean-field control it is a Markov decision process on
the one-dimensional space of distributions. As a finite congestion game it is
\cref{def:congestion} with an affine cost. The gap between the first and the
second is the gap between what selfish learners reach and what a planner would
choose, and it exists at every population size. \exend
\end{example}

The outer loop requires care, because naive alternation, which is fictitious
play with best responses computed exactly, converges only under a contraction or
monotonicity condition on the composed map. The monotonicity condition of \citet{lasry2007mean}, which
holds when the interaction is congestion-like so that agents prefer to avoid
crowded states, guarantees uniqueness of the equilibrium and convergence of
suitably damped iterations. Without it the composition can cycle, and the
remedies are the ones familiar from \cref{sec:evolutionary}: average the
iterates, as in fictitious play, or regularise, as in online mirror descent
\citep{perolat2022scaling,lauriere2022scalable}. The parallel is exact:
damping the population update in a mean-field game plays the role that policy
averaging plays in fictitious play and that the entropy regulariser plays in
mirror descent, and all three are devices for turning a non-contractive fixed
point iteration into a convergent one.

A distinction that is frequently elided separates mean-field games from
mean-field control. In a mean-field game each agent maximises its own objective
and the solution concept is the Nash fixed point of \cref{def:mfg}. In
mean-field control a single planner chooses one policy for the whole population
to maximise the population average, which is a Markov decision process on the
space of distributions, with state $\mu_t$ and action a policy, and whose
solution is the social optimum. The two coincide only in degenerate cases, and
the gap between them is the price of anarchy. In the vocabulary of this
monograph, mean-field control is the $\nag \to \infty$ limit of the multi-agent
Markov decision process of \cref{prop:sg-mdp} and mean-field games the limit of
the general-sum stochastic game, so the distinction is the population-scale
version of the common-payoff versus general-sum distinction that has organised
the whole of this work.

The finite-population guarantee is what makes the limit useful rather than
merely elegant. Under Lipschitz continuity of the rewards and transitions in
$\mu$, a mean-field equilibrium policy played by each of $\nag$ agents in the
corresponding finite game is an $\varepsilon_\nag$-Nash equilibrium with
$\varepsilon_\nag = O(1/\sqrt{\nag})$, the rate coming from the concentration of
the empirical measure; \citet{yardim2023policy} establish it for $\nag$ agents
running independent policy mirror ascent, with the Nash property holding for
the regularised game. The practical consequence is that mean-field methods are
justified for large populations and give no guarantee for the small $\nag$ of
most multi-agent reinforcement learning benchmarks, and the intermediate regime
is handled by the mean-field approximation of \citet{yang2018mean}, which
replaces the joint action of the neighbours by their mean action in the
action-value function. That approximation is a modelling choice rather than a
limit theorem, and we classify it as a structural analogy.

\begin{example}[What the rate means at benchmark scale]
\label{ex:mfg-rate}
The $O(1/\sqrt{\nag})$ bound evaluates to $0.316$ at ten agents, $0.100$ at one
hundred, $0.032$ at one thousand and $0.010$ at ten thousand, before the
problem-dependent constant is applied. Multi-agent reinforcement learning
benchmarks typically use between two and twenty-five agents, where the bound
delivers an $\varepsilon$ of the same order as the payoff range, which is to say
no guarantee at all. Fleet-scale applications, traffic systems, electricity
markets and large recommendation populations, sit at $10^3$ or more, where the
same bound is informative. Reporting mean-field methods on small benchmarks is
therefore reporting a heuristic, and the correct statement is that the method is
justified asymptotically and used here outside its regime, which is honest and
is also how the field uses many of its tools. \exend
\end{example}

\section{Learning dynamics in games and what each of them converges to}
\label{sec:learning-dynamics}

Game theory has studied learning rules since the 1950s, and the classical
results answer, for each rule and each class of games, the question that a
multi-agent reinforcement learning practitioner actually has: if I run this
update in self-play, what will happen. The answers are sharper than the ones
usually available in the deep learning literature, because the rules are simpler
and the games are smaller, and they transfer, because the failure modes are
properties of the game rather than of the function approximator. This section
collects them.

The two readings are a dynamical system and a training loop. A game theorist
writes a learning rule as a map on the product of the strategy simplices and
asks about its rest points, their stability and the asymptotic behaviour of the
trajectory. A reinforcement learning researcher writes the same map as an
optimiser applied to a loss that depends on the other agents' parameters, and
asks whether the loss goes down. The two differ in one decisive respect. A
single-agent optimiser has a Lyapunov function by construction, since the
objective itself decreases; a multi-agent update has none in general, because
there is no scalar that all agents are jointly optimising. Potential games are
exactly the class where there is one, which is why they are the class with
guarantees.

\begin{nomen}{Learning-dynamics vocabulary read from both sides.}{tab:nom-dynamics}
Fictitious play & Best response to the empirical opponent &
  Average what you have seen, then be greedy \\
Best-response dynamics & Greedy policy iteration in self-play &
  Jump straight to the argmax \\
Better-reply dynamics & Any improving step &
  Only go uphill, not necessarily to the top \\
Multiplicative weights & Softmax over cumulative payoffs; MWU &
  Exponentiate the running total \\
No-regret property & Sublinear regret of the learner &
  Asymptotically as good as the best fixed action \\
Blackwell approachability & Vector-valued target set &
  Steer an average vector into a set \\
Stochastic stability & Small-noise limit of a learning chain &
  Which equilibrium survives rare mistakes \\
Log-linear learning & Boltzmann revision with low temperature &
  Selects the potential maximiser \\
Monotone game, DSC & Convex-concave structure; variational inequality &
  The condition under which gradients behave \\
Extragradient, optimism & Look-ahead or past-gradient correction &
  Cancel the rotational part of the field \\
Last-iterate convergence & The final checkpoint is the answer &
  No averaging required \\
Time-average convergence & The averaged policy is the answer &
  The checkpoint is not the output \\
\end{nomen}

Fictitious play is the oldest rule and the one whose convergence class is
best mapped. Each agent maintains the empirical frequency of each opponent's
past actions and plays a best response to it.

\begin{theorem}[Convergence classes of fictitious play;
\citealp{robinson1951iterative,monderer1996fictitious}]
\label{thm:fp}
In a two-player zero-sum game the empirical frequencies of fictitious play
converge to the set of Nash equilibria. In a finite game with identical
interests, and more generally in a finite weighted potential game, fictitious
play has the fictitious play property, meaning that the beliefs converge to the
set of Nash equilibria. Neither statement extends to general-sum games, in which
fictitious play may cycle without the empirical frequencies converging.
\end{theorem}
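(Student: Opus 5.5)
The theorem has three clauses, and I would prove them separately, each by the classical argument, recast in the notation of \cref{sec:solution-concepts}. Throughout, write $x^t$ and $y^t$ for the empirical frequencies of the two agents after $t$ rounds, so that fictitious play sets $a_1^{t+1} \in \brc_1(y^t)$ and $a_2^{t+1} \in \brc_2(x^t)$. In that notation $x^{t+1} = x^t + \tfrac{1}{t+1}(\delta_{a_1^{t+1}} - x^t)$, a stochastic-approximation step of size $1/(t+1)$ towards a best response.

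For the zero-sum clause I would follow Robinson's argument, phrased through the duality gap. Let $A$ be the row agent's payoff matrix and define $g(x,y) = \max_i (Ay)_i - \min_j (x^\top A)_j$. This quantity is non-negative and vanishes exactly at the saddle points given by the minimax theorem of \citet{vonneumann1928theorie}; it is the exploitability of \cref{tab:diagnostics}. The target is $g(x^t,y^t) \to 0$, which by continuity of $g$ and compactness of the simplices gives convergence to the equilibrium set.

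The intermediate object is the pair of cumulative payoff vectors $U^t = \sum_{s \le t} A e_{a_2^s}$ and $W^t = \sum_{s \le t} e_{a_1^s}^\top A$. The key step, and the main obstacle, is Robinson's combinatorial lemma: $\max U^t - \min W^t = o(t)$. Her proof runs by induction on the number of pure strategies. One shows that if every principal subgame of smaller size has the property, then in the full game any index that stays non-maximal for a long interval contributes only through a subgame, which bounds how fast the spread can grow. I would try that induction first. As a fallback I would use the continuous-time best-response dynamics with the Lyapunov function $g$, which satisfies $\dot g = -g$. The discrete iteration is then an asymptotic pseudo-trajectory of this flow, with step sizes $1/t$ that are not summable but whose squares are, and the Benaïm--Hofbauer--Sorin theory of differential inclusions transfers the conclusion.

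For the potential clause I would use the exact (or weighted) potential $\pot$ of \cref{def:potential-game} as a Lyapunov function. The candidate is the multilinear extension evaluated at the beliefs, $\pot(x^t,y^t)$. Expanding one step gives $\pot(x^{t+1},y^{t+1}) - \pot(x^t,y^t) = \tfrac{1}{t+1}\big[\text{best-response improvement gap}\big] + O(1/t^2)$. Here the first-order term is non-negative because each agent's best response maximises its own payoff, which agrees with $\pot$ on unilateral differences by \cref{eq:exact-potential}. The term vanishes only at equilibria. Summing these increments, together with boundedness of $\pot$, shows that the improvement gaps are summable against the weights $1/t$. A standard argument then confines the limit points to the set where the gap is zero, which is the Nash set; this is the Monderer--Shapley route. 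The delicate point is the cross term arising from both agents moving simultaneously, and the $O(1/t^2)$ bound on it handles it.

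The third clause needs only a counterexample. I would use Shapley's $3 \times 3$ general-sum game, with one agent playing a rock-paper-scissors-like cyclic structure and the other its mirror. From a non-equilibrium start, the fictitious-play play sequence runs through the six off-diagonal cells in a fixed cyclic order. Each stay in a cell lasts geometrically longer than the previous one. As a result the empirical frequencies trace a hexagon-like orbit that does not shrink onto the unique equilibrium $(\tfrac13,\tfrac13,\tfrac13)$, so the frequencies do not converge.
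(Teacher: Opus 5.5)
Your proposal is correct. The paper gives no proof of its own and only cites Robinson and Monderer--Shapley; you reconstruct those classical arguments: Robinson's induction for the zero-sum clause (or, alternatively, the Hofbauer--Sorin identity $\dot g=-g$ for continuous best-response dynamics, transferred to the discrete iteration by Bena\"{\i}m--Hofbauer--Sorin), the Monderer--Shapley argument with the potential as a Lyapunov function (which runs verbatim for $n$ agents, as the statement requires), and Shapley's $3\times 3$ cycle for the third clause. Two steps should be written out rather than left implicit: in the potential clause, summability of $\sum_t \Delta(t)/t$ for the nonnegative improvement gap gives only $\liminf_t \Delta(t)=0$, and upgrading to $\Delta(t)\to 0$ needs slow variation ($\Delta$ is Lipschitz in the beliefs, which move by $O(1/t)$ per round, so an excursion $\Delta(t)\ge\varepsilon$ stays above $\varepsilon/2$ on $[t,(1+c\varepsilon)t]$ and contributes order $\varepsilon^2$ to the sum), which is the delicate point rather than the summable $O(1/t^2)$ cross term; and your sketch of Robinson's induction covers only windows in which some row or column is never a best response, so it also needs her complementary bound, of order $t^\ast$, on the spread at the end of any window of length $t^\ast$ in which every row and column is a best response at least once.
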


\begin{example}[Fictitious play on matching pennies: the iterate never settles]
\label{ex:fictitious-play}
Start the two agents with the belief counts $(1,0)$ and $(0,1)$ respectively, so
that the row agent believes the column agent plays Tails and the column agent
believes the row agent plays Heads. The row agent best responds with Tails, the
column agent with Tails; the counts update; each agent then best responds to the
new empirical frequency. The pure action played at each step keeps switching,
and it switches in runs of increasing length, so the sequence of actions never
settles. After twenty thousand steps, however, the empirical frequency of Heads
is $0.5025$ for the first agent and $0.5025$ for the second, which is within
$0.0025$ of the equilibrium $(\tfrac12,\tfrac12)$ and converging.

Both halves matter. The iterate is a pure action at every step, so the policy at
any checkpoint has an exploitability of $1$, the maximum possible; the average
has exploitability going to zero. Reporting the last checkpoint of a self-play
run in a zero-sum game therefore reports the worst object the run produced, and
this is not a quirk of fictitious play but the generic situation, as
\cref{sec:negative} records. Neural fictitious self-play
\citep{heinrich2016deep} is this algorithm with the empirical frequency replaced
by a supervised network that is trained to imitate the agent's own past play,
and the averaging network is not an implementation convenience but the carrier
of the entire guarantee. \exend
\end{example}

\begin{example}[The same rule on a coordination game converges outright]
\label{ex:fp-potential}
Take the common-payoff game with payoff $3$ on the first diagonal entry, $1$ on
the second and $0$ off the diagonal, which is a potential game with potential
equal to the payoff. The row agent prefers its first action
whenever its belief puts weight above $1/4$ on the opponent's first action,
since $3q > 1-q$ exactly there, so the threshold is low and the basin of the
payoff-dominant equilibrium is large. Starting from beliefs that place slightly
more mass on the second action, at $q = 0.49$, both agents are already past the
threshold and play the first action from the first step onward; after twenty
thousand steps the empirical frequency of the first action is $0.99997$ for both,
the residual being the fractional prior counts. The contrast
with \cref{ex:fictitious-play} is the whole of \cref{thm:fp}: the same rule
converges in one class and cycles in the other, and the discriminating property
is the existence of a potential. \exend
\end{example}

The second family is the no-regret family, whose guarantees are the ones that
transfer most directly to the general-sum case. Multiplicative weights maintains
cumulative payoffs and plays a softmax of them; its regret after $T$ rounds is
$O(\sqrt{T \log \abs{\Ac}})$ \citep{freund1999adaptive}, so the time-averaged
play converges to the coarse correlated equilibria by \cref{def:ce-cce}, and in
a two-player zero-sum game the time average converges to the value. What does
not happen is convergence of the iterate.

\begin{example}[Multiplicative weights spirals away from the equilibrium]
\label{ex:mwu-rps}
Run multiplicative weights with step size $0.1$ in self-play on
rock-paper-scissors, starting from the perturbed strategies $(0.4, 0.35, 0.25)$
and $(0.35, 0.4, 0.25)$. Measure the distance from the equilibrium by the sum of
the two Kullback-Leibler divergences to the uniform distribution, which is
$0.0362$ at the start. After two thousand steps it is $1.202$, and the first
agent's probability of Rock has drifted from $0.4$ to $0.138$. The iterate is
moving away from the equilibrium and towards the boundary of the simplex, which
is the discrete-time statement of the result that the continuous-time dynamics
conserves the divergence and cycles while the discrete-time one increases it
\citep{mertikopoulos2018cycles,bailey2018multiplicative}.

The time average still converges, and so does the exploitability of the average.
A practitioner who logs the current policy's win rate against the current
opponent sees a flat curve, because the two policies are rotating together; a
practitioner who logs the exploitability of the current policy sees it rise.
The second measurement is the informative one, and it costs a best-response
computation, which is the price \cref{tab:diagnostics} asks to be paid. \exend
\end{example}

Blackwell approachability is the result that organises this family and deserves
a statement because it explains where regret matching comes from.

\begin{theorem}[Approachability; \citealp{blackwell1956analog}]
\label{thm:blackwell}
Consider a repeated game with vector payoffs $m(a_1,a_2) \in \Real^d$ and a
closed convex target set $S \subseteq \Real^d$. Agent $1$ can force the average
payoff vector to converge to $S$ regardless of agent $2$'s play if and only if
for every halfspace $H$ containing $S$ there is a mixed strategy of agent $1$
guaranteeing that the expected payoff vector lies in $H$.
\end{theorem}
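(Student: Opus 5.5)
The plan is to prove the two directions separately. Sufficiency uses the projection strategy of \citet{blackwell1956analog} together with a squared-distance potential. Necessity combines the minimax theorem of \citet{vonneumann1928theorie} with a law of large numbers.

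Throughout, $m$ is bounded because the action sets are finite, so $\norm{m(a_1,a_2)} \le B_0$. Write $\bar x_t = \frac1t\sum_{k\le t} m(a_1^k,a_2^k)$ and $d_t = \mathrm{dist}(\bar x_t, S)$. Assume $S \neq \emptyset$ and let $\mathrm{proj}_S$ denote Euclidean projection onto $S$, which is well defined because $S$ is closed and convex.

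\emph{Sufficiency.} At round $t+1$, if $\bar x_t \in S$, agent $1$ plays arbitrarily. Otherwise set $\rho_t = \mathrm{proj}_S(\bar x_t)$ and $u_t = (\bar x_t - \rho_t)/d_t$. The halfspace
\begin{align*}
H_t = \{x : u_t\cdot x \le u_t\cdot\rho_t\}
\end{align*}
contains $S$ by the variational characterisation of the projection onto a convex set. By hypothesis agent $1$ has a mixed $p_{t+1}$ with $u_t\cdot m(p_{t+1},q) \le u_t\cdot\rho_t$ for every $q$; it plays this $p_{t+1}$. Since $\rho_t \in S$, we have $d_{t+1} \le \norm{\bar x_{t+1}-\rho_t}$. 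Writing
\begin{align*}
\bar x_{t+1}-\rho_t = \tfrac{t}{t+1}(\bar x_t-\rho_t) + \tfrac{1}{t+1}(m_{t+1}-\rho_t)
\end{align*}
and squaring gives
\begin{align*}
(t+1)^2 d_{t+1}^2 \le t^2 d_t^2 + 2t\,(\bar x_t-\rho_t)\cdot(m_{t+1}-\rho_t) + \norm{m_{t+1}-\rho_t}^2 .
\end{align*}
The conditional expectation of the cross term given the past is at most zero by the choice of $p_{t+1}$.

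The last term is bounded by a constant $B^2$. To see this, note that $\bar x_t$ lies in the ball of radius $B_0$. Fix a point $s_0 \in S$; then $\norm{\rho_t - s_0} \le 2\norm{\bar x_t - s_0}$, so $\norm{\rho_t}$ is uniformly bounded. Taking expectations and telescoping yields $\Exp[d_t^2] \le B^2/t$, which is convergence in expectation at rate $O(1/\sqrt t)$.

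\emph{Almost-sure upgrade.} This is the step I expect to be the main obstacle, because the realised cross term is not signed. The plan is to split it as $Z_{t+1} + (\text{a term} \le 0)$, where $Z_{t+1}$ is a martingale difference bounded by $2tB\,d_t \le C t$. Telescoping then gives $t^2 d_t^2 \le B^2 t + 2\sum_{k<t} Z_{k+1}$. Azuma's inequality bounds $\sum_{k<t} Z_{k+1} = O(t^{3/2}\sqrt{\log t})$ with summable failure probabilities, so Borel–Cantelli gives $d_t^2 = O(\sqrt{\log t/t})$ almost surely. A weaker rate is enough, so if Azuma proves awkward, the strong law for martingales with bounded normalised increments is the fallback.

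\emph{Necessity.} Suppose some closed halfspace $H = \{x : u\cdot x \le c\} \supseteq S$ admits no $p$ with $u\cdot m(p,q) \le c$ for all $q$. Then $\min_p \max_q u\cdot m(p,q) > c$. The function $u\cdot m(p,q)$ is bilinear in $(p,q)$ over finite simplices, so the minimax theorem gives a $q^\star$ and an $\varepsilon > 0$ with $u\cdot m(p,q^\star) \ge c+\varepsilon$ for every $p$. Agent $2$ plays $q^\star$ i.i.d. at every round. The increments $u\cdot m_k - u\cdot m(p_k,q^\star)$ are bounded martingale differences, so by the martingale strong law $\liminf_t u\cdot\bar x_t \ge c+\varepsilon$ almost surely. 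Since $S \subseteq H$, every point with $u\cdot x \ge c+\varepsilon$ satisfies $\mathrm{dist}(x,S) \ge \varepsilon/\norm{u}$. Hence $d_t$ is eventually bounded below by roughly $\varepsilon/\norm{u}$, and $S$ is not approachable.
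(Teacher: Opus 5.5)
The paper does not prove \cref{thm:blackwell}; it cites \citet{blackwell1956analog}, and the surrounding text only specialises the result to regret matching and to constrained policy optimisation. Your proposal is correct and is essentially Blackwell's original argument. Sufficiency comes from the projection strategy and the squared-distance recursion, which give $\Exp[d_t^2]\le B^2/t$. Your almost-sure upgrade is sound: the increments are bounded by a deterministic $Ck$, so Azuma and Borel--Cantelli give $\sum_{k<t}Z_{k+1}=O(t^{3/2}\sqrt{\log t})$ eventually. Necessity comes from the minimax theorem plus a martingale strong law against an i.i.d.\ $q^\star$.

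Two remarks tie your proof to how the paper uses the theorem.

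First, none of your constants depends on agent $2$'s strategy: not $B$, not the increment bound $Ct$, and not the tail $\sum_{t\ge N}t^{-2}$. You therefore actually obtain Blackwell's uniform notion of approachability, which is stronger than the informal ``converges regardless of agent $2$'s play'' in the statement. Symmetrically, your necessity argument produces a single $q^\star$ that defeats every strategy of agent $1$.

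Second, specialise your projection step to the paper's instantiation, with $m$ the vector of regrets and $S$ the non-positive orthant. This gives $\rho_t=\min(\bar x_t,0)$ componentwise, $u_t\propto[\bar x_t]_+$ and $u_t\cdot\rho_t=0$. The mixed strategy $p_{t+1}=[\bar x_t]_+/\lVert[\bar x_t]_+\rVert_1$ then meets the halfspace condition with equality. This is exactly the derivation of regret matching that the paper asserts in one sentence after the theorem, and your bound $\Exp[d_t]\le B/\sqrt t$ is the familiar $O(1/\sqrt T)$ regret rate.
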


Taking $d = \abs{\Ac_1}$, the vector payoff to be the vector of regrets for each
alternative action and $S$ to be the non-positive orthant, approachability of
$S$ is exactly the no-regret property, and the halfspace-response strategy that
the theorem supplies is regret matching. \citet{abernethy2011blackwell} show the
converse reduction as well, so approachability and no-regret learning are
equivalent problems. For a reinforcement learning audience the useful content is
that the construction is vector-valued, so the same machinery applies when the
target is a set of constraints rather than a single scalar, which is the
structure of constrained and multi-objective reinforcement learning. A
constrained policy optimiser that drives a vector of constraint violations into
the negative orthant is running an approachability algorithm whether or not it
is described as one.

The third family is explicitly stochastic and is the one that answers the
equilibrium-selection question left open throughout this monograph. In
log-linear learning \citep{blume1993statistical} one agent at a time revises,
drawing its action from a Boltzmann distribution over its own payoffs at the
current profile; in adaptive play \citep{young1993evolution} agents best respond
to a random sample of recent history with a small probability of error. Both
define an irreducible Markov chain on the profile space whose stationary
distribution, as the noise vanishes, concentrates on a subset of the equilibria
called the stochastically stable set.

\begin{example}[Which equilibrium of the stag hunt survives rare mistakes]
\label{ex:log-linear}
Take the symmetric stag hunt of \cref{ex:four-games}, in which hunting the stag
together pays $4$ each, hunting hare pays $3$ each, and hunting the stag alone
pays $1$ against the hare hunter's $3$. The stag equilibrium is payoff dominant
and the hare equilibrium is risk dominant: the Harsanyi-Selten comparison is
$(4-3)^2 = 1$ for stag against $(3-1)^2 = 4$ for hare, so hare wins. The game is
an exact potential game with
$\pot = \big(\begin{smallmatrix} -1 & -2 \\ -2 & 0 \end{smallmatrix}\big)$,
whose maximiser is the hare profile.

Run log-linear learning and compute the stationary distribution of the induced
four-state chain exactly. At inverse temperature $\beta = 0.5$ the chain spends
$0.259$ of its time at the stag profile and $0.427$ at the hare profile; at
$\beta = 2$ the figures are $0.115$ and $0.853$; at $\beta = 10$ they are
$0.0000454$ and $0.99995$. As the noise vanishes the chain concentrates
entirely on the hare profile, which is simultaneously the potential maximiser
and the risk-dominant equilibrium.

The conclusion is the one a cooperative multi-agent practitioner needs and
rarely sees stated. Independent learners with persistent exploration do not
select the best equilibrium; they select the one that is most robust to the
partner's mistakes, and in this game that is the one worth $3$ rather than the
one worth $4$. The gap is not an optimisation failure and will not close with
more samples, a larger network or a better learning rate. It closes only with an
equilibrium selection device, which is what optimistic initialisation, leniency
and hysteresis are, and the device works by biasing the very comparison that
this stationary distribution computes. \exend
\end{example}

The fourth family is the modern one and is where the correspondence with
contemporary deep learning is tightest. Write the simultaneous gradient of a
smooth game as the vector field $F(\theta) = (\nabla_{\theta_1}\util_1, \dots,
\nabla_{\theta_\nag}\util_\nag)$ and note that a Nash equilibrium in the
interior is a zero of $F$. Unlike a gradient of a scalar objective, $F$ need not
be a gradient field, so its Jacobian need not be symmetric, and the asymmetric
part is a rotation. Everything that distinguishes multi-agent from single-agent
optimisation is in that rotation.

\begin{definition}[Monotone games and diagonal strict concavity;
\citealp{rosen1965existence}]
\label{def:monotone}
A game with convex compact strategy sets and payoffs concave in the own strategy
satisfies diagonal strict concavity when
$(\theta - \theta')^\top (F(\theta') - F(\theta)) > 0$ for all
$\theta \ne \theta'$, equivalently when $-F$ is strictly monotone. Under this
condition the Nash equilibrium is unique and gradient play with a suitable step
size converges to it.
\end{definition}

\begin{rlreading}
Diagonal strict concavity is the multi-agent replacement for strong concavity of
a single objective, and it is the condition that decides whether the usual
intuition about gradient methods survives. In a single-agent problem the
gradient is the derivative of one scalar, so its Jacobian is a Hessian and is
therefore symmetric, and gradient ascent on a strongly concave objective
contracts. In a multi-agent problem $F$ stacks the gradients of $\nag$ different
objectives, so nothing forces its Jacobian to be symmetric, and the asymmetric
part is a rotation that no step size removes. \Cref{def:monotone} is exactly the
condition that the rotation is dominated by the contraction. When it holds,
independent policy gradient converges to the unique equilibrium and the usual
tuning intuition applies; when it fails, as in the bilinear case of
\cref{ex:extragradient}, the iterate spirals outward however small the step
size, and the remedy is a look-ahead correction rather than a smaller learning
rate.
\end{rlreading}

Diagonal strict concavity is the multi-agent analogue of strong concavity of a
single objective, and when it fails the field can rotate without contracting,
which is exactly the bilinear case.

\begin{example}[Why extragradient fixes what gradient ascent-descent breaks]
\label{ex:extragradient}
Take $\util_1 = xy$ and $\util_2 = -xy$ with scalars, which is matching pennies
in continuous form and is the bilinear core of a generative adversarial
objective \citep{goodfellow2014generative}. The simultaneous update with step
$\eta = 0.1$ is the linear map
$\big(\begin{smallmatrix} 1 & \eta \\ -\eta & 1 \end{smallmatrix}\big)$, whose
eigenvalues have modulus $\sqrt{1+\eta^2} = 1.00499$, so every trajectory
spirals out: starting at $(1,1)$ the norm grows to $3.825$ after two hundred
steps.

The extragradient update takes a trial step, evaluates the field there, and
applies that field at the original point. Its linear map is
$\big(\begin{smallmatrix} 1-\eta^2 & \eta \\ -\eta & 1-\eta^2
\end{smallmatrix}\big)$, whose eigenvalues have modulus
$\sqrt{(1-\eta^2)^2 + \eta^2} = 0.99504$, so trajectories spiral in: from the
same start the norm falls to $0.523$. One look-ahead evaluation converts
divergence at rate $1.005^T$ into convergence at rate $0.995^T$, and the
optimistic variant, which reuses the previous gradient instead of computing a
new one, achieves the same effect for the price of one memory slot
\citep{rakhlin2013optimization,daskalakis2018training}.

Comparison with \cref{ex:lola} is instructive. Opponent shaping subtracts
$2\eta^2$ from one diagonal entry of the Jacobian and obtains a modulus of
$\sqrt{1-\eta^2} = 0.99499$; extragradient subtracts $\eta^2$ from both and
obtains $\sqrt{(1-\eta^2)^2+\eta^2} = 0.99504$. The two numbers are close
without being equal, and the near-coincidence is the point: both methods add a
negative-definite component to a rotational field, one asymmetrically and one
symmetrically, and either suffices to turn divergence into convergence. The practical difference is that extragradient is
symmetric and therefore appropriate when neither agent is a leader, while
opponent shaping is asymmetric and encodes a commitment. \exend
\end{example}

The upshot of this section is a decision table, and we state it as such. If the
game is zero-sum, no-regret dynamics with averaging converges, the iterate does
not, and extragradient or regularisation buys last-iterate convergence. If the
game is a potential or supermodular game, better-reply and fictitious play
converge and the limit is the potential maximiser under vanishing noise. If the
game is monotone in the sense of \cref{def:monotone}, gradient play converges to
the unique equilibrium. Otherwise nothing converges in general, by
\cref{thm:hart-mas-colell}, the attainable target is a coarse correlated
equilibrium of the empirical game, and the question to ask is not how to make
the dynamics converge but which equilibrium one wants and what device will
select it.

\begin{table}[t]
\centering
\caption{Learning dynamics, their guarantees and the reinforcement learning
construction that implements each. The \textsc{Limit} column states what
converges when anything does; iterate means the current policy and average means
the time-averaged policy.}
\label{tab:dynamics}
\small
\begin{tabular}{@{}L{2.9cm} L{3.4cm} L{3.0cm} L{3.0cm}@{}}
\toprule
\textsc{Dynamics} & \textsc{Converges in} & \textsc{Limit} & \textsc{RL counterpart} \\
\midrule
Best-reply dynamics & Potential, supermodular games & Pure equilibrium, iterate & Greedy policy iteration in self-play \\
Fictitious play & Zero-sum; potential \citep{robinson1951iterative,monderer1996fictitious} & Nash, average & Neural fictitious self-play \\
Multiplicative weights & All games, in the sense of regret & \CCE, average; value in zero-sum & Softmax policy gradient \\
Regret matching \citep{hart2000simple} & All games & \CE, average & CFR at an information set \\
Approachability \citep{blackwell1956analog} & All games, vector targets & Average vector in the target set & Constrained policy optimisation \\
Log-linear learning \citep{blume1993statistical} & Potential games & Potential maximiser; risk dominant & Boltzmann exploration with annealing \\
Adaptive play \citep{young1993evolution} & Weakly acyclic games & Stochastically stable set & Sampled-history opponent models \\
Gradient play & Monotone games (\cref{def:monotone}) & Unique Nash, iterate & Independent policy gradient \\
Extragradient, optimism & Zero-sum and monotone & Nash, iterate & Extragradient and optimistic Adam \\
Opponent shaping & No general guarantee & Shaped non-equilibrium point & LOLA; Stackelberg actor-critic \\
Entropy regularisation & All games, to the regularised solution & \QRE; Nash as $\tau \to 0$ & Soft actor-critic; R-NaD \\
\bottomrule
\end{tabular}
\end{table}

\section{The learning-rate limit: evolutionary dynamics as the flow of learning}
\label{sec:evolutionary}

The second continuum limit is taken in the learning rate rather than in the
population. Evolutionary game theory studies the differential equations
describing how the composition of a population of strategies changes over time,
and the central object is the replicator dynamics
\begin{align}
\dot{\pi}_i(a_i)
\;=\;
\pi_i(a_i) \Big[ \util_i(a_i, \pi_{-i}) - \util_i(\pi_i, \pi_{-i}) \Big],
\label{eq:replicator}
\end{align}
in which the growth rate of a strategy's share is its excess payoff over the
population average. The correspondence with reinforcement learning is that
\cref{eq:replicator} is the continuous-time limit of standard learning rules,
which makes the qualitative theory of dynamical systems, rest points, basins of
attraction, limit cycles and chaos, directly applicable to the analysis of
multi-agent training. The mathematical theory of these dynamics is surveyed by
\citet{hofbauer2003evolutionary}.

The two readings are a biological metaphor and an exact limit, and only the
second is used here. A biologist reads \cref{eq:replicator} as differential
reproduction in a population of types. A reinforcement learning researcher reads
it as the vanishing-step-size limit of a policy update, which it literally is
for the rules named in the theorems below. The consequence is that the
qualitative dictionary of dynamical systems becomes a dictionary about training
runs: a rest point is a converged run, an unstable rest point is a run that
leaves a plateau, a limit cycle is a run whose return curve is flat while its
policy rotates, and a strange attractor is a run that never repeats and never
settles.

\begin{nomen}{Evolutionary vocabulary read from both sides.}{tab:nom-evolutionary}
Population share $\pi_i(a_i)$ & Policy probability of action $a_i$ &
  The same number in both readings \\
Replicator dynamics & Vanishing-step-size limit of the update &
  Grow what is above average \\
Replicator-mutator & Softmax update with entropy &
  Selection plus diffusion \\
Fitness $\util_i(a_i, \pi_{-i})$ & Action value $Q_i(a_i)$ &
  Payoff against the current population \\
Rest point & Converged training run &
  No component is growing \\
Evolutionarily stable strategy & Locally attracting, invasion-proof policy &
  Resists a small population of mutants \\
Limit cycle & Flat return curve with rotating policies &
  Looks converged, is not \\
Basin of attraction & Set of initialisations with the same limit &
  Which equilibrium the seed selects \\
Quantal response equilibrium & Rest point of the regularised dynamics &
  What a fixed temperature converges to \\
Empirical game (EGTA) & Payoff matrix over trained policies &
  A normal-form game made of checkpoints \\
$\alpha$-rank & Ranking over a policy population &
  Stationary distribution of a mutation chain \\
Exploitability & Gap to a best response &
  The diagnostic a return curve hides \\
\end{nomen}

\begin{theorem}[Cross-learning and the replicator dynamics;
\citealp{borgers1997learning}]
\label{thm:borgers-sarin}
Consider agents using the cross-learning rule, which after playing $a_i$ and
receiving payoff $u \in [0,1]$ updates
$\pi_i(a_i) \leftarrow \pi_i(a_i) + \alpha\, u\,(1 - \pi_i(a_i))$ and
$\pi_i(a_i') \leftarrow \pi_i(a_i') - \alpha\, u\, \pi_i(a_i')$ for
$a_i' \ne a_i$. As the step size $\alpha \to 0$ with time rescaled by
$1/\alpha$, the stochastic process converges, over any fixed finite time
horizon, to the solution of \cref{eq:replicator}.
\end{theorem}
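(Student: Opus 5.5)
The argument is a stochastic-approximation (mean-ODE) argument in two steps. First, compute the expected one-step increment of the cross-learning rule. Second, invoke a standard weak-convergence theorem for Markov chains with vanishing step size, which says that, over finite horizons, the rescaled chain tracks the ODE driven by that drift.

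For the first step, fix a profile $\pi$ and let $\bar u_i(a_i,\pi_{-i}) = \Exp[u \given a_i]$ denote the expected payoff of $a_i$ against the other agents' current play, which is $\util_i(a_i,\pi_{-i})$ with payoffs normalised to $[0,1]$. Condition on which action is played. The action $a_i$ is played with probability $\pi_i(a_i)$, and then $\pi_i(a_i)$ increases by $\alpha u(1-\pi_i(a_i))$. Some other action $a_i'$ is played with probability $\pi_i(a_i')$, and then $\pi_i(a_i)$ decreases by $\alpha u\,\pi_i(a_i)$. Summing the two cases gives
\begin{align*}
\Exp[\Delta\pi_i(a_i)]
&= \alpha\,\pi_i(a_i)\Big[\bar u_i(a_i)(1-\pi_i(a_i)) - \sum_{a_i'\ne a_i}\pi_i(a_i')\,\bar u_i(a_i')\Big] \\
&= \alpha\,\pi_i(a_i)\big[\bar u_i(a_i)-\util_i(\pi_i,\pi_{-i})\big],
\end{align*}
which is $\alpha$ times the right-hand side of \cref{eq:replicator}. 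I also need two facts about the update. It stays in the simplex, since its increments sum to zero and the condition $\alpha u\le 1$ keeps every coordinate nonnegative. Its increments are bounded by $\alpha$, so the conditional variance is $O(\alpha^2)$.

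For the second step, write $\pi^{(\alpha)}(t)$ for the piecewise-constant interpolation of the chain at times $k\alpha$. Split the increment into the drift $\alpha F(\pi)$ plus a martingale difference $\alpha M_k$ with $\norm{M_k}\le 2$. The drift $F$ is polynomial on the compact product of simplices, hence Lipschitz, so \cref{eq:replicator} has a unique solution that stays in the product of simplices. Over the horizon $[0,T]$ there are about $T/\alpha$ steps. By Doob's maximal inequality, the supremum of the accumulated noise $\alpha\sum_k M_k$ has second moment at most $C\,\alpha^2\cdot T/\alpha = C\,T\alpha$, which tends to zero. The deterministic part is handled by a discrete Gr\"onwall comparison between the Euler scheme $\pi_{k+1} = \pi_k + \alpha F(\pi_k)$ perturbed by the noise and the exact flow. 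Together these give $\sup_{t\le T}\norm{\pi^{(\alpha)}(t)-\pi(t)}\to 0$ in probability. This is the route \citet{borgers1997learning} take through Norman's theorem, and I would cite that theorem for the final weak-convergence statement rather than reprove it.

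I expect the main obstacle to be the bookkeeping that makes the statement precise rather than the estimates themselves. The first issue is that simultaneous updates by all agents must be handled jointly. The joint drift still factorises agent by agent, because agent $i$'s conditional expected payoff depends only on $\pi_{-i}$ through the independence of sampling. The second issue is normalising payoffs to $[0,1]$, which only rescales time. The third is stating that the convergence is uniform on finite horizons only, not asymptotic. The theorem claims nothing about $t\to\infty$, and that is exactly where limit cycles and boundary absorption can make the discrete chain and the flow diverge.
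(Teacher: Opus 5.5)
Your proof is correct and follows the same route as the source the paper relies on. The paper gives no argument of its own beyond citing \citet{borgers1997learning}, who also compute the expected one-step motion of the cross-learning rule, identify it with the replicator field, and then invoke Norman's finite-horizon approximation theorem; your Doob-plus-discrete-Gr\"onwall sketch is a self-contained proof of that theorem in this bounded-increment, Lipschitz-drift setting, and your caveat that nothing is claimed as $t \to \infty$ matches theirs.
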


\begin{theorem}[Boltzmann Q-learning and the replicator-mutator dynamics;
\citealp{tuyls2003selection,tuyls2006evolutionary};
frequency adjustment, \citealp{kaisers2010frequency}; survey,
\citealp{bloembergen2015evolutionary}]
\label{thm:tuyls}
For agents running frequency-adjusted Q-learning, which updates all action
values simultaneously, with a Boltzmann policy at a constant temperature $\tau$,
the continuous-time limit of the induced policy dynamics is
\begin{align}
\dot{\pi}_i(a_i)
\;=\;
\frac{\pi_i(a_i)}{\tau} \Big[ \util_i(a_i,\pi_{-i}) - \util_i(\pi_i,\pi_{-i}) \Big]
\;-\;
\pi_i(a_i)\Big[ \log \pi_i(a_i) - \sum_{a_i'} \pi_i(a_i')\log \pi_i(a_i') \Big],
\label{eq:replicator-mutator}
\end{align}
the replicator term scaled by the inverse temperature plus an entropy term that
acts as mutation and pushes the trajectory toward the uniform distribution. For
plain Q-learning, which updates only the value of the action taken, the same
expression is an approximation rather than an identity.
\end{theorem}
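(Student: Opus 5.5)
The plan is to work in the stateless setting of \cref{def:nfg}, which is where the statement is made. I would derive the continuous-time flow in two layers. First the flow of the action values. Then its push-forward through the softmax map $\pi_i(a_i) \propto \exp(Q_i(a_i)/\tau)$. The entropy term should then fall out of inverting that softmax.

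Step one fixes the update rule and computes its mean field. Frequency-adjusted Q-learning performs $Q_i(a_i) \leftarrow Q_i(a_i) + \frac{\alpha}{\pi_i(a_i)}\,(r - Q_i(a_i))$ when $a_i$ is played, with the usual cap on the factor $1/\pi_i(a_i)$, which I would take to be inactive in the interior of the simplex. Action $a_i$ is selected with probability $\pi_i(a_i)$, and the others act independently, so the reward has conditional mean $\util_i(a_i,\pi_{-i})$. The expected increment of $Q_i(a_i)$ per step is therefore $\alpha\,(\util_i(a_i,\pi_{-i}) - Q_i(a_i))$ for every action simultaneously. This is exactly the sense in which frequency adjustment makes all values update at a common rate.

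Rescaling time by $1/\alpha$, I would invoke the standard ODE method of stochastic approximation, in the same finite-horizon sense as \cref{thm:borgers-sarin}. This gives the limit $\dot Q_i(a_i) = \util_i(a_i,\pi_{-i}) - Q_i(a_i)$. I would check the usual hypotheses: bounded rewards, a locally Lipschitz mean field away from the boundary, and martingale-difference noise with bounded second moment. The cap is needed here, because without it the variance blows up as $\pi_i(a_i)\to 0$.

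Step two is the chain rule through the softmax. For $\pi_i = \mathrm{softmax}(Q_i/\tau)$ one has
\[
\dot\pi_i(a_i) = \frac{\pi_i(a_i)}{\tau}\Big[\dot Q_i(a_i) - \textstyle\sum_{a_i'}\pi_i(a_i')\dot Q_i(a_i')\Big].
\]
Substituting step one splits this into two brackets. The first, $\frac{\pi_i(a_i)}{\tau}[\util_i(a_i,\pi_{-i}) - \util_i(\pi_i,\pi_{-i})]$, is the scaled replicator term of \cref{eq:replicator}. The second is $-\frac{\pi_i(a_i)}{\tau}[Q_i(a_i) - \sum_{a_i'}\pi_i(a_i')Q_i(a_i')]$. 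Inverting the softmax gives $Q_i(a_i) = \tau\log\pi_i(a_i) + \tau\log Z_i$, so the normaliser cancels in the centred difference and the second bracket becomes $-\pi_i(a_i)[\log\pi_i(a_i) - \sum_{a_i'}\pi_i(a_i')\log\pi_i(a_i')]$. That is precisely \cref{eq:replicator-mutator}. The final clause follows from the same computation without the $1/\pi_i(a_i)$ factor. The mean field of plain Q-learning is $\pi_i(a_i)(\util_i - Q_i)$, so the centring is taken with weights $\pi_i^2$ rather than $\pi_i$. This leaves extra $\pi_i$ factors that agree with \cref{eq:replicator-mutator} only approximately, for example near the uniform policy or after averaging.

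The main obstacle is step one's passage to the limit, not the algebra. The $1/\pi_i(a_i)$ weighting is unbounded near the boundary, and the step-size cap breaks the exact identity once it binds. I would therefore state convergence on compact subsets of the interior over a finite horizon. I would first try to show that the entropy term keeps trajectories from some interior set bounded away from the boundary, so that the cap never binds along the limit flow. Failing that, I would prove convergence up to the exit time of such a set.
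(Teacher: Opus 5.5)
Your proposal is correct and follows the standard derivation in the cited sources (Tuyls et al.\ and Kaisers--Tuyls); the paper itself gives no proof. The $1/\pi_i(a_i)$ weighting makes the mean drift of every $Q_i(a_i)$ equal to $\util_i(a_i,\pi_{-i})-Q_i(a_i)$, the chain rule through the Boltzmann map gives the centred form, and $Q_i(a_i)-Q_i(a_i')=\tau\log\bigl(\pi_i(a_i)/\pi_i(a_i')\bigr)$ turns the centred $Q$-term into the entropy term.

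The obstacle you flag is milder than you fear. While $\alpha\le\pi_i(a_i)$, each update is a convex combination, so by induction the $Q$-values stay in the box of width $\Delta$ spanned by their initial values and the payoffs. Hence $\pi_i(a_i)\ge e^{-\Delta/\tau}/\abs{\Ac_i}$ throughout at constant $\tau$. For $\alpha$ below that bound the cap never binds and the noise variance is uniformly bounded, so you need neither an exit-time argument nor an appeal to the entropy term.
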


\Cref{eq:replicator-mutator} is the cleanest available statement of what
exploration does to a multi-agent learning system. The first term is selection
and the second is diffusion; the temperature sets their ratio; rest points of
the combined dynamics are quantal response equilibria rather than Nash
equilibria, converging to the latter as $\tau \to 0$. The same structure appears
in single-agent policy optimisation, where the entropy-regularised objective
produces a softmax optimal policy and the regularisation path approaches the
greedy policy as the coefficient vanishes, and recognising the two as the same
term explains why entropy regularisation stabilises multi-agent training for a
reason that goes beyond exploration: it convexifies the fixed-point problem, and
in the limit of large regularisation the equilibrium becomes unique and the
dynamics contractive. The regularised Nash dynamics of
\citet{perolat2021poincare} makes this explicit by adding a regulariser toward a
slowly moving anchor policy, obtaining convergence to the Nash equilibrium of
the unregularised game by annealing the anchor, and this is the mechanism behind
DeepNash \citep{perolat2022mastering}.

Evolutionary stability is the refinement that the dynamics justifies, and it
deserves a definition because it is the one concept in this area that is a
statement about robustness to a perturbation of the population rather than about
a single agent's incentives.

\begin{definition}[Evolutionarily stable strategy;
\citealp{maynardsmith1973logic,taylor1978evolutionary}]
\label{def:ess}
In a symmetric two-player game with payoff $\util$, a strategy $\pi^\star$ is
evolutionarily stable when for every $\pi \ne \pi^\star$ either
$\util(\pi^\star, \pi^\star) > \util(\pi, \pi^\star)$, or
$\util(\pi^\star, \pi^\star) = \util(\pi, \pi^\star)$ and
$\util(\pi^\star, \pi) > \util(\pi, \pi)$. Every evolutionarily stable strategy
is a Nash equilibrium, and in the replicator dynamics it is asymptotically
stable.
\end{definition}

\begin{rlreading}
Evolutionary stability is a statement about a population of policies rather than
about a single one, which is the setting of league play and population-based
training. Read $\pi^\star$ as the incumbent policy pool and $\pi$ as a newly
trained challenger injected at low frequency. The first clause says the
incumbent beats the challenger outright; the second says that when they tie
against the incumbent, the incumbent beats the challenger in the rare encounters
between challengers. Together they say the challenger cannot gain a foothold, so
an evolutionarily stable pool is one that no single new policy can invade, which
is exactly the property a league is maintained to achieve. The relation to
\cref{def:nash} is one-directional and worth remembering: every evolutionarily
stable strategy is a Nash equilibrium, and many Nash equilibria are not
evolutionarily stable, so stability is the stronger and more operationally
meaningful requirement for a population. \Cref{ex:hawk-dove} computes one in a
line.
\end{rlreading}

\begin{example}[Hawk and dove: the stable mixture is computable in one line]
\label{ex:hawk-dove}
Two animals contest a resource worth $V = 2$. Two hawks fight and share the
expected value minus an injury cost $C = 4$, so each receives
$(V - C)/2 = -1$. A hawk against a dove takes the whole resource, $V = 2$,
leaving the dove nothing. Two doves split it, $V/2 = 1$. The payoff matrix to
the row animal is
\begin{align*}
\bimatrix{\mathrm{Hawk}}{\mathrm{Dove}}{\mathrm{Hawk}}{\mathrm{Dove}}{-1}{2}{0}{1}.
\end{align*}
Neither pure strategy is stable, since a population of doves is invaded by a
hawk earning $2$ against the doves' $1$, and a population of hawks is invaded by
a dove earning $0$ against the hawks' $-1$. The stable mixture makes the two
equal: with a fraction $p$ of hawks, a hawk earns $-p + 2(1-p) = 2 - 3p$ and a
dove earns $1-p$, equal when $2 - 3p = 1 - p$, that is $p = 1/2 = V/C$. At that
mixture both earn $0.5$ and the population average is $0.5$, which is half of
what a population of pure doves would earn.

The last sentence is the content for multi-agent reinforcement learning. The
evolutionarily stable outcome is strictly worse for the population than a
cooperative one, it is reached by a dynamics that no agent controls, and the
fraction of aggressive behaviour it sustains is set by the ratio of the prize to
the cost of conflict. A training population of self-interested agents in a
contested-resource environment will settle at this ratio, and changing it
requires changing $V/C$, which is an environment design decision, not an
algorithmic one. \exend
\end{example}

\begin{example}[Rock-paper-scissors: a conserved quantity instead of convergence]
\label{ex:replicator-rps}
Simulate \cref{eq:replicator} for the single-population rock-paper-scissors game
from $(0.4, 0.35, 0.25)$. The product $\pi(R)\pi(P)\pi(S)$ is a conserved
quantity of the continuous-time dynamics, and the numerical integration confirms
it: starting from a product of $0.0350$, after fifty thousand steps of size
$4 \times 10^{-4}$ the product is $0.034995$, a relative drift of
$1.5 \times 10^{-4}$ attributable to the discretisation. The trajectory is a
closed orbit around the uniform equilibrium, and the probability of Rock is
$0.419$ at the end of the run, back near where it started.

A training run of this system would produce a return curve pinned at zero, since
the game is zero-sum and symmetric, with a policy that is rotating steadily. No
scalar logged by a standard training harness distinguishes this from
convergence. The quantity that does distinguish it is the conserved product, or
equivalently the Kullback-Leibler divergence to the equilibrium, and in general
the exploitability. This example is the reason \cref{tab:diagnostics} exists.
\exend
\end{example}

Gradient-based learning admits the same treatment. Infinitesimal gradient ascent
in a two-agent two-action game \citep{singh2000nash} yields a linear system
whose trajectories either converge to a Nash equilibrium or orbit it on closed
elliptical curves, the latter occurring exactly when the game has a unique mixed
equilibrium. The average payoff still converges to the equilibrium payoff in the
orbiting case, which foreshadows the general phenomenon that time averages of
no-regret dynamics behave better than the dynamics themselves. The win-or-learn-fast
modification of \citet{bowling2002multiagent} varies the step size according to
whether the agent is performing better or worse than a reference, and turns the
closed orbits into inward spirals, which is a control-theoretic fix to a
dynamical-systems problem and has no single-agent analogue, since in a
single-agent problem gradient ascent on a smooth objective cannot cycle.

The broader lesson of this section is methodological. Single-agent
reinforcement learning is an optimisation problem and is analysed with the tools
of optimisation: descent lemmas, contraction, convergence rates to a fixed
target. Multi-agent reinforcement learning is a dynamical system and must be
analysed with the tools of dynamical systems: stability of rest points, Lyapunov
functions, bifurcations as parameters such as the temperature or the step size
cross thresholds, and the possibility of chaos, which does occur in simple games
\citep{sato2002chaos}. The empirical practice of reporting a learning curve and
declaring convergence when it flattens is adequate for the first setting and
misleading for the second, because a flat curve is consistent with a limit cycle
whose period exceeds the plotting window. The appropriate diagnostics are
exploitability, measured as the gap to a best response, and the distance of the
empirical joint play to the coarse correlated equilibrium polytope, both of
which are computable in the games where this matters and neither of which is a
return.

The empirical game-theoretic analysis methodology of
\citet{wellman2006methods} is the operational form of this advice. One takes a finite set of trained policies, estimates the payoff matrix
of the game they define by simulation, and analyses that empirical normal-form
game with the tools of \cref{sec:solution-concepts,sec:evolutionary}: its
equilibria, its replicator flow, its response graph. The outcome is a
characterisation of the strategic landscape that a learning curve cannot
provide, and the $\alpha$-rank solution concept of \citet{omidshafiei2019alpha}
supplies a ranking of policies that is computable in time polynomial in the
total number of pure strategy profiles, and is defined for general-sum
many-agent games, for which no polynomial-time Nash computation is available.

\begin{example}[What a ranking of checkpoints can and cannot say]
\label{ex:alpharank}
Build the mutation chain of $\alpha$-rank over three strategies with selection
intensity $\alpha = 10$ and population size $2$. On rock-paper-scissors the
stationary distribution is uniform, $(\tfrac13,\tfrac13,\tfrac13)$, with a
spread of less than $10^{-15}$, which is the correct answer: the three
strategies are cyclically related and none dominates. On a transitive game in
which the first strategy beats the second by one and the third by two, and the
second beats the third by one, the stationary distribution concentrates almost
entirely on the first, with mass $0.99999998$.

The contrast is the diagnostic. A ranking that comes out uniform is not a
failure of the method but a report that the policy population is intransitive,
which is exactly the situation in which average win rate across a population is
meaningless and in which a league or a population-based method is required. A
ranking that concentrates reports a transitive landscape, in which the best
checkpoint is a meaningful notion. Running $\alpha$-rank on the checkpoints of a
self-play run therefore answers, before any further analysis, whether the
run's own history forms a ladder or a cycle. \exend
\end{example}

\section{The complexity landscape}
\label{sec:complexity}

A correspondence between two models is only useful if one knows what it costs to
exploit. This section collects the complexity of the solution problems attached
to each model class so that the reader can see, for any proposed reduction,
whether it buys tractability or merely relocates the difficulty. Unless stated
otherwise, the problems are the decision versions of exact solution with the
model given explicitly, so the input size is polynomial in $\abs{\St}$,
$\prod_i \abs{\Ac_i}$ and the horizon, and sampling-based learning complexity is
discussed separately at the end. Because the horizon and the discounting
convention change the answers, every statement below names them.

The two readings here differ in what is being counted. A complexity theorist
counts operations as a function of the input size, where the input is the
explicit description of the game, and is therefore interested in whether an
exponential blow-up lives in the class or in the encoding. A reinforcement
learning researcher counts environment interactions as a function of the
accuracy, where the model is never written down at all, and is therefore
interested in sample complexity. The two are related but not the same, and the
commonest error in this area is to read a planning hardness result as a
statement about learning or the reverse. We separate them explicitly, and the
rule of thumb is that planning hardness bounds what any amount of data can buy,
while sample complexity bounds what a given amount of data can buy for a problem
whose planning version is already easy.

\begin{nomen}{Complexity vocabulary read from both sides.}{tab:nom-complexity}
Input size & Size of the explicit model &
  What the exponent is measured against \\
Planning problem & Model known; compute a solution &
  No interaction needed \\
Learning problem & Model unknown; interact and approximate &
  Sample complexity, not time \\
\PTIME-complete & Solvable by linear programming &
  Easy, and not parallelisable in general \\
\PPAD-complete & Fixed point guaranteed by a parity argument &
  Hard, but not \NP-hard \\
\textsc{PLS}-complete & Local search with an improving step &
  Each step is cheap; the path is long \\
\NEXP-complete & Nondeterministic exponential time &
  No polynomial algorithm exists \\
Undecidable & No algorithm at all &
  Infinite horizon with partial observability \\
Curse of multi-agents & $\prod_i \abs{\Ac_i}$ in the input &
  Exponential in $\nag$ inside the input size \\
Succinct representation & Factored or graphical model &
  Small description, possibly hard solution \\
Approximate solution & $\varepsilon$-equilibrium; $\varepsilon$-optimal policy &
  What is actually computed \\
\end{nomen}

The single-agent reference points are classical. Deciding whether a finite flat
discounted Markov decision process has value at least a threshold is
\PTIME-complete, with membership by linear programming
\citep{papadimitriou1987complexity}. The finite-horizon partially observable
Markov decision process problem, for history-dependent policies, is
\PSPACE-complete \citep{papadimitriou1987complexity}, the jump reflecting the
exponential history space that the belief state of \cref{thm:belief-mdp}
summarises but does not shrink. The infinite-horizon partially observable
problem is undecidable, and so is its $\varepsilon$-approximate version under
undiscounted criteria, the average criterion among them
\citep{madani2003undecidability}.

\begin{example}[Where the exponentials actually live]
\label{ex:counting}
Two arithmetic exercises make the rest of this section readable. First, the
curse of multi-agents. With five actions per agent the joint action space has
$25$ elements for two agents, $3125$ for five, $9{,}765{,}625$ for ten and about
$9.5 \times 10^{13}$ for twenty, while $\max_i \abs{\Ac_i}$ stays at $5$. A
multi-agent Markov decision process is solvable in time polynomial in
$\abs{\St}\prod_i\abs{\Ac_i}$, so the algorithm is polynomial in a quantity that
is itself exponential in the parameter of interest, and at ten agents the gap
between the joint and the per-agent count is a factor of about two million. This
is why the sample-complexity results that scale in $\max_i \abs{\Ac_i}$ rather
than $\prod_i \abs{\Ac_i}$ \citep{jin2024vlearning} matter so much, and why
factorisation, the mean-field limit and non-stationary correlated targets are
the three escapes catalogued in this monograph.

Second, the history space. With two actions and two observations the number of
action-observation histories of length $h$ is $4^h$, which is $64$ at $h = 3$,
$1024$ at $h = 5$ and $1{,}048{,}576$ at $h = 10$. The number of deterministic
history-dependent policies is $2$ raised to the number of histories of every
length up to $h-1$, that is $2^{(4^h-1)/3}$, whose base-ten logarithm is $6.3$
at $h = 3$, $102.7$ at $h = 5$ and $105{,}217.5$ at $h = 10$. These are the
numbers behind the \PSPACE\ and \NEXP\ entries of \cref{tab:complexity}, and the
jump between them is the jump between enumerating histories and enumerating
functions of histories, which is also the jump between a belief state and a
prescription in \cref{sec:reductions}. \exend
\end{example}

The multi-agent reference points are the following. For finite-horizon
decentralised models with at least two agents, both the decentralised partially
observable Markov decision process and the decentralised Markov decision process
problems are \NEXP-hard, and they are \NEXP-complete when the horizon is
strictly less than the number of states \citep{bernstein2002complexity}. The hardness implies,
by the time hierarchy theorem and the inclusions
$\PTIME \subsetneq \EXP \subseteq \NEXP$, that no polynomial-time algorithm
exists, and it implies that no polynomial-time reduction producing a
polynomially sized partially observable Markov decision process can exist unless
\PSPACE, \EXP\ and \NEXP\ coincide, which places the reductions of
\cref{sec:reductions} in context: they are exact and they are necessarily
exponential in the quantity that matters, the prescription space. Structural
restrictions recover lower classes. When transitions and observations are
independent across agents, each agent is locally fully observable, and the
coupling enters only through a joint reward structure, the problem drops to
\NP-complete \citep{becker2004solving}, which is the formal statement that
coupling through reward alone is far milder than coupling through dynamics. The
infinite-horizon versions are undecidable, inheriting the single-agent result
\citep{bernstein2002complexity,madani2003undecidability}.

For the stateless models, two-player zero-sum normal-form games are solved in
polynomial time by linear programming, which is von Neumann's minimax theorem
made constructive, the constructive part being Dantzig's equivalence with linear
programming and the polynomial bound being Khachiyan's. Finding a Nash
equilibrium of a two-player general-sum normal-form game is \PPAD-complete
\citep{chen2009settling}, the three-player and general cases having been settled
first \citep{daskalakis2009complexity}; \PPAD\ captures total search problems
whose solutions are guaranteed by a parity argument and is widely believed to
lie strictly between \PTIME\ and \NP-hardness. The relaxations are easy, and
for an explicitly given game they are easy for a trivial reason: correlated and
coarse correlated equilibria are defined by polynomially many linear
inequalities in the joint distribution, so a linear program suffices. The
substantive result concerns succinct representations. For games of polynomial
type whose expected utilities are computable in polynomial time, which covers
graphical, anonymous, polymatrix, congestion, scheduling and local-effect
games, the ellipsoid-against-hope method of \citet{papadimitriou2008computing}
computes a correlated equilibrium in polynomial time; the published algorithm
returns an approximate equilibrium, and an exact variant is due to
\citet{jiang2015polynomial}. The contrast between \PPAD-completeness for Nash
and polynomial time for correlated equilibrium is the complexity-theoretic
justification for the recommendation of \cref{sec:solution-concepts}, that
no-regret dynamics targeting coarse correlated equilibria are the right default.
In potential games the problem is different again: a pure Nash equilibrium is
guaranteed to exist by the finite improvement property, and finding one is
\textsc{PLS}-complete for congestion games in general while being polynomial for
the symmetric network case \citep{fabrikant2004complexity}, which means that the
better-reply dynamics may take exponentially many steps even though each step is
cheap.

Zero-sum structure in the many-agent case is the one place where a positive
result survives beyond two agents. Unrestricted three-player zero-sum games are
\PPAD-complete, but when the payoffs decompose over the edges of a graph, so
that the game is a zero-sum polymatrix game in the sense of
\cref{def:graphical}, the minmax theorem generalises, the equilibria form a
convex set, they are computable in polynomial time and no-regret learning
converges to them \citep{cai2011minmax}. This is essentially the limit of what
can be pushed, and the practical consequence is that a many-agent adversarial
application should be examined for pairwise separability before any algorithm is
chosen.

Stochastic games are where the folklore most needs correcting, because the
turn-based and the simultaneous-move cases separate. Deciding the winner of a
simple stochastic game, which is turn-based, undiscounted and stopping, lies in
\NP$\,\cap\,$\coNP\ \citep{condon1992complexity}, and the turn-based discounted,
mean-payoff and parity stochastic games are polynomial-time equivalent to it
\citep{andersson2009complexity}; no polynomial algorithm is known for any of
them. For the simultaneous-move discounted game that \cref{eq:minimax-bellman}
actually defines, the exact value problem is harder: it lies in \PSPACE, it is
square-root-sum hard, and it is not known to lie in \NP\
\citep{etessami2010complexity}. What is unproblematic is approximation at a
fixed discount factor. Since $\BellOp_{\mathrm{mm}}$ is a $\gamma$-contraction
by \cref{thm:shapley} and each per-state matrix game is solved by a linear
program, $O\big(\log(1/\varepsilon)/\log(1/\gamma)\big)$ sweeps deliver an
$\varepsilon$-optimal value, so the running time is polynomial in the model
size, in $\log(1/\varepsilon)$ and in $1/(1-\gamma)$. The last dependence is
why the qualifier matters: with $\gamma$ part of the input the strong
approximation problem is known to sit in \PPAD, with no polynomial-time
algorithm known \citep{etessami2010complexity}.

For general-sum stochastic games, computing a stationary Nash equilibrium is
\PPAD-hard already by inheritance from the stateless case, a one-state game
being a normal-form game, and \citet{daskalakis2023complexity} strengthen this
to two-player turn-based games with a constant discount factor and a constant
approximation, where the inheritance argument does not apply. The same work
removes the hope that the relaxation which rescues the stateless case rescues
the sequential one: computing a constant-approximate stationary Markov coarse
correlated equilibrium is \PPAD-hard under the same restrictions,
unconditionally for the perfect-correlated-equilibrium refinement and, for the
unrefined notion, under the \textsc{PCP}-for-\PPAD\ conjecture. The rescue comes
instead from relaxing stationarity, and here the literature must be read
carefully, because two different objects are on offer. V-learning and its
relatives return a non-stationary, history-dependent correlated policy, with a
sample complexity scaling in $\max_i \abs{\Ac_i}$ rather than
$\prod_i \abs{\Ac_i}$, which is what avoids the curse of multi-agents
\citep{jin2024vlearning,song2021can}; those policies are explicitly not Markov.
Non-stationary Markov coarse correlated equilibria, which are Markov in the
state at each stage, are learnable in time and sample complexity polynomial in
all parameters by the decentralised algorithm of
\citet{daskalakis2023complexity}. Both targets are correlated objects and both
require shared randomness at execution.

For extensive-form games, the equilibrium of a two-player zero-sum game with
perfect recall is computable in time polynomial in the size of the game tree
\citep{koller1992complexity}, which may itself be exponential in the natural
description of the game; the sequence form, developed for computation by
\citet{vonstengel1996efficient}, replaces the simplex of mixed strategies by the
polytope of realisation plans and brings the linear program down to a size
linear in the tree. \citet{koller1996efficient} extend the device to the
general-sum case, where the equilibrium conditions become a linear
complementarity problem solved by Lemke's algorithm, which is not polynomial.
Dropping perfect recall makes the guaranteed-payoff problem, that of deciding
whether a player can secure a given payoff, \NP-hard
\citep{koller1992complexity}; dropping the zero-sum assumption makes equilibrium
computation \PPAD-hard by inheritance from the normal form. For Stackelberg
equilibria the pattern inverts in an instructive way: computing an optimal
commitment in a two-player normal-form game is polynomial, by solving one linear
program per pure follower strategy, while the Bayesian version with multiple
follower types and the version with three or more agents are \NP-hard
\citep{conitzer2006computing}. Computing the Shapley value of a weighted voting
game is \#P-complete \citep{deng1994complexity,michalak2013efficient}, while the
same work shows it is polynomial for induced-subgraph games, so the hardness is a property of the
representation rather than of the solution concept; it is nevertheless the
reason why Shapley-based credit assignment in multi-agent reinforcement
learning is implemented by Monte Carlo sampling over permutations. For the
stability concepts of \cref{sec:coalitions} the same paper reports that, on its
weighted-graph vehicle, the core is easy to characterise when the game is convex
and \NP-complete otherwise, and that similar results hold for the kernel, the
nucleolus, the $\varepsilon$-core and the bargaining set, so a federation
stability check is cheap exactly when the coalition value has increasing
marginal returns.

\begingroup\small
\begin{longtable}{@{}L{4.5cm} L{3.5cm} L{2.9cm} L{2.4cm}@{}}
\caption{Complexity of exact solution, model given explicitly. The horizon is
finite unless stated. The \textsc{Target} column names the solution concept
whose computation is being classified; \PPAD\ and \textsc{PLS} are total search
classes, so the stated hardness concerns the search problem rather than a
decision problem. Side conditions that the sources require are given in the
\textsc{Problem} column rather than relegated to the text.}
\label{tab:complexity}\\[2pt]
\toprule
\textsc{Problem} & \textsc{Target} & \textsc{Complexity} & \textsc{Source} \\
\midrule
\endfirsthead
\multicolumn{4}{@{}l}{\small\textsc{(\cref{tab:complexity}, continued)}}\\[2pt]
\toprule
\textsc{Problem} & \textsc{Target} & \textsc{Complexity} & \textsc{Source} \\
\midrule
\endhead
\midrule
\multicolumn{4}{@{}r@{}}{\small\textsc{continued on the next page}}\\
\endfoot
\bottomrule
\endlastfoot
MDP, flat, discounted & Optimal policy & \PTIME-complete & \citet{papadimitriou1987complexity} \\
POMDP, finite horizon & Optimal history-dependent policy & \PSPACE-complete & \citet{papadimitriou1987complexity} \\
POMDP, infinite horizon & Policy existence; $\varepsilon$-optimality, undiscounted & Undecidable & \citet{madani2003undecidability} \\
MMDP, flat & Optimal joint policy & \PTIME\ in $\abs{\St}\prod_i\abs{\Ac_i}$ & \citet{boutilier1996planning}; complexity via \citet{papadimitriou1987complexity} \\
Dec-MDP, $\nag \ge 2$ & Optimal joint policy & \NEXP-hard; complete if horizon $< \abs{\St}$ & \citet{bernstein2002complexity} \\
Dec-POMDP, $\nag \ge 2$ & Optimal joint policy & \NEXP-hard; complete if horizon $< \abs{\St}$ & \citet{bernstein2002complexity} \\
Dec-MDP; independent transitions and observations, local full observability, joint reward structure & Optimal joint policy & \NP-complete & \citet{becker2004solving} \\
Dec-POMDP, infinite horizon & Policy existence & Undecidable & \citet{bernstein2002complexity}, inheriting \citet{madani2003undecidability} \\
\addlinespace
Two-player zero-sum matrix game & Value, optimal strategies & \PTIME\ (LP) & Value: \citet{vonneumann1928theorie}; \PTIME\ by LP \\
Two-player general-sum matrix game & Nash equilibrium & \PPAD-complete & \citet{chen2009settling}; $\nag \ge 3$ \citet{daskalakis2009complexity} \\
Zero-sum polymatrix game, any $\nag$ & Nash equilibrium & \PTIME; convex equilibrium set & \citet{cai2011minmax} \\
Three-player zero-sum, unrestricted & Nash equilibrium & \PPAD-complete & \citet{cai2011minmax}, from \citet{chen2009settling} \\
Explicit normal-form game & Correlated or coarse correlated equilibrium & \PTIME\ (LP) & Definitional \\
Succinct game of polynomial type & Correlated equilibrium & \PTIME & \citet{papadimitriou2008computing}; exact variant \citet{jiang2015polynomial} \\
Congestion game & Pure Nash equilibrium & \textsc{PLS}-complete; \PTIME\ if symmetric network & \citet{fabrikant2004complexity} \\
Two-player normal form & Stackelberg commitment & \PTIME & \citet{conitzer2006computing} \\
Bayesian or $\nag\ge3$ normal form & Stackelberg commitment & \NP-hard & \citet{conitzer2006computing} \\
Weighted voting game & Shapley value & \#\PTIME-complete & \citet{deng1994complexity} \\
Induced-subgraph coalitional game & Core, kernel, nucleolus, bargaining set & \PTIME\ if convex; \NP-complete otherwise & \citet{deng1994complexity} \\
\addlinespace
Simple stochastic game (turn-based, undiscounted, stopping) & Which player wins with probability $>\tfrac12$ & \NP$\,\cap\,$\coNP & \citet{condon1992complexity} \\
Turn-based discounted, mean-payoff, parity SG & Value & Equivalent to the above & \citet{andersson2009complexity} \\
Simultaneous-move zero-sum discounted SG & Exact value & In \PSPACE; square-root-sum hard & \citet{etessami2010complexity} \\
Simultaneous-move zero-sum SG, fixed $\gamma$ & $\varepsilon$-optimal value & Poly.\ in size, $\log(1/\varepsilon)$, $1/(1-\gamma)$ & Contraction of \cref{thm:shapley}; LP per state \\
General-sum SG, two-player turn-based, constant $\gamma$ & Constant-approximate stationary Nash & \PPAD-hard & \citet{daskalakis2023complexity} \\
General-sum SG, two-player turn-based, constant $\gamma$ & Constant-approximate stationary Markov \CCE & \PPAD-hard (unrefined notion under \textsc{PCP}-for-\PPAD) & \citet{daskalakis2023complexity} \\
General-sum Markov game & Non-stationary correlated (non-Markov) \CE/\CCE & Sample complexity in $\max_i\abs{\Ac_i}$ & \citet{jin2024vlearning,song2021can} \\
General-sum Markov game & Non-stationary Markov \CCE & Poly.\ time and sample complexity & \citet{daskalakis2023complexity} \\
\addlinespace
Two-player zero-sum EFG, perfect recall & Nash equilibrium & \PTIME\ in tree size & \citet{koller1992complexity,vonstengel1996efficient} \\
Two-player general-sum EFG, perfect recall & Nash equilibrium & \PPAD-hard; LCP by Lemke & \citet{koller1996efficient} \\
EFG without perfect recall & Guaranteed payoff (maxmin behaviour strategy) & \NP-hard & \citet{koller1992complexity} \\
\end{longtable}
\endgroup

Three readings of \cref{tab:complexity} are worth stating because they have
direct algorithmic consequences. The first is that partial observability costs
one complexity class and decentralisation costs another, and the two costs
compound; the jump from \PSPACE\ to \NEXP\ between a partially observable
Markov decision process and a decentralised one is the price of the absence of a
common belief state, and it is the formal justification for every architecture
that manufactures common knowledge by communication, since a communication
channel that makes all observations public collapses the decentralised problem
back to the centralised one.

The second is that relaxing the solution concept helps in the stateless case and
does not help in the sequential case. Correlated equilibrium is polynomial where
Nash is \PPAD-complete in normal-form games, and stationary Markov coarse
correlated equilibrium remains \PPAD-hard in Markov games even for two-player
turn-based instances with a constant discount factor. The practical conclusion
is that one must also relax stationarity, and the learning algorithms with the
strongest guarantees in general-sum Markov games return correlated policies that
require a shared random seed at execution, Markov in the state in the case of
\citet{daskalakis2023complexity} and history-dependent in the case of
\citet{jin2024vlearning}. Whether either is acceptable is an engineering question
about the deployment, and it is one that the complexity results force the
designer to confront rather than a limitation of any particular algorithm.

The third is that the exponential dependence on the number of agents is rarely
in the complexity class and almost always in the input size, which is the
arithmetic of \cref{ex:counting}. A multi-agent
Markov decision process is solvable in time polynomial in $\abs{\St}$ and
$\prod_i \abs{\Ac_i}$, and the second factor is exponential in $\nag$, so the
algorithm is polynomial in a quantity that is itself exponential in the
parameter of interest. This is the curse of multi-agents, and the three
structural escapes from it are the three that this monograph has catalogued:
factorisation, which restricts the representable value functions
(\cref{sec:potential,sec:graphical}); the mean-field limit, which removes the dependence on
$\nag$ by symmetry (\cref{sec:mean-field}); and the restriction to
non-stationary correlated targets, whose sample complexity scales with
$\max_i \abs{\Ac_i}$ \citep{jin2024vlearning}.

\section{How to train: matching the algorithm to the model}
\label{sec:training}

The dictionary assembled so far is of practical use only if it narrows the
choice of training procedure, and it does. This section states a decision
procedure, justifies each branch by the results already established, and
collects the recommendations in \cref{tab:training} and
\cref{tab:diagnostics}.

The two readings converge here, which is the point of the whole monograph. A
game theorist asked to advise on a multi-agent system would ask what the game
is, which solution concept is appropriate and whether it is attainable. A
reinforcement learning engineer would ask what algorithm to run. The claim of
this section is that the second question has no answer independent of the first,
and that the five structural questions below extract exactly the information
needed to answer it.

\begin{nomen}{Deployment vocabulary read from both sides.}{tab:nom-training}
Reward structure & Shared, opposed or mixed-motive rewards &
  Decides which reduction is sound \\
Information structure & Who observes what, and what is public &
  Decides whether a sufficient statistic exists \\
Execution architecture & Centralised or decentralised at deployment &
  Decides whether factorisation is needed \\
Shared randomisation & A common seed available at run time &
  Decides whether \CE\ targets are executable \\
Opponent class & Fixed, co-trained, or unknown partners &
  Decides whether the problem is single-agent \\
Population size & Number of interacting agents &
  Decides whether the mean-field limit applies \\
Solution concept targeted & What the training loop converges to &
  Must be stated before choosing a method \\
Exploitability & Gap to a best response &
  The zero-sum diagnostic \\
NashConv & Summed unilateral improvement &
  The general-sum diagnostic \\
Cross-play & Pairing agents from different runs &
  The cooperative convention diagnostic \\
Empirical game & Payoff matrix over trained checkpoints &
  The population diagnostic \\
\end{nomen}

Five questions determine the branch, and they should be answered about the
deployment rather than about the simulator. The first is the reward structure:
whether the agents share a reward exactly, oppose each other exactly, or
neither. The second is the observability: whether each agent sees the state,
whether the agents jointly see it, and whether anything is public. The third is
the execution architecture: whether a central controller issues joint actions at
deployment time, whether the agents act on private information, and whether a
shared random seed or a communication channel is available. The fourth is
whether the other agents are fixed, co-trained, or unknown at training time. The
fifth is the population size, which decides whether the mean-field limit applies.
The reason to ask these about the deployment is that the sound reduction depends
on what is available at execution, not on what is available during training,
which is the entire content of the centralised-training-decentralised-execution
distinction.

If the reward is shared, the state is fully observed and a central controller
issues joint actions, \cref{prop:sg-mdp} applies and the problem is a Markov
decision process over joint actions. Any single-agent algorithm is sound; the
only difficulty is the size of $\prod_i \Ac_i$, and the remedy is a factored
action representation rather than a multi-agent algorithm. Practitioners who
reach for a multi-agent method in this setting are solving a harder problem than
the one they have.

If the reward is shared and execution is decentralised, the problem is a
decentralised partially observable Markov decision process, the optimum is
well defined, and the obstacle is equilibrium selection rather than equilibrium
computation (\cref{prop:opt-implies-ne}). Three routes are sound and they
trade off differently. Value factorisation under the Individual-Global-Max
condition of \cref{def:igm} makes decentralised greedy play optimal by
construction and is the cheapest option, at the cost of restricting the
representable stage games; monotone mixing in the manner of
\citet{rashid2018qmix} is the standard default and its failure mode is
predictable from \cref{prop:monotone-igm} and is not a violation of
\cref{eq:igm}: it is a joint value function that satisfies the
Individual-Global-Max condition and admits no monotone factorisation, which
\cref{ex:igm-counterexample} exhibits in two by two. Centralised critics with
decentralised actors, as in \citet{lowe2017multi} and \citet{yu2022surprising},
impose no representational restriction and provide no selection guarantee, so
they must be paired with a selection device. Occupancy-state and public-belief
methods implement \cref{thm:occupancy} and \cref{thm:coordinator} directly, are
the only route with an exactness guarantee, and are the most expensive; the
Bayesian action decoder of \citet{foerster2019bayesian} is the practical form.
Independent learning with an optimism, leniency or hysteresis bias
\citep{lauer2000algorithm,matignon2007hysteretic,panait2006lenience} is
a selection device rather than an algorithm family, and it is the correct
accompaniment to any of the above in tasks with the penalty structure of
\cref{ex:penalty}. If the interaction is local, a coordination graph in the
sense of \cref{prop:coordination-graph} converts the joint argmax into message
passing and should be used before any approximation is contemplated.

If the game is two-player zero-sum, the machinery is complete and should be
used. Under perfect information, the recipe is self-play with search and a learned
value function \citep{silver2018general}, which attains the minimax value in
the tabular limit and is validated empirically rather than by a convergence
theorem. Under imperfect information, the sound targets are
the average of a no-regret dynamics or the solution of the public-belief-state
game. Counterfactual regret minimisation and its sampled variants
\citep{zinkevich2007regret,lanctot2009monte} drive the average external regret
to zero at rate $O(1/\sqrt{T})$, so the average strategy converges to a Nash
equilibrium; the neural variant \citep{brown2019deep} inherits the scheme with
function approximation and its regret bound carries an additive floor set by the
approximation error, so it has no vanishing-regret guarantee; neural fictitious self-play
\citep{heinrich2016deep} approximates fictitious play with the same averaging
logic; regularised Nash dynamics \citep{perolat2021poincare} obtains last-iterate
convergence by anchoring; and ReBeL and Student of Games
\citep{brown2020rebel,schmid2023student} combine the public belief state of
\cref{thm:pbs} with search. What is not sound is naive self-play with a
single network and no averaging, because the dynamics in a zero-sum game with a
unique mixed equilibrium orbits rather than converges
(\cref{ex:mwu-rps,ex:replicator-rps});
the time average converges and the iterate does not,
so reporting the final checkpoint rather than the average is a methodological
error with a theorem behind it. If there are more than two agents but the
payoffs decompose over pairs and sum to zero, \cref{tab:complexity} says the
good properties survive and the same machinery applies \citep{cai2011minmax}.

If the game is general-sum, no procedure converges to a Nash equilibrium in
general (\cref{sec:negative}), and the question becomes which attainable target
to aim at. The defensible answers are a coarse correlated equilibrium through
no-regret dynamics, which is attainable and polynomial in the stateless case and
requires non-stationarity in the sequential case
\citep{jin2024vlearning}; a Stackelberg equilibrium if the deployment genuinely grants
one party commitment power (\cref{sec:stackelberg}); a social welfare optimum
if the designer controls the reward and can impose it, which converts the
problem into the common-payoff case; or a best response to a specified
population of opponents, which converts it into the single-agent case by
\cref{prop:fixed-others-mdp}. Before any of these, the structural check is worth
performing: if the game is a Markov potential game or is supermodular, the
convergence results of \cref{sec:potential,sec:learning-dynamics} apply and
independent learning is sound; if the game is smooth in the sense of
\cref{thm:smoothness}, the welfare of whatever no-regret learning reaches is
bounded without any convergence assumption, which is often all that is needed.
Population-based methods, double oracle and
policy-space response oracles \citep{mcmahan2003planning,lanctot2017unified},
occupy a useful middle ground: they maintain an explicit empirical game over a
growing set of policies, solve it with a meta-solver whose choice makes the
target explicit, and add a best response to the meta-solution. Their value is
that the target is a parameter rather than an accident, and the empirical game
they maintain is exactly the object that \citet{wellman2006methods} analyses.

If the agents are many and exchangeable, \cref{sec:mean-field} applies and the
training loop reduces to a single-agent solver inside a population-update outer
loop, with damping or regularisation on the outer loop. If the other agents are
fixed and known, the problem is single-agent by \cref{prop:fixed-others-mdp}. If
they are fixed and unknown, it is Bayesian and the Bayes-adaptive construction
of \cref{prop:harsanyi-bamdp} applies, which in practice means a recurrent or
context-conditioned policy trained across a distribution of partners, that is,
the zero-shot coordination protocol.

The procedure is easier to trust once it has been run on concrete deployments,
so we run it on four.

\begin{example}[Four deployments, four answers]
\label{ex:case-studies}
A fleet of warehouse robots executing a picking plan. The reward is shared, the
state is fully observed by a central scheduler, and the scheduler issues joint
actions at run time. By \cref{prop:sg-mdp} this is a single-agent Markov
decision process over joint actions; the only obstacle is the size of the joint
action space, and the remedy is the coordination graph of
\cref{prop:coordination-graph}, since a robot interacts only with the robots in
its aisle. No multi-agent algorithm is required and using one would be solving a
harder problem.

The same fleet with an unreliable network, so that at run time each robot sees
only its own sensors. The reward is still shared, execution is now
decentralised, and the problem is a decentralised partially observable Markov
decision process. The obstacle is equilibrium selection, the optimum is still
well defined, and the branch above applies: factorisation if the task is
IGM-representable, a centralised critic plus an explicit selection device if it
is not, and cross-play across independently trained runs as the diagnostic,
since \cref{ex:convention} is exactly the failure mode that a mean-return plot
hides.

A bidding agent in a repeated advertising auction. The reward is general-sum,
the auction rules are fixed and public, the agent acts alone, and the other
bidders are neither fixed nor controllable. The relevant questions are whether
the auction is truthful, in which case \cref{ex:second-price} says the agent's
problem collapses to estimating its own value and the strategic content
disappears, and whether the game is smooth, in which case
\cref{thm:smoothness} bounds the welfare of whatever no-regret method is run.
The target to state is a coarse correlated equilibrium of the empirical game,
and the diagnostic is the maximum constant-deviation gain under empirical play.

A pair of language-model agents negotiating a contract. The reward is
general-sum, each agent has private information, messages are cheap talk, and
one party may have commitment power through a system prompt that the other can
read. The applicable results are \cref{thm:crawford-sobel}, which bounds how
informative the messages can be as a function of the misalignment;
\cref{ex:rubinstein}, which says the protocol and the discounting determine the
split before any learning happens; and \cref{sec:stackelberg}, which says that
whichever party can credibly commit obtains at least its equilibrium payoff. The
evaluation should be an empirical game over prompt-defined strategies in the
sense of \citet{wellman2006methods}, not an average return, and the open
difficulty is that the strategy space is not characterised, which is the ninth
problem in \cref{sec:open}. \exend
\end{example}

\begingroup\small
\begin{longtable}{@{}L{2.9cm} L{3.9cm} L{3.4cm} C{0.8cm} L{2.0cm}@{}}
\caption{Training recommendations by model class. The \textsc{Target} column
states what the procedure converges to when it converges, and the
\textsc{Guarantee} column records the strength of that statement: T for a
theorem under stated conditions, A for an asymptotic or average-iterate
statement, H for a heuristic with no convergence guarantee.}
\label{tab:training}\\[2pt]
\toprule
\textsc{Model} & \textsc{Procedure} & \textsc{Target} & & \textsc{Key reference} \\
\midrule
\endfirsthead
\multicolumn{5}{@{}l}{\small\textsc{(\cref{tab:training}, continued)}}\\[2pt]
\toprule
\textsc{Model} & \textsc{Procedure} & \textsc{Target} & & \textsc{Key reference} \\
\midrule
\endhead
\midrule
\multicolumn{5}{@{}r@{}}{\small\textsc{continued on the next page}}\\
\endfoot
\bottomrule
\endlastfoot
MDP & Tabular methods under Robbins-Monro conditions & Optimal policy & T & \citet{sutton2018reinforcement} \\
MMDP, central execution & Single-agent method on joint actions & Optimal joint policy & T & \citet{boutilier1996planning} \\
MMDP, local interaction & Coordination graph; variable elimination & Exact joint argmax & T & \citet{guestrin2001multiagent,kok2006collaborative} \\
Dec-POMDP, IGM-representable & Value factorisation (VDN, QMIX, QPLEX) & Decentralised greedy play is jointly greedy (\cref{prop:monotone-igm}) & T & \citet{rashid2018qmix} \\
Dec-POMDP, general & Centralised critic, decentralised actors & Local optimum; a Nash profile & H & \citet{lowe2017multi,yu2022surprising} \\
Dec-POMDP, exact & Occupancy-MDP search & Optimal joint policy & T & \citet{dibangoye2016optimally} \\
Dec-POMDP, public-belief & Prescriptions on the public belief & Approximate; no optimality guarantee & H & \citet{foerster2019bayesian} \\
Dec-POMDP, group common knowledge & Hierarchical controller over commonly known groups & Approximate & H & \citet{dewitt2019multi} \\
Dec-POMDP, penalty structure & Optimistic, lenient or hysteretic independent learners & Payoff-dominant equilibrium & A & \citet{matignon2007hysteretic,panait2006lenience} \\
2p zero-sum, perfect info & Self-play with search and learned value & Strong play; minimax value in the tabular limit & H & \citet{silver2018general} \\
2p zero-sum, imperfect info & CFR, MCCFR, Deep CFR & Nash equilibrium (average) & A & \citet{zinkevich2007regret,brown2019deep} \\
2p zero-sum, imperfect info & Neural fictitious self-play & Nash equilibrium (average) & A & \citet{heinrich2016deep} \\
2p zero-sum, imperfect info & Regularised Nash dynamics & Nash equilibrium (last iterate) & T & \citet{perolat2021poincare} \\
2p zero-sum, imperfect info & Public-belief-state search & Nash equilibrium & T & \citet{brown2020rebel,schmid2023student} \\
Zero-sum polymatrix, any $\nag$ & No-regret learning or LP & Nash equilibrium; convex set & T & \citet{cai2011minmax} \\
Markov potential game & Independent policy gradient & Nash equilibrium & T & \citet{leonardos2022global} \\
Supermodular game & Best-response iteration from an extreme point & Least or greatest equilibrium & T & \citet{topkis1979equilibrium,milgrom1990rationalizability} \\
Monotone game & Extragradient or optimistic gradient & Unique Nash, last iterate & T & \citet{rosen1965existence,rakhlin2013optimization} \\
Smooth game & Any no-regret dynamics & Welfare within $\lambda/(1-\mu)$ of optimal & T & \citet{roughgarden2015intrinsic} \\
General-sum, stateless & Regret matching & Correlated equilibrium & T & \citet{hart2000simple} \\
General-sum, stateless & Any no-external-regret dynamics & Coarse correlated equilibrium & T & \cref{def:ce-cce} \\
General-sum Markov game & V-learning and relatives & Non-stationary history-dependent \CE/\CCE; shared randomness & T & \citet{jin2024vlearning} \\
General-sum, commitment & Bilevel or implicit-gradient methods & Stackelberg equilibrium & T & \citet{fiez2020implicit} \\
General-sum, shaping & Opponent-learning awareness & Shaped non-equilibrium outcome & H & \citet{foerster2018learning} \\
General-sum, population & PSRO, double oracle, league play & Meta-solver-dependent & A & \citet{lanctot2017unified} \\
Designer controls rewards & Mechanism design; VCG-style transfers & Welfare optimum in dominant strategies & T & \citet{vickrey1961counterspeculation,groves1973incentives} \\
Mean-field game & Single-agent solver with damped population update & Mean-field equilibrium & T & \citet{perolat2022scaling} \\
Unknown fixed partners & Context-conditioned policy over a partner prior & Approximate Bayes-optimal response & H & \citet{zintgraf2020varibad,hu2020other} \\
\end{longtable}
\endgroup

A separate word is owed to evaluation, because the correspondence has sharper
consequences there than it has for training. Return against the training
opponents is not a measure of quality in any game that is not common-payoff,
since it is a measure of the pair and not of the policy. The appropriate
measures follow from the solution concept being targeted. For a zero-sum game
the measure is exploitability, the gap between the value of the policy against a
best response and the value of the game, computable exactly in small games and
approximable by training a best-response agent in large ones; its multi-agent
generalisation is the sum of per-agent improvement from unilateral best
response, often called NashConv. For a general-sum game targeting a coarse
correlated equilibrium the measure is the maximum per-agent gain from a constant
deviation under the empirical joint play, which is the finite-sample version of
\cref{def:ce-cce}, and if the target is a correlated equilibrium the measure is
the maximum swap gain instead, which \cref{ex:cce-strict} shows is a strictly
stronger requirement. For cooperative games with decentralised execution the
measure is cross-play, the return obtained when agents trained in different runs
are paired, which detects the conventions that self-play silently installs and
that \citet{hu2020other} and \citet{hu2021off} are designed to prevent. For
populations the measure is the response graph and the $\alpha$-rank
distribution of \citet{omidshafiei2019alpha}, whose reading is the subject of
\cref{ex:alpharank}. For a learned communication channel the measures are
positive signalling and positive listening \citep{lowe2019pitfalls}, since a
return improvement alone does not distinguish an informative protocol from a
babbling one. Reporting a learning curve of
self-play return is, for every model class except the common-payoff one with
centralised execution, uninformative about the quantity of interest.

\begin{table}[t]
\centering
\caption{Evaluation measures implied by the target solution concept. The last
column indicates the failure that the measure detects and that mean return does
not.}
\label{tab:diagnostics}
\small
\begin{tabular}{@{}L{3.0cm} L{4.6cm} L{5.0cm}@{}}
\toprule
\textsc{Target} & \textsc{Measure} & \textsc{Failure detected} \\
\midrule
Minimax value & Exploitability; gap to best response & Cycling; overfitting to the self-play opponent \\
Nash equilibrium, $\nag \ge 2$ & NashConv, the summed unilateral gain & Convergence to a non-equilibrium rest point \\
Coarse correlated equilibrium & Maximum constant-deviation gain under empirical play & Regret accumulating despite flat return \\
Correlated equilibrium & Maximum swap-deviation gain & Perfectly correlated play that no constant deviation beats (\cref{ex:cce-strict}) \\
Optimal joint policy & Gap to a centralised planner on the same model & Equilibrium selection failure; unrepresentable $Q_{\mathrm{tot}}$ \\
Decentralised convention & Cross-play return across independent runs & Hidden handshakes and arbitrary conventions \\
Robustness to partners & Return against a held-out partner population & Overfitting to co-trained partners \\
Population quality & Response graph; $\alpha$-rank; cross-play matrix & Intransitivity masked by average return \\
Learned protocol & Positive signalling and positive listening & A channel that is present but unused \\
Welfare under selfish learning & Measured welfare against a planner solution & Price of anarchy mistaken for an optimisation failure \\
Mean-field equilibrium & Exploitability of the representative agent & Non-convergence of the population flow \\
\bottomrule
\end{tabular}
\end{table}

\section{Correspondences that fail, and the theorems that forbid them}
\label{sec:negative}

A dictionary is only trustworthy if it marks the words that have no translation.
This section collects the negative results, each of which forbids a
correspondence that is natural to conjecture and that appears in the literature
in informal form.

The two readings agree here, which is why this section is short on vocabulary
and long on consequences. A game theorist reads each result as a limit on what
rationality can deliver; a reinforcement learning researcher reads it as a limit
on what an algorithm can deliver. The statements are the same and the
consequence for practice is the same: a system whose specification requires one
of these correspondences to hold is misspecified, and the remedy is to change
the specification rather than the optimiser.

\begin{nomen}{Failure modes read from both sides.}{tab:nom-negative}
Non-stationary environment & Others are learning; the kernel moves &
  \cref{prop:fixed-others-mdp} loses its hypothesis \\
Uncoupled dynamics & Each agent sees only its own payoff &
  The only realistic information structure \\
Cyclic equilibrium & A cycle of policies, not a point &
  What value iteration reaches in general sum \\
Last-iterate divergence & Final checkpoint is not the output &
  Only the average converges \\
Representational failure & Hypothesis class excludes the truth &
  More data does not help \\
Equilibrium multiplicity & Many answers, no selection rule &
  ``Converged'' is not a performance claim \\
Risk versus payoff dominance & Noise selects the safe equilibrium &
  Not the best one \\
Price of anarchy & Equilibrium welfare below the planner's &
  Not an optimisation failure \\
Aggregation impossibility & No rule satisfies every axiom &
  Some desideratum must be dropped \\
Communication bound & Misalignment caps informativeness &
  A wider channel does not help \\
\end{nomen}

The first failure is the independent-learner reduction.
\Cref{prop:fixed-others-mdp} is exact under the hypothesis that the other agents
are stationary, and in simultaneous learning the hypothesis is false by
construction. The consequences are not confined to a loss of guarantees. The
Bellman operator that agent $i$ is implicitly iterating has a fixed point that
moves as the others learn, so convergence of the iterates is not expected even
in the limit of infinite data; experience stored in a replay buffer was
generated under a joint policy that no longer exists, so off-policy correction
is required with respect to the others' policies and not only one's own; and the
value function being learned is an average over a non-stationary distribution of
opponent behaviour, which is not the value of any fixed game. The response of
centralising the critic addresses the third issue and not the first two, because
a critic conditioned on the joint action is stationary with respect to a fixed
joint policy and the joint policy is still moving.

The second failure concerns the possibility of a general convergent learning
rule. A learning dynamics is uncoupled when each agent's update depends on its
own payoff function and on the observed play of the others, but not on the
others' payoff functions, which is the only informational setting in which a
general-purpose algorithm makes sense.

\begin{theorem}[Impossibility of uncoupled convergence; \citealp{hart2003uncoupled}]
\label{thm:hart-mas-colell}
There is no uncoupled, stationary, continuous-time dynamic
$\dot x^i = F^i(x; \util_i)$ that converges to the Nash equilibrium of every
game in a family containing a neighbourhood of a given game, even when every
game in the family has a unique Nash equilibrium and the dynamic is required to
be $C^1$ with a hyperbolic asymptotically stable rest point.
\end{theorem}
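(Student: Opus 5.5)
The plan is to follow the original argument of \citet{hart2003uncoupled}: construct one concrete family of games in which uncoupledness forces the Jacobian of any dynamic at the equilibrium into a shape that cannot be hyperbolically asymptotically stable. I would take the three-player game of Jordan type, a cyclic version of matching pennies. Each agent $i \in \{1,2,3\}$ has two actions, and $\util_i$ depends only on $a_i$ and $a_{i+1}$ (indices mod $3$), with agent $i$ wanting to match $a_{i+1}$ or mismatch it according to the cyclic pattern. With $x_i \in [0,1]$ the probability of the first action, the game has a unique Nash equilibrium, fully mixed at $x^\star = (\tfrac12,\tfrac12,\tfrac12)$. The family is obtained by perturbing each $\util_i$ slightly and independently. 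Each perturbed game keeps the same structure, so it still has a unique equilibrium, and the indifference condition for agent $i$ pins down exactly one coordinate: $x^\star_{i+1} = c_i(\util_i)$. Small perturbations of $\util_i$ sweep $c_i$ over an open interval.

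The key step is to exploit uncoupledness at the rest points. Since $x^\star(\util)$ must be a rest point of the dynamic for every game in the family, we have $F^i(x^\star(\util);\util_i)=0$ for all admissible $\util$. Now fix $\util_i$. The remaining parameters $\util_{i+1}$ and $\util_{i+2}$ are free, and they move $x^\star_{i+2} = x^\star_{i-1}$ and $x^\star_i$ independently over open intervals, while $F^i$ does not see them. Hence $F^i(\cdot\,;\util_i)$ vanishes on a whole neighbourhood, within the slice $\{x_{i+1}=c_i(\util_i)\}$, in the coordinates $(x_i,x_{i-1})$. Differentiating at $x^\star$ gives $\partial F^i/\partial x_i = \partial F^i/\partial x_{i-1} = 0$. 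The Jacobian of the dynamic at $x^\star$ is therefore a weighted cyclic permutation matrix with entries $b_i = \partial F^i/\partial x_{i+1}$.

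The conclusion is linear algebra. The characteristic polynomial of this matrix is $\lambda^3 - b_1b_2b_3$, so its eigenvalues are the three cube roots of $b_1b_2b_3$. If the product is zero, the equilibrium is not hyperbolic. Otherwise the three eigenvalues are equally spaced on a circle, so at least one has non-negative real part: the real root when the product is positive, and the complex pair, whose real part is $|b_1b_2b_3|^{1/3}/2 > 0$, when it is negative. Either way $x^\star$ cannot be a hyperbolic asymptotically stable rest point, which contradicts the requirement in the statement. I would present this first for the game itself and then note that it holds for every game in the neighbourhood family.

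The step I expect to need the most care is the middle one. I must verify that the perturbation family really decouples, so that each $c_i$ depends only on $\util_i$ and the three coordinates of the equilibrium can be moved independently. I must also check that differentiability of $F^i$ in $x$ lets the vanishing on the slice pass to zero partial derivatives, which the $C^1$ hypothesis supplies. I would first write the payoffs explicitly, with $\util_i$ an affine function of $x_{i+1}$ for each action of $i$, so that $c_i$ is an explicit, non-degenerate function of the perturbation parameters.
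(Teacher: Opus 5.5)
Your proposal is correct and follows the argument of \citet{hart2003uncoupled}, which the paper cites without reproducing. You restrict to structure-preserving perturbations of Jordan's cyclic game, use uncoupledness and the $C^1$ hypothesis to force $\partial F^i/\partial x_i = \partial F^i/\partial x_{i-1} = 0$ at the equilibrium, and read off from the cyclic Jacobian, whose eigenvalues are the cube roots of $b_1b_2b_3$, that the rest point cannot be hyperbolic and asymptotically stable. As a minor simplification, only the vanishing diagonal is needed: it already makes the trace of the Jacobian zero, so the eigenvalues cannot all have negative real part.
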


\Cref{thm:hart-mas-colell} is a genuine impossibility rather than a statement
about particular algorithms, and it fixes the ceiling for a specific class of
rules: no stationary, memoryless, deterministic uncoupled dynamic converges to
Nash equilibria in general. The discrete-time picture is finer and should not be
summarised as the same statement. No uncoupled stationary rule with one-period
recall converges almost surely to pure Nash equilibria in all games that possess
them, and no uncoupled finite-recall rule makes the behaviour probabilities
converge to Nash equilibria in every game; two-period recall together with
randomisation does, however, suffice for almost sure convergence of play
\citep{hart2006stochastic}. Relaxing stationarity or memorylessness therefore
does admit convergent uncoupled rules, and the impossibility should be read as
delimiting which relaxations are necessary. It is the reason the positive
results of
\cref{sec:training} are all conditional on structure, zero-sum, potential,
common-payoff, mean-field, and the reason that the unconditional results are all
about coarse correlated equilibria, which uncoupled no-regret dynamics do reach.
The contrast is instructive: the obstruction is informational, since the agents
collectively possess enough information to compute an equilibrium and no agent
individually does.

The third failure is specific to the sequential case and removes the natural
dynamic-programming approach.

\begin{theorem}[Cyclic equilibria; \citealp{zinkevich2005cyclic}]
\label{thm:cyclic}
There exist general-sum stochastic games, already deterministic turn-taking ones
with two agents and two states, in which the stationary equilibrium policy is
not a function of the $Q$-values: two such games share the same $Q$-function and
yet have different unique stationary equilibria with different values. No
algorithm that derives a stationary policy from $Q$-values alone can therefore
compute a stationary Nash equilibrium of both. A cyclic equilibrium is a finite
cycle of stationary policies satisfying the correlated-equilibrium condition at
each stage, of which a stationary correlated equilibrium is the case of cycle
length one; value iteration with a utilitarian correlated-equilibrium selection
rule converges to a cyclic equilibrium on these games.
\end{theorem}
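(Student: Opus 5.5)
The statement has three parts: an existence claim (two games with the same $Q$-function but different unique stationary equilibria), an impossibility corollary, and a convergence claim for correlated-equilibrium value iteration. I would prove them in that order. The impossibility is immediate once the pair of games exists. Any rule $\Psi$ mapping a $Q$-function to a stationary policy returns the same policy on both games. Since their unique stationary equilibria differ, $\Psi$ is wrong on at least one of them. So the real work is the construction, followed by a finite computation for the value-iteration clause.

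For the construction I would take two agents and two states $s_1,s_2$, with agent $i$ controlling $s_i$ (turn-taking, so the stage game at each state is a one-agent choice). Each controller has two actions, \emph{stay} and \emph{send}, and transitions are deterministic. The payoffs should be arranged with a matching-pennies flavour across states. Agent 1 prefers to send when agent 2 will send back, and prefers to stay otherwise. Agent 2's incentives run the other way. Then no pure stationary profile is an equilibrium, and the unique stationary equilibrium mixes at both states, with mixing probabilities fixed by indifference conditions. Those conditions involve the continuation values $V_i^\pi(s_j)$, and so they involve $\gamma$ and the rewards.

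Next I would write $Q_i(s,a)=\Rew_i(s,a)+\gamma V_i^{\pi^\star}(s')$ at the equilibrium. Indifference at a controlled state forces $Q_i(s_i,\cdot)$ to be constant in $a$, so the $Q$-table carries little information about the mixing probability. I would then perturb the rewards of the \emph{non-controlling} agent at each state and rescale. The goal is for the perturbation to leave every $Q$-entry unchanged while shifting the indifference condition, and therefore the equilibrium mixture. Concretely, I would parametrize the rewards by a scalar $\theta$, solve the $2\times 2$ linear system for $V^{\pi^\star_\theta}$, and impose $Q^\theta\equiv Q^{\theta'}$ for two values $\theta\neq\theta'$. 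This is a small system of polynomial equations. I expect to exhibit explicit rational numbers that satisfy it, checking uniqueness of the equilibrium in each game by the same indifference argument used in \cref{ex:mp-no-pure}.

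For the last clause, I would run value iteration with the operator that replaces $\max$ by the utilitarian-welfare-maximising correlated equilibrium of the stage game of \cref{def:stage-game}. In a turn-based stage game this is just the controller's welfare-tie-broken best response. On the constructed games the iterates form a finite-dimensional piecewise-affine map. I would show that after finitely many steps the selected stage policies enter a period-$k$ orbit ($k=2$ expected) with values converging geometrically by the $\gamma$-contraction along the cycle. Checking the stage-wise correlated-equilibrium condition along the cycle then gives a cyclic equilibrium. The main obstacle is the construction step. Making the $Q$-tables literally coincide while the unique equilibria differ is a delicate parameter fit. If the direct $\theta$-perturbation fails, I would fall back on the explicit instance in \citet{zinkevich2005cyclic} and verify it numerically in the manner of the paper's other examples.
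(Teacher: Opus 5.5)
Your reduction of the impossibility clause to the existence of the pair is correct. The construction you propose, however, cannot produce the pair. Write $p,q$ for the send probabilities at $s_1,s_2$. If two games share the equilibrium $Q$-table, then in each game $V_i(s)=\sum_a\pi(a\given s)Q_i(s,a)$ and $\Rew_i(s,a)=Q_i(s,a)-\gamma V_i(s')$, where $s'$ is the successor of $(s,a)$. So the two reward vectors must differ by exactly $-\gamma\,[V_i'(s')-V_i(s')]$.

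Indifference makes $V_1(s_1)$ and $V_2(s_2)$ equal to the constant $Q$-entries, so these values are the same in both games. The remaining values do move:
$V_1'(s_2)-V_1(s_2)=(q'-q)\,[Q_1(s_2,\mathrm{send})-Q_1(s_2,\mathrm{stay})]$, and symmetrically $V_2'(s_1)-V_2(s_1)$ is proportional to $p'-p$.

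The rewards that must change are therefore those on the transitions entering the other agent's state, each pair by the same amount: $\Rew_1(s_1,\mathrm{send})$ and $\Rew_1(s_2,\mathrm{stay})$ for agent $1$, and $\Rew_2(s_2,\mathrm{send})$ and $\Rew_2(s_1,\mathrm{stay})$ for agent $2$. The first of each pair is the controller's own reward at its own state.

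A perturbation confined to non-controller rewards keeps $\Rew_1(s_1,\mathrm{send})$ fixed. That forces $(q'-q)[Q_1(s_2,\mathrm{send})-Q_1(s_2,\mathrm{stay})]=0$. The bracket cannot vanish when the equilibrium is unique, since agent $1$ would then be indifferent at $s_1$ for every $q$. Hence $q'=q$, likewise $p'=p$, and your $\theta$-system has only the trivial solution.

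The guiding heuristic is also misplaced. The indifference conditions are equalities inside the $Q$-table, so they are literally identical in both games. What the table leaves undetermined is the value at the state the other agent controls.

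Seen this way, the construction needs no polynomial fitting and is easiest run backwards.
Fix a $Q$-table with $Q_1(s_1,\cdot)$ and $Q_2(s_2,\cdot)$ constant. Choose two interior profiles and set $\Rew_i(s,a)=Q_i(s,a)-\gamma\sum_{a'}\pi(a'\given s')Q_i(s',a')$. Each profile is then an equilibrium of its own game with the common table, and the values differ at the non-controlled states.

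Uniqueness needs more than the pure-profile cycle of \cref{ex:mp-no-pure}. The value at $s_2$ is a fractional-linear function of $q$, strictly monotone when the bracket above is non-zero. So an interior $p$ pins $q$ to one interior value, and symmetrically. That leaves only the four pure profiles to exclude, separately for each reward vector.

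For instance, take $\gamma=3/4$ and list rewards as (agent $1$, agent $2$):
$(1,2)$ for stay and $(0,0)$ for send at $s_1$;
$(0,1)$ for stay and $(3,0)$ for send at $s_2$.
The unique equilibrium is $(p,q)=(2/9,2/3)$. Now add $1/4$ to $\Rew_1(s_1,\mathrm{send})$ and $\Rew_1(s_2,\mathrm{stay})$, and $5/8$ to $\Rew_2(s_2,\mathrm{send})$ and $\Rew_2(s_1,\mathrm{stay})$. The $Q$-table is unchanged, and the unique equilibrium moves to $(1/2,1/2)$.

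For the value-iteration clause, do not plan around period $2$. On the first of these games, greedy value iteration from $V=0$ settles into a cycle of length six. To show the orbit persists, you also need the comparisons at the limit cycle to be strict. Then the contraction of the composed cycle map keeps the iterates in the region where the selection is fixed.

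The paper itself proves none of this and simply cites \citet{zinkevich2005cyclic}, so your fallback is its route; the repair above is what makes your own route work.
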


Together with the \PPAD-hardness of stationary coarse correlated equilibria in
Markov games \citep{daskalakis2023complexity}, \cref{thm:cyclic} says that the
whole of the dynamic programming apparatus, which is the foundation of
single-agent reinforcement learning, has no general-sum sequential counterpart.
What survives is the apparatus restricted to the structured classes, and the
non-stationary correlated targets, which are reached by online learning rather
than by dynamic programming.

The fourth failure concerns the iterate rather than the average. In a two-player
zero-sum game with an interior equilibrium, the continuous-time replicator and
multiplicative-weights dynamics conserve a Kullback-Leibler quantity and
therefore cycle on invariant curves rather than converging, and the discrete-time
version diverges from the equilibrium at a rate controlled by the step size
\citep{mertikopoulos2018cycles,bailey2018multiplicative}. The time average
converges, which is why every classical guarantee in this area is stated for the
average strategy. The methodological consequence has been stated already and
bears repeating because it is routinely violated: in a zero-sum game trained by
self-play without explicit averaging, the final checkpoint is not the output of
the algorithm and reporting it conflates the iterate with the average. The
measured version of this is \cref{ex:mwu-rps}, in which the divergence to the
equilibrium rises from $0.036$ to $1.202$ over two thousand steps while the
return stays pinned at zero, and \cref{ex:fictitious-play}, in which the
iterate has the maximum possible exploitability at every step while the average
converges. The remedy is either explicit averaging, as in fictitious play and counterfactual
regret minimisation, or a regulariser that makes the dynamics contractive, as in
regularised Nash dynamics and magnetic mirror descent
\citep{perolat2021poincare,sokota2023unified}, or a look-ahead correction, as in
extragradient and optimistic updates (\cref{ex:extragradient}).

The fifth failure is representational and belongs to the cooperative case.
Monotone value factorisation satisfies the Individual-Global-Max condition of
\cref{def:igm} and the converse fails, so there are common-payoff stage games
whose action-value function no monotone mixing network can represent, the
smallest being the two-by-two matrix of \cref{ex:igm-counterexample}, which an
additive decomposition misfits by $8$ in every cell.
The practical form of the failure is that an algorithm in this family can be
driven to a representable value function that is far from the true one, with the
learned greedy joint action suboptimal, and no amount of data corrects it
because the error is in the hypothesis class. Diagnosing it requires comparing
against a centralised planner on the same model, which is the fifth row of
\cref{tab:diagnostics}.

The sixth failure is conceptual and is the one with the largest practical
footprint. Nash equilibrium is a poor learning target in general-sum games,
for reasons that are independent of its computational hardness. It is not
unique, and the folk theorem of \cref{thm:folk} makes the non-uniqueness
extreme in repeated settings; it is not Pareto optimal, as the prisoner's
dilemma shows; it is not a prediction of behaviour, since it presupposes
coordination on a particular profile that the dynamics do not supply; and it is
not what a designer wants, which is typically welfare, fairness or robustness.
The critique of \citet{shoham2007if} remains unanswered, and the sound response
is to state the target explicitly, as \cref{tab:training} does, and to evaluate
against that target rather than against equilibrium language used decoratively.

The seventh failure is that reward shaping does not transfer its guarantee
across agents without care. Potential-based shaping preserves optimal policies
in a Markov decision process and preserves Nash equilibria in a stochastic game
when each agent's shaping term is potential-based in that agent's own
information \citep{ng1999policy,devlin2011theoretical}. Shaping terms that
depend on other agents' actions do not enjoy the invariance, and a shaping
scheme designed to encourage cooperation generically changes the equilibrium
set, which is sometimes the intent and is rarely acknowledged. The distinction
is between accelerating the search for a solution of the given game and
replacing the game with another one, and only the first is what shaping means in
the single-agent case; \cref{ex:shaping} exhibits both.

The eighth failure is that no-external-regret learning does not deliver a
correlated equilibrium, only a coarse one, and the two differ as soon as some
agent has three actions. \Cref{ex:cce-strict} exhibits a distribution in
rock-paper-scissors under which both agents are perfectly correlated, neither
has any external regret, and a rewiring of each agent's own action as a function
of what it was about to play would gain $1/3$ per round. A practitioner who
monitors external regret alone will not see this, and the diagnostic that does
see it is the maximum swap gain. The common claim that self-play converges to a
correlated equilibrium is therefore false as stated; it converges to the coarse
set unless the rule is specifically a no-swap-regret one.

The ninth failure is that noisy independent learning does not select the best
equilibrium. \Cref{ex:log-linear} computes the stationary distribution of
log-linear learning on the stag hunt exactly and finds that as the noise
vanishes the chain concentrates with probability $0.99995$ on the equilibrium
worth $3$ rather than the one worth $4$. The selected equilibrium is the
risk-dominant one and equivalently the potential maximiser, and the selection is
a property of the dynamics rather than of the objective. No increase in sample
size, network capacity or training time changes it, and the only remedies are
the equilibrium selection devices of \cref{sec:partial-obs}.

The tenth failure is that enlarging the agents' options can lower the
equilibrium welfare. \Cref{ex:braess} exhibits a network in which adding a free
link raises the equilibrium cost by a factor of $4/3$ while leaving the social
optimum unchanged. An ablation in which a richer action space produces worse
outcomes is therefore not necessarily an optimisation failure, and
distinguishing the two requires the planner comparison of
\cref{tab:diagnostics}. The same caution applies to the price of anarchy
generally: a gap between independent learners and a centralised planner is the
expected equilibrium behaviour of the system, and closing it requires changing
the rewards, as \cref{ex:tolls} does, rather than improving the optimiser.

The eleventh failure is that a wider communication channel does not buy more
communication. \Cref{thm:crawford-sobel} bounds the number of distinguishable
messages that can be sustained in equilibrium by the misalignment of the
objectives, giving at most three intervals at a bias of $0.05$ and only babbling
at a bias of $0.25$, irrespective of the channel's capacity. An architecture
that provides a high-bandwidth differentiable channel to agents with partly
conflicting rewards is therefore providing capacity that equilibrium forbids
using, and the coarse protocol that emerges is the correct answer rather than an
optimisation artefact.

The twelfth failure is that no rule aggregates preferences without cost. By
\cref{thm:arrow-gs} any scheme that combines several agents' or annotators'
rankings into one must abandon one of the five conditions listed there, and any
deterministic voting rule without transfers must abandon one of surjectivity,
non-dictatorship and strategy-proofness. Reward models fitted to multiple
annotators are such schemes, so the question is not whether they have a defect
but which one they have, and the honest report names it.

A final observation ties these together. Almost every failure in this section is
a failure of a reduction that would have made the multi-agent problem
single-agent, and the remainder are failures of the hope that a multi-agent
system can be given everything a designer wants at once. That is not a
coincidence. The reductions of
\cref{sec:reductions} that do succeed all pay the same price, an exponential
enlargement of the action space to a space of functions, and the complexity
results of \cref{sec:complexity} show that the price is necessary. The negative
results are the statement that there is no cheaper currency.

\section{Open problems at the interface}
\label{sec:open}

The correspondences catalogued here delimit what is known, and the boundary is
sharp enough in several places to name concrete problems. We state thirteen, in
decreasing order of how well posed they are.

The first is a learning-time version of \cref{thm:occupancy}. The occupancy
state is computed by propagating the model forward, so the occupancy Markov
decision process is a planning construction. A model-free surrogate that is
provably a sufficient statistic, learned from interaction alone and with a
bound relating the loss in the surrogate to the loss in value, would convert the
single exact reduction available for cooperative decentralised problems into a
deep learning method with a guarantee. The same question for the public belief
state of \cref{thm:pbs} is the obstacle to extending ReBeL-style search beyond
games whose public information structure is given by construction.

The second is last-iterate convergence outside the zero-sum and potential
classes. Regularised Nash dynamics obtains it in two-player zero-sum games by
anchoring, and the analysis uses the monotonicity of the zero-sum operator. No
general-sum analogue is known, and the \PPAD-hardness of stationary coarse
correlated equilibria \citep{daskalakis2023complexity} suggests that the correct
statement involves a weaker target. Identifying the largest class of
general-sum Markov games admitting a last-iterate-convergent uncoupled dynamics
is open, and the Markov potential class is almost certainly not maximal; the
monotone class of \cref{def:monotone} and the supermodular class of
\cref{def:supermodular} are the two obvious candidates for an extension, and
neither has a sequential counterpart with the strength of
\citet{leonardos2022global}.

The third is a solution concept for general-sum sequential problems that is
simultaneously attainable by uncoupled learning, executable without shared
randomness and meaningful as a design target. Coarse correlated equilibrium
satisfies the first, fails the second in its non-stationary form, and satisfies
the third only weakly. Stationary product-form targets fail the first by
\cref{thm:hart-mas-colell}. Whether some intermediate object, for instance a
correlated equilibrium implementable by publicly observable history rather than
by an external mediator, threads the three requirements is unresolved and is
the single most consequential open question for applications in which several
independently deployed systems must interoperate.

The fourth concerns the three-player and many-player zero-sum case. Constant-sum
games with more than two agents lose the interchangeability and payoff-uniqueness
of equilibria, so the guarantees underlying self-play collapse even though the
total payoff is still conserved. The pairwise-separable case is solved by
\citet{cai2011minmax} and the unrestricted case is \PPAD-complete, so the
question is what lies between: which structural restrictions short of
polymatrix separability preserve tractability. Practical systems in this regime
rely on heuristics, and a principled target comparable to the minimax value is
missing. The $\alpha$-rank concept of \citet{omidshafiei2019alpha} is the most
developed candidate and its relation to the payoffs of a deployed system is not
understood.

The fifth is a unified theory of regret decomposition and temporal-difference
bootstrapping. \Cref{sec:extensive} observed that counterfactual regret
decomposition and the performance difference lemma express the same
factorisation of a global value gap into local terms weighted by a visitation
measure. The two algorithm families built on them, tree-traversing regret
minimisers and bootstrapping value learners, have disjoint analyses and
complementary strengths, the former with equilibrium guarantees and poor
scaling in depth, the latter with the reverse. A common analysis would be of
immediate practical consequence and the ingredients appear to be available.

The sixth is finite-population correction for mean-field methods. The
$O(1/\sqrt{\nag})$ bound is pessimistic in practice, and \cref{ex:mfg-rate}
records that at the ten to twenty-five agents of most benchmarks it delivers no
guarantee at all. The regimes in which mean-field policies perform well at that
scale are not characterised. A refined
bound exploiting the structure of the interaction, rather than only its
Lipschitz constant, would turn a limit theorem into a usable guarantee.

The seventh is partially observable Markov potential games. The potential
structure that yields convergence of independent policy gradient is defined for
fully observed stochastic games, and the cooperative problems that motivate it
are typically partially observed. Whether the potential property survives under
partial observability, and under what condition on the observation kernel, is
not settled, and a positive answer would supply the first convergence guarantee
for independent learners in decentralised partially observable problems.

The eighth is the status of commitment in learned systems. The Stackelberg
reading of \cref{sec:stackelberg} presumes that commitment is credible, and in a
deployed learning system credibility is a property of the training pipeline
rather than of the game: a leader that can be retrained is not committed. The
mechanism design literature addresses credibility through contracts and
reputation, and the translation of those devices into the setting in which the
leader is a learning algorithm is unexplored.

The ninth is the strategic analysis of language-model agents. Systems that
negotiate, delegate and form agreements in natural language instantiate games
with cheap talk, incomplete information about types, and commitment devices
realised through text rather than through a formal mechanism. The
game-theoretic objects are the ones in this monograph, with types given by
system prompts and private context, and the empirical methodology of
\cref{sec:evolutionary} applies directly, since one can estimate an empirical
payoff matrix over prompt-defined strategies and analyse it. What is missing is
any characterisation of the strategy space, and therefore any sense in which an
equilibrium computed over a sampled subset is informative about the whole.

The tenth is a smoothness calculus for sequential games. \Cref{thm:smoothness}
converts a welfare bound at pure equilibria into a welfare bound for the
time-averaged play of no-regret learners, and it is stated for one-shot games.
A Markov-game version, in which a per-state smoothness condition composes into a
bound on the discounted welfare of the empirical occupancy of independent
learners, would give the first practically checkable welfare guarantee for
general-sum multi-agent reinforcement learning. Whether the composition is
possible at all, and what it costs in the constant, is open.

The eleventh is the equilibrium selection problem stated as an optimisation
problem. \Cref{ex:log-linear} shows that noisy independent learning selects the
risk-dominant equilibrium and that the selection is computable in closed form
for a two-by-two game. For larger games the stochastically stable set is defined
by a minimum-spanning-tree computation over the profile space, which is
exponential. An approximation that is computable from samples and that predicts
which convention a given learning rule will install would turn equilibrium
selection from a post-hoc observation into a design variable.

The twelfth is credit assignment with a guarantee. \Cref{sec:coalitions}
supplies axiomatic credit schemes and \cref{sec:potential} supplies
variance-reduction baselines, and no result relates the two: there is no theorem
saying that using Shapley-value credit instead of a difference reward improves a
sample-complexity or convergence bound. Since the Shapley value costs
$2^{\nag-1}$ evaluations against one, the absence of such a theorem means the
extra cost is currently justified only by the axioms.

The thirteenth is the interaction between aggregation impossibility and
preference-based training. \Cref{thm:arrow-gs} says that every aggregation of
several annotators' preferences abandons one classical desideratum, and the
practice of fitting a single scalar reward model abandons independence of
irrelevant alternatives. Which axiom one should prefer to abandon, and what the
downstream consequence is for the policy trained against the aggregate, is not
studied. This is a well-posed question with an immediate application and no
literature connecting the two sides.

\section{Conclusions}
\label{sec:conclusions}

The central claim of this monograph is that the relationship between game
theory and reinforcement learning is one of identity over a large part of its
extent, of exact reduction over a further substantial part, and of analogy only
in a small and precisely delimitable remainder. Stating which of the three
applies in each case is the whole of the contribution, and the cases divide as
follows.

The identities are the stochastic game and the Markov decision process, which
coincide at $\nag = 1$; the partially observable stochastic game and the
partially observable Markov decision process, which coincide at $\nag = 1$; the
common-payoff partially observable stochastic game and the decentralised
partially observable Markov decision process, which coincide by definition; the
information set of an extensive-form game with perfect recall and the
action-observation history of a partially observable problem; best response and
greedy policy improvement; subgame perfection and the principle of optimality;
Markov perfect equilibrium and the restriction to stationary policies; the
behavioural strategy and the policy; the graphical game and the coordination
graph; the strictly dominated strategy and the action that no mask need
preserve; and the Nash equilibrium and the fixed
point of simultaneous policy improvement, which collapses to the fixed point of
the Bellman optimality operator when there is one agent.

The exact reductions are the belief-state reduction of a partially observable
problem to a continuous-state Markov decision process; the common-information
reduction of a decentralised control problem with partial history sharing to a
coordinator's partially observable problem over prescriptions; the
occupancy-state reduction of a decentralised partially observable Markov
decision process to a deterministic continuous-state Markov decision process
with a piecewise linear convex value function; the public-belief-state
reduction of a two-player zero-sum imperfect-information game to a
perfect-information game, which \cref{thm:pbs} records as a remark because the
source states it informally; the Harsanyi transformation and its identity with the
Bayes-adaptive construction; the sequence form and the occupancy measure; the
minimax Bellman operator and the robust Bellman operator; the quantal response
equilibrium and the soft Bellman operator of maximum-entropy reinforcement
learning; the difference reward
and a single term of the Shapley value; the VCG transfer and the counterfactual
baseline; Blackwell approachability and no-regret learning; the replicator
dynamics as the
continuous-time limit of cross-learning; and the replicator-mutator dynamics as
the limit of frequency-adjusted Boltzmann Q-learning. Every one
of these pays for exactness with an enlargement of the state space to a space of
distributions and of the action space to a space of functions, and the
complexity results of \cref{sec:complexity} establish that the price is not an
artefact of the constructions.

The analogies that fail, and the conditions under which they fail, are the
reduction of a multi-agent problem to a single-agent one by treating the others
as part of the environment, which is exact for fixed others and invalid under
simultaneous learning; the reduction of a common-payoff game to a joint-action
Markov decision process, which is exact for centralised execution and silent
about decentralised execution; the extension of dynamic programming to
general-sum stochastic games, which is forbidden by the cyclic-equilibrium
result; the identification of what no-external-regret self-play reaches with a
correlated equilibrium, which holds only in the coarse sense and fails as soon
as an agent has three actions; the expectation that noisy independent learners
reach the payoff-dominant equilibrium, which the stochastic-stability
computation contradicts; and the pursuit of Nash equilibrium as a general
learning target, which is forbidden by the impossibility of uncoupled
convergence and is in any case not what a designer wants.

Three consequences for practice follow and we restate them because they are
actionable. The first is that the choice of training procedure should be made
by answering the five structural questions of \cref{sec:training} about the
deployment, not by selecting from a list of algorithms by reported benchmark
performance, because the guarantees in \cref{tab:training} are conditional on
exactly those structural properties and benchmark performance is not. The
second is that evaluation must be matched to the target solution concept, as
\cref{tab:diagnostics} sets out, because in every model class except the
common-payoff one with centralised execution the mean return against the
training opponents measures a pair rather than a policy. The third is that
before any algorithm is chosen the deployment game should be checked for the
three structural properties that carry guarantees, namely potential or
supermodular structure, pairwise-separable zero-sum structure, and smoothness,
because each converts an intractable problem into a tractable one and each is
cheap to verify.

The dictionary is not complete, and the open problems of \cref{sec:open} mark
where it stops. The three that matter most, in our assessment, are a model-free
sufficient statistic that would make the occupancy-state reduction a learning
method rather than a planning method, a solution concept for general-sum
sequential problems that is simultaneously learnable, executable without shared
randomness and meaningful as a design target, and a sequential version of the
smoothness calculus that would give a checkable welfare guarantee for
independent learners in Markov games. Progress on any of them
would move a substantial region of the map from the analogy column into the
exact column, which is the direction in which this field advances.

\clearpage
\bibliography{refs}

\end{document}